\definecolor{brown}{RGB}{146,73,0}
\tikzset{->,>=latex,every node/.style={shape=rectangle,draw,},
scope/.style={draw=none},
lnode/.style={draw=none},
trigger/.style={red,line width=1.7pt},
guardok/.style={blue,dashed},
guardnok/.style={brown,dotted},
pics/drawv/.style args={#1/#2/#3/#4}{
	code = {
	\draw[-,dashed] let \p1=(#1.east),\p2=(#2.west),\p3=(#3.north),\p4=(#4.south) in
	({(\x1+\x2)/2},\y3) -- ({(\x1+\x2)/2},\y4);
	}
},
pics/drawvr/.style args={#1/#2}{
	code = {
	\pic{drawv=#1/#2/#2/#2};
	}
},
pics/drawh/.style args={#1/#2/#3/#4}{
	code = {
	\draw[-,dashed] let \p1=(#1.south),\p2=(#2.north),\p3=(#3.west),\p4=(#4.east) in
	(\x3,{(\y1+\y2)/2}) -- (\x4,{(\y1+\y2)/2});
	}
}
}
\xapptocmd{\appendix}{%
  \addtocontents{toc}{%
    \RedeclareSectionCommand[
      tocdynnumwidth,
      tocentrynumberformat=\tocappendixnumber
    ]{chapter}%
  }%
}{}{\PatchFailed}
\newcommand\tocappendixnumber[1]{\chapapp~#1}
\LetLtxMacro\oldindex\index%
  \RenewDocumentCommand{\index}{om}{%
    \begingroup%
    \phantomsection%
    \IfValueTF{#1}{%
      \imki@wrindexentry{#1}{#2|hyperlink{\@currentHref}}{\thepage}%
    }{%
      \imki@wrindexentry{ind}{#2|hyperlink{\@currentHref}}{\thepage}%
    }%
    \endgroup%
    }%
\newcommandx{\iindex}[2][1=ind]{\index[#1]{#2}#2}
\newcommandx{\indexsee}[3][1=ind]{\oldindex[#1]{#2|see{#3}}}
\newcommandx{\iindexsee}[3][1=ind]{\indexsee[#1]{#2}{#3}#2}
\definecolor{pastelblue}{RGB}{0,72,205}
\declaretheorem[
    name=Theorem,
    Refname={Theorem,Theorems},
    numberwithin=section]{thm}
\declaretheorem[
    name=Lemma,
    Refname={Lemma,Lemmas},
    sibling=thm]{lem}
\declaretheorem[
    name=Proposition,
    Refname={Proposition,Propositions},
    sibling=thm]{prop}
\declaretheorem[
    name=Corollary,
    Refname={Corollary,Corollaries},
    sibling=thm]{cor}
\declaretheorem[
    name=Claim,
    Refname={Claim,Claims},
    parent=thm]{claim}
\let\oldproof\proof
\let\oldendproof\endproof
\newenvironment{subproof}[1][Subproof]{\begingroup\oldproof[#1]}{\oldendproof \endgroup}
\theoremstyle{definition}
\declaretheorem[
    name=Definition,
    Refname={Definition,Definitions},
    sibling=thm]{defn}
\declaretheorem[
    name=Example,
    Refname={Example,Examples},
    sibling=thm]{exmpl}
\theoremstyle{remark}
\declaretheorem[
    name=Remark,
    Refname={Remark,Remarks},
    sibling=thm]{rmk}
\newlist{thmlist}{enumerate}{1}
\setlist[thmlist,1]{label=\textup{(\alph*)}, ref=\thethm.\textup{(\alph*)},noitemsep}
\newlist{propylist}{enumerate}{1}
\setlist[propylist]{label=\textup{(\alph*)}, ref=\thethm.\textup{(\alph*)},noitemsep}
\newlist{assmlist}{enumerate}{1}
\setlist[assmlist,1]{label={\textup{(\roman*)}}, ref=\textup{(\roman*)},noitemsep} 
\newlist{invarlist}{enumerate}{1}
\setlist[invarlist,1]{label={\textup{(\arabic*)}}, ref=\textup{(\arabic*)},noitemsep} 
\Crefname{thm}{Theorem}{Theorems}
\Crefname{lem}{Lemma}{Lemmas}
\Crefname{prop}{Proposition}{Propositions}
\Crefname{cor}{Corollary}{Corollary}
\Crefname{defn}{Definition}{Definitions}
\Crefname{exmpl}{Example}{Examples}
\Crefname{rmk}{Remark}{Remarks}
\Crefname{propy}{Property}{Properties}
\Crefname{claim}{Claim}{Claim}
\Crefname{assmlisti}{Assumption}{Assumptions}
\Crefname{invarlisti}{Invariant}{Invariants}
\newcommand{\restatablespacefix}{\vspace{-\topsep}}
\newcommand{\kw}[1]{{\mathsf{#1}}\xspace}
\let\otodo\todo
\newcommandx{\unsurenote}[2][1=]{\otodo[linecolor=orange,backgroundcolor=orange!25,bordercolor=orange,#1]{UNSURE: #2}}
\newcommandx{\todonote}[1]{\otodo[linecolor=red,backgroundcolor=red!25,bordercolor=red]{TODO: #1}}
\renewcommand{\todo}[1]{\textcolor{red}{\textbf{TODO}: #1}}
\newcommand{\calA}{\mathcal{A}}
\newcommand{\calB}{\mathcal{B}}
\newcommand{\calC}{\mathcal{C}}
\newcommand{\calD}{\mathcal{D}}
\newcommand{\calF}{\mathcal{F}}
\newcommand{\calM}{\mathcal{M}}
\newcommand{\calO}{\mathcal{O}}
\newcommand{\calP}{\mathcal{P}}
\newcommand{\calR}{\mathcal{R}}
\newcommand{\calT}{\mathcal{T}}
\newcommand{\calV}{\mathcal{V}}
\newcommand{\nth}[1]{#1^{\text{th}}}
\newcommand{\varset}{\calV}
\newcommand{\constset}{\calC}
\newcommand{\funcset}{\calF}
\newcommand{\termset}{\calT}
\newcommand{\dom}{\kw{dom}}
\newcommand{\vars}{\kw{vars}}
\newcommand{\consts}{\kw{consts}}
\newcommand{\hnf}{\kw{HNF}}
\newcommand{\shnf}{\kw{SHNF}}
\newcommand{\skol}{\kw{Sk}}
\newcommand{\vnf}{\kw{VNF}}
\newcommand{\rel}{\calR}
\newcommand{\relset}{\calR}
\newcommand{\db}{\calD}
\newcommand{\width}{\kw{width}}
\newcommand{\hwidth}{\kw{hwidth}}
\newcommand{\bwidth}{\kw{bwidth}}
\newcommand{\comp}{\textup{(COMPOSITION)}\xspace}
\newcommand{\orig}{\textup{(ORIGINAL)}\xspace}
\newcommand{\evolve}{\textup{(EVOLVE)}\xspace}
\newcommand{\devolve}{\textup{(D-EVOLVE)}\xspace}
\newcommand{\func}[1]{#1_{+}}
\newcommand{\nfunc}[1]{#1_{-}}
\newcommand{\lvs}{\kw{lvs}}
\newcommand{\sclo}{\kw{SSat}}
\newcommand{\gclo}{\kw{GSat}}
\newcommand{\gcloex}{\kw{GSat}_\exists}
\newcommand{\dgclo}{\kw{DGSat}}
\newcommand{\dgcloex}{\kw{DGSat}_\exists}
\newcommand{\size}{\kw{size}}
\newcommand{\cut}{\kw{cut}}
\newcommand{\ifc}{\kw{IFC}}
\renewcommand{\phi}{\varphi}
\newcommand{\wrt}{with respect to\xspace}
\newcommand{\ifftext}{if and only if\xspace}
\newcommand{\ptime}{\kw{PTIME}}
\newcommand{\exptime}{\kw{EXPTIME}}
\newcommand{\texptime}{\kw{2EXPTIME}}
\newcommand{\conexptime}{\kw{coNEXPTIME}}
\newcommand{\tgd}{TGD\xspace}
\newcommand{\tgds}{TGDs\xspace}
\newcommand{\dtgd}{DisTGD\xspace}
\newcommand{\dtgds}{DisTGDs\xspace}
\newcommand{\gtgd}{GTGD\xspace}
\newcommand{\gtgds}{GTGDs\xspace}
\newcommand{\dgtgd}{DisGTGD\xspace}
\newcommand{\dgtgds}{DisGTGDs\xspace}
\newcommand{\gf}{GF\xspace}
\newcommand{\body}{\beta}
\newcommand{\head}{\eta}
\newcommand{\cq}{CQ\xspace}
\newcommand{\cqs}{CQs\xspace}
\newcommand{\bcq}{BCQ\xspace}
\newcommand{\bcqs}{BCQs\xspace}
\newcommand{\ucq}{UCQ\xspace}
\newcommand{\ubcq}{UBCQ\xspace}
\newcommand{\ubcqs}{UBCQs\xspace}
\newcommand{\qfqaprob}{Quantifier-Free Query Answering Problem\xspace}
\newcommand{\qfqaprobgtgds}{\qfqaprob under \gtgds}
\newcommand{\qfqaprobdgtgds}{\qfqaprob under \dgtgds}
\newcommand{\qfqagtgds}{Quantifier-Free Query Answering under \gtgds}
\newcommand{\qfqadgtgds}{Quantifier-Free Query Answering under \dgtgds}
\newcommand*{\getUniversity}{University of Oxford}
\newcommand*{\getCollege}{Wolfson College}
\newcommand*{\getFaculty}{Mathematical Institute}
\newcommand*{\getDegree}{MSc in Mathematics and Foundations of Computer Science}
\newcommand*{\getTitle}{Decision Procedures for\\ Guarded Logics}
\newcommand*{\getCandidateNr}{1026380}
\newcommand*{\getAuthor}{Kevin Kappelmann}
\newcommand*{\getSupervisor}{Prof. Michael Benedikt}
\newcommand*{\getSubmissionDate}{31st August 2018}
\begin{document}
% TODO flag
%\setboolean{anon}{true}
\setboolean{anon}{false}
\setboolean{publish}{true}
%\setboolean{publish}{false}

\ifthenelse{\boolean{publish}}{}{\let\cleardoublepage\par}

\ifthenelse{\boolean{publish}}{\begin{titlepage}
  % HACK for two-sided documents: ignore binding correction for cover page.
  % Adapted from Markus Kohm's KOMA-Script titlepage=firstiscover handling.
  % See http://mirrors.ctan.org/macros/latex/contrib/koma-script/scrkernel-title.dtx,
  % \maketitle macro.
  \oddsidemargin=\evensidemargin\relax
  \textwidth=\dimexpr\paperwidth-2\evensidemargin-2in\relax
  \hsize=\textwidth\relax

\begin{center}
  \null

  {\textsc{\Huge\getUniversity{}}\\
  \vspace{5mm}
   \Large\getFaculty{}
  }

  \vspace{10mm}
  \IfFileExists{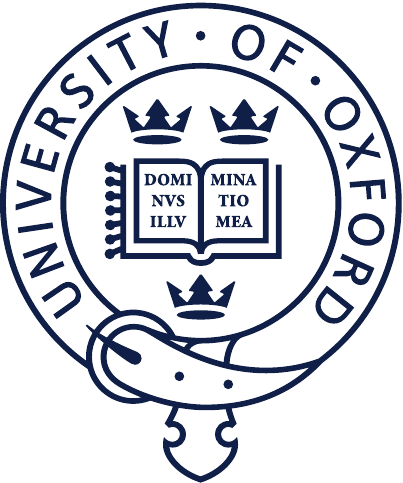}{%
    % ugly hack to center. adjust manually 
    \hspace{1mm}
    \includegraphics[height=50mm]{logos/beltcrest.pdf}
  }{%
    \vspace*{49mm}
  }

  \vspace*{10mm}
  {
	\begin{spacing}{1.1}
		\Huge\bfseries \getTitle{}
	\end{spacing}
  }

  \vspace*{10mm}
  \ifthenelse{\boolean{anon}}
{
  {\Large Candidate Number: \getCandidateNr{}}\\
  \vspace{30mm}
}
{
  {\Large \getAuthor{}}\\
  \vspace{3mm}
  {\large \getCollege{}}\\
  \vspace{3mm}
  {\large supervised by \getSupervisor{}}\\
  \vspace{20mm}
}

  {\large A thesis submitted in partial fulfilment for the degree of}\\
  \vspace{3mm}
  {\large \textit{\getDegree{}}}\\
  \vspace{10mm}

\ifthenelse{\boolean{anon}}
{
  {\large Trinity 2018}\\
  \vspace{3mm}
  {\large \getSubmissionDate{}}
}
{
  {\large Trinity 2018}
}
\end{center}

\end{titlepage}
}{}

\frontmatter

\ifthenelse{\boolean{publish}}{
\ifthenelse{\boolean{anon}}{}{\thispagestyle{empty}

\vspace*{\stretch{1}}
\begin{center}
\raggedleft
\textit{
To my father, Hans-Jürgen Kappelmann $\dagger 16.01.2018$}
\end{center}
\vspace{\stretch{3}}

\cleardoublepage{}
\thispagestyle{empty}
\vspace*{20mm}

\begin{center}
\textbf{{\usekomafont{section}Acknowledgments}}
\end{center}

\vspace{10mm}
I want to thank my supervisor Prof. Michael Benedikt for all
his support, especially his great suggestions for structuring this thesis,
appropriately abstracting and presenting the results, and 
introducing me to an interesting research topic.

I also want to thank all my friends that visited me 
during my overwhelming year in Oxford:
Alex, Bene, Cedric, Domi B., Jonny, Liz, Lukas, Netti, Niklas, Philipp, Silja, Tom, and Vali.
Words cannot describe how thankful I am to have you as part of my life.
Moreover, I want to thank all my new friends that I met in Oxford, 
especially Domi W., Philip, and Vasilii,
and all my extremely talented MFoCS coursemates for our invaluable
discussions, exchange of ideas, and, most importantly, collective 
suffering and consumption of biscuits in the common room.

Finally, I want to thank my mother, Erika, for always believing in me and 
her unconditional support and love.

\cleardoublepage{}
}
\thispagestyle{empty}
\vspace*{20mm}

\begin{center}
\textbf{{\usekomafont{section}\abstractname}}
\end{center}

\vspace{10mm}

An important class of decidable first-order logic fragments are those
satisfying a guardedness condition, such as the guarded fragment (\gf). 
Usually, decidability for these logics is closely linked
to the tree-like model property -- the fact that satisfying models
can be taken to have tree-like form. 
Decision procedures for the guarded fragment based on the tree-like model 
property are difficult to implement. 
An alternative approach, based on restricting first-order resolution, 
has been proposed, and this shows more promise from the point of view of implementation. 
In this work, we connect the tree-like model property of the guarded fragment
with the resolution-based approach.
We derive efficient resolution-based rewriting algorithms that 
solve the Quantifier-Free Query Answering Problem
under Guarded Tuple Generating Dependencies (\gtgds) and
Disjunctive Guarded Tuple Generating Dependencies (\dgtgds). 
The Query Answering Problem for these classes subsumes many 
cases of \gf satisfiability.
Our algorithms, in addition to making the connection to the tree-like model 
property clear, give a natural account of the selection and 
ordering strategies used by resolution procedures for the guarded fragment.
We also believe that our rewriting algorithm for the special case of \gtgds 
may prove itself valuable in practice as it does not require 
any Skolemisation step and its theoretical runtime outperforms those 
of known \gf resolution procedures in case of fixed dependencies.
Moreover, we show a novel normalisation procedure for the widely used chase procedure
in case of (disjunctive) \gtgds, which could be useful for future studies.

\cleardoublepage{}

}
{}
\microtypesetup{protrusion=false}
\tableofcontents
\microtypesetup{protrusion=true}

\mainmatter

\chapter{Introduction}

\section{Motivation}\label{sec:motivation}

The satisfiability problem for first-order logic is known to be
undecidable, as shown by \cite{church1936unsolvable} and \cite{turing1937computable}.
Nevertheless, some expressive fragments of first-order logic have been shown to be decidable.
%\myeat{
%In 1928, the German mathematician David Hilbert posed the famous \emph{Entscheidungsproblem} (German for ``decision problem''),
%which asks for an algorithm that decides whether an input is a valid first-order sentence or not.
%Unlike so many other problems posed by Hilbert, the Entscheidungsproblem was answered in just a few years, and to Hilbert's great surprise, 
%the answer was negative,
%as shown independently by \cite{church1936unsolvable} and \cite{turing1937computable}.
%Nevertheless, things are not completely out of control: full first-order logic is at least semi-decidable, and maybe even more importantly, we have seen decades of research discovering decidable fragments.}
One very rich family of decidable fragments is the family of so-called guarded logics.
%\myeat{
%Unlike many other fragments, languages in this family 
%do not achieve decidability by restriction of quantifier patterns, number of variables, or arity of predicate symbols,
%but by relativising operations, such as quantification and negation, using atoms in the language.
%}
Many guarded languages enjoy desirable characteristics, such as the finite model property, Craig interpolation, 
and  the tree-like model property \citep{gradel1999restraining}.
The historically first representative of the guarded family is the guarded fragment~(\gf), introduced by \cite{andreka1998modal}.
Its satisfiability problem has been solved in a variety of ways,
and arguably the most natural one is to use the tree-like model property of \gf:
\begin{itemize}
\item One can use the tree-like model property to reduce to the monadic second-order logic over trees, 
which was shown decidable by \cite{rabin1969decidability}.
\item A refinement of above argument is to directly generate an automaton that accepts 
codes of tree-like models.
One then checks for non-emptiness of the language of the automaton.
%\item Yet another refinement is to create a Tableau calculus for \gf, 
%which tries to build a tree-like model using a form of blocking to ensure termination. 
%This approach is taken in the PhD thesis of Hirsch \cite{hirschthesis}.
\end{itemize}
Both techniques, however, fall short in different ways.
The first approach is clearly infeasible in practice as
it relies on Rabin's decision procedure for the monadic second-order logic over trees, which has non-elementary complexity;
that is, its complexity is not bounded by any power of exponentials.
The second approach has complexity bounded by two exponentials, which is tight in the worst-case.
But the building of the automaton always hits the worst-case time complexity.
%Moreover the non-emptiness test is both complex and expensive.
%The third approach is more plausible, but requires a huge amount of non-determinism in choosing
%what tableau rules to fire, in addition to the complexity of the blocking condition.

An approach that seems superficially quite different is to use resolution, 
as outlined by \cite{de1998resolution} and \cite{ganzinger1999superposition}. 
This approach gives rise to a deterministic algorithm that has the possibility of stopping much earlier
than the worst-case bounds on typical cases.
However, it is not clear how it connects to the tree-like model property exploited in the techniques described above. 
The existing completeness proofs rely on general results on resolution with selection \citep{ganzinger1999superposition},
or a resolution game for non-liftable orders \citep{de1998resolution}, techniques that seem very distinct from the tree-like model property.

The goal of this thesis is to give a new resolution based algorithm
for a special case of \gf satisfiability. We will prove completeness
of the method via analysis of the tree-like model property.

\section{Contributions}\label{sec:contribution}

We work in this thesis not directly with the guarded fragment, but
with some fragments that make the connection with the tree-like model property clearer.
We use Disjunctive Guarded Tuple Generating Dependencies (\dgtgds), which
represent a guarded version of disjunctive, existential Datalog.
Our main contribution is a completeness proof of a resolution-based rewriting algorithm from \gtgds to Datalog and from \dgtgds to disjunctive Datalog that directly uses the tree-like model property.
This language does not comprise all of \gf, but it captures a substantial fragment
that enjoys a wide range of applications in database theory (e.g.\ deductive databases and data exchange)
and artificial intelligence (e.g.\ knowledge representation and reasoning) \citep{gottlob2012complexity,bourhis2016guarded}.

To this end, we will contribute to an important technique used for query evaluation and containment, 
called the chase, that has found its way into numerous studies, such as \citep{cali2010querying,baget2011walking,benedikt2015querying}, to name but a few.
More specifically, we show in~\cref{chap:one_pass} that the chase not only admits a tree-like property for \gtgds,
but can also be tamed to admit a one-pass behaviour in which chase inferences are more closely related to the tree structure.
This result is then generalised to show an analagous one-pass property for disjunctive chases under \dgtgds.

We then use this close connection 
to formulate an algorithm-independent property that implies completeness of any procedure satisfying the property.
This is done independently for \gtgds in \cref{sec:gtgd_property_for_completeness}
and then naturally extended to \dgtgds in \cref{sec:dgtgd_property_for_completeness}.
Due to its algorithm-independent formulation, these properties can be used for work beyond this thesis.

Following that, we derive two theoretically optimal \gtgd rewriting algorithms in \cref{sec:gtgd_simple_saturation,sec:gtgd_guarded_saturation}.
Notably, their theoretical runtime outperforms those of known \gf resolution procedures 
in the case of a fixed set of dependencies.
We then generalise our ideas in \cref{sec:dgtgd_guarded_saturation} to derive an efficient \dgtgd rewriting algorithm.
The completeness of all algorithms is shown by means of our abstract completeness properties, which explains the connection to the tree-like model property.
Moreover, our algorithms give a natural account of the selection and ordering
strategies used by resolution procedures for the guarded fragment.

\section{Thesis Outline}

We start this thesis by briefly recalling the required background for our upcoming algorithms and proofs.
In particular, we give a brief account of first-order logic
and introduce our considered fragments, namely the guarded fragment in \cref{sec:gf}, and the relational model and \dgtgds in \cref{sec:rel_model}.
We then formulate the problem statement and
explain our general solution approach in \cref{sec:problem_statement}.
Our approach will split the problem into a rewriting part, 
and a simpler, second step, which can be solved by well-known saturation processes described in \cref{sec:atomic_rewriting_dec_proc}.
In \cref{sec:proof_machinery}, we then add an essential proof technique to our toolbox -- the chase.

It will be a common pattern throughout this thesis that we perform an initial
investigation for the simpler language of \gtgds and then
extend our results to the more expressive disjunctive case.
That being said, our main work begins in \cref{chap:one_pass},
where we first prove a normalisation result for the chase under \gtgds,
which we call the one-pass property.
We then naturally extend this property to the disjunctive chase to conclude the chapter.

In \cref{chap:gtgd}, we present an abstract property that implies completeness of \gtgd rewriting algorithms.
The completeness proof based on these properties will make use of the one-pass property shown in the previous chapter.
We then present a simple resolution-based algorithm in \cref{sec:gtgd_simple_saturation} that satisfies the formulated completeness property;
however, this simple algorithm experiences a serious unification issue, and we thus derive a second, improved version in \cref{sec:gtgd_guarded_saturation} that solves this issue.

\cref{chap:qfqadgtgds} then generalises and extends our results to the disjunctive case in an analogous way;
that is, we again first formulate an abstract completeness property and then derive a resolution-based \dgtgd rewriting algorithm that satisfies this property.

We then compare our algorithms with the known resolution procedures for the guarded fragment in \cref{chap:comparison}.
In particular, we examine which fragments of \gf we can cover with our chosen framework.

To conclude the thesis, we give a brief summary of our results as well as 
some final thoughts about possible future developments and related work.

\chapter{Preliminaries}\label{chap:prelims}
In this chapter, we first introduce some background knowledge
and define basic notions used throughout this thesis.
We then set our problem statement and explain our general solution idea in \cref{sec:problem_statement}.
Our approach will split the problem into two parts,
the second of which can be solved by very simple, well-known saturation processes that are described in \cref{sec:atomic_rewriting_dec_proc}.
Finally, we introduce some required machinery 
that we will use for the correctness proofs of our upcoming algorithms.

\section{First-Order Logic}\label{sec:prelims_first_order_logic}
First-order logic is a formal language that allows,
for example, the formalisation of reasoning,
axiomatisation of mathematical systems,
and specification of software and hardware systems.
All problems considered in this thesis are formulated
in first-order logic, and we assume a basic knowledge
of its syntax and semantics.
Introductions can be found, for example,
in \citep{abiteboul1995foundations,hodges1997shorter}.
In this section, we briefly introduce some syntactic conventions
and recall some (perhaps less well-known) concepts and results, 
namely the notion of a homomorphism, Skolemisation, and Herbrand's Theorem.

To speak about the syntax of first-order logic, 
we fix an infinite set of \emph{variables} $x_1,x_2,\dotso$, denoted by~\iindex[not]{$\varset$}, and
a \emph{signature} $\sigma=(\constset,\funcset,\rel)$, where
\begin{itemize}
\item \iindex[not]{$\constset$} is a set of \emph{constants} $c_1,c_2,\dotso$,
\item $\funcset$ is a set of \emph{function symbols} $f_1,f_2,\dotso$ with associated arities $a_{f_i}\geq 1$, and
\item \iindex[not]{$\rel$} is a set of \emph{relation symbols} $R_1,R_2,\dotso$ with associated arities $a_{R_i}\geq 1$.
\end{itemize} 
We denote the set of terms by \iindex[not]{$\termset$}
and often abbreviate a sequence of terms $t_1,\dotsc,t_n$ as~\iindex[not]{$\vec{t}$};
we also allow ourselves to treat $\vec t$ as a set when convenient.
Given a formula $\phi$, we denote the set of free variables of $\phi$ by \iindex[not]{$\vars(\phi)$}
and the set of used constants of $\phi$ by \iindex[not]{$\consts(\phi)$}.
Similarly, we define $\vars(t),\consts(t)$ for the set of used variables and constants of a term $t$.
A formula without free variables is also called a \emph{sentence},
and a formula $\phi=R(t_1,\dotsc,t_n)$, where $R\in\relset$, is called an \emph{\iindex{atom}}.
Unless otherwise noted, we write~$\phi(\vec x)$ to indicate that 
the set of free variables of $\phi$ is among~$\vec x$. That is, we write $\vars(\phi)\subseteq\vec x$, 
and analogously, $\phi(\vec c)$ to indicate that the set of constants of $\phi$ is among~$\vec c$.
These definitions are lifted in the natural way to sets of formulas $S$, for example $\vars(S)\coloneqq\bigcup_{\phi\in S}\vars(\phi)$.

Given a signature $\sigma$ and a sentence $\phi$, in order to speak about
the semantics of $\phi$, we also have to fix a \emph{structure} $\calA$ consisting of
\begin{itemize}
\item a set of elements \iindex[not]{$\dom(\calA)$}, called the \emph{domain of $\calA$},
\item an assignment $c^\calA\in\dom(\calA)$ for any $c\in\constset$, 
\item an assignment $f^\calA:\dom(\calA)^a\rightarrow \dom(\calA)$ for any $f\in\funcset$ of arity $a$, and
\item an assignment $R^\calA\subseteq \dom(\calA)^a$ for any $R\in\relset$ of arity $a$.
\end{itemize} 
\begin{defn}[Satisfiability]
Given a structure $\calA$ and sentence $\phi$, we say that \emph{$\calA$ satisfies $\phi$} or \emph{$\calA$ models $\phi$}, written \iindex[not]{$\calA\models\phi$}, 
if the interpretation of $\phi$ under $\calA$ evaluates to ``true'' using the usual semantics.\footnote{see, for example, \citep[Section~2.4]{abiteboul1995foundations} for a formal definition.}
We say that $\phi$ is \emph{satisfiable} if there is some structure $\calA$ that models $\phi$, and we call $\phi$ \emph{valid} if every structure is a model for $\phi$.
These definitions are lifted to set of sentences $S$ in the usual way,
for example, we write $\calA\models S$ if $\calA\models\phi$ for every $\phi\in S$.
Lastly, given two sets of sentences $S$ and $S'$, we say that \emph{$S$ entails $S'$}, written $S\models S'$, if every model $\calA$ of $S$ is also a model of $S'$.
\end{defn} 
Given two $\sigma$-structures $\calA,\calB$, there are different notions
that allow us to compare the structures.
One of them is that of a homomorphism.
Intuitively, a homomorphism between $\calA$ and $\calB$ is a map 
that preserves constants, functions, and relations from $\calA$ to $\calB$.
More formally:
\begin{defn}[Homomorphism]
Let $\calA,\calB$ be two $\sigma$-structures.
A \emph{\iindex{homomorphism} from $\calA$ to $\calB$} is a map 
$h:\dom(\calA)\rightarrow\dom(\calB)$ such that:
\begin{thmlist}
\item For any constant $c$ of $\sigma$,
we have $h(c^\calA)=c^\calB$.
\item For any function $f$ of $\sigma$ and elements $\vec a\in\dom(\calA)$,
we have $h(f^\calA(\vec a))=f^\calB(h(\vec a))$.
\item For any relation $R$ of $\sigma$ and elements $\vec a\in\dom(\calA)$,
we have $R^\calA(\vec a)\implies R^\calB(h(\vec a))$.
\end{thmlist} 
For convenience, we write $h:\calA\rightarrow \calB$ if we have a homomorphism from $\calA$ to $\calB$.
\end{defn} 

Next, we recall the idea of \emph{\iindex{Skolemisation}}, 
a technique commonly applied in automated theorem proving.
Skolemisation is used to remove existential variables of a given first-order formula,
and hence gives rise to a normal form transformation in which all quantifications are universal.
The basic idea is to replace any existential variable $y$ that is in the scope of 
universal quantifiers binding variables $\vec x$ by a functional term $f_y(\vec x)$,
where $f_y$ is a fresh function symbol. 
The fresh functions $f_y$ are commonly called \emph{Skolem functions}.
\begin{exmpl}
The Skolemisation of 
\begin{equation*}
	\forall x_1\exists y_1\forall x_2\exists y_2[R(x_1,y_2)\land P(y_1, x_1, x_2)]
\end{equation*} 
is
\begin{equation*}
	\forall x_1\forall x_2[R(x_1,f_{y_2}(x_1,x_2))\land P(f_{y_1}(x_1), x_1, x_2)].
\end{equation*} 
\end{exmpl} 
\begin{thm}[{\citep[Chapter~3]{hodges1997shorter}}]\label{thm:Skolem_equisat}
A formula $\phi$ is satisfiable \ifftext its Skolemisation $\phi'$ is satisfiable.
\end{thm}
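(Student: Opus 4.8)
The plan is to prove the two directions of the biconditional separately, after reducing the general claim to a single-step elimination that is then iterated. First I would observe that it suffices to treat sentences: any free variables of $\phi$ can be regarded as fresh constants added to the signature, which leaves the quantifier structure — and hence the Skolemisation — untouched. I would then note that Skolemisation proceeds by repeatedly removing the outermost existential quantifier, so that each fresh Skolem function depends exactly on the universally quantified variables in whose scope the eliminated existential lies. It therefore suffices to prove the equisatisfiability of a single step, namely of $\psi = \forall \vec x\, \exists y\, \theta$ and its one-step Skolemisation $\psi' = \forall \vec x\, \theta[y := f_y(\vec x)]$, where $f_y$ is fresh and $\theta$ may itself contain further quantifiers. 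The full statement then follows by a straightforward induction on the number of existential quantifiers, since each step enlarges the signature only by one fresh function symbol and the original signature is recovered by taking reducts.

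For the easy direction ($\psi'$ satisfiable $\Rightarrow$ $\psi$ satisfiable), I would take a model $\calB$ of $\psi'$ over the expanded signature and show that its reduct to the original signature models $\psi$: for every assignment $\vec a$ to $\vec x$, the element $f_y^{\calB}(\vec a)$ is an explicit witness for $\exists y\, \theta$, so the existential holds. This relies only on a substitution lemma, namely that the value of the term $f_y(\vec x)$ under the assignment $\vec a$ is exactly $f_y^{\calB}(\vec a)$, and that substituting a term for $y$ agrees with evaluating $\theta$ at the value of that term.

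For the hard direction ($\psi$ satisfiable $\Rightarrow$ $\psi'$ satisfiable), I would start from a model $\calA \models \psi$ and expand it to a model of $\psi'$ by interpreting the fresh symbol $f_y$ as a witnessing function. For each tuple $\vec a \in \dom(\calA)^{|\vec x|}$, the assumption $\calA \models \psi$ yields some $b \in \dom(\calA)$ at which $\theta$ holds for $(\vec a, b)$; invoking the axiom of choice, I would select one such $b$ for every $\vec a$, thereby defining a function $g\colon \dom(\calA)^{|\vec x|} \to \dom(\calA)$, and set $f_y^{\calA'} = g$ while leaving all other interpretations unchanged. Since $f_y$ does not occur in $\theta$, this expansion does not disturb the truth of any subformula of $\theta$, and by the same substitution lemma the term $f_y(\vec x)$ evaluates to the chosen witness $g(\vec a)$, so $\calA' \models \psi'$.

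The main obstacle is essentially bookkeeping: one must carefully state and verify the substitution lemma, and in the hard direction ensure that the chosen witness is permitted to depend on $\vec x$ but not on variables bound inside $\theta$ — which is precisely what is guaranteed by eliminating the outermost existential first, so that the inner quantifiers of $\theta$ are never disturbed. The only genuinely non-elementary ingredient is the appeal to the axiom of choice when defining the witnessing function $g$ over a possibly infinite domain.
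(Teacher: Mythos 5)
Your argument is correct and is the standard proof of Skolem equisatisfiability: the paper itself gives no proof here but simply cites Hodges, and the argument in that source proceeds exactly as you describe (one-step elimination of the outermost existential, the reduct for the easy direction, and an expansion by a choice-defined witnessing function for the converse). The only implicit assumption worth making explicit is that the formula is in prenex or negation normal form so that every eliminated existential occurs positively, which is the setting the paper tacitly works in.
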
 
%\begin{proof}[Proof sketch]
%Assume there is an existential variable $y$ in the scope of universal quantifiers binding $\vec x$ in $F$.
%If there is a witness for $y$ for any values used for $\vec x$,
%then we let $f_y(\vec x)$ map to the corresponding witnesses to satisfy $F'$.
%Conversely, if $f_y(\vec x)$ is a witness in $F'$ for any values used for $\vec x$,
%then we can choose $y$ to be value given by $f_y(\vec x)$ to satisfy $F$.
%A detailed proof can be found in \citep[Chapter~3]{hodges1997shorter}.
%\end{proof} 

Lastly, we state a simple form of a fundamental result of mathematical logic: Herbrand's Theorem.
It allows a reduction of first-order logic validity to validity of propositional logic.
We will make use of it in our procedure given in \cref{sec:atomic_rewriting_dec_proc_dtgd},
but we will not require it for any main result of this thesis.
\begin{thm}[\cite{herbrand1930recherches}]\label{thm:herbrand}
A set of clauses $S$ is unsatisfiable \ifftext
there is a finite unsatisfiable set $S'$ of ground instances of clauses in $S$.
\end{thm}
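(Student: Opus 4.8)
The plan is to prove the two directions separately, with the backward direction being a short soundness argument and the forward direction carrying essentially all the weight. For the backward direction, suppose $S'$ is a finite unsatisfiable set of ground instances of clauses in $S$. If some structure $\calA$ satisfied $S$, then, since every clause of $S$ is implicitly universally quantified, $\calA$ would satisfy each of its ground instances as well, reading off the required variable assignment from the interpretations in $\calA$ of the ground terms appearing in the substitution. Hence $\calA\models S'$, contradicting the unsatisfiability of $S'$; so $S$ is unsatisfiable.

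For the forward direction, I would introduce the \emph{Herbrand universe} $\calH$: the set of all ground terms built from the constants and function symbols occurring in $S$ (adjoining a fresh constant if $S$ contains none, so that $\calH\neq\emptyset$). Let $\calG$ denote the set of all ground instances $C\theta$ where $C\in S$ and $\theta$ maps $\vars(C)$ into $\calH$. I would then treat each ground atom occurring in $\calG$ as a propositional variable, so that $\calG$ becomes a (generally infinite) set of propositional clauses. The key step is the \emph{Herbrand model property}: $S$ is satisfiable \ifftext $\calG$ is propositionally satisfiable. The easy half of this reuses the soundness argument above. The substantive half is the converse: given a propositional assignment $\nu$ satisfying $\calG$, define the Herbrand structure $\calA_\calH$ with domain $\calH$, interpreting every constant and function symbol syntactically (so that each ground term denotes itself) and declaring $R^{\calA_\calH}(\vec t)$ to hold exactly when $\nu$ makes the atom $R(\vec t)$ true. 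One then checks $\calA_\calH\models S$: because every assignment of the variables of a clause $C$ to elements of $\calH$ is literally a ground substitution $\theta$ into $\calH$, and $C\theta\in\calG$ is satisfied by $\nu$, the universally quantified clause $C$ holds in $\calA_\calH$.

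With the Herbrand model property in hand, the theorem follows quickly. Assuming $S$ is unsatisfiable, the property gives that $\calG$ is propositionally unsatisfiable. By the compactness theorem for propositional logic, some finite subset $S'\subseteq\calG$ is already propositionally unsatisfiable. Since the clauses in $S'$ are ground, their propositional unsatisfiability coincides with first-order unsatisfiability, so $S'$ is exactly the desired finite unsatisfiable set of ground instances of clauses in $S$.

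I expect the main obstacle to be the substantive half of the Herbrand model property, namely the construction of $\calA_\calH$ and the verification that satisfying all ground instances over $\calH$ suffices to satisfy the universally quantified clauses. The delicate points are ensuring $\calH$ is nonempty (handled by adjoining a constant) and the bookkeeping that identifies variable assignments into $\calH$ with ground substitutions, which is precisely what lets a purely propositional model lift to a genuine first-order model. The only external ingredient is propositional compactness, which I would invoke as a black box; it can itself be obtained via König's lemma applied to a semantic tree of partial truth assignments, which is close in spirit to Herbrand's original argument.
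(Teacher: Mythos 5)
Your argument is correct, but note that the paper does not prove this statement at all: it is quoted as a classical background result (attributed to Herbrand) and used only once, as a black box, in the decision procedure for atomic rewritings of \dtgds. So there is no in-paper proof to compare against; what you have written is the standard modern derivation — reduce to the Herbrand universe $\calH$, establish the Herbrand model property (a clause set is satisfiable \ifftext its set $\calG$ of ground $\calH$-instances is propositionally satisfiable, via the syntactic term structure in one direction and soundness of instantiation in the other), and then apply propositional compactness to extract a finite unsatisfiable $S'\subseteq\calG$. All the steps check out, including the two points you flag as delicate: adjoining a constant when $S$ has none (otherwise $\calH=\emptyset$ and the Herbrand structure has empty domain), and the identification of variable assignments into $\calH$ with ground substitutions, which is what makes the lifted structure $\calA_\calH$ satisfy the universally quantified clauses. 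You even prove slightly more than the statement asks, since your $S'$ consists of instances over the Herbrand universe specifically. The one caveat worth recording is that this construction presupposes clauses without a built-in equality predicate (otherwise $\calA_\calH$ would need to be quotiented by a congruence); that assumption is satisfied everywhere the paper invokes the theorem, where the clauses are function-free and equality-free.
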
 

\subsection{Unification}
An important step in our algorithms will be the unification of atoms.
First-order unification has its origin in the seminal paper of \cite{robinson1965machine}
that introduced the resolution calculus of first-order logic.
In this section, we introduce basic definitions and results from the literature
that we will later make us of.

\begin{defn}[Substitution]
A \emph{\iindex{substitution}} is a partial function $\rho:\varset\rightarrow\termset$.
\end{defn} 
Given a substitution $\rho$, we let $\dom(\rho)$ denote the domain of $\rho$ 
and assume $x\rho\coloneqq\rho(x)=x$ for ${x\notin\dom(\rho)}$.
Given two substitutions $\rho,\rho'$, we write $\rho\rho'$ for the composition
of the substitutions; that is, $\rho\rho'\coloneqq\rho'\circ\rho$.
Moreover, we write \index[not]{$[t_i/x_j]$}$[t_1/x_1,\dotsc,t_n/x_n]$ for a substitution~$\rho$ 
defined by ${x_i\rho\coloneqq t_i}$ for $1\leq i\leq n$.
Lastly, given a formula $\phi(\vec x)$, 
we write $\phi\rho$ for the formula obtained by replacing any free 
variable $x_i$ in $\phi$ by $\rho(x_i)$.
\begin{exmpl}
We have
\begin{align*}
	R(x_1,x_2)[y/x_1,f(c)/x_2]&=R(y,f(c)),\\
	R(x_1,x_2)[x_2/x_1]&=R(x_2,x_2),\\
	R(x_1,x_2)[x_2/x_1,x_1/x_2]&=R(x_2,x_1),\\
	R(x_1,x_2)[x_2/x_1][x_1/x_2]&=R(x_1,x_1).
\end{align*} 
\end{exmpl} 
\begin{defn}[Renaming]
A substitution $\rho$ is a \emph{\iindex{renaming}} if $\rho$ is injective on $\dom(\rho)$ and $x\rho$ is a variable for any $x\in\dom(\rho)$.
\end{defn} 
\begin{lem}\label{lem:renaming_iff_inverse_subst}
A substitution $\rho$ is a renaming \ifftext it has an inverse substitution.
\end{lem}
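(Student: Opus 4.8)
The plan is to prove the two directions separately, working from the reading that a substitution has an inverse $\rho'$ precisely when both composites $\rho\rho'$ and $\rho'\rho$ equal the identity substitution (the empty substitution, which fixes every variable). Recall here that composition is read as $\rho\rho' = \rho'\circ\rho$, so $x(\rho\rho') = (x\rho)\rho'$.

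For the forward direction, suppose $\rho$ is a renaming. Since $\rho$ is injective on $\dom(\rho)$ and sends each $x\in\dom(\rho)$ to a variable $x\rho$, I would define the candidate inverse $\rho'$ by reversing every assignment: whenever $x\rho = y$ with $x\in\dom(\rho)$, set $y\rho' \coloneqq x$. Injectivity of $\rho$ on its domain guarantees that no variable $y$ is assigned two different values, so $\rho'$ is a well-defined substitution mapping variables to variables. It then remains to check that $\rho\rho'$ and $\rho'\rho$ act as the identity on an arbitrary variable. I would do this by a short case analysis, distinguishing whether a variable lies in $\dom(\rho)$, arises as an image $x\rho$, or is untouched by $\rho$, using finiteness of $\dom(\rho)$ together with injectivity to see that $\rho$ permutes the finitely many variables it moves, so that applying $\rho'$ undoes it and returns every variable to itself.

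For the backward direction, suppose $\rho$ has an inverse $\rho'$, so that $\rho\rho'$ and $\rho'\rho$ are the identity. Injectivity on $\dom(\rho)$ is then immediate: if $x\rho = z\rho$, applying $\rho'$ gives $x = x\rho\rho' = z\rho\rho' = z$. The substantive point is that $x\rho$ must be a variable for every $x\in\dom(\rho)$. I would argue this by contradiction: if $x\rho = t$ were a constant or a compound term $f(\vec s)$, then, because applying a substitution to a term only replaces the variables occurring inside it and never alters the outermost constant or function symbol, the term $t\rho'$ would again carry that same leading symbol and hence could not be a bare variable; yet $t\rho' = x\rho\rho' = x$ is a variable, a contradiction. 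Thus each $x\rho$ is a variable, and $\rho$ is a renaming.

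I expect the main obstacle to be exactly this structural fact in the backward direction, namely that a substitution preserves the outermost symbol of any non-variable term, which is what rules out $x\rho$ being a constant or functional term and which rests on the inductive definition of applying a substitution to a term. The remaining verifications, especially the case analysis confirming that $\rho\rho'$ and $\rho'\rho$ are the identity in the forward direction, are routine once one observes that a renaming acts as a finite permutation on the variables in its domain.
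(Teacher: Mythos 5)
Your backward direction is fine, and in fact more careful than the paper's: the paper simply says that $x\rho$ must be a variable because $\rho^{-1}$ is a substitution and is ``defined on $x\rho$'', whereas you supply the actual reason (a substitution preserves the outermost symbol of a non-variable term, so $x\rho\rho'$ could never return to the variable $x$ if $x\rho$ were a constant or a compound term). The injectivity argument is the same in both.

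The forward direction, however, has a genuine gap, and it comes from your chosen reading of ``inverse substitution''. If an inverse is required to satisfy $\rho\rho'=\rho'\rho=\mathit{id}$ as the identity on \emph{every} variable, then the lemma is false as you have set it up. Take $\rho=[y/x]$ with $x\neq y$: this is a renaming by the paper's definition ($\dom(\rho)=\{x\}$, injective there, image a variable). Any candidate inverse must send $y$ to $x$ so that $x\rho\rho'=x$; but since $y\notin\dom(\rho)$ we have $y\rho=y$, hence $y\rho\rho'=y\rho'=x\neq y$, so $\rho\rho'$ is not the global identity. The same example refutes your key observation that ``a renaming acts as a finite permutation on the variables it moves'': $[y/x]$ moves $x$ to a variable outside the set of variables it moves, so the ``routine case analysis'' you defer to cannot be completed. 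The paper sidesteps this by reading ``has an inverse substitution'' much more weakly: it only asks that the relational inverse $\rho^{-1}$ (the partial function reversing the assignments of $\rho$ on $\dom(\rho)$) is itself a substitution, i.e.\ a partial function $\varset\rightarrow\termset$; under that reading the forward direction is immediate from injectivity and the fact that each $x\rho$ is a variable, with no composite identities to verify. To repair your proof you should either adopt that weaker notion of inverse, or weaken the identities to hold only on $\dom(\rho)$ and its image rather than on all of $\varset$.
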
 
\begin{proof}
If $\rho$ is a renaming, then $\rho^{-1}$ is a partial function on $\termset$ since $\rho$ is injective, 
and since $x\rho$ is a variable for any $x\in\dom(\rho)$, it is a substitution.
If $\rho$ has an inverse substitution~$\rho^{-1}$, then $\rho$ must be injective, 
and for any $x\in\dom(\rho)$, $x\rho$ must be a variable 
as $\rho^{-1}$ is defined on $x\rho$ and $\rho^{-1}$ is a substitution.
\end{proof} 
\begin{defn}[Unifier]
A \emph{\iindex{unifier}} of atoms $A_1,\dotsc,A_n$ and $B_1,\dotsc,B_n$ is a substitution $\theta$ satisfying $A_i\theta=B_i\theta$ for $1\leq i\leq n$.
\end{defn} 
In theory, there might be many different unifiers of atoms $A_1,\dotsc,A_n$ and $B_1,\dotsc,B_n$.
Practically, however, one does not want to consider the vast set of all possible unifiers.
Luckily, it often suffices to only consider one special unifier:
\begin{defn}[Most general unifier]
A unifier $\theta$ of atoms $A_1,\dotsc,A_n$ and $B_1,\dotsc,B_n$ is called a \emph{\iindex{most general unifier}} (\iindexsee{mgu}{most general unifier})
if for any unifier $\rho$ of $A_1,\dotsc,A_n$ and $B_1,\dotsc,B_n$, there is a substitution $\delta$ such that
$\rho=\theta\delta$.
\end{defn} 

\begin{lem}\label{lem:mgu_uniqueness}
For any unifiable atoms $A_1,\dotsc,A_n$ and $B_1,\dotsc,B_n$,
the mgu is unique up to renaming.
\end{lem}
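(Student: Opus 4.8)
The plan is to exploit the defining property of mgus symmetrically. Let $\theta$ and $\theta'$ be two most general unifiers of the given atoms. Since $\theta$ is an mgu and $\theta'$ is in particular a unifier, the definition yields a substitution $\delta$ with $\theta' = \theta\delta$; symmetrically, since $\theta'$ is an mgu and $\theta$ a unifier, there is a $\delta'$ with $\theta = \theta'\delta'$. The goal is then to show that $\delta$ is a renaming on the variables that actually occur in the images of $\theta$ (equivalently, in the unified atoms $A_i\theta$), since that is all that the equation $\theta' = \theta\delta$ involves.

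First I would combine the two identities. Substituting gives $\theta = \theta'\delta' = \theta\delta\delta'$, so for every variable $x$ we have $(x\theta)\delta\delta' = x\theta$, recalling the convention $\rho\rho' = \rho'\circ\rho$. Hence $\delta\delta'$ fixes every term of the form $x\theta$, and therefore fixes every variable occurring in some image $x\theta$; by the symmetric computation $\delta'\delta$ fixes every variable occurring in some image $x\theta'$.

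Next I would upgrade this to the renaming property. Let $v$ be a variable occurring in the range of $\theta$. From $v\delta\delta' = v$ one reads off that $v\delta$ is a term $s$ with $s\delta' = v$; since applying a substitution to a constant or a compound term can never produce a bare variable, $s = v\delta$ must itself be a variable. Thus $\delta$ sends these range-variables of $\theta$ to variables, and the relations $v\delta\delta' = v$ force injectivity there (if $v\delta = w\delta$ then $v = v\delta\delta' = w\delta\delta' = w$). Restricting $\delta$ to these variables, the pair of identities $\delta\delta' = \mathrm{id}$ and $\delta'\delta = \mathrm{id}$ exhibits an inverse, so by \cref{lem:renaming_iff_inverse_subst} the restriction of $\delta$ is a renaming, and $\theta' = \theta\delta$ then displays $\theta$ and $\theta'$ as equal up to renaming.

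The main obstacle is the bookkeeping of domains: the mgu definition places no constraint on how $\theta$, $\theta'$, $\delta$, and $\delta'$ behave on variables outside the atoms, so the delicate point is to confine every argument to the finitely many variables occurring in the images of $\theta$ and $\theta'$ and to verify that $\delta$ is a genuine renaming there, that is, both injective and variable-valued, rather than merely a one-sided algebraic factor. Once that restriction is handled, the appeal to \cref{lem:renaming_iff_inverse_subst} closes the argument cleanly.
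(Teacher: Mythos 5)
Your proposal is correct and follows essentially the same route as the paper's proof: obtain $\delta,\delta'$ from the mgu property in both directions, compose to get $\theta=\theta\delta\delta'$ and $\theta'=\theta'\delta'\delta$, and conclude via \cref{lem:renaming_iff_inverse_subst}. You are in fact more careful than the paper, which asserts $\delta\delta'=id$ outright; your restriction to the variables occurring in the range of $\theta$ and the observation that $s\delta'$ can only be a variable when $s$ is, patch the one point the paper glosses over.
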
 
\begin{proof}
Let $\theta,\theta'$ be two mgus of the given atoms.
Then there are substitutions $\delta,\delta'$ such that
$\theta=\theta'\delta'$ and $\theta'=\theta\delta$.
Thus, $\theta=\theta\delta\delta'$ and $\theta'=\theta'\delta'\delta$, 
i.e.\ $\delta\delta'=id$ and $\delta'\delta=id$.
Hence, $\delta,\delta'$ are mutual inverse bijections, and thus renamings by \cref{lem:renaming_iff_inverse_subst}.
\end{proof} 

The problem of obtaining an mgu for given atoms $A_1,\dotsc,A_n$ and 
$B_1,\dotsc,B_n$ is rather old
and many different algorithms, some even linear, exist.
\begin{lem}\label{lem:mgu_complexity}
The mgu of atoms $A_1,\dotsc,A_n$ and $B_1,\dotsc,B_n$
can be obtained in time 
$\calO\bigl(\sum_{i=1}^n\size(A_i)+\size(B_i)\bigr)$,
where $\size(A)$ is the encoding size of an atom $A$.
\end{lem}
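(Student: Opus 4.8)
The plan is to reduce the simultaneous unification of the $n$ pairs to a single unification problem and then invoke an off-the-shelf linear-time unification algorithm. First I would pick a fresh function symbol $g$ of arity $n$ and form the two terms $A\coloneqq g(A_1,\dotsc,A_n)$ and $B\coloneqq g(B_1,\dotsc,B_n)$, where the atoms $A_i,B_i$ are read as terms in the obvious way. Since $A$ and $B$ share the same outermost symbol $g$, a substitution $\theta$ satisfies $A\theta=B\theta$ \ifftext $A_i\theta=B_i\theta$ for every $1\le i\le n$; hence the unifiers of $A$ and $B$ are exactly the simultaneous unifiers of $A_1,\dotsc,A_n$ and $B_1,\dotsc,B_n$. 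Because being a most general unifier depends only on the set of all unifiers, an mgu of $A$ and $B$ is an mgu of the original family and conversely (uniqueness up to renaming being guaranteed by \cref{lem:mgu_uniqueness}).

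Next I would account for the sizes. We have $\size(A)+\size(B)=\sum_{i=1}^n\bigl(\size(A_i)+\size(B_i)\bigr)+\calO(n)$, the additive term arising from the two occurrences of $g$ and the surrounding structure. As each atom has encoding size at least $1$, we have $n\le\sum_{i=1}^n\size(A_i)$, so the overhead is absorbed and $\size(A)+\size(B)=\calO\bigl(\sum_{i=1}^n\size(A_i)+\size(B_i)\bigr)$.

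Finally I would apply a known linear-time unification procedure for a single pair of terms -- for instance the classical algorithms of Paterson and Wegman or of Martelli and Montanari -- which computes an mgu of $A$ and $B$, or reports non-unifiability, in time linear in $\size(A)+\size(B)$. Composing this with the size estimate above yields the claimed bound.

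The genuinely substantive content is not the reduction, which is routine, but the existence of a linear-time single-pair unification algorithm; the main obstacle is therefore to correctly invoke such a result from the literature rather than to reprove it here. The only points needing care on our side are the legitimacy of treating the atoms uniformly as terms under a common wrapper and the verification that the $\calO(n)$ wrapping overhead does not exceed the input size, both of which are handled above.
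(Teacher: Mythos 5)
Your proposal is correct and matches the paper's approach: the paper's proof is simply a citation to the linear-time unification algorithm of Paterson and Wegman, which is exactly the result you invoke. The only difference is that you spell out the standard reduction of simultaneous unification to a single pair via a fresh $n$-ary wrapper symbol, together with the size accounting, which the paper leaves implicit.
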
 
\begin{proof}
A linear procedure can be found, for example,
in \citep{Paterson:1976:LU:800113.803646}.
\end{proof} 

Some of our algorithms will only have to deal with function-free atoms,
for which the following lemma will be a useful helper.
\begin{lem}\label{lem:mgu_function_free}
Assume $A_1,\dotsc,A_n$ and $B_1,\dotsc,B_n$ are unifiable and function-free,
and let $\theta$ be an mgu. 
Then all $A_i\theta=B_i\theta$ are function-free.
\end{lem}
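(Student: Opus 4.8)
The plan is to reduce the statement to a property of the substitution $\theta$ itself: I claim it suffices to show that $\theta$ maps every variable occurring in the $A_i$ and $B_i$ to a variable or a constant — call such a substitution \emph{function-free}. This reduction is immediate, since each argument of a function-free atom is a variable or a constant, constants are never altered by a substitution, and a function-free $\theta$ replaces every variable argument again by a variable or constant. Hence $A_i\theta=B_i\theta$ contains no function symbol.

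To establish that $\theta$ is function-free, I would first exhibit \emph{one} function-free mgu and then transfer the property to the arbitrary mgu $\theta$. For the construction, note that unifiability forces the $A_i$ and $B_i$ to share predicate symbol and arity position-wise, so let $\sim$ be the smallest equivalence relation on the arguments of the atoms (all of which are variables or constants) that identifies the $j$-th argument of $A_i$ with the $j$-th argument of $B_i$ for every $i$ and $j$. Unifiability guarantees that no $\sim$-class contains two distinct constants. Choosing for each class a representative — the constant if the class contains one, otherwise any variable in the class — and mapping every variable to the representative of its class yields a function-free substitution $\rho$. A short check shows that $\rho$ is a unifier, and in fact a \emph{most general} one: for any unifier $\sigma$, transitivity of equality forces $x\sim y$ to imply $x\sigma=y\sigma$, so $\sigma$ factors as $\rho\lambda$ where $\lambda$ agrees with $\sigma$ on the chosen representatives.

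Finally, I would invoke \cref{lem:mgu_uniqueness}: since both $\rho$ and $\theta$ are mgus of the same atoms, $\theta=\rho\delta$ for some renaming $\delta$. Because $\rho$ sends variables to variables or constants while $\delta$ fixes constants and sends variables to variables, the composite is again function-free under the convention $x(\rho\delta)=(x\rho)\delta$; by the first paragraph this finishes the proof.

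The main obstacle is precisely the middle step: a priori an mgu is free to introduce function symbols, so the genuine content of the lemma is ruling this out. I expect the cleanest route to be the explicit construction of a function-free mgu above, as it makes transparent why no function symbol can ever be forced when the inputs are function-free; arguing directly from a concrete unification algorithm (as in \cref{lem:mgu_complexity}) would also work but would require reasoning about the algorithm's invariants. The transfer via uniqueness up to renaming is then routine, the only care needed being that renamings preserve function-freeness.
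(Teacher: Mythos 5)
Your proof is correct, but it takes a genuinely different route from the paper's. The paper argues by contradiction using only the \emph{defining} property of an mgu: it picks the trivial substitution $\theta'$ collapsing every variable to a single variable $x$, observes that this is a function-free unifier, factors it as $\theta'=\theta\delta$, and notes that a functional subterm $f(\vec t\,)$ in some $A_i\theta$ would survive as $f(\vec t\delta)$ in $A_i\theta'=A_i\theta\delta$ --- a contradiction. This avoids both of your heavier ingredients: it never needs the auxiliary unifier to be \emph{most general}, and it never invokes uniqueness of mgus up to renaming (\cref{lem:mgu_uniqueness}). Your route is longer but buys two things. First, it is constructive: the equivalence-class construction is essentially the union-find picture underlying linear unification algorithms, and it makes explicit \emph{which} function-free mgu exists. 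Second, it is more robust in the presence of constants: the paper's collapse-to-one-variable substitution is not actually a unifier when the atoms mention constants (e.g.\ $A_1=R(x_1)$, $B_1=R(c)$), whereas your $\sim$-classes, with a constant chosen as representative whenever one is present, handle this case correctly; since the lemma as stated does not forbid constants, your argument covers strictly more instances than the paper's literal proof. The only point to tighten is the factorisation $\sigma=\rho\lambda$: you should fix the usual convention that mgu factorisations are only required to agree on the variables of the input atoms, so that the behaviour of $\lambda$ off the representatives is immaterial.
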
 
\begin{proof}
Assume otherwise. 
Pick some variable $x$.
Since all given atoms are function-free and unifiable, the substitution $\theta'$ defined by
$v\theta'\coloneqq x$ for any $v\in\varset$ is a unifier of given atoms and all $A_i\theta'=B_i\theta'$ are function-free.
Thus, there is a substitution $\delta$ such that $\theta'=\theta\delta$.
Pick some $A_i$ such that $A_i\theta$ contains a functional subterm $f(\vec t\,)$.
Then $f(\vec t\,)\delta=f(\vec t\delta)$ and hence, $A_i\theta'=A_i\theta\delta$ is functional,
a contradiction.
\end{proof} 

\subsection{Resolution}

In this section, we very briefly recall first-order logic resolution.
A detailed introduction can be found in \citep{bachmair2001resolution}.

Resolution, as introduced by \cite{robinson1965machine}, is a complete 
theorem proving method for the unsatisfiability 
problem of first-order logic, and undeniably, it builds 
the foundation of some of the most successful automated theorem proving procedures.
Resolution operates on a formula in conjunctive normal form, usually
represented by a set of \index{clause}\emph{clauses}.
A clause ${C=\{\lnot L_1,\dotsc,\lnot L_n,L_1',\dotsc,L_m'\}}$
corresponds to a disjunction of (negated) atoms, also called \emph{\iindex{literals}}.
The most important inference rule of the resolution calculus is the binary resolution rule:
\begin{equation*}
\inferrule
  {C\cup\{L\}\\D\cup\{\lnot L'\}}
  {(C\cup D)\theta}
\end{equation*} 
where $C,D$ are clauses and $\theta$ is an mgu of $L$ and $L'$.

In our work, we will usually work with clauses that contain at least 
one negative literal.
Note that any such clause ${C=\{\lnot L_1,\dotsc,\lnot L_n,L_1',\dotsc,L_m'\}}$ 
can be written as an inference rule $\bigwedge_{i=1}^nL_i\rightarrow\bigvee_{i=1}^m L_i'$.

There are many extensions of the standard resolution calculus, such as
ordered resolution, paramodulation, and superposition, that increase
the expressiveness or efficiency of the calculus, but we will not have to directly 
deal with the rules of these extensions.

\section{The Guarded Fragment}\label{sec:gf}
In this section, we briefly introduce the \emph{\iindex{guarded fragment}} (\iindexsee{\gf}{guarded fragment}).
The core idea of guarded logics is to ``guard'' certain operations by using atoms in the language.
In the case of \gf, the operations being guarded are the universal and existential quantification of variables.

\begin{defn}
The formulas of \gf are defined inductively as follows:
\begin{propylist}
\item The atomic formulas of \gf are $\top$, $\bot$, 
any function-free atom $R(\vec x,\vec c)$, 
and any equality atom $t_1\approx t_1$ for $t_1,t_2\in\varset\cup\constset$.
\item If $\phi,\psi$ are in \gf, then $\lnot\phi$, $\phi\land\psi$, and $\phi\lor\psi$ are in \gf.
\item If $\phi$ and $G$ are in \gf and $G$ is an atom with $\vars(\phi)\subseteq\vars(G)$, then $\exists\vec x(G\land \phi)$ and $\forall\vec x(G\rightarrow \phi)$ are in \gf for any finite $\vec x\subseteq\varset$.
The atom $G$ is called the \emph{guard of $\phi$}.
\end{propylist} 
\end{defn} 
It is noteworthy that the original definition of \gf in \citep{andreka1998modal}
does not allow constants in its formulas.
This feature has been added by \cite{gradel1999restraining},
who at the same time also proved the following result:
\begin{lem}[\citep{gradel1999restraining}]
For any \gf sentence~$\phi$, one can construct in polynomial time 
a \gf sentence $\phi'$ without constants that is satisfiable \ifftext $\phi$ is satisfiable.
\end{lem}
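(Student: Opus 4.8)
The plan is to eliminate the constants $c_1,\dots,c_n$ of $\phi$ by turning them into fresh variables $z_1,\dots,z_n$ that are existentially quantified at the very top of the sentence, and then to repair the guardedness that this substitution would otherwise destroy. The naive substitution $\phi[z_i/c_i]$ fails to be guarded precisely because a constant may occur in the body of a quantifier whose guard does not mention it; once the constant becomes a variable, that variable is free in the body but absent from the guard. My key device to fix this is an \emph{arity extension}: for every relation symbol $R$ of arity $a_R$ I introduce a fresh symbol $R^{+}$ of arity $a_R+n$, and I replace every atom $R(\vec t)$ (in guard and in matrix positions alike) by $R^{+}(\vec t\,',z_1,\dots,z_n)$, where $\vec t\,'$ is $\vec t$ with each $c_i$ replaced by $z_i$. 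Writing $\phi^{+}$ for the result and adding one more fresh $n$-ary symbol $G^{*}$, I set
\[
  \phi' \;:=\; \exists z_1\cdots\exists z_n\,\bigl(G^{*}(z_1,\dots,z_n)\wedge\phi^{+}\bigr).
\]
This sentence contains no constants, and every newly introduced symbol is a relation symbol, so the signature change is legitimate.

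First I would verify that $\phi'\in\gf$ and that the construction is polynomial; guardedness is the crucial point. Consider an inner quantifier $\exists\vec y\,(R(\vec t)\wedge\chi)$ of $\phi$ (the universal case is dual); its translation is $\exists\vec y\,(R^{+}(\vec t\,',\vec z)\wedge\chi^{+})$. The original guardedness gives $\vars(\chi)\subseteq\vars(R(\vec t))=\vars(\vec t)$, while $\vars(\chi^{+})\subseteq\vars(\chi)\cup\{z_1,\dots,z_n\}$ and $\vars(R^{+}(\vec t\,',\vec z))=\vars(\vec t)\cup\{z_1,\dots,z_n\}$; hence $\vars(\chi^{+})\subseteq\vars(R^{+}(\vec t\,',\vec z))$ and the translated subformula is again guarded. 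At the top level $\vars(\phi^{+})\subseteq\{z_1,\dots,z_n\}=\vars(G^{*}(\vec z))$, so the outer block is guarded as well. The only delicate case is a guard that is an \emph{equality} atom; I would dispose of these by a standard preprocessing step that eliminates equality-guarded quantifiers by substitution (e.g.\ $\forall y\,((y\approx t)\to\chi)\equiv\chi[t/y]$), after which every genuine quantifier guard is relational and the argument above applies. Since the transformation only relabels relation symbols with $n\le|\phi|$ extra argument slots and performs variable substitutions, it runs in polynomial time.

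It then remains to prove that $\phi$ and $\phi'$ are equisatisfiable. For the forward direction, given $\calA\models\phi$ with $a_i:=c_i^{\calA}$, I interpret $(R^{+})^{\calB}(\vec b,\vec e):\iff R^{\calA}(\vec b)$ whenever $\vec e=(a_1,\dots,a_n)$ (and arbitrarily otherwise) on the same domain, and put $(a_1,\dots,a_n)\in (G^{*})^{\calB}$; choosing the witnesses $z_i\mapsto a_i$ collapses $\phi^{+}$ back to $\phi$, so $\calB\models\phi'$. For the converse, given $\calB\models\phi'$, I fix witnesses $d_1,\dots,d_n$ with $\calB\models G^{*}(\vec d)\wedge\phi^{+}[\vec z\mapsto\vec d]$ and define $\calA$ on the same domain by $c_i^{\calA}:=d_i$ and $R^{\calA}(\vec b):\iff (R^{+})^{\calB}(\vec b,d_1,\dots,d_n)$ for every $R$. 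A routine induction on subformulas then shows $\calA\models\psi[\dots]\iff\calB\models\psi^{+}[\dots,\vec z\mapsto\vec d]$: the atomic case holds by the very definition of $R^{\calA}$ (the appended coordinates are always the same tuple $\vec z$, hence uniformly instantiated to $\vec d$), and the Boolean and quantifier cases are immediate since the two structures share a domain. Taking $\psi=\phi$ gives $\calA\models\phi$.

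The main obstacle is exactly the one the arity extension is designed to overcome: eliminating a constant that occurs \emph{outside} the guards, where a direct substitution would leave an uncovered free variable and break the syntactic guardedness condition. I expect the bookkeeping for equality guards to be the only genuinely fiddly part; everything else reduces to checking that appending the constant-witness variables to every relation both preserves guardedness and lets the witnesses be read back off a model of $\phi'$ to reconstruct the constants. Notably, this route avoids having to force the ``name'' of a constant to be a unique element -- something \gf cannot express -- because the reconstruction simply selects one witness tuple $\vec d$ and reads every relation through it.
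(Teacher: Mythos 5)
Your proposal is correct and is essentially the paper's own construction: both existentially quantify fresh variables for the constants, pad every relation symbol with extra argument slots carrying those variables, and add a fresh guard atom $P(\vec c\,)$ (your $G^{*}(\vec z\,)$) for the outer block. Your treatment is more detailed — notably the explicit handling of equality guards and the back-and-forth model construction — but the route is the same.
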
 
\begin{proof}
Let $\vec c=\{c_1,\dotsc,c_k\}=\consts(\phi)$.
For every $m$-ary relation symbol $R$ of $\phi$,
introduce a new $(k+m)$-ary relation symbol $R^*$.
Further, let $P$ be a $k$-ary fresh relation symbol.
Then the sentence 
\begin{equation*}
\phi'\coloneqq\exists\vec c\,\bigl(P(\vec c)\land\phi[R^*(\vec c,\vec x)/R(\vec x)]\bigr)
\end{equation*} 
is in \gf, where by abuse of notation, $\phi[R^*(\vec c,\vec x)/R(\vec x)]$ is the formula obtained by replacing any atom of the form $R(\vec x)$ in $\phi$ by the corresponding atom $R^*(\vec c,\vec x)$.
The constants $\vec c$ are now variables in $\phi'$, and it is clear that $\phi$ and $\phi'$
are satisfiable over the same domains.
\end{proof} 
In particular, the resolution procedures given by \cite{de1998resolution} and \cite{ganzinger1999superposition} use a constant-free definition of \gf.
\begin{exmpl}
The following sentences are in \gf:
\begin{align*}
	&\forall x[x\approx x\rightarrow R(x)],
	&R(c_1,c_2)\land(\lnot P(c_3)\lor \lnot S(c_1,c_2)),\\
	&\forall x_1,x_2\bigl[R(x_1,x_2)\rightarrow \lnot R(x_2,x_1)\bigr],
	&\exists y_1[G(y_1)\land \lnot R(y_1)],\\
	&\forall x_1,x_2\bigl[R(x_1,x_2)\rightarrow \exists y[R(x_2,y)\land \top]\bigr],
	&\exists y\bigl[G(y)\land[\forall x(R(x,y)\rightarrow x\approx y)]\bigr].
\end{align*} 
A typical sentence that is not expressible in \gf is the simple sentence $\forall x_1,x_2\, R(x_1,x_2)$ or the transitivity axiom $\forall x_1,x_2,x_3[R(x_1,x_2)\land R(x_2,x_3)\rightarrow R(x_1,x_3)]$.
\end{exmpl} 

\section{The Relational Model, Queries, and Dependencies}\label{sec:rel_model}
As mentioned before, we will not directly work with general \gf formulas,
but instead consider problems that are closely related to database querying,
for which we fix a function-free first-order signature $\sigma=(\constset,\emptyset,\relset)$.
In the spirit of database theory, we introduce some special terminology in this section.

The set $\relset$ of relational symbols is called a \emph{\iindex{relational schema}}.
A \emph{\iindex{fact}} is a function-free atom $R(\vec c)$ without free variables.
An \emph{instance of a relation} $R\in\rel$ is a set of facts of the form $R(\vec c)$.
An \emph{\iindex{instance} $I$ of a relational schema $\relset$} is a union of instances for each $R\in\rel$.
We also call such an instance a \emph{\iindex{database}} \iindex[not]{$\db$}.
Note that every instance $I$ can be considered as a $\sigma$-structure with 
domain $\constset$ mapping any constant $c$ to the element of the same name.
Since $I$ is just a set of facts, we write $\consts(I)$ for the set of constants that occur in at least one fact of $I$;
that is, $\consts(I)\coloneqq\{c\mid\exists F\in I:c\in\consts(F)\}$.

Given a database $\db$, we naturally want to query it.
One simple query fragment is given by the class of \emph{conjunctive queries}.
Although these queries are very simple, they constitute the vast
majority of relational database queries arising in practice \citep{abiteboul1995foundations}.

A \emph{\iindex{conjunctive query}} (\iindexsee{\cq}{conjunctive query}) over a database $\db$ with schema $\relset$ 
is a formula of the form
\begin{equation*}
	Q(\vec x)=\exists\vec y\,(R_1\land\dotso\land R_n),
\end{equation*} 
with $R_i\in\rel$ and each $R_i$ only uses arguments in $\vec x$, $\vec y$, or constants from $\consts(\db)$.
If $Q$ contains no free variables, 
we call $Q$ a \emph{\iindex{boolean conjunctive query}} (\iindexsee{\bcq}{boolean conjunctive query}),
and if $Q$ is a single fact, we call $Q$ an \emph{\iindex{atomic query}}.

A \emph{\iindex{union of conjunctive queries}} (\iindexsee{\ucq}{union of conjunctive queries}) 
is a disjunction of \cqs. 
Similarly, a \emph{\iindex{union of boolean conjunctive queries}} (\iindexsee{\ubcq}{union of boolean conjunctive queries})
is a disjunction of \bcqs. 
By abuse of notation, we sometimes consider a \ucq (\ubcq) $Q$ as a set of \cqs (\bcqs).
That is, we write $Q_i\in Q$ to mean ``$Q_i$ is a disjunct occurring in $Q$''.

Finally, we often want to enforce certain constraints on databases.
Practically, such constraints can come in a variety of forms, such as functional dependencies, referential integrity constraints, inclusion dependencies, join dependencies, etc.
Intuitively, many constraints are based on the idea that whenever
some tuples satisfy some condition, then some other tuples must also exist or 
some values must be equal, as described by \cite{beeri1984proof}.
They took this thought and gave a unified first-order logic framework for database constraints.
One prominent part of this framework is constituted by the class of tuple-generating dependencies.

A \emph{\iindex{tuple-generating dependency}} (\iindexsee{\tgd}{tuple-generating dependency}) is a function-free first-order sentence of the form
\begin{equation*}
	\tau=\forall \vec x \bigl[\body(\vec x) \rightarrow \exists \vec y\,\head(\vec x, \vec y)\bigr],
\end{equation*} 
where $\body$ and $\head$ are conjunctions of constant-free atoms, and
$\vars(\body)=\vec x$.
The conjunction~$\body$ is referred to as the \emph{\iindex{body}} and the conjunction $\head$ as the \emph{\iindex{head}} of $\tau$.

A \emph{\iindex{disjunctive \tgd}} (\iindexsee{\dtgd}{disjunctive TGD}) is a function-free first-order sentence of the form 
\begin{equation*}
	\forall \vec x [\body(\vec x) \rightarrow \bigvee_i^n\exists \vec y_i ~ \head_i(\vec x, \vec y_i)],
\end{equation*} 
where $\body$ and each $\head_i$ are conjunctions of constant-free atoms, and $\vars(\body)=\vec x$. 
We refer to $\head_i$ as the \emph{$\nth{i}$ \iindex{head conjunction}} of $\tau$,
and call $\head_i$ a \emph{\iindex{non-full head conjunction}} if it contains some existential variable. 
The variables $x_i\in\vars(\body)$ that also occur in some head conjunction of the 
\dtgd are called the \emph{exported variables}.
We will use \iindex[not]{$\vec x_{i\restriction}$} to refer to the set of exported variables of the $\nth{i}$ head conjunction of a \dtgd.

A \emph{\iindex{full disjunctive \tgd}} is one that has no existentials in the head. 
If it is also a \tgd, we call it a \emph{\iindex{full \tgd}}. 

A \emph{\iindex{disjunctive guarded \tgd}} (\iindexsee{\dgtgd}{disjunctive guarded \tgd}) is one where there is a single atom
in $\body$ that contains all of the variables of $\body$. 
Such an atom is called the \emph{\iindex{guard} of $\tau$}.
If $\tau$ is also a \tgd, we call it a \emph{\iindex{guarded \tgd}} (\iindexsee{\gtgd}{guarded \tgd}).

To avoid redundancy, we treat the body, each head conjunction, and the head of a (disjunctive) \tgd as sets.
For brevity, we will omit the universal quantifiers in front of (disjunctive) \tgds and 
assume a minimal quantification for all occurring variables.
For example, the \dgtgd
\begin{equation*}
	\forall x_1\bigl[B(x_1)\land B(x_1)\rightarrow \exists y_1,y_2\,H(y_2,x_1)\lor\exists y_1,y_2\bigl(H(y_2,x_1)\land H(y_2,x_1)\bigr)\bigr]
\end{equation*} 
will be treated as the pair
\begin{equation*}
	\Bigl(\{B(x_1),B(x_1)\},\bigl\{\{H(y_2,x_1)\},\{H(y_2,x_1),H(y_2,x_1)\}\bigr\}\Bigr)=\Bigl(\{B(x_1)\},\bigl\{\{H(y_2,x_1)\}\bigr\}\Bigr)
\end{equation*} 
and hence be simplified to 
\begin{equation*}
	\forall x_1\bigl[B(x_1)\rightarrow \exists y_2\,H(y_2,x_1)\bigr]
\end{equation*} 
and then be written as
\begin{equation*}
	B(x_1)\rightarrow \exists y_2\,H(y_2,x_1).
\end{equation*}

\section{Problem Statement: Query Answering under \dtgds}\label{sec:problem_statement}
Given some database $\db$ and a set of \dtgds $\Sigma$, 
we can consider possible extensions of~$\db$ that satisfy the dependencies $\Sigma$;
that is, databases $\db'\supseteq\db$ satisfying $\db'\models\Sigma$.
Moreover, given a query $Q$, we then might not only wonder whether some extension of $\db$
satisfies~$Q$, but whether all extensions will satisfy $Q$.
This thought captures the general idea of our problem statement,
of which we now give a formal definition:

\begin{defn}[Query Answering Problem]
Given a database $\db$, a set of (disjunctive) \tgds $\Sigma$,
a \ucq $Q(\vec x)$, and some $\vec c\subseteq\consts(\db)$, 
the \emph{Query Answering Problem} is to decide whether
the \ubcq $Q(\vec c)$ holds in every possible world for $\db,\Sigma$.
That is, we want to decide whether $\db,\Sigma\models Q(\vec c)$.
If $Q$ is a quantifier-free \ucq,
we will call it the \emph{\iindex{\qfqaprob}}.
\end{defn} 
In this work, we are interested in theories that consist of (disjunctive) \gtgds $\Sigma$ and a quantifier-free \ucq $Q$.
It is noteworthy that this problem is easily expressible as a 
satisfiability problem in \gf:
\begin{lem}\label{lem:gf_reduce}
For any database $\db$, set of \dgtgds $\Sigma$, and quantifier-free \ubcq $Q$,
we can reduce the \qfqaprob to the satisfiability problem of \gf in polynomial time.
\end{lem}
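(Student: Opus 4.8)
The plan is to turn the entailment $\db,\Sigma\models Q(\vec c)$ into an (un)satisfiability question and then massage the resulting first-order sentence into \gf. First I would observe that $\db,\Sigma\models Q(\vec c)$ holds \ifftext the sentence $\Phi\coloneqq\bigwedge_{F\in\db}F\;\land\;\bigwedge_{\tau\in\Sigma}\tau\;\land\;\lnot Q(\vec c)$ is unsatisfiable, so it suffices to build in polynomial time an equisatisfiable \gf sentence $\Phi'$; a \gf satisfiability oracle then decides the query answering problem by negating its answer. The database part $\bigwedge_{F\in\db}F$ is a conjunction of ground (function-free) atoms and is therefore already a \gf sentence, and since $Q$ is a \emph{quantifier-free} \ubcq, the sentence $Q(\vec c)$ is a Boolean combination of ground atoms, so $\lnot Q(\vec c)$ lies in \gf as well because \gf is closed under $\lnot,\land,\lor$. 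Thus only the dependencies require work.

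For a single \dgtgd $\tau=\forall\vec x[\body\rightarrow\bigvee_i\exists\vec y_i\,\head_i(\vec x_{i\restriction},\vec y_i)]$ I would first isolate the guard: writing $\body=G\land\body'$ with $G$ the guard atom (so $\vars(G)=\vars(\body)=\vec x$), the propositional equivalence $G\land\body'\rightarrow\psi \;\equiv\; G\rightarrow(\body'\rightarrow\psi)$ lets me present $\tau$ as $\forall\vec x\,[G\rightarrow(\body'\rightarrow\bigvee_i\exists\vec y_i\,\head_i)]$. This universal quantification is now guarded, since every free variable of the inner formula — those of $\body'$ and the exported variables $\vec x_{i\restriction}$ — occurs in $G$. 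The one genuine obstacle, and the main difficulty of the lemma, is that the head disjuncts $\exists\vec y_i\,\head_i$ need \emph{not} be guarded: $\head_i$ is an arbitrary conjunction of atoms, and when its atoms are arranged cyclically (e.g.\ $\exists y_1 y_2\,(R(x,y_1)\land R(x,y_2)\land E(y_1,y_2))$) no single atom contains all of $\vec x_{i\restriction}\cup\vec y_i$, so the existential block cannot be expressed by nested guarded quantifiers.

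To overcome this I would introduce, for each head conjunction $\head_i$, a fresh relation symbol $H_i$ of arity $|\vec x_{i\restriction}|+|\vec y_i|$, replace each head disjunct by the single guarded atom $\exists\vec y_i\,H_i(\vec x_{i\restriction},\vec y_i)$, and add the definitional axiom $\forall\vec x_{i\restriction}\vec y_i\,[H_i(\vec x_{i\restriction},\vec y_i)\rightarrow\head_i(\vec x_{i\restriction},\vec y_i)]$. Each such axiom is guarded by $H_i$, whose variables contain those of every atom of $\head_i$, and each rewritten dependency is now fully in \gf. Collecting the database conjunction, $\lnot Q(\vec c)$, all rewritten dependencies, and all definitional axioms yields the \gf sentence $\Phi'$; as the construction adds one relation and one axiom per head conjunction, with arities and sizes linear in the input, it runs in polynomial time.

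It remains to check that $\Phi$ and $\Phi'$ are equisatisfiable, which I expect to be routine. For the forward direction, given a model $M$ of $\Phi$ I would interpret each $H_i$ as exactly the tuples satisfying $\head_i$ in $M$; then the axioms hold by construction and each rewritten dependency holds because a witness for $\exists\vec y_i\,\head_i$ immediately witnesses $\exists\vec y_i\,H_i$. Conversely, from a model $M'$ of $\Phi'$ I would simply forget the interpretations of the $H_i$: the axiom $H_i\rightarrow\head_i$ turns any $H_i$-witness back into a $\head_i$-witness, so the original dependencies, the database, and $\lnot Q(\vec c)$ are all satisfied. Note that only the single implication $H_i\rightarrow\head_i$ is needed, since in the forward direction we are free to interpret $H_i$ tightly, and this establishes the reduction.
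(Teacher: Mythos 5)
Your proposal is correct and follows essentially the same route as the paper: the fresh head predicates $H_i$ with the definitional axioms $H_i(\vec x_{i\restriction},\vec y_i)\rightarrow\head_i$ are exactly the paper's single-head normal form transformation (\cref{defn:dtgd_shnf}), after which each head disjunct is a single atom that guards its own existential block, and the remaining steps (guarding the universal block by the body guard, observing that $\db$ and $\lnot Q$ are ground and hence in \gf) coincide with the paper's argument. Your explicit identification of the cyclic-head obstruction and the one-directional equisatisfiability check are welcome additions, but they do not change the substance of the reduction.
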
 
\begin{proof}
It is well know that $\db,\Sigma\models Q$ \ifftext $\db,\Sigma,\lnot Q$ is unsatisfiable.
Note that $\db$ and $\lnot Q$ are in \gf as they only consist of boolean combinations of ground atoms.
Now any \dgtgd can be transformed into its single-head normal form, as will be described in \cref{defn:dtgd_shnf} (a \dgtgd is single-headed if each head conjunction contains exactly one atom).
Any single-headed \dgtgd of the form
\begin{equation*}
\forall\vec x \left[G(\vec x)\land\bigwedge_{i=1}^nB_i(\vec x) \rightarrow \bigvee\limits_{i=1}^m \exists \vec y_i\,H_i(\vec x,\vec y_i)\right],
\end{equation*} 
can be rewritten into \gf by rewriting implications and using De Morgan's law for quantifiers 
in the usual way to obtain a \gf sentence of the form
\begin{equation*}
\lnot\exists\vec x \left[G(\vec x)\land\bigwedge_{i=1}^nB_i(\vec x) \land \bigwedge\limits_{i=1}^m \lnot\exists \vec y_i\,H_i(\vec x,\vec y_i)\right],
\end{equation*} 
noting that $\exists\vec y_i\,H_i(\vec x,\vec y)\equiv\exists\vec y_i\,[H_i(\vec x,\vec y)\land\top]$ 
and the guard atom $G(\vec x)$ contains each~$x_i$.
\end{proof} 

While the satisfiability problem of \gf is usually considered for an arbitrary sentence $\phi$,
our problem setting is strongly motivated by its connection to database querying.
Our input consists of three independent parts -- database, dependencies, and query -- 
and it is common in practice to ask many different queries while fixing the database and constraints.
This motivates a solution approach that also separates the concerns between dependencies and queries.
We hence follow a two step approach:

The first step performs an \emph{atomic rewriting} of given dependencies,
independent of any query and even database.
\begin{defn}[Atomic rewriting]\label{defn:atomic_rewriting}
Given a set of (disjunctive) \tgds $\Sigma$,
an \emph{\iindex{atomic rewriting} of $\Sigma$} is a finite
set of full (disjunctive) \tgds $\Sigma'$ that have the same answer for any
quantifier-free \ubcq $Q$ and database $\db$ as~$\Sigma$. 
\end{defn} 
\begin{exmpl}\label{exmpl:prelims_atomic_query_answering}
Given the database $\db=\{R(c,d),P(d)\}$, a set of \dtgds $\Sigma$
\begin{align*}
R(x_1,x_2)&\rightarrow\exists y_1\,S(x_1,y_1),\\
S(x_1,x_2)\land P(x_3)&\rightarrow T(x_1)\lor\bigl(T(x_1)\land U(x_1)\bigr),\\
R(x_1,x_2)\land T(x_1)&\rightarrow U(x_2),
\end{align*} 
and query $Q=U(x)$, we have
$\db,\Sigma\models Q(d)$ but $\db,\Sigma\not\models Q(c)$.
The rules
\begin{align*}
R(x_1,x_2)\land P(x_3)&\rightarrow T(x_1)\lor T(x_1)\land U(x_1)\\
R(x_1,x_2)\land T(x_1)&\rightarrow U(x_2)
\end{align*} 
are an atomic rewriting of $\Sigma$.
\end{exmpl} 
The rewriting step reduces the problem to the case of full (disjunctive) \tgds.
Using these full rules, the problem can then easily be solved by simple, well-known saturation algorithms
for any database $\db$ and query $Q$, 
as will be described in \cref{sec:atomic_rewriting_dec_proc_tgd,sec:atomic_rewriting_dec_proc_dtgd}.
This two-step approach does not only separate concerns, 
but is also believed to be one of the most promising approaches to 
achieve scalability of expressive query languages by reusing existing 
optimised database technologies \cite{ahmetaj2018rewriting}.

Our main goal, thus, is to derive a procedure that converts a set of (disjunctive) \gtgds into an atomic rewriting.
In order to verify the correctness of such a process, we will have to show that the returned answer $\Sigma'$
satisfies two key properties of an atomic rewriting: soundness and completeness.

\begin{defn}[Completeness]\label{defn:completeness}
Let $\Sigma,\Sigma'$ be two sets of (disjunctive) \tgds.
We say that $\Sigma'$ is \index{completeness}\emph{complete (\wrt $\Sigma$)} if $\db,\Sigma\models Q$ implies $\db,\Sigma'\models Q$.
for any database $\db$ and quantifier-free \ubcq $Q$.
\end{defn} 

\begin{defn}[Soundness]\label{defn:soundness}
Let $\Sigma,\Sigma'$ be two sets of (disjunctive) \tgds.
We say that~$\Sigma'$ is \index{soundness}\emph{sound (\wrt $\Sigma$)} if $\db,\Sigma'\models Q$ implies $\db,\Sigma\models Q$ for any database $\db$ and quantifier-free \ubcq~$Q$.
\end{defn} 

By plugging in the given definitions, we obtain the following lemma:
\begin{lem}\label{cor:atomic_rewriting_def}
Let $\Sigma,\Sigma'$ be two sets of (disjunctive) \tgds.
Then $\Sigma'$ is an atomic rewriting of~$\Sigma$
\ifftext $\Sigma'$ is finite, full, sound, and complete.
\end{lem}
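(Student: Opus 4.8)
The plan is to simply unfold the two relevant definitions and observe that the biconditional characterising an atomic rewriting factors exactly into the soundness and completeness conditions. First I would recall from \cref{defn:atomic_rewriting} that $\Sigma'$ being an atomic rewriting of $\Sigma$ means precisely that $\Sigma'$ is finite, consists of full (disjunctive) \tgds, and has the same answer as $\Sigma$ for every quantifier-free \ubcq $Q$ and every database $\db$. The finiteness and fullness requirements appear verbatim on both sides of the claimed equivalence, so they carry over with nothing to prove; the entire content of the lemma is therefore concentrated in the phrase ``same answer''.

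The only substantive step is the decomposition of that phrase. Having the same answer for all $\db$ and $Q$ means that $\db,\Sigma\models Q$ holds \ifftext $\db,\Sigma'\models Q$ holds, for every such $\db$ and $Q$. I would then use the elementary fact that a universally quantified biconditional is equivalent to the conjunction of its two universally quantified implications. The forward implication $\db,\Sigma\models Q\implies\db,\Sigma'\models Q$ (for all $\db,Q$) is exactly the assertion that $\Sigma'$ is complete \wrt $\Sigma$ in the sense of \cref{defn:completeness}, and the backward implication $\db,\Sigma'\models Q\implies\db,\Sigma\models Q$ (for all $\db,Q$) is exactly the assertion that $\Sigma'$ is sound \wrt $\Sigma$ in the sense of \cref{defn:soundness}.

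Combining these observations, ``same answer'' is equivalent to the conjunction of soundness and completeness, and hence ``finite, full, and same answer'' is equivalent to ``finite, full, sound, and complete''. Since the former is by definition what it means for $\Sigma'$ to be an atomic rewriting of $\Sigma$, both directions of the lemma follow at once. I do not expect any genuine obstacle here: the statement is a bookkeeping lemma whose purpose is to record the factorisation of the atomic-rewriting condition into the two independent verification obligations -- soundness and completeness -- that the remainder of the thesis will discharge separately for each algorithm it constructs.
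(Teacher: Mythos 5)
Your proposal is correct and matches the paper's approach: the paper's proof is the one-line remark that the claim follows directly from \cref{defn:atomic_rewriting}, and your unfolding of ``same answer'' into the conjunction of soundness and completeness is exactly the intended bookkeeping. Nothing is missing.
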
 
\begin{proof}
Follows straight from \cref{defn:atomic_rewriting}.
\end{proof} 

We hence want to look for ways that, given some \dgtgds $\Sigma$, return
a finite, complete, and sound set of full \dtgds $\Sigma'$.
We will do this first for \gtgds in \cref{chap:gtgd}, to give the idea, and then extend to \dgtgds in \cref{chap:qfqadgtgds}.

\section{Decision Procedures for Atomic Rewritings}\label{sec:atomic_rewriting_dec_proc}
To motivate our solution approach, the next two sections explain how any atomic rewriting 
can be used as a decision procedure for the {\qfqaprob}.
Even though we will only be interested in theories that consist of (disjunctive) \gtgds,
the procedures described in \cref{sec:atomic_rewriting_dec_proc_tgd,sec:atomic_rewriting_dec_proc_dtgd} work more generally for atomic rewritings of
(disjunctive) \tgds.

\subsection{A Decision Procedure for Atomic Rewritings of \tgds}\label{sec:atomic_rewriting_dec_proc_tgd}
As has been mentioned, in the course of our work,
we will first derive an atomic rewriting procedure for \gtgds.
Once this will be done, we can easily obtain a decision procedure as follows:

\begin{prop}\label{prop:gtgd_atomic_rewriting_dec_proc}
Let $\db$ be a database,
$\Sigma$ be a set of \tgds,
$\Sigma'$ be an atomic rewriting of~$\Sigma$,
let $Q$ be a quantifier-free \ubcq,
and $w$ be the maximum number of variables in any rule in $\Sigma'$.
Then $\Sigma'$ provides a decision procedure for the \qfqaprob under~$\Sigma$. 
The complexity of the procedure is polynomial in $c^w$ and $\size(\db,\Sigma',Q)$.
\end{prop}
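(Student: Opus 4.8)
The plan is to proceed in three steps: first use the defining property of an atomic rewriting to replace $\Sigma$ by the full \tgds $\Sigma'$; then show that entailment under full \tgds reduces to evaluating $Q$ on a single canonical database obtained by forward chaining; and finally bound the cost of computing that database and testing $Q$ on it. Throughout I write $c\coloneqq|\consts(\db)|$, and note that since full \tgds are constant-free and $Q(\vec c)$ uses only $\vec c\subseteq\consts(\db)$, every constant in play comes from $\db$.

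First I would invoke \cref{cor:atomic_rewriting_def}: since $\Sigma'$ is an atomic rewriting of $\Sigma$, it is in particular sound and complete, so $\db,\Sigma\models Q$ \ifftext $\db,\Sigma'\models Q$ for the quantifier-free \ubcq $Q$. Hence it suffices to decide $\db,\Sigma'\models Q$, and from now on only the full rules $\Sigma'$ matter.

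Second, I would compute the closure $\db^{*}$ of $\db$ under $\Sigma'$ by the immediate-consequence operator: repeatedly pick a rule $\body\to\head$ of $\Sigma'$ and a substitution $\rho$ mapping $\vars(\body)$ into $\consts(\db)$ with $\body\rho\subseteq\db^{*}$, and add the atoms of $\head\rho$. Because $\Sigma'$ is full and constant-free, no rule introduces a new element, so every derived fact ranges over $\consts(\db)$ and the process reaches a fixpoint $\db^{*}$ with $\db\subseteq\db^{*}$ and $\db^{*}\models\Sigma'$. This $\db^{*}$ is the least model of $\db\cup\Sigma'$, hence a universal one: for every model $\calA$ of $\db\cup\Sigma'$ there is a constant-preserving homomorphism $\db^{*}\to\calA$. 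Since $Q$ is a positive boolean combination of ground atoms over $\consts(\db)$, it is preserved under such homomorphisms, so $\db^{*}\models Q$ implies $\calA\models Q$; conversely $\db,\Sigma'\models Q$ forces $\db^{*}\models Q$, as $\db^{*}$ is itself a model. Therefore $\db,\Sigma'\models Q$ \ifftext $\db^{*}\models Q$, and the latter is decided by checking, for each disjunct of $Q$, whether all its atoms lie in $\db^{*}$.

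Third, for the complexity I would bound the number of facts produced. Every fact added to $\db^{*}$ is the image of a head atom under a substitution $\rho$ determined by its values on $\vars(\body)$, and $|\vars(\body)|\le w$, so there are at most $c^{w}$ relevant substitutions per rule and hence at most $|\Sigma'|\cdot c^{w}$ new facts; this gives $|\db^{*}|\le\size(\db,\Sigma',Q)+\size(\Sigma')\cdot c^{w}$, which is the crucial point yielding a $c^{w}$ rather than a $c^{\text{arity}}$ bound. The fixpoint is reached after at most $|\db^{*}|$ rounds, since each non-final round adds a fact, and each round enumerates the $\le|\Sigma'|\cdot c^{w}$ candidate triggers, testing $\body\rho\subseteq\db^{*}$ and performing insertions in time polynomial in $\size(\Sigma')$ and $|\db^{*}|$ by direct matching. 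Evaluating $Q$ on $\db^{*}$ is then polynomial in $\size(Q)$ and $|\db^{*}|$. Collecting these bounds, the whole procedure runs in time polynomial in $c^{w}$ and $\size(\db,\Sigma',Q)$. I expect the main obstacle to be the justification in the second step that entailment collapses to evaluation on the single database $\db^{*}$: one must argue carefully that $\db^{*}$ is a universal model of $\db\cup\Sigma'$ and that quantifier-free \ubcqs are preserved under constant-preserving homomorphisms. The complexity accounting is then routine once the per-trigger counting producing the $c^{w}$ factor is in place.
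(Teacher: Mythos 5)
Your proposal is correct and follows essentially the same route as the paper: reduce to $\Sigma'$ via the atomic-rewriting property, then decide $\db,\Sigma'\models Q$ by bottom-up saturation (the full chase of $\db$ under $\Sigma'$), counting at most $c^w$ instantiations per rule. You spell out the universal-model/homomorphism justification for why evaluation on the saturated instance decides entailment, which the paper leaves implicit by appealing to the chase, but the argument and the complexity accounting are the same.
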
 
\begin{proof}
Since $\Sigma'$ is an atomic rewriting,
we know that $Q$ follows from $\db$ and $\Sigma$ \ifftext~$Q$ follows from $\db$ and~$\Sigma'$.
Note that $\Sigma'$ is a set of full \tgds.
Such rules correspond to a language prominently known as \emph{Datalog},
see for example \citep[Chapter~12]{abiteboul1995foundations}.
We have thus reduced to the Datalog Query Answering Problem, which
is $\exptime$-complete in $w$ \citep{dantsin2001complexity}.
For the sake of completeness, we give an easy, non-optimal bottom-up approach.\footnote{The approach constructs the full chase of $\db$ under $\Sigma'$, as will be described in \cref{sec:chase}.}
A more efficient approach can be found, for example, in \citep[Chapter~12--13]{abiteboul1995foundations}.

Let $c$ be the number of constants in $\db$.
For any $\tau\in\Sigma'$, we have at most $c^w$ different ground instantiations.
For any instantiation, we can check whether all body facts are contained in $\db$.
If this is the case, we add all head facts to $\db$.
We only have to fire every rule at most once;
hence, this process takes at most $|\Sigma'|c^w$ iterations.
Once we reach the fixpoint of this saturation,
we only need to check if all facts of some $Q_i\in Q$ are contained in the final set of facts.
In total, the complexity is polynomial in $c^w$ and $\size(\db,\Sigma',Q)$.
\end{proof} 

\subsection{A Decision Procedure for Atomic Rewritings of \dtgds}\label{sec:atomic_rewriting_dec_proc_dtgd}
Once we generalise our results to obtain an atomic rewriting procedure for \dgtgds,
we can again obtain a decision procedure in a simple way:

\begin{prop}\label{prop:dgtgd_atomic_rewriting_dec_proc}
Let $\db$ be a database,
$\Sigma$ be a set of \dtgds, 
$\Sigma'$ be an atomic rewriting of~$\Sigma$,
let $Q$ be a quantifier-free \ubcq,
$n$ be the number of relation symbols in $\db,\Sigma',Q$, 
let~$c$ be the number of constants in $\db$,
and $a$ be the maximum arity of any predicate in $\Sigma'$.
Then~$\Sigma'$ provides a decision procedure for the \qfqaprob under~$\Sigma$. 
The complexity of the procedure is polynomial in $2^{nc^a}$ and $\size(\db,\Sigma',Q)$.
\end{prop}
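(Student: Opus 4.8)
The plan is to mirror the proof of \cref{prop:gtgd_atomic_rewriting_dec_proc}, but to replace the single bottom-up fixpoint computation -- which no longer determines a unique minimal model once disjunctions are present -- by a search over Herbrand interpretations justified by \cref{thm:herbrand}. First, since $\Sigma'$ is an atomic rewriting of $\Sigma$, we have $\db,\Sigma\models Q$ \ifftext $\db,\Sigma'\models Q$, so it suffices to decide $\db,\Sigma'\models Q$. As in the non-disjunctive case, $\Sigma'$ is a set of full \dtgds, i.e.\ a disjunctive Datalog program: it contains no existentials and, since bodies and heads are constant-free and the signature is function-free, its rules introduce no function or fresh constant symbols. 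I would then dualise to unsatisfiability, using that $\db,\Sigma'\models Q$ \ifftext $\db\cup\Sigma'\cup\{\lnot Q\}$ is unsatisfiable.

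The key step is to make this a finite problem. Because $Q$ is a quantifier-free \ubcq, it is a disjunction of conjunctions of ground atoms, so $\lnot Q$ is ground; together with the ground facts of $\db$ and the universally quantified rules of $\Sigma'$, the whole theory is a set of clauses whose only constants are those of $\db$. Its Herbrand universe is therefore finite and equal to $\consts(\db)$, and by \cref{thm:herbrand} the theory is unsatisfiable \ifftext it has no Herbrand model. With $n$ relation symbols, $c$ constants, and maximum arity $a$, there are at most $nc^a$ ground atoms, hence at most $2^{nc^a}$ candidate Herbrand interpretations $M$.

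The decision procedure then scans these interpretations. For each candidate $M$ I would check whether (i) $\db\subseteq M$, (ii) $M$ satisfies every rule of $\Sigma'$, that is, no ground instance has $\body$ holding in $M$ while every head disjunct $\head_i$ fails in $M$, and (iii) $M\not\models Q$. Checking (ii) amounts to evaluating each (guarded) body over the finitely many facts of $M$ and, for each match, testing the heads, which costs time polynomial in the number of ground rule instances and in $\size(\db,\Sigma',Q)$; checks (i) and (iii) are immediate. The query $Q$ is entailed \ifftext no candidate $M$ survives all three tests. Performing polynomially many polynomial-time tests for each of the at most $2^{nc^a}$ interpretations yields the claimed bound, polynomial in $2^{nc^a}$ and $\size(\db,\Sigma',Q)$.

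The main obstacle, and the reason the complexity rises from polynomial in $c^w$ to polynomial in $2^{nc^a}$, is precisely the loss of a least model: with disjunctive heads one can no longer compute a single saturated set of facts and test membership, and must instead certify that \emph{every} model satisfies $Q$. Establishing that it suffices to range over the finitely many Herbrand interpretations -- which hinges on the theory being ground once $\lnot Q$ is formed and on the finiteness of the Herbrand universe -- is where \cref{thm:herbrand} does the essential work; the per-interpretation model checking that remains is routine.
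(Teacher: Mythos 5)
Your proof is correct and establishes the same bound, but by a genuinely different route from the paper. The paper first rewrites $\Sigma'$ into single-head normal form, translates $\db$, $\Sigma'$, and $\lnot Q$ into clauses, grounds them over $\consts(\db)$ by appeal to \cref{thm:herbrand}, and then runs \emph{propositional resolution} to saturation, bounding the running time by the number of distinct ground clauses ($2^{2nc^a}$, the factor $2$ in the exponent coming from the fresh predicates introduced by the single-head normal form). You instead skip the normal-form step and the resolution calculus entirely: you observe that the theory $\db\cup\Sigma'\cup\{\lnot Q\}$ is universal, equality-free, and function-free with Herbrand universe $\consts(\db)$, so it is unsatisfiable \ifftext it has no Herbrand model, and you simply enumerate the at most $2^{nc^a}$ Herbrand interpretations and model-check each one. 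This is sound and arguably more elementary, and it makes the $2^{nc^a}$ factor transparent as a count of candidate models; the paper's version is closer to an algorithm one would actually refine. Two small remarks. First, the form of Herbrand's theorem you actually need is ``satisfiable \ifftext Herbrand-satisfiable'' rather than the compactness-style statement of \cref{thm:herbrand}; over a finite Herbrand universe the two are immediately interchangeable, but it is worth saying so. Second, your per-interpretation check of a rule with $w$ variables naively enumerates $c^w$ ground instances, and $c^w$ is not literally bounded by a polynomial in $2^{nc^a}$ and $\size(\db,\Sigma',Q)$ when $w$ is large; the paper's own grounding step has exactly the same imprecision, so this is inherited from the statement rather than a defect specific to your argument.
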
 
\begin{proof}
Since $\Sigma'$ is an atomic rewriting, we know that $Q$ follows from $\db$ and $\Sigma$
\ifftext~$Q$ follows from $\db$ and $\Sigma'$.
Note that $\Sigma'$ is a set of full \dtgds.
Such rules can be transformed into their single-head normal form (as will be described in \cref{defn:dtgd_shnf}) so that each head conjunction of each rule contains exactly one atom.
The resulting rules correspond to a language prominently known as \emph{disjunctive Datalog}, see for example \citep{eiter1997disjunctive}.
We have thus reduced to the Disjunctive Datalog Query Answering Problem, which
is $\conexptime$-complete \citep{hustadt2007reasoning}.
Again, for the sake of completeness,
we give an easy, non-optimal approach by reducing the problem to the propositional unsatisfiability problem.
A more efficient approach can be found, for example, in \citep{cumbo2004enhancing}.
 
We know that $\db,\Sigma'\models Q$ \ifftext $\db,\Sigma',\lnot Q$ is unsatisfiable.
Now any $F\in\db$ corresponds to the ground clause $\{F\}$
and the negation of any ${(\bigwedge_{i=1}^n R_i)=Q_j\in Q}$ to the ground clause $\{\lnot R_1,\dotsc,\lnot R_n\}$.
Lastly, any single-headed rule $\bigwedge_{i=1}^n B_i\rightarrow \bigvee_{i=1}^m H_i$ corresponds
to a clause $\{\lnot B_1,\dotsc,\lnot B_n,H_1,\dotsc, H_m\}$.
By Herbrand's Theorem (\cref{thm:herbrand}), 
it suffices to consider all ground instantiations of these clauses with constants from $\db$. 
We then simply use the propositional resolution calculus (see for example \citep{fitting2012first}) on given clauses to obtain our answer.
There are at most~$2nc^a$ ground atoms, and hence at most~$2^{2nc^a}$ ground clauses.
Thus, the resolution process stops after at most $2^{2nc^a}$ iterations.
It is easy to check that every step in each iteration can be done in time polynomial in $2^{nc^a}$ and $\size(\db,\Sigma',Q)$.
Hence, we obtain the claimed complexity bound.
\end{proof}

\section{Proof Machinery -- Chasing the Truth}\label{sec:proof_machinery}

In order to show that some set $\Sigma'$ is indeed
an atomic rewriting of some (disjunctive) \gtgds $\Sigma$,
we need some method that allows us to compare 
the answers under $\Sigma$ and~$\Sigma'$ for any database $\db$ and query $Q$.
One such method is given by a procedure called \emph{the chase}.
In its simple form described in \cref{sec:chase}, the chase operates on an instance~$I$ and a set of \tgds $\Sigma$;
however, extensions to \dtgds exist and will be covered in \cref{sec:disjunctive_chase}.

\subsection{The Chase}\label{sec:chase}
The \iindex{chase}, first introduced by \cite{maier1979testing}, 
is a longstanding technique capable of checking implications of 
data dependencies in database systems.
Simply speaking, the chase repairs a database \wrt some
set of \tgds so that the resulting database satisfies all \tgds.
It takes some original instance $I_0$ 
(i.e.\ some database $\db$) and modifies the instance by a sequence of chase steps.

Let $I$ be an instance and $\tau=\body(\vec x)\rightarrow\exists\vec y\,\head(\vec x,\vec y)$ be a \tgd
with $\vec y$ possibly empty.
A \emph{\iindex{trigger}} for $\tau$ in $I$ is a mapping
${h:\vec x\rightarrow \consts(I)}$ such that $h(\body(\vec x))\coloneqq\body(h(\vec x))\subseteq I$.
Moreover, an \emph{active trigger} for
$\tau$ in $I$ is a trigger $h$ for $\tau$ in $I$ such that 
no extension $h':\vec x\cup\vec y\rightarrow \consts(I)$ of $h$ with $h'(\head(\vec x,\vec y))\subseteq I$ exists.

\begin{rmk}
Note that a trigger $h$ is just a unifier of all atoms 
in $\body$ with some facts in $I$.
\end{rmk} 

Let $h$ be an active trigger for $\tau$ in $I$. 
Applying a \emph{\iindex{chase step}} for $\tau$ and $h$ to~$I$ 
extends~$I$ with facts of the conjunction $h'\bigl(\head(\vec x,\vec y)\bigr)$, 
where $h'$ is an extension of $h$ 
and $h'(y_i)$ is a fresh constant that does not occur in $\consts(I)$ for each $y_i \in \vec y$. 
We also say that \emph{$\tau$ based on $h$ fires in $I$},
or shorter, \emph{$\tau$ fires in $I$} if there is some trigger
for which $\tau$ fires in $I$.

For $I_0$ an instance and $\Sigma$ a set of \tgds,
a \emph{\iindex{chase sequence} for $I_0$ and $\Sigma$} is a possibly infinite sequence
${I_0, \dotso}$ such that for each ${i>0}$, instance
$I_i$ (if it exists) is obtained from $I_{i-1}$ by applying a successful chase
step with some ${\tau \in \Sigma}$ and an active trigger $h$ for $\tau$ in~$I_{i-1}$. 

For a \ubcq $Q$ over $\db$, 
a finite chase sequence $I_0({=}\db),I_1,\dotsc,I_n$ for $\db$ and $\Sigma$ is 
a \index{chase proof!TGD}\emph{chase proof from $\db$ and $\Sigma$ of $Q$} if $I_n\models Q$.
The following result shows that the chase is a complete proof system:
whenever $Q$ logically follows from $\db$ and $\Sigma$, there is a chase proof.

\begin{thm}[\cite{fagin2005data}]\label{thm:chase_univ} 
A \ubcq $Q$ follows from a database $\db$ 
and a set of \tgds $\Sigma$  
\ifftext there is a chase proof of $Q$ from $\db$ and $\Sigma$.
\end{thm}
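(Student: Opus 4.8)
The plan is to prove both directions of the biconditional separately, where the easier direction is soundness and the harder direction is completeness.

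For the soundness direction, I would show that the existence of a chase proof implies $\db,\Sigma\models Q$. The key observation is that every chase step is \emph{sound}: if $\calA$ is any model of $\db$ and $\Sigma$, then each fact added by a chase step is logically entailed. More precisely, I would proceed by induction on the length of the chase sequence $I_0,\dotsc,I_n$ to establish that there is a homomorphism $I_i\to\calA$ for every $i$. The base case uses that $I_0=\db$ embeds into $\calA$ since $\calA\models\db$. For the inductive step, suppose $h_i:I_i\to\calA$ exists and $I_{i+1}$ is obtained by firing a \tgd $\tau=\body(\vec x)\rightarrow\exists\vec y\,\head(\vec x,\vec y)$ via an active trigger $g$. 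Composing $h_i$ with $g$ gives a satisfied body in $\calA$, so since $\calA\models\tau$, there exist witnesses in $\calA$ for the existentials $\vec y$; mapping the fresh constants introduced in $I_{i+1}$ to these witnesses extends $h_i$ to a homomorphism $h_{i+1}:I_{i+1}\to\calA$. Finally, since $I_n\models Q$ and $Q$ is a \ubcq (preserved under homomorphisms), we conclude $\calA\models Q$. As $\calA$ was an arbitrary model, $\db,\Sigma\models Q$.

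For the completeness direction, I would construct a canonical or \emph{universal} model from an exhaustive (fair) chase sequence and argue that if no finite chase proof exists then $Q$ fails in this model, contradicting $\db,\Sigma\models Q$. The idea is that a fair chase, in which every active trigger is eventually fired, produces a (possibly infinite) instance $I^*=\bigcup_i I_i$ that is itself a model of $\db$ and $\Sigma$. By universality, $I^*$ admits a homomorphism into every model of $\db$ and $\Sigma$, so $I^*$ satisfies exactly those \ubcqs entailed by $\db$ and $\Sigma$. Hence if $\db,\Sigma\models Q$, then $I^*\models Q$; since $Q$ is a finite union of conjunctive queries, satisfaction of some disjunct $Q_j$ uses only finitely many facts, all of which already appear in some finite $I_n$, yielding the desired finite chase proof $I_0,\dotsc,I_n$.

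The main obstacle is handling the \emph{infinite} nature of the chase and the \emph{fairness/compactness} argument needed to extract a finite proof. Two technical points require care: first, establishing that $I^*$ is genuinely a model of $\Sigma$, which relies on the chase being fair so that no active trigger remains unfired in the limit; and second, the compactness step that a \ubcq satisfied in the infinite $I^*$ is already satisfied at some finite stage, using that each disjunct is an existentially quantified conjunction of finitely many atoms. The universality of the chase result is the crux, and in our setting it is precisely the content of \cref{thm:chase_univ} as cited from \cite{fagin2005data}; I would either invoke that machinery directly or reprove the universal-model property via the homomorphism argument sketched above.
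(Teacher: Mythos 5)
Your proposal is correct and follows essentially the same route the paper indicates: it sketches exactly this argument, namely that the chase is a universal model (established via the homomorphism-extension induction you describe) combined with preservation of existential positive formulas under homomorphisms, with the finiteness of each disjunct of the \ubcq used to extract a finite chase prefix. The paper itself defers the details to the cited reference, but your two directions and the fairness/compactness points you flag are precisely the content of that proof.
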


The proof involves showing that an infinite chase sequence has a homomorphism to every possible world for $\db,\Sigma$ -- one says that the chase is a \emph{universal model}.
It is a simple model-theoretic result that all existential positive first-order formulas are preserved under homomorphisms (by a basic induction on the structure of existential formulas).\footnote{In fact, the famous \emph{Homomorphism Preservation Theorem} strengthens this result: a first-order formula is preserved under homomorphisms on all structures \ifftext it is equivalent to an existential positive first-order formula.}
Putting both results together, it can be derived that a \ubcq is satisfied in all models \ifftext it is satisfied in its universal model, the chase.

In fact, there exist two different versions of the chase in the literature, called
the \emph{restricted} and the \emph{oblivious} chase.
The above described procedure corresponds to the former version.
The oblivious chase differs in that it relaxes the conditions of a chase step:
it removes the requirement that the trigger of a chase step must be active.
In other words, the oblivious chase ``forgets'' to check whether a \tgd
is already satisfied by the given instance. 
This might seem like a downside first;
however, using the oblivious chase in place of the restricted chase
sometimes facilitates proofs.
Fortunately, both versions can be used equivalently,
as shown by the following theorem.

\begin{thm}[\cite{cali2013taming}]
For any database $\db$, set of \tgds $\Sigma$, and \ubcq $Q$,
there is a restricted chase proof of $Q$ from $\db$ and $\Sigma$
\ifftext there is an oblivious chase proof of~$Q$ from $\db$ and $\Sigma$.
\end{thm}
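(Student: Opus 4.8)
The plan is to route everything through the completeness of the restricted chase already recorded in \cref{thm:chase_univ}, which tells us that a restricted chase proof of $Q$ from $\db$ and $\Sigma$ exists \ifftext $\db,\Sigma\models Q$. It therefore suffices to establish the analogous characterisation for the oblivious chase, namely that an oblivious chase proof of $Q$ exists \ifftext $\db,\Sigma\models Q$; composing the two equivalences then yields at once that a restricted chase proof exists \ifftext an oblivious one does. I would prove the oblivious characterisation in its two halves and assemble the statement from there.

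One half is essentially free. For the implication entailment $\Rightarrow$ oblivious proof, observe that every active trigger is in particular a trigger, so every \emph{restricted} chase step is already a legal \emph{oblivious} chase step; hence a restricted chase proof, read step by step, is verbatim an oblivious chase proof. Combined with \cref{thm:chase_univ}, this shows that $\db,\Sigma\models Q$ gives a restricted proof, which is simultaneously an oblivious proof -- and it also settles the $\Rightarrow$ direction of the theorem (restricted proof $\Rightarrow$ oblivious proof) outright.

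The substance lies in the converse, soundness of the oblivious chase: a finite oblivious chase proof $I_0({=}\db),\dotsc,I_n$ with $I_n\models Q$ must force $\db,\Sigma\models Q$. I would fix an arbitrary model $\calB$ of $\db$ and $\Sigma$ and build, by induction on $i$, a homomorphism $g_i\colon I_i\to\calB$, taking $g_0$ to be the identity on $\consts(\db)$ (using $\calB\models\db$). For the inductive step, if $I_{i+1}$ arises by firing $\tau=\body\rightarrow\exists\vec y\,\head$ on a trigger $h$, then $\body$ holds in $\calB$ under the assignment $\vec x\mapsto g_i(h(\vec x))$, so since $\calB\models\tau$ there are witnesses in $\calB$ for $\vec y$; mapping the fresh constants introduced by the step to these witnesses extends $g_i$ to a homomorphism $g_{i+1}\colon I_{i+1}\to\calB$ agreeing with $g_i$ elsewhere. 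Because $Q$ is a \ubcq and such existential positive formulas are preserved under homomorphisms, $I_n\models Q$ gives $\calB\models Q$, and as $\calB$ was arbitrary we conclude $\db,\Sigma\models Q$. The main obstacle is the bookkeeping here rather than any deep idea: one must guarantee that the fresh constants of each step are genuinely new, so the extension is well defined and earlier images are untouched. The conceptual crux -- and the reason this whole theorem holds -- is simply that the construction never invokes the activeness of $h$, so the activeness condition is inert for soundness; everything else is inherited from \cref{thm:chase_univ}.
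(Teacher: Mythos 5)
Your proof is correct, but it takes a different route from the one sketched in the paper. The paper's argument constructs a homomorphism \emph{from the oblivious chase into the restricted chase} and concludes that the oblivious chase is therefore a universal model, inheriting everything else from \cref{thm:chase_univ}. You instead pivot on the entailment $\db,\Sigma\models Q$: one direction is the trivial observation that every active trigger is a trigger, so a restricted proof is verbatim an oblivious proof; the other is a direct soundness argument, building by induction a homomorphism from each $I_i$ into an \emph{arbitrary} model $\calB$ of $\db,\Sigma$ and invoking preservation of existential positive formulas. In effect you re-prove universality of the oblivious chase from scratch rather than factoring it through the restricted chase. What your route buys is self-containment: you never need to align the two chase constructions against each other (which, done carefully, involves choosing a compatible firing order or arguing about fairness), and the only property of the chase step you use is that fresh constants are genuinely fresh -- which makes transparent your closing observation that activeness is simply never needed for soundness. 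What the paper's route buys is economy: once the oblivious-to-restricted homomorphism is granted, universality and hence the full equivalence follow by composition with the already-established universality of the restricted chase, with no new induction. Both arguments are sound; just make sure, as you note, that $g_n$ fixes $\consts(\db)$ so that the homomorphic image of $Q$ is $Q$ itself.
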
 
\begin{proof}[Proof sketch]
It is easy to show that there is a homomorphism from the oblivious chase into the restricted one.
Consequently, the oblivious chase is a universal model.
\end{proof} 

Henceforth, we collectively refer to both versions as the chase unless otherwise noted, 
and we will make use of the oblivious chase whenever needed.

\begin{exmpl}
Given the database $\db=\{R(c,d)\}$, a set of \tgds $\Sigma$
\begin{align*}
R(x_1,x_2)&\rightarrow\exists y\,S(x_1,y),\\
S(x_1,x_2)&\rightarrow T(x_1)\land U(x_2),\\
R(x_1,x_2)\land U(x_3)&\rightarrow P(x_2,x_3),
\end{align*} 
and the \cq $Q(x)=\exists y\,P(x,y)$, the sequence 
\begin{equation*}
I_0=\{R(c,d)\},I_1=I_0\cup\{S(c,d_1)\},I_2=I_1\cup\{T(c),U(d_1)\},I_3=I_2\cup\{P(d,d_1)\}
\end{equation*} 
is a chase proof from $\db$ and $\Sigma$ of $Q(d)$.
Note that $I_3$ is also the fixpoint of the restricted chase.
Since, for example, $U(d)\notin I_3$, we can also conclude that $\db,\Sigma\not\models U(d)$.
\end{exmpl} 

Even though the chase offers a complete proof system,
in its general form, it is merely a sequence of collected facts 
derived from some database $\db$ and \tgds $\Sigma$
and does not reveal any special underlying structure or interesting properties so far.
In the case of \gtgds, however, we can build the chase as a tree-like 
structure -- an important observation that we will exploit in our algorithms.
Before we show this tree-like structure property, we introduce two definitions:

\begin{defn}[Guarded subsets]
Given a set of facts $S$, a \emph{\iindex{guarded subset}} $U$ of $S$
is a subset of $S$ containing a fact $G$ that contains every constant of a fact of $U$. 
We call $G$ the \emph{\iindex{guard}} of $U$.
\end{defn} 
\begin{defn}[Guardedly-completeness]
Given two set of facts $S$ and $U$, we say that 
$U$ is \emph{\iindex{guardedly-complete} \wrt $S$} if for any fact $F$ of $S$, 
whenever $F$ is guarded by a fact in $U$, then $F$ is also in $U$.
If $U\subseteq S$, we also say that $U$ is \emph{guardedly-complete in $S$}.
\end{defn} 
\begin{exmpl}
Let $F_1=\{R(c,d),S(e)\}$ and $F_2=\{T(c,d,f),U(f)\}$.
Then $F_1$ is not guarded, $F_2$ is guarded with guard $T(c,d,f)$,
$F_1$ is guardedly-complete \wrt $F_2$, and $F_2$ is not guardedly-complete \wrt $F_1$
since $R(c,d)\notin F_2$.
\end{exmpl} 

We are now ready to show the tree-like structure property.
\begin{defn}[Guarded tree decomposition]\label{defn:tree_decomposition}
A \emph{\iindex{guarded tree decomposition}} of an instance $I$ is a directed tree $T=(V,E)$
and a function $f:V\rightarrow \calP(I)$, where $\calP(I)$ is the power set of $I$, 
such that the following hold:
\begin{propylist}
\item For any $v\in V$, $f(v)$ is guardedly-complete in $I$.
\item For any $F\in I$, there is $v\in V$ such that $F\in f(v)$.
\item \label{defn:tree_decomposition_3}
For any $c\in \consts(I)$, the set of nodes $S_c\coloneqq\{v\in V\mid c\in \consts(f(v))\}$ induces a subtree of $T$.
\end{propylist} 
We also write $F_v$ for the set of facts associated to $v$; that is, we set $F_v\coloneqq f(v)$.
The sets $F_v$ are called the \index{bag}\emph{bags} of the decomposition.
\end{defn} 
The bottom line about guarded tree decompositions 
is that whenever a bag $F_v$ guards a fact $F\in I$, then $F$ is also in the bag.

\begin{prop}\label{prop:tree_chase}
Given a chase sequence $I_0,\dotsc,I_n$ for a set of \gtgds $\Sigma$,
each instance $I_i$ can be decomposed into a tree $T_i=(V_i,E_i)$ with root $r$ such that $\consts(F_r)=\consts(I_0)$
and for any $v\neq r$, $\consts(F_v)\subseteq h(\vec x,\vec y)$ for a trigger $h$ of
some rule $\body(\vec x)\rightarrow\vec y\,\head(\vec x,\vec y)$ fired in $I_j$ for some $j<i$.
\end{prop}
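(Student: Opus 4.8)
The plan is to argue by induction on the length $i$ of the chase sequence, building the decomposition $T_i$ of $I_i$ together with the two asserted side conditions at each stage. For the base case $i=0$ I take $T_0$ to be the single root node $r$ with bag $F_r\coloneqq I_0$. This bag is trivially guardedly-complete in $I_0$ (any fact guarded by a fact of $I_0$ already lies in $I_0=F_r$), it covers every fact, and the subtree condition holds vacuously; moreover $\consts(F_r)=\consts(I_0)$ and there is no non-root node, so both extra properties hold.

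For the inductive step, suppose $T_i$ decomposes $I_i$ and that $I_{i+1}$ is obtained by firing a \gtgd $\tau=\body(\vec x)\rightarrow\exists\vec y\,\head(\vec x,\vec y)$ on an active trigger $h$, adding the fact set $N\coloneqq h'(\head(\vec x,\vec y))$, where $h'$ extends $h$ by sending each $y_j$ to a fresh constant. The crucial use of guardedness is this: $\body$ contains a guard atom $G(\vec x)$ using all of $\vec x$, so $G(h(\vec x))\in I_i$ and, by coverage of $T_i$, this single fact lies in some bag $F_{v_0}$, whence $h(\vec x)\subseteq\consts(F_{v_0})$. I then form $T_{i+1}$ by (i) attaching a fresh leaf $w$ below $v_0$ with bag $F_w\coloneqq\{F\in I_{i+1}\mid\consts(F)\subseteq h(\vec x)\cup h'(\vec y)\}$, and (ii) enlarging each old bag to $F'_v\coloneqq F_v\cup\{F\in N\mid\consts(F)\subseteq h(\vec x)\text{ and }F\text{ is guarded by a fact of }F_v\}$.

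Most checks are then routine. The bag $F_w$ is guardedly-complete because any fact guarded by some $G'\in F_w$ has its constants inside $\consts(G')\subseteq h(\vec x)\cup h'(\vec y)$, hence lies in $F_w$; since $N\subseteq F_w$ and every old fact survives in some $F'_v$, coverage holds. For the side conditions, $\consts(F_w)\subseteq h(\vec x)\cup h'(\vec y)$ is exactly the set written $h(\vec x,\vec y)$ in the statement, so the new non-root node $w$ satisfies the required bound with $j=i<i+1$; and the enlargements in (ii) add to each old bag only facts whose constants already occur in that bag (each added $F$ is guarded by a fact of $F_v$, so $\consts(F)\subseteq\consts(F_v)$), whence $\consts(F_r)=\consts(I_0)$ is preserved and every old non-root bag keeps its earlier bound.

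The step I expect to be the main obstacle is the treatment of the \emph{full} head atoms, i.e.\ the facts $F\in N$ with $\consts(F)\subseteq h(\vec x)\subseteq\consts(I_i)$ that introduce no fresh constant. These can be guarded by old bags and therefore threaten the guarded-completeness of those bags, which is precisely why clause (ii) re-inserts them; guarded-completeness of $F'_v$ then follows by a short transitivity argument (a full head fact guarded by another such fact already in $F'_v$ is in turn guarded by a witnessing fact of $F_v$). The same observation — that clause (ii) adds no genuinely new constant to any old bag — is what rescues the subtree condition: for an old constant $c$ the set $S_c$ is unchanged except possibly for the adjunction of $w$, which happens exactly when $c\in h(\vec x)$, in which case $w$ is a leaf adjacent to $v_0\in S_c$; a fresh constant occurs only in $F_w$, giving $S_c=\{w\}$. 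Hence every $S_c$ remains a subtree, and the induction goes through.
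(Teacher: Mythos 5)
Your proof is correct, and its core ingredients are the same as the paper's: induction on the length of the sequence, the observation that the guard atom $G(h(\vec x))$ forces the entire image of the trigger into a single guardedly-complete bag $F_{v_0}$, a fresh child for the newly generated facts, and a propagation step to restore guarded-completeness; your transitivity argument for the full head atoms and your analysis of the sets $S_c$ are both sound. The one genuine divergence is the treatment of full rules: the paper adds the facts produced by a full \gtgd to the existing bag $F_{v_0}$ and only creates a new child node when the rule is non-full, whereas you attach a fresh leaf $w$ for \emph{every} chase step. Both constructions yield valid guarded tree decompositions satisfying the stated side conditions, so the proposition is proved either way; but the paper's choice is deliberate, since the development that follows (the birth facts $F_v^-$, the bags $F_v^i$ that grow over time, and the one-pass normalisation in \cref{chap:one_pass}) relies on having exactly one node per non-full firing whose bag accumulates all subsequently derived full consequences, and your one-node-per-step decomposition would have to be adapted before it could feed into those later arguments. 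A second, cosmetic difference is that the paper propagates every fact to every node containing a guard for it after each step, while you propagate only the newly created non-existential facts; given the inductive hypothesis these have the same effect.
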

In fact, we will construct the chase in such a way that once a node $v$ appears in some~$I_i$, 
it will never disappear, but its bag may grow at further steps.
We refer to the initial contents of $v$ as the \emph{\iindex{birth facts}} of $v$, 
denoted by \iindex[not]{$F_v^-$}, 
and write \iindex[not]{$F_v^i$} for the bag of $v$ at stage $i$ of the chase.
\begin{proof}[Proof of \cref{prop:tree_chase}]
Our construction will be by induction on the length of the sequence. 
The following inductive invariant will follow from the guardedly-completeness of any~$F_v^{i-1}$:

\medskip
For any trigger $h$ that fires for some rule $\tau$ with body $\body$, the image $h(\body)$ lies in some $F_v^{i-1}$ 
(we say that the chase step is \emph{triggered in $v$}).
\medskip

Our construction proceeds in the following way:
\begin{itemize}
\item The trivial chase sequence has a single root node, associated with the initial instance~$I_0$.
\item Suppose $I_i$ is formed by firing a chase step with a full \gtgd $\tau$ in $I_{i-1}$ based on a trigger $h$. 
By the inductive invariant, the image of the trigger lies in $F_v^{i-1}$ for some node $v$. 
We add the resulting facts to $v$.
\item Suppose $I_i$ is formed by firing a chase step with a non-full \gtgd $\tau$ in $I_{i-1}$ based on a trigger $h$. 
Again, by the inductive invariant, the image of the trigger lies in $F_v^{i-1}$ for some node $v$. 
We create a new node $v'$ as a child of $v$, and let $F_{v'}^i$ initially contain the newly generated facts.
As we have a guarded tree decomposition for step $i-1$ and $v'$ is a child of $v$,
the new construction still satisfies \cref{defn:tree_decomposition_3}.
\item In either of the above two cases, after firing the chase step,
we propagate every fact~$F_1$ that is guarded by another fact $F_2$ in $I_i$ to every node $v$ containing $F_2$. 
This propagation allows us to maintain the guardedly-completeness of any $F_v^{i}$.
\end{itemize}
We have to show that the invariant indeed holds at any step $i$: 
Let $h$ be the trigger for the \gtgd $\tau$ that fires at step $i$. 
Since $\tau$ is guarded, there is an atom $G(\vec x)$ in the body $\body(\vec x)$ of~$\tau$ containing all body variables. 
Let $v$ be a node at step $i-1$ containing the fact $h\bigl(G(\vec x)\bigr)$. 
By construction, $F_v^{i-1}$ is guardedly-complete in $I_{i-1}$, and 
hence all facts $h\bigl(\body(\vec x)\bigr)$ must be contained in $F_v^{i-1}$, 
i.e.\ the image of $h$ lies in $F_v^{i-1}$. 
\end{proof}

\begin{exmpl}\label{exmpl:tree_like_chase_sequence}
Given the database $\db=\{R(c,d)\}$ and a set of \gtgds $\Sigma$
\begin{align*}
R(x_1,x_2)&\rightarrow\exists y\,S(x_1,y),&
R(x_1,x_2)&\rightarrow\exists y\,T(x_1,x_2,y),\\
T(x_1,x_2,x_3)&\rightarrow\exists y\,U(x_1,x_2,y),&
U(x_1,x_2,x_3)&\rightarrow P(x_2),\\
T(x_1,x_2,x_3)\land P(x_2)&\rightarrow M(x_1),&
S(x_1,x_2)\land M(x_1)&\rightarrow\exists y\, N(x_1,y),
\end{align*} 
the sequence 
\begin{align*}
I_0&=\{R(c,d)\},I_1=I_0\cup\{S(c,d_1)\},I_2=I_1\cup\{T(c,d,d_2)\},I_3=I_2\cup\{U(c,d,d_3)\},\\
I_4&=I_3\cup\{P(d)\},I_5=I_4\cup\{M(c)\},I_6=I_5\cup\{N(c,d_4)\}
\end{align*} 
is a chase sequence for $\db$ and $\Sigma$.
The corresponding tree-like chase is depicted in \cref{fig:exmpl_tree_chase_seq}.
\begin{figure}[ht]
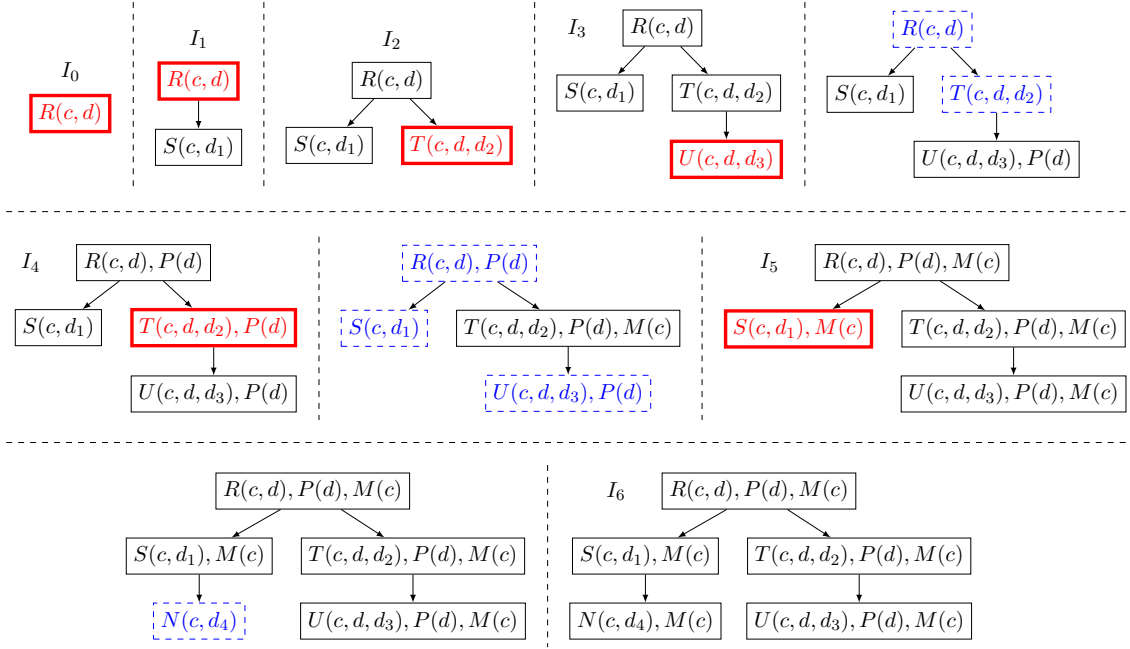

	\centering
	\includestandalone[width=\textwidth]{./figures/exmpl_tree_chase_seq}
	\caption{Tree-like chase for \cref{exmpl:tree_like_chase_sequence};
	triggered nodes are marked bold and red, and nodes with missing, non-propagated facts
	are marked dashed and blue.}
	\label{fig:exmpl_tree_chase_seq}
\end{figure}
\end{exmpl} 

\subsection{The Disjunctive Chase}\label{sec:disjunctive_chase}
The preceding discussion extends to disjunctive \tgds.
The analogue of the chase for \dtgds is the \emph{\iindex{disjunctive chase}} \citep{deutsch2008chase}.
As before, we have the possibility to consider a restricted and oblivious version,
but since both versions can again be used equivalently,
we only define the latter.

Let $I$ be a set of facts, $\tau=\body\rightarrow \bigvee_{i=1}^n\exists\vec y_i\,\head_i$ be a \dtgd,
and $h$ be a trigger for $\tau$ in~$I$.
A \emph{\iindex{disjunctive chase step}} for $\tau$ and $h$ to~$I$
creates instances $I_1 \dotso I_n$, one for each head conjunction of $\tau$,
such that instance $I_i$ is formed by applying an ordinary chase step for $\body \rightarrow\exists\vec y_i\,\head_i$
and $h$ to~$I$.

For $I_0$ an instance and $\Sigma$ a set of \dtgds,
a \emph{\iindex{chase tree} for $I$ and $\Sigma$} 
is a possibly infinite tree $T$ with root $I_0$ such that for every 
node $I$, the children of $I$
are obtained by applying a disjunctive chase step for some $\tau\in\Sigma$ to $I$.
For any chase tree $T$, we let \iindex[not]{$\lvs(T)$} denote the set of leaves of $T$.
A path $I_0,\dotsc,L$ with $L\in\lvs(T)$ is called a \emph{\iindex{thread}} of $T$.
Note that every thread $I_0,\dotsc,I_n$ can be considered as a chase sequence.

For a \ubcq $Q$ over $\db$, a finite chase tree $T$ for $\db$ and $\Sigma$ 
is a \index{chase proof!DisTGD}\emph{chase proof from $\db$ and~$\Sigma$ of~$Q$} 
if every thread $I_0,\dotsc,I_n$ is a chase proof of $Q$.
We write $T\models Q$ if $T$ is a chase proof of $Q$.
The following result shows that the disjunctive chase is a complete proof system:

\begin{thm}[\cite{deutsch2008chase}]\label{thm:dis_chase_univ} 
A \ubcq $Q$ follows from a database $\db$ and a set of \dtgds $\Sigma$ 
\ifftext there is a chase proof of $Q$ from $\db$ and $\Sigma$.
\end{thm}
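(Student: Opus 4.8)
The plan is to prove the two implications separately, adapting the universal-model argument behind \cref{thm:chase_univ} to the branching structure of the disjunctive chase. The two ingredients I will lean on are the fact (already recorded after \cref{thm:chase_univ}) that existential positive formulas, and in particular every \ubcq, are preserved under homomorphisms, together with the observation that a disjunctive chase step produces only finitely many children, so that the chase tree is finitely branching and König's Lemma applies.

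For the soundness direction ($\Leftarrow$), suppose a finite chase proof $T$ of $Q$ exists and let $\calA$ be an arbitrary model of $\db$ and $\Sigma$. I would construct, by induction along a single root-to-leaf path of $T$, a homomorphism from the instance at the current node into $\calA$. At the root $I_0=\db$ the map sending each constant $c$ to $c^\calA$ is a homomorphism, since $\calA\models\db$. At an internal node $I$ equipped with a homomorphism $h\colon I\to\calA$, its children arise from a disjunctive chase step for some $\tau=\body\rightarrow\bigvee_{i=1}^n\exists\vec y_i\,\head_i$ with trigger $g$; then $h\circ g$ is a trigger for $\tau$ in $\calA$, and since $\calA\models\tau$ some disjunct $\head_i$ is witnessed in $\calA$. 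I follow the corresponding child $I_i$ and extend $h$ by mapping the freshly generated nulls to those witnesses, which keeps it a homomorphism. As $T$ is finite this reaches a leaf $L$ carrying a homomorphism $L\to\calA$; because $L\models Q$ and $Q$ is preserved under homomorphisms, $\calA\models Q$. Since $\calA$ was arbitrary, $\db,\Sigma\models Q$.

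For the completeness direction ($\Rightarrow$), assume $\db,\Sigma\models Q$. I would first build a \emph{fair} chase tree $T^\infty$, in which on every infinite branch every applicable trigger is eventually fired. Each maximal branch is a chase sequence whose (possibly infinite) union is a model of $\db$ and $\Sigma$, since fairness guarantees every trigger is satisfied along the disjuncts chosen on that branch. By hypothesis this union satisfies $Q$, and since $Q$ is a finite existential positive formula its witnesses already occur at some finite stage $I_m$, so $I_m\models Q$. I then cut every branch at its first node satisfying $Q$, making that node a leaf and discarding its descendants. One checks that the resulting $T'$ is still a legitimate chase tree, that every surviving leaf satisfies $Q$, and that every branch of $T'$ is now finite. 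Being finitely branching, König's Lemma forces $T'$ to be finite, yielding the required chase proof.

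The main obstacle is this finiteness argument. The soundness direction is a routine homomorphism chase, whereas completeness requires care on two fronts: establishing that a fair branch genuinely produces a model of $\Sigma$, so that the hypothesis $\db,\Sigma\models Q$ can legitimately be invoked, and then combining the per-branch finiteness with finite branching to extract a \emph{single} finite proof. The delicate bookkeeping is verifying that the cut tree $T'$ remains a valid chase tree --- i.e.\ that every node which stays internal retains \emph{all} of its children from $T^\infty$ --- which is precisely what makes the compactness step go through.
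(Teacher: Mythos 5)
Your argument is correct and follows the same route the paper indicates for this result (which it only sketches, deferring to \citet{deutsch2008chase}): the universal-model idea of tracing a homomorphism into an arbitrary model along one branch of the chase tree, combined with preservation of \ubcqs under homomorphisms. Your completeness direction additionally spells out the fairness-plus-K\"onig's-Lemma compactness step that the paper leaves implicit, and your handling of the cut tree (an internal node keeps all its children because only descendants of $Q$-satisfying nodes are discarded) is exactly the right check.
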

The idea of the proof is the same as for \cref{thm:chase_univ}:
one shows that an infinite chase tree has a homomorphism to every possible world for $\db,\Sigma$
and makes use of the fact that homomorphisms preserve \ubcqs.

\begin{exmpl}\label{exmpl:disj_chase}
Given the database $\db=\{R(c,d)\}$, a set of \dgtgds $\Sigma$
\begin{align*}
R(x_1,x_2)&\rightarrow\exists y\,S(x_1,y)\lor\exists y\,T(x_1,x_2,y),&
T(x_1,x_2,x_3)\rightarrow\exists y\,U(x_1,x_2,y),\\
U(x_1,x_2,x_3)&\rightarrow M(x_1,x_2)\lor P(x_2),&
T(x_1,x_2,x_3)\land P(x_2)\rightarrow M(x_1,x_1),\\
S(x_1,x_2)&\rightarrow M(x_1,x_1),
\end{align*} 
and the \bcq $Q=\exists y\,M(c,y)$,
the tree depicted in \cref{fig:exmpl_chase_tree} is a chase proof from~$\db$ and $\Sigma$ of~$Q$ as every leaf
models $Q$.
\begin{figure}[ht]
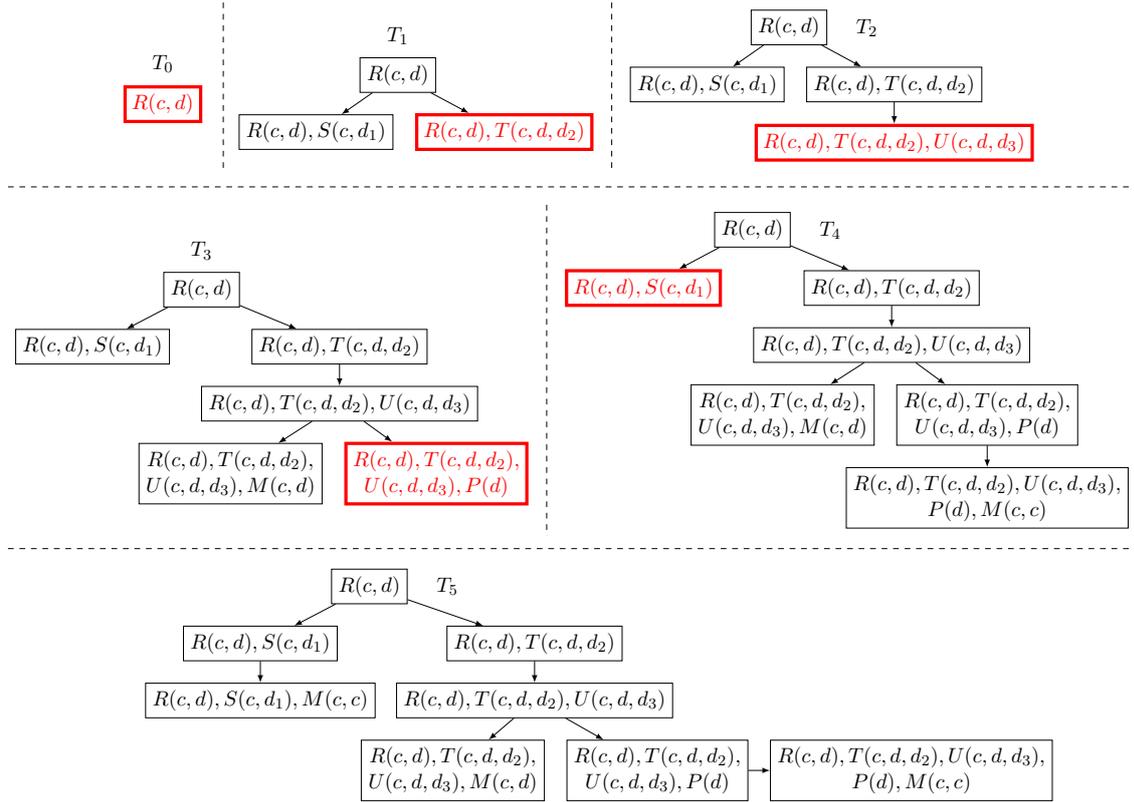

	\centering
	\includestandalone[width=\textwidth]{./figures/exmpl_chase_tree}
	\caption{Chase tree for \cref{exmpl:disj_chase};
	instances that will be extended are marked bold and red.}
	\label{fig:exmpl_chase_tree}
\end{figure}
\end{exmpl} 

As we did for \gtgds, we can again obtain a tree-like structure property for chase trees.
\begin{prop}\label{prop:dis_chase_tree}
Given a chase tree $T$ for a set of \dgtgds $\Sigma$,
for any thread $I_0,\dotsc,I_n$ of $T$, 
each instance $I_i$ can be decomposed into a tree $T_i=(V_i,E_i)$ with root $r$ such that $\consts(F_r)=\consts(I_0)$
and for any $v\neq r$, $\consts(F_v)\subseteq h(\vec x,\vec y_k)$ for a trigger $h$ of
some rule $\body(\vec x)\rightarrow\bigvee_{i=1}^m\vec y_i\,\head_i(\vec x,\vec y)$ fired in $I_j$ for some $j<i$ and $k\in\{1,\dotsc,m\}$.
\end{prop}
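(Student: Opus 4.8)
The statement for the disjunctive case (\cref{prop:dis_chase_tree}) is the natural lift of \cref{prop:tree_chase} from a single chase sequence to one thread of a chase tree. Since every thread $I_0,\dotsc,I_n$ of a chase tree is itself a chase sequence (as remarked when threads were defined), the plan is to reduce the statement almost entirely to the already-proven non-disjunctive result. The key observation is that a single \emph{disjunctive} chase step, restricted to one chosen head conjunction $\head_k$, is by definition exactly an ordinary chase step for the \gtgd $\body\rightarrow\exists\vec y_k\,\head_k$. So along a fixed thread, each transition $I_{j-1}\to I_j$ is an ordinary chase step for some guarded (non-disjunctive) rule obtained by selecting a disjunct.

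First I would make this reduction explicit: fix a thread $I_0,\dotsc,I_n$ and observe that for each $j$, the step $I_{j-1}\to I_j$ applies the disjunctive rule $\body(\vec x)\rightarrow\bigvee_{i=1}^m\exists\vec y_i\,\head_i$ via a trigger $h$ and commits to some disjunct $k\in\{1,\dotsc,m\}$; this is precisely an ordinary chase step for the \gtgd $\body(\vec x)\rightarrow\exists\vec y_k\,\head_k$. Because the original rule is a \dgtgd, its body contains a single guard atom $G(\vec x)$ covering all body variables, and this guard is inherited unchanged by each selected disjunct. Hence the selected rules form a legitimate set of \gtgds, and the thread is a chase sequence for them.

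Next I would run the very same inductive construction as in the proof of \cref{prop:tree_chase}, verbatim, on this thread: start with the root $r$ carrying $I_0$ (so $\consts(F_r)=\consts(I_0)$); for a full selected rule add the head facts to the triggering node $v$; for a non-full one create a child $v'$ of $v$ with the freshly generated facts; and in both cases propagate every fact guarded by another into all bags containing that other fact, which maintains guardedly-completeness and hence the inductive invariant that every trigger image lies in some bag. The guarded tree decomposition conditions, in particular \cref{defn:tree_decomposition_3}, are preserved exactly as before since the child introduces only the new existential elements together with the exported constants already present in its parent. The conclusion $\consts(F_v)\subseteq h(\vec x,\vec y_k)$ then records which disjunct $k$ was selected at the step that created $v$, matching the indexing $\vec y_k$ in the statement.

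The main (and essentially only) obstacle is bookkeeping rather than mathematics: one must confirm that nothing in the non-disjunctive argument secretly used that the rule had a unique head, so that the disjunct-selection is genuinely transparent to the construction. I expect this to go through cleanly because the tree construction only ever inspects the body guard and the single selected head conjunction, never the sibling disjuncts. A minor point to state carefully is that different threads may select different disjuncts and thus build different trees, so the decomposition is constructed per-thread, as the statement demands; there is no need to reconcile the trees across distinct threads.
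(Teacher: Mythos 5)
Your proposal is correct and takes essentially the same approach as the paper, whose entire proof is to apply \cref{prop:tree_chase} to every thread of $T$; you simply spell out the reduction (each thread is a chase sequence for the \gtgds obtained by selecting disjuncts, with guards inherited) that the paper leaves implicit.
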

\begin{proof}
Apply \cref{prop:tree_chase} to every thread of $T$.
\end{proof} 
We will call such a chase tree \emph{tree-like}; that is,
a chase tree is tree-like if each thread in it is tree-like.
Given such a tree-like chase tree $T$,
we will make use of a few additional definitions
in order to conveniently modify and talk about the structure of $T$:
\begin{itemize}
\item As $T$ is created by a sequence of chase steps, we can 
refer to the chase tree created by only firing the first $i$ chase steps of $T$
by \iindex[not]{$T_i$}.

\item Given a node $N$ of $T$,
we write \iindex[not]{$T^N$} for the subtree of $T$ rooted at $N$.
Note that $T^N$ is a chase tree with root $N$.

\item Given a node $N$ of $T$ and a node $v$ in $N$,
we write \iindex[not]{$N_v$} for the bag of~$v$ in $N$. 
This definition is lifted to sets of nodes $S$ by setting ${S_v\coloneqq\{N_v\mid N\in S\}}$.

\item Given a set of nodes $S$ of $T$, 
we write \iindex[not]{$\cut(T,S)$} for the chase tree obtained from $T$
by cutting off the children of all nodes in~$S$.
If $S$ is a singleton $\{N\}$, we also write $\cut(T,N)\coloneqq\cut(T,\{N\})$.
\end{itemize}

\chapter{The One-Pass Property}\label{chap:one_pass}

The tree-like chases presented so far are experiencing a chaotic
behaviour in the sense that triggers can potentially fire in any node of
the (internal) tree at any time.
It will be useful for our upcoming completeness proofs 
to tame this behaviour by further normalising the chase proofs.
This idea comes from \cite{amarilli2018can}, although the proofs that
are being normalised there are more specialised than general (disjunctive) \gtgds.
We will do this first for chase sequences and then extend to chase trees.

\section{One-Pass Chases}

Let us start with the formal definition of our desired one-pass property.

\begin{defn}[One-pass chase sequence]\label{defn:one_pass_chase}
We say that a tree-like chase sequence $I_0,\dotsc,I_n$ is \index{one-pass chase sequence}\emph{one-pass}
if for every node~$v$ present at some stage
$I_i$, if at some stage $j>i$ we perform a chase step
triggered in a node outside of the subtree of $v$,
then for all~$k$ with $j\leq k<n$, the $\nth{k}$ step does not modify the subtree of $v$.
\end{defn} 
\begin{exmpl}\label{exmpl:one_pass_chase_sequence}
The chase depicted in \cref{fig:exmpl_tree_chase_seq} is \emph{not} one-pass.
From step four to five, we propagate a fact not only to ancestors that have a 
suitable guard, but also to a child and a sibling that have a guard for it. 
In the next step, we make crucial use of this propagated fact in the sibling.

In order to obtain a one-pass chase, instead of propagating facts to
all nodes in the tree, we can first only propagate facts to ancestors and 
then recursively re-create all subtrees, starting from
the node that triggered the chase step up to the root.
This idea is depicted in \cref{fig:exmpl_one_pass_chase_seq},
and it is the key idea of the proof of \cref{prop:gtgd_one_pass}.
\begin{figure}[ht]
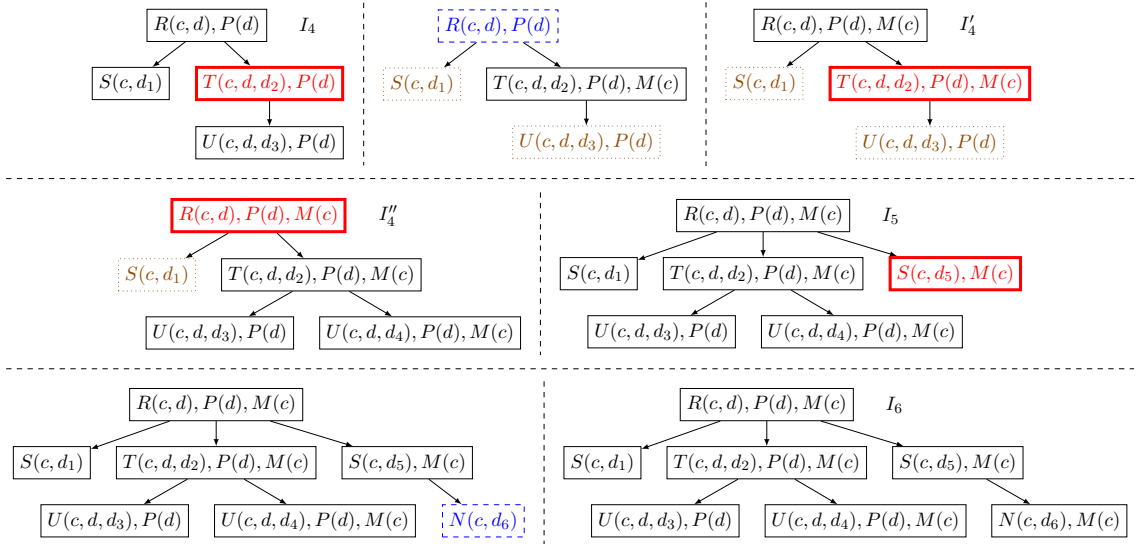

	\centering
	\includestandalone[width=\textwidth]{./figures/exmpl_one_pass_chase_seq}
	\caption{We can modify the chase from \cref{fig:exmpl_tree_chase_seq} to obtain a one-pass chase.
	Triggered nodes are marked bold and red, and nodes with missing, non-propagated facts are marked dashed and blue
	if one-pass propagation is allowed, and
	dotted and brown if one-pass propagation is prohibited and a new copy has not been created so far.}
	\label{fig:exmpl_one_pass_chase_seq}
\end{figure}
\end{exmpl}

Recall that in a tree-like chase sequence, we have a sequence
of trees, where each node is associated with a set of facts and
each fact is inferred from facts in prior trees.
The usefulness of a one-pass chase is that
the inferences we make are more closely related to the tree structure.
This thought is captured in \cref{lem:one_pass_proof}.
To prove the lemma, we first give the following definition.
\begin{defn}
Let $v$ be a node in a tree $T$. We say that a node $w$ is a
\index{non-strict ancestor (descendant)}\emph{non-strict ancestor of $v$} if $v=w$ or $w$
is an ancestor of $v$ in $T$.
Similarly, we say that $w$ is a \emph{non-strict descendant of $v$} if $v=w$ or $w$
is an descendant of $v$ in $T$.
\end{defn} 
\begin{lem}\label{lem:one_pass_proof}
Suppose $I_0,\dotsc,I_n$ is a one-pass chase sequence 
and~$v$ is a node created in $I_i$.
Suppose $d$ is a non-strict descendant of $v$ in $I_j$ with $j\geq i$,
and~$d$ adds on facts in~$I_j$.
Then we have a one-pass chase proof of $F_d^j$ from $F_v^i$ using steps from $i,\dotsc,{j-1}$.
\end{lem}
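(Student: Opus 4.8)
The plan is to deduce the lemma from a slightly stronger claim whose bookkeeping is cleaner. For a stage $m\ge i$, write $S^m=\bigcup_w F_w^m$ for the union of all bags in the subtree of $v$ present in $I_m$. Since $F_d^j\subseteq S^j$ and a chase proof whose final instance contains a set of facts is in particular a chase proof of every subset, it suffices to exhibit a one-pass chase proof of $S^j$ from $F_v^i$ using the chase steps that transform $I_i$ into $I_j$. Observe that $v$ is a leaf at the moment of its creation (by the construction in \cref{prop:tree_chase}), so its subtree in $I_i$ is $\{v\}$ and $S^i=F_v^i$; this will be the starting instance of the reconstructed proof.

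First I would establish the key \emph{confinement} property: every step transforming $I_i$ into $I_j$ that modifies the subtree of $v$ is triggered \emph{inside} the subtree of $v$. All such steps occur strictly after the creation stage $i$, so \cref{defn:one_pass_chase} applies to each of them. If one of them were triggered in a node outside the subtree of $v$, the one-pass condition would force that very step (and every later one) to leave the subtree of $v$ untouched, contradicting that it modifies the subtree; by contraposition any modifying step is triggered within the subtree. This also matches the tree structure, since a trigger confined to a node outside the subtree can only add facts to that node or propagate them to its ancestors, none of which lie in the subtree of $v$.

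Given this, I would build the reconstructed proof by replaying, in their original order, exactly those steps among $I_i\to\dotsb\to I_j$ that are triggered inside the subtree of $v$, starting from $F_v^i$. Correctness is a routine induction on the replayed prefix with the invariant that, after replaying all confined steps up to stage $m$, the accumulated instance contains $S^m$. For the inductive step, a confined step at the next stage is fired by a trigger whose image lies in some bag $F_w^{m}\subseteq S^m$, which by the invariant is already present; firing the corresponding \gtgd (working with the oblivious chase, so no activeness check is needed) adds its head facts, so that the accumulated instance now contains every fact of the subtree at the new stage, i.e.\ $S^{m+1}$. Non-confined steps are skipped, which is sound precisely because, by confinement, they do not change $S^{\bullet}$. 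At stage $j$ the accumulated instance therefore contains $S^j\supseteq F_d^j$.

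Finally, the reconstructed sequence is nothing but the original chase projected onto the subtree of $v$, with $v$ relabelled as the root and the non-confined steps deleted: its trees are subtrees of the original trees and its steps occur in the original order. Since tree-likeness (\cref{prop:tree_chase}) and the one-pass condition (\cref{defn:one_pass_chase}) are stated entirely in terms of which steps modify which subtrees, both are inherited, yielding a one-pass chase proof as required. I expect the main obstacle to be the confinement argument of the second paragraph, in particular pinning down the stage indexing so that the one-pass condition genuinely applies to the first step after $v$'s creation, and verifying that facts cannot enter the subtree of $v$ from outside via propagation.
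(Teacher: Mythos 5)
Your proof is correct, but it is organised quite differently from the paper's. The paper argues by induction on $j-i$: it identifies the node $w$ in which the $\nth{j}$ step is triggered, uses one-pass to conclude that $w$ is a non-strict descendant of $d$ or the parent creating $d$, applies the inductive hypothesis to $w$ at the last stage $k<j$ at which $w$ gained a fact, and then appends the single step from $F_w^{j-1}$; the conclusion ``follows by transitivity.'' You instead isolate a confinement property (any step modifying the subtree of $v$ is triggered inside it --- which is indeed just the contrapositive of \cref{defn:one_pass_chase}, and also disposes of the propagation worry you raise, since propagation into the subtree counts as modifying it) and then replay \emph{all} confined steps forward from $F_v^i$, maintaining the invariant that the accumulated instance contains $S^m$, the union of all bags in the subtree. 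Your route buys two things: it proves the stronger statement that the entire subtree $S^j$ (not just $F_d^j$) is derivable, and it makes explicit why the final instance contains all of $F_d^j$ rather than only the facts added at step $j$ --- a point the paper's ``transitivity'' step leaves implicit when the trigger sits strictly below $d$. The paper's version is shorter and avoids having to argue that the projected sequence is itself a tree-like one-pass chase, which is the one place where your write-up stays at the level of ``stated in terms of which steps modify which subtrees, hence inherited''; that is acceptable here (it matches the paper's own level of rigour elsewhere), but it is the part you would need to expand if pressed.
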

\begin{proof}
We prove the claim by induction on $j-i$. 
If $i=j$, we have $F_d^j=F_v^i$, and hence we do not need any steps to create $F_d^j$ from $F_v^i$.

Now assume $j>i$. 
Let $w$ be the node that was used to generate the new facts $F$.
Since the chase is one-pass, $w$ must be a non-strict descendant of $d$,
or $w$ is the parent that generated $d$ at step $j$;
in any case, $w$ is a non-strict descendant of $v$.
Let $k<j$ be the last time a fact was added to $w$.
By the inductive hypothesis, we have a one-pass chase proof 
of $F_w^k=F_w^{j-1}$ from $F_v^i$ using steps from $i,\dotsc,{k-1}$.
Further, we have a chase step from $F_w^{j-1}$ to $F$.
The claim now follows by transitivity.
\end{proof} 

\begin{cor}\label{cor:one_pass_proof}
Suppose $I_0,\dotsc, I_n$ is a one-pass chase sequence,
$v$ is a node in some~$I_i$ that was created in $I_k$, 
and $d$ is non-strict descendant of $v$ in $I_j$ with $j \geq i$.
Then we have a one-pass chase proof of $F_d^j$ from $F_v^i$ using steps from $k,\dotsc,j-1$.
\end{cor}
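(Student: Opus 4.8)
The plan is to read this corollary as a mild relaxation of \cref{lem:one_pass_proof} and to deduce it from that lemma by two reductions: dropping the hypothesis that $d$ adds facts exactly at stage $j$, and weakening the starting set from the birth facts of $v$ (the case the lemma handles, with $i=k$) to the later, possibly larger bag $F_v^i$. For the first reduction I would let $j'\le j$ be the largest stage at which the bag of $d$ changed, taking $j'$ to be the birth stage of $d$ if $d$ is never touched after its creation. Since $d$ is left untouched on the interval from $j'$ to $j$, we have $F_d^j=F_d^{j'}$, so it suffices to build a one-pass chase proof of $F_d^{j'}$. As $d$ is a non-strict descendant of $v$ and $v$ is born at $k$, the node $d$ is born no earlier than $k$, so $k\le j'\le j$; moreover $d$ is still a non-strict descendant of $v$ in $I_{j'}$ (descendant relationships persist as the tree only grows within a fixed sequence), and $d$ adds facts, or is created, at step $j'$. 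These are precisely the hypotheses of \cref{lem:one_pass_proof} instantiated with birth stage $k$, which therefore yields a one-pass chase proof of $F_d^{j'}=F_d^j$ from $F_v^k$ using steps from $\{k,\dotsc,j'-1\}\subseteq\{k,\dotsc,j-1\}$.

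For the second reduction it remains to issue this proof from $F_v^i$ rather than from $F_v^k$. Because a node never loses facts once created --- its bag only grows --- we have $F_v^k\subseteq F_v^i$. Since the proof just obtained is a self-contained sequence of chase steps starting from the fact set $F_v^k$, replacing that initial set by the superset $F_v^i$ keeps every fired trigger applicable, only enlarges the intermediate fact sets, and hence still derives $F_d^j$, all while leaving the tree shape of the proof --- and therefore the one-pass property --- intact. This gives the desired one-pass chase proof of $F_d^j$ from $F_v^i$ using steps from $\{k,\dotsc,j-1\}$.

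The step I expect to demand the most care is this monotonicity argument: making precise that a one-pass chase proof issued from $F_v^k$ survives the substitution of its initial fact set by the superset $F_v^i$ without disturbing either its derivations or its one-pass structure. The degenerate situations --- $d=v$, or $d$ untouched after birth so that $j'=k$ and the extracted proof is empty --- should be recorded but are immediate.
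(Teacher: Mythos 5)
Your proposal is correct and follows essentially the same route as the paper: the paper likewise takes $l\leq j$ to be the last stage at which $d$ added a fact, applies \cref{lem:one_pass_proof} from $F_v^k$ to obtain a proof of $F_d^l=F_d^j$ using steps from $k,\dotsc,l-1$, and then concludes via the inclusion $F_v^k\subseteq F_v^i$. Your treatment merely spells out the monotonicity step that the paper leaves implicit.
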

\begin{proof}
Let $l\leq j$ be the last time a fact was added to $d$.
By \cref{lem:one_pass_proof}, we have a one-pass chase proof of $F_d^l=F_d^j$ from $F_v^k$ 
using steps from $k,\dotsc,l-1$.
Since $F_v^k\subseteq F_v^i$ and $l\leq j$, we also have a one-pass chase proof of $F_d^j$ from $F_v^i$ 
using steps from $k,\dotsc,j-1$.
\end{proof} 

As a next step, we want to show that we can indeed
assume the existence of a one-pass chase proof whenever we have a chase
proof for some \ubcq $Q$.
We achieve this by constructing a suitable one-pass chase sequence
from a given arbitrary chase sequence as sketched in \cref{exmpl:one_pass_chase_sequence}.
\begin{prop}\label{prop:gtgd_one_pass} 
For every tree-like chase sequence $I_0,\dotsc,I_n$,
there is a one-pass chase sequence $\overline{I_0}({=}I_0),\overline{I_1},\dotsc,\overline{I_m}$ 
such that there is a homomorphism ${h:I_n\rightarrow \overline{I_m}}$ 
with $h(c)=c$ for any $c\in\consts(I_0)$.
\end{prop}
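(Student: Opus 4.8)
The plan is to induct on the length $n$ of the tree-like chase sequence, transforming it into a one-pass sequence while carrying along a homomorphism into the instance under construction. The inductive hypothesis will furnish, for the prefix $I_0,\dotsc,I_{n-1}$, a one-pass chase sequence $\overline{I_0}({=}I_0),\dotsc,\overline{I_{m'}}$ together with a homomorphism $h\colon I_{n-1}\to\overline{I_{m'}}$ that is the identity on $\consts(I_0)$; the base case $n=0$ is immediate with $h=\mathrm{id}$. The guiding idea, exactly the one illustrated in \cref{exmpl:one_pass_chase_sequence} and \cref{fig:exmpl_one_pass_chase_seq}, is that the sole obstruction to one-passness is the propagation of a freshly derived fact to siblings and descendants; I would forbid that, keep only propagation to ancestors, and compensate by re-creating the affected subtrees with fresh constants.

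First I would analyse the $n$-th step, which fires a \gtgd $\tau$ on a trigger $g$ whose image lies in a bag $F_v^{n-1}$ (this is the ``triggered-in-a-node'' invariant established in the proof of \cref{prop:tree_chase}). Pushing $g$ through $h$ gives a trigger whose image lies in a bag of $\overline{I_{m'}}$, so $\tau$ can be fired there: if $\tau$ is full, its head facts are added to that bag; if $\tau$ is non-full, a fresh child with fresh existential constants is created, just as in \cref{prop:tree_chase}. The newly derived facts then receive their images, and $h$, regarded as a map on domain constants, is extended on the at most one new existential constant. Since the root bag always carries precisely $\consts(I_0)$ and is never re-created, and since any fact propagated to the root is guarded there and hence uses only $I_0$-constants, $h$ stays the identity on $\consts(I_0)$ throughout.

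The substantive work is the propagation phase. Let $F$ be a guarded fact that the original chase would propagate to every bag guarding it; in the one-pass version I would propagate $F$ only along the path $v=a_0,a_1,\dotsc,a_r=r$ from the trigger node to the root, which is legitimate because each $a_t$'s subtree still contains $v$. To recover what the original chase later derives in the off-path subtrees of the $a_t$, I would eagerly re-create those subtrees as fresh copies, processing the ancestors bottom-up from $a_0$ to $a_r$ and finishing one fresh subtree entirely before starting the next: having put $F$ into $a_t$, I re-derive each off-path child subtree from the enlarged bag $F_{a_t}$, appealing to \cref{lem:one_pass_proof} and \cref{cor:one_pass_proof} to guarantee that a subtree's contents are obtainable from its root's bag by one-pass steps, so that replaying them from the larger bag yields a fresh copy plus the consequences of $F$. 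Every such step is triggered inside the fresh subtree (or at its freshly extended on-path parent), so it never touches a subtree from which the construction has already moved on, and the extension of $h$ sends each off-path existential constant to the fresh constant of its latest copy while keeping the exported constants shared with $a_t$ at their current image.

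The hard part will be the bookkeeping that keeps \cref{defn:one_pass_chase} intact across this cascade of re-creations together with the totality of the extended $h$ on $I_n$. I must fix the re-creation order (bottom-up along the ancestor path, and one completed fresh subtree at a time) so that no step modifies a subtree that a strictly earlier step has already triggered out of, and I must argue coverage: every fact of $I_n$, including those the original chase placed in siblings through the now-forbidden propagation, has its constants mapped so that its image lands in $\overline{I_m}$. Because one existential constant may inhabit several bags along a connected subtree (\cref{defn:tree_decomposition_3}), the fresh copies have to be chosen coherently so that all facts about a given constant are realised in a single copy, and I must check that re-derivation really only consumes facts available at each subtree root so that \cref{cor:one_pass_proof} applies. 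Once these are in place, re-verifying the remaining clauses of \cref{defn:tree_decomposition} and \cref{defn:one_pass_chase} for $\overline{I_m}$ is routine.
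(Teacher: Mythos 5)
Your overall architecture (induction on $n$, ancestor-only propagation, bottom-up re-creation of off-path subtrees with fresh constants, and the appeal to \cref{lem:one_pass_proof} and \cref{cor:one_pass_proof} to replay a subtree from its root's bag) matches the paper's proof. However, there is a genuine gap in the very first move of your inductive step: you fire the $n$-th rule directly at the image node $h(v)$ of the inductively constructed one-pass sequence $\overline{I_0},\dotsc,\overline{I_{m'}}$. In general this already destroys one-passness, because the subtree of $h(v)$ may have been \emph{exited} earlier in that sequence: \cref{defn:one_pass_chase} forbids ever modifying the subtree of a node again once some step has been triggered outside it, and the last steps of the inductive sequence will typically have been triggered elsewhere in the tree. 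A minimal counterexample: create a child $v_1$ of the root, then a child $v_2$ of the root, then fire a full rule in $v_1$. The inductive one-pass prefix ends with a step triggered at the root, outside the subtree of $\overline{v_1}$, so adding a fact to $\overline{v_1}$ at stage $n$ is illegal. Your opening claim that the sole obstruction to one-passness is the propagation of a freshly derived fact to siblings and descendants is therefore false; the location of the trigger itself is a second, independent obstruction.

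The fix, which the paper's proof performs and your proposal omits, is to \emph{first} re-create a fresh copy of the subtree of $v$ (using fresh constants for every constant outside $\consts(I_0)$) by re-firing, starting from the root of $\overline{I_{m'}}$, the rules that built it, thereby obtaining a fresh node $\overline{v}'$ at the end of the sequence; only then is the $n$-th chase step fired in $\overline{v}'$, with propagation to ancestors only, followed by the bottom-up repair of the off-path subtrees that you describe (and with $h$ redirected to the newest copies). With that extra re-creation step inserted before the firing, the rest of your bookkeeping plan goes through essentially as in the paper.
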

\begin{proof}
We write $F_v^i$ for the set of facts of some node $v$ in $I_i$ and
$\overline{F_{\overline{v}}^i}$ for the set of facts of some node 
$\overline{v}$ in $\overline{I_i}$.
We also show that $h$ preserves the tree structure of $I_n$;
that is, for any subtree $T$ in $I_n$, $h(T)$ is a subtree 
in $\overline{I_m}$ graph-isomorphic to $T$.
This allows us to write $\overline{v}\coloneqq h(v)$ to refer to the node 
$\overline{v}$ in $\overline{I_m}$ with $\overline{F_{\overline{v}}^m}=h(F_v^n)$
for any node $v$ in $I_n$.
We prove the claim by induction on $n$. 
For $n=0$, we simply set $\overline{I_0}\coloneqq I_0$ and let
$h$ be the identity.

Assume $n>0$. Then by the inductive hypothesis, we get a one-pass chase sequence $\overline{I_0}({=}I_0),\overline{I_1},\dotsc,\overline{I_m}$
such that $I_{n-1}$ is homomorphic to $\overline{I_m}$ via $h$.
Assume the chase step from $I_{n-1}$ to $I_n$ fires in $v$.
First, we use the inductive one-pass chase to generate a copy of the subtree of $v$ (\wrt $I_{n-1}$) 
starting from the root of $\overline{I_m}$;
more precisely, we re-fire all rules used to create the subtree of $\overline{v}$
but use fresh constants $c'$ for any~$c\in\consts(\overline{I_m})\setminus\consts(I_0)$ 
in this chase while any constant in $\consts(I_0)$ stays unchanged.
Call the copy of $v$ in this subtree~$\overline{v}'$.

We then fire the chase step in $\overline{v}'$ but only propagate new facts to ancestors.
Since some newly generated facts might have propagated to non-ancestor nodes $w$ of $v$,
we also have to create new copies of these non-ancestor nodes in our one-pass chase.
Further, we have to update $h$ to preserve the tree structure of $I_n$.
We achieve this by modifying the subtree of any non-strict 
ancestor of $\overline{v}'$ in the following way:

Beginning from $p_0\coloneqq v$, we will create copies
of all facts occurring in the subtree of $p_i$ in $I_n$ that are still missing in our one-pass chase.
For this, let $\overline{I_0},\dotsc,\overline{I_{m^*}}$ be the one-pass chase created so far,
$\overline{p_i}$ be the copy of $p_i$ in $\overline{I_m}$,
and $\overline{p_i}'$ be the non-strict ancestor of $\overline{v}'$ corresponding to $\overline{p_i}$.
By the inductive hypothesis and \cref{cor:one_pass_proof}, 
we have a one-pass chase proof of the subtree of $\overline{p_i}$ from 
$\overline{F_{\overline{p_i}}^-}$ (the initial set of facts of $\overline{p_i}$).
As $\overline{F_{\overline{p_i}}^-}$ is homomorphic to $\overline{F_{\overline{p_i}'}^{m^*}}$,
we can also fire the steps of this chase proof starting from $\overline{p_i}'$ in $\overline{I_{m^*}}$
to create all missing facts in the subtree of $\overline{p_i}'$
(while again using fresh constants $c'$ for any~$c\in\consts(\overline{I_{m^*}})\setminus\consts(I_0)$).
We then let $p_{i+1}$ be the parent of $p_i$ and proceed with $p_{i+1}$.

Finally, for any node $u$ in $I_n$, 
we let $h$ map to the newest copy of $u$ in $\overline{I_{m^*}}$.
As all newly generated facts were propagated to all ancestors of $\overline{v}'$,
$h$ preserved the tree structure of $I_{n-1}$, 
and we re-created the missing facts of all subtrees, 
$h(u)$ will be a copy of $u$ in $\overline{I_{m^*}}$.
Further, $h$ preserves the tree structure of $I_n$ as we re-created all subtrees.
\end{proof}

We have to show that our construction indeed preserves answers for \bcqs.
\begin{lem}\label{lem:hom_chase_seq_impl_bcq}
Let $I_0,\dotsc,I_n$ and $I_0'({=}I_0),\dotsc,I_m'$ be two chase sequences
such that there is a homomorphism ${h:I_n\rightarrow I_m'}$
with $h(c)=c$ for any $c\in\consts(I_0)$.
Then, for any \bcq~$Q$, if $I_0,\dotsc,I_n$ is a chase proof of $Q$, then
$I_0',\dotsc,I_m'$ is a chase proof of $Q$.
\end{lem}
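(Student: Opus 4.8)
The plan is to reduce this to the standard model-theoretic fact, already invoked in the discussion following \cref{thm:chase_univ}, that \bcqs -- being existential positive sentences -- are preserved under homomorphisms, taking care to track the constants that may appear in $Q$. First I would unfold the two definitions that bracket the statement. By definition of a chase proof of a \bcq, the hypothesis that $I_0,\dotsc,I_n$ is a chase proof of $Q$ says precisely that $I_n\models Q$; dually, to conclude that $I_0'({=}I_0),\dotsc,I_m'$ is a chase proof of $Q$ it suffices to show $I_m'\models Q$. So the entire content of the lemma is the implication ``$I_n\models Q$ and $h:I_n\rightarrow I_m'$ a homomorphism fixing $\consts(I_0)$'' $\Longrightarrow$ ``$I_m'\models Q$''.

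Next I would make the homomorphism-preservation argument explicit for this single sentence. Writing the \bcq as $Q=\exists\vec y\,(R_1\land\dotso\land R_k)$, where each atom $R_j$ uses only variables among $\vec y$ together with constants from $\consts(\db)=\consts(I_0)$, the statement $I_n\models Q$ unwinds to the existence of an assignment $g:\vec y\rightarrow\consts(I_n)$ such that every atom $R_j$, with $\vec y$ instantiated according to $g$, is a fact of $I_n$. I then push this witness through $h$, using the composite assignment $y\mapsto h(g(y))$. For each $R_j=R(\vec t)$, the arguments $\vec t$ are variables in $\vec y$ or constants in $\consts(I_0)$; since the instantiated atom lies in $I_n$ and $h$ preserves relations, its $h$-image is a fact of $I_m'$. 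Doing this for every conjunct exhibits $h\circ g$ as a witness for $Q$ in $I_m'$, whence $I_m'\models Q$ as required.

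I expect no genuine obstacle here; the only subtlety is the bookkeeping with constants, which is exactly what the assumption $h(c)=c$ for $c\in\consts(I_0)$ is designed to handle. Concretely, when I apply $h$ to the instantiated atom $R(\vec t)$, the variable positions are relabelled by $h\circ g$ as desired, but I also need the constant positions to survive unchanged so that the resulting fact is genuinely an instance of $R_j$ rather than of some relabelled atom; this is guaranteed precisely because $h$ fixes every constant of $\consts(I_0)\supseteq\consts(Q)$. Without this hypothesis a query mentioning a constant $c$ could fail to transfer, so the assumption is not merely convenient but necessary. The whole argument is thus a direct specialisation of the preservation of existential positive formulas under homomorphisms to the single existential-conjunctive sentence $Q$.
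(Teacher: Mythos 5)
Your proposal is correct and follows essentially the same route as the paper: reduce the lemma to the implication $I_n\models Q\Rightarrow I_m'\models Q$ and then apply preservation of existential positive sentences under homomorphisms, using that $h$ fixes $\consts(I_0)\supseteq\consts(Q)$. The only cosmetic difference is that the paper organises the preservation step as an induction on the number of existential quantifiers (substituting one witness constant at a time), whereas you push the whole witness assignment through $h$ in one step; the underlying argument is identical.
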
 
\begin{proof}
We show a stronger claim:
for any \cq $Q(\vec x)$ and $\vec c\subseteq\consts(I_n)$,
$I_n\models Q(\vec c)$ implies $I_m'\models h(Q(\vec c))$.
Since $h$ is the identity on $\consts(I_0)$, this implies our goal.
We show the claim by induction on the number of existential quantifiers of $Q$, denoted by $k$.

If $k=0$, using that $h:I_n\rightarrow I_m'$ is a homomorphism, we get
\begin{align*}
I_n\models Q(\vec c)\iff&I_n\models A(\vec c) \text{ for any atom $A$ in }Q\\
\implies &I_m'\models h(A(\vec c))\text{ for any atom $A$ in }Q\\
\iff&I_m'\models h(Q(\vec c)).
\end{align*} 

Assume $k>0$. That is $Q(\vec x)\equiv\exists y\,Q'(\vec x,y)$ with $Q'$ a \cq. 
If $I_n\models Q(\vec c)$, then there is $c'\in\consts(I_n)$ such that 
$I_n\models Q'(\vec c,c')$.
Thus, by the inductive hypothesis, $I_m'\models h(Q'(\vec c,c'))$,
and hence $I_m'\models h(Q(\vec c))$.
\end{proof} 

Finally, we derive our desired goal:
\begin{thm}\label{thm:gtgd_iff_one_pass}
A \ubcq $Q$ has a chase proof from $\db$ and \gtgds $\Sigma$
\ifftext it has a one-pass chase proof from $\db$ and $\Sigma$.
\end{thm}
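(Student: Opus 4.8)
The plan is to dispatch the two directions of the biconditional separately, with essentially all the work falling on the forward direction and already being encapsulated in the preceding results, so that the theorem itself becomes a short assembly.

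The backward direction is immediate: by \cref{defn:one_pass_chase} a one-pass chase sequence is in particular a (tree-like) chase sequence, so a one-pass chase proof of $Q$ is already a chase proof of $Q$, and nothing further is required.

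For the forward direction, suppose $I_0({=}\db),\dotsc,I_n$ is a chase proof of $Q$ from $\db$ and $\Sigma$, so that $I_n\models Q$. First I would invoke \cref{prop:tree_chase} to equip this sequence with its tree-like structure, so that it qualifies as input to \cref{prop:gtgd_one_pass}. Applying \cref{prop:gtgd_one_pass} then produces a one-pass chase sequence $\overline{I_0}({=}I_0),\dotsc,\overline{I_m}$ together with a homomorphism $h:I_n\rightarrow\overline{I_m}$ with $h(c)=c$ for every $c\in\consts(I_0)=\consts(\db)$. It then only remains to transfer the truth of $Q$ from $I_n$ to $\overline{I_m}$. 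Since $Q$ is a \ubcq and $I_n\models Q$, some disjunct $Q_i\in Q$ already satisfies $I_n\models Q_i$, and applying \cref{lem:hom_chase_seq_impl_bcq} to the \bcq $Q_i$ and the homomorphism $h$ yields $\overline{I_m}\models h(Q_i)$. The one place where the hypothesis on $h$ is genuinely used is the observation that, although $Q_i$ may mention constants, by the definition of a \cq over $\db$ these constants all lie in $\consts(\db)=\consts(I_0)$, on which $h$ is the identity; hence $h(Q_i)=Q_i$, so $\overline{I_m}\models Q_i$ and therefore $\overline{I_m}\models Q$. Thus $\overline{I_0},\dotsc,\overline{I_m}$ is a one-pass chase proof of $Q$.

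I do not expect any serious obstacle in the theorem as stated: the substantive difficulty is entirely contained in \cref{prop:gtgd_one_pass} (the recursive re-creation of subtrees while restricting fact propagation to ancestors) and in the homomorphism-preservation argument of \cref{lem:hom_chase_seq_impl_bcq}. Within the present statement, the only subtlety worth flagging is the passage from the union of \bcqs down to a single satisfied disjunct $Q_i$, together with the accompanying remark that $h$ fixes the query's constants so that $h(Q_i)=Q_i$; everything else is a direct chaining of the three cited results.
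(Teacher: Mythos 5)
Your proposal is correct and follows essentially the same route as the paper: the backward direction is the trivial inclusion, and the forward direction chains \cref{prop:gtgd_one_pass} with \cref{lem:hom_chase_seq_impl_bcq} applied to a satisfied disjunct $Q_i$. The only cosmetic difference is that you explicitly invoke \cref{prop:tree_chase} and spell out that $h$ fixes the query's constants, details the paper leaves implicit (the latter being already absorbed into the statement of \cref{lem:hom_chase_seq_impl_bcq}).
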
 
\begin{proof}
If we have one-pass chase proof of $Q$, we also have a chase proof of $Q$
since every one-pass chase proof is a chase proof.

Assume we have a chase proof $I_0,\dotsc,I_n$ of $Q$
and let $\overline{I_0},\dotsc,\overline{I_m}$ be the corresponding one-pass
chase sequence as created in \cref{prop:gtgd_one_pass}.
Then $I_n\models Q_i$ for some $Q_i\in Q$. 
Hence, by \cref{lem:hom_chase_seq_impl_bcq}, $\overline{I_m}\models Q_i$,
and thus $\overline{I_m}\models Q$.
\end{proof} 

\section{One-Pass Disjunctive Chases}

Next, we generalise our idea to disjunctive chases. 
We proceed as in the previous section.
\begin{defn}[One-pass chase tree]\label{defn:one_pass_chase_tree}
We say that a chase tree is \index{one-pass chase tree}\emph{one-pass}
if each thread in it is a one-pass chase sequence.
\end{defn} 

\begin{exmpl}
The tree-like chase tree for \cref{exmpl:disj_chase}, depicted in \cref{fig:one_pass_chase_tree}, is \emph{not} one-pass.
From step three to four, we propagate the fact $M(c,c)$ to a child that has a suitable guard.

\begin{figure}[ht]
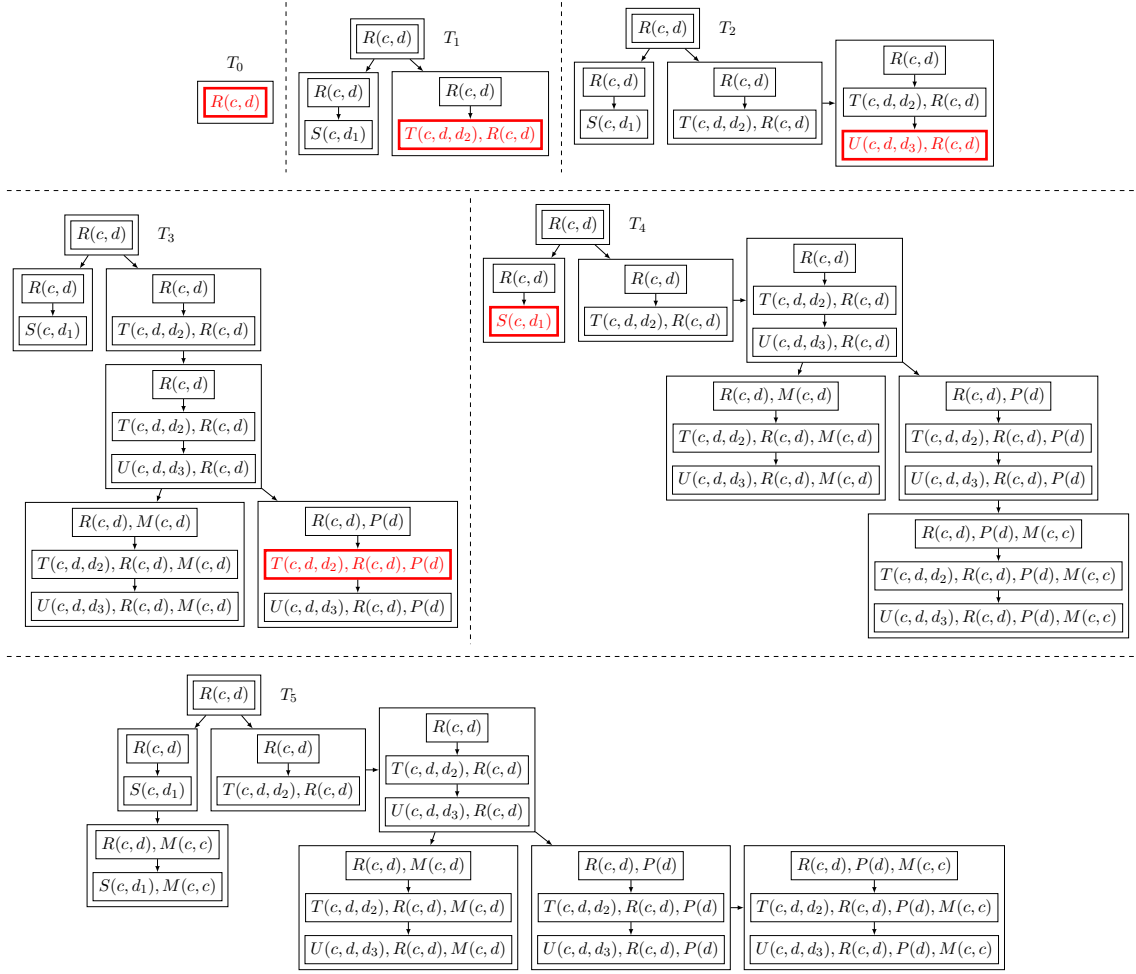

	\centering
	\includestandalone[width=\textwidth]{./figures/exmpl_tree_like_chase_tree}
	\caption{Tree-like chase tree corresponding to  \cref{fig:exmpl_chase_tree};
	triggered nodes are marked bold and red.}
	\label{fig:one_pass_chase_tree}
\end{figure}
\end{exmpl} 

Again, we want to construct a suitable one-pass chase tree $\overline{T}$
from a given arbitrary chase tree $T$.
Since a chase tree models some \ubcq $Q$ \ifftext every leaf
of the tree models~$Q$, we want to construct $\overline{T}$
in a way such that for any leaf $\overline{L}$ of $\overline{T}$,
there is at least one leaf $L$ of~$T$ that is homomorphic to $\overline{L}$.
The basic idea is to lift the proof of \cref{prop:gtgd_one_pass}. That is,
we first only propagate facts to ancestors and 
then recursively re-create all subtrees.
The introduction of disjunctions, however, causes some difficulties: 
whenever we want to create a copy of some subtree in a node of $\overline{T}$,
we might generate additional threads in $\overline{T}$ independent of the given subtree.
We thus have to consider these additional threads in our proof.

\begin{prop}\label{prop:dgtgd_one_pass}
For every tree-like chase tree $T$ with root $I_0$,
there is a one-pass chase tree $\overline{T}$ with root $I_0$ such that
for any leaf $\overline{L}$ in $\overline{T}$,
there is a leaf $L$ in $T$
and a homomorphism $h_{\overline{L}}:L\rightarrow \overline{L}$
with $h(c)=c$ for any $c\in\consts(I_0)$.
\end{prop}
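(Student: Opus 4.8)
The plan is to lift the proof of \cref{prop:gtgd_one_pass} to the branching setting by induction on the number $N$ of disjunctive chase steps used to build $T$, carrying throughout the invariant that each homomorphism $h_{\overline{L}}$ preserves the guarded tree decomposition of its source, so that I may legitimately write $h_{\overline{L}}(v)$ for an internal node $v$. For $N=0$ the tree is the single root $I_0$ and the identity map settles the claim. For the inductive step I peel off the final disjunctive chase step: it is applied at a leaf $M$ of the tree $T'$ obtained by deleting it, with a trigger $h$ for a \dgtgd $\tau=\body\rightarrow\bigvee_{i=1}^n\exists\vec y_i\,\head_i$ firing in an internal node $v$ of $M$, and it produces children $M_1,\dotsc,M_n$ in $T$. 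By the inductive hypothesis I have a one-pass chase tree $\overline{T'}$ with root $I_0$ together with the leaf-wise homomorphisms.

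Every leaf $\overline{L}$ of $\overline{T'}$ whose associated $T'$-leaf is not $M$ is left untouched; since $M$ is the only leaf of $T'$ that ceases to be a leaf in $T$, its witness survives verbatim. At each leaf $\overline{L}$ associated with $M$ I run the reconstruction of \cref{prop:gtgd_one_pass} on the internal decomposition: using $h_{\overline{L}}$ I transport the trigger to $\overline{v}\coloneqq h_{\overline{L}}(v)$, regenerate from the root a copy of the subtree of $\overline{v}$ using fresh constants for all constants outside $\consts(I_0)$, fire $\tau$ at the copy while propagating the new facts only to ancestors, and then re-create the still-missing subtrees bottom-up along the ancestors, exactly as in \cref{lem:one_pass_proof} and \cref{cor:one_pass_proof}. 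Along the branch that repeats the head-conjunction choices originally made, this yields $n$ fresh instances that are copies of $M_1,\dotsc,M_n$, and $h_{\overline{L}}$ extends to tree-structure-preserving homomorphisms $M_i\rightarrow\overline{L}_i$ fixing $\consts(I_0)$, just as in the non-disjunctive case.

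The genuinely new phenomenon, and the step I expect to be the main obstacle, is that re-firing the chase steps that built the subtree of $\overline{v}$ now involves \emph{disjunctive} steps, so each re-fire forces branches other than the intended one and spawns threads in $\overline{T}$ independent of the copied subtree. I would dispose of these additional threads by completing each of them faithfully: an unintended branch corresponds to choosing, at one of the re-fired disjunctive steps, a head conjunction that diverges from the choices leading to $M$, hence it matches an alternative branch whose corresponding subtree of $T'$ reaches only leaves other than $M$. That subtree uses strictly fewer than $N$ disjunctive steps, so the inductive hypothesis applies to it (taking its own root) and furnishes a one-pass copy whose every leaf $\overline{L}'$ is a fresh copy of an honest leaf $L'$ of $T$; composing the renaming with the supplied homomorphism into that copy gives the required $L'\rightarrow\overline{L}'$ fixing $\consts(I_0)$, and grafting this copy onto the unintended branch completes it. The delicate points to check are that propagating the new facts only to ancestors never violates \cref{defn:tree_decomposition} for any of the copies — here guardedness of $\tau$ is used exactly as in \cref{prop:tree_chase} — and that the grafted copies remain one-pass as threads of the whole tree; granting these, and noting that finiteness of $T$ guarantees the completion terminates, collecting the untouched leaves, the intended-branch leaves $\overline{L}_i$, and the completed additional-thread leaves exhibits a leaf of $T$ with a homomorphism fixing $\consts(I_0)$ into every leaf of $\overline{T}$, which is the assertion.
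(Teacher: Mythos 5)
Your overall architecture is the same as the paper's: induct on the number of chase steps, and at each leaf of the inductive one-pass tree that witnesses the leaf $M$ where the final step fires, regenerate a fresh copy of the subtree of $v$, fire the step there with ancestor-only propagation, and rebuild subtrees bottom-up. The step you single out as the main obstacle — the extra threads spawned because the re-fired steps are disjunctive — is indeed where the difficulty lies, and it is exactly where your argument has a gap. You claim an unintended branch ``matches an alternative branch whose corresponding subtree of $T'$ reaches only leaves other than $M$.'' This is not well-defined and is false in general. The steps you re-fire are steps of the \emph{one-pass} tree $\overline{T'}$, which is built by a reorganised and duplicated sequence of firings; its internal disjunctive branchings have no node-level correspondence to branchings of $T'$, and the inductive hypothesis only supplies leaf-level homomorphisms. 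Worse, because the one-pass construction duplicates subtrees, \emph{several} leaves of $\overline{T'}$ can be witnessed by $M$, scattered across different disjunctive branches. Hence an ``alternative'' head-conjunction choice at a re-fired step can lead straight to (a copy of) another $M$-witnessed leaf, whose only available witness is $M$ itself — and $M\notin\lvs(T)$, so ``completing that thread faithfully'' leaves it without a witness among the leaves of $T$.

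The fix, which is what the paper's proof does, is to treat those branches not by grafting completions of alternative subtrees but by firing the final chase step in them as well. Concretely, the paper cuts $\overline{T'}$ at the set $S_{\overline{v}}$ of nodes where the subtree of $\overline{v}$ last changed along the thread to \emph{each} $M$-witnessed leaf, re-fires the rules creating that cut tree, and then shows a dichotomy for the new leaves: either the re-fired branch terminates in a fresh, complete copy of the subtree of $v$ (this happens precisely for the branches heading towards $M$-witnessed leaves, and the chase step is then fired in every such copy), or the branch faithfully reproduces a thread to a leaf of $\overline{T'}$ \emph{not} witnessed by $M$, whose witness survives in $\lvs(T)$ because only $M$ ceases to be a leaf. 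Your proposal needs this dichotomy (or an equivalent device) to account for the $M$-witnessed leaves hiding in the ``alternative'' branches; as written, the induction ``on a smaller subtree of $T'$'' does not reach them. A secondary issue, repairable but glossed over, is that grafting an inductively obtained one-pass tree rooted at an internal node of $T'$ onto a leaf of the partially built $\overline{T}$ requires replaying its steps over a strictly larger, differently named instance and re-checking both the homomorphisms and the one-pass property of the combined threads.
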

\begin{proof}
As we did for \cref{prop:gtgd_one_pass},
we also show that each $h_{\overline{L}}$ preserves the tree structure of $L$.
%Further, we show that any $L\in\lvs(T)$ is homomorphic to at least one 
$\overline{L}\in\lvs(\overline{T})$.
We prove the claim by induction on the number of steps used to create $T$, denoted by $n$.
For $n=0$, we set $\overline{I_0}\coloneqq I_0$ and let $h$ be the identity.

For $n>0$, by the inductive hypothesis, we get a one-pass chase tree $\overline{T}$
such that for any $\overline{L}\in\lvs(\overline{T})$,
there is some $L\in\lvs(T_{n-1})$ and 
constructed homomorphism $h_{\overline{L}}:L\rightarrow\overline{L}$.
Assume the chase step from $T_{n-1}$ to $T_n$ fires in node $v$ in $L$, 
creating leaves $L_1,\dotsc,L_m$,
and let $S$ be the set of all leaves $\overline{L}\in\lvs(\overline{T})$ 
such that $L$ is homomorphic to $\overline{L}$ via some constructed~$h_{\overline{L}}$.
If $S=\emptyset$, there is nothing to do as each leaf in $\overline{T}$
still has a homomorphism from some leaf in~$T_n$.
Hence, assume $S\neq\emptyset$.

First, we want to create a copy of the subtree of $v$ (\wrt $L$) using the inductive one-pass chase. 
For this, for any $\overline{L}\in S$, 
let $N_{\overline{L}}$ be the last node in the thread $\overline{I_0},\dotsc,\overline{L}$ in which 
the subtree of the node $\overline{v}$ in~$\overline{L}$ corresponding to~$v$ added on some fact.
Let $S_{\overline{v}}$ be the set consisting of all such nodes $N_{\overline{L}}$.
Note that $S_{\overline{v}}\neq\emptyset$ since $S\neq\emptyset$.
We now re-fire from any $\overline{L}\in S$ all rules that created the tree $\cut(\overline{T},S_{\overline{v}})$ 
but use fresh constants $c'$ for any constant~$c\in\consts(\overline{T})\setminus\consts(I_0)$ 
in this chase while any constant in $\consts(I_0)$ stays unchanged.
Let $\overline{T}'$ be the resulting chase tree.

Note that any descendant $\overline{D}$ of some $\overline{L}\in S$ in $\overline{T'}$
is just an extension of $\overline{L}$, 
and hence,~$L$ is still homomorphic to $\overline{D}$ via $h_{\overline{L}}$
(we will make use of this fact in a step below).
Even more, for any new leaf $\overline{L}'$ in $\overline{T}'$, there is either some
$L'\in\lvs(T_n)$ that is homomorphic to $\overline{L}'$ (if $\overline{L}'$ got created
by a branch that was not affected by $S_{\overline{v}}$),
or $\overline{L}'$ contains a fresh copy of the subtree of $v$, 
created by a one-pass proof and eligible for further one-pass chase steps.
Let $S'$ consist of all leaves in $\overline{T}'$ for which the latter is the case. 
Note that $S'\neq\emptyset$ as $S_{\overline{v}}\neq\emptyset$.

For any $\overline{L}'\in S'$, 
we then fire the chase step that extended $T_{n-1}$ to $T_n$ in the
fresh copy~$\overline{v}'$ of $v$,
creating new leaves $\overline{L_1},\dotsc,\overline{L_m}$ that we associate to $L_1,\dotsc,L_m$,
but only propagate new facts to ancestors in each~$\overline{L_i}$.
Let $\overline{T}^*$ be the chase tree created so far.
As we did for \cref{prop:gtgd_one_pass}, we now want to recursively 
re-create subtrees in each~$\overline{L_i}$ and create some $h_{\overline{L_i}}$ that preserves the tree structure of $L_i$.
We do this by iteratively updating $\overline{T}^*$ for $1\leq i\leq m$ in the following way:

Beginning from $p_0\coloneqq v$, 
for any leaf $\overline{L_i}$ in $\overline{T}^*$ associated to $L_i$,
let $\overline{p_j}$ be the copy of~$p_j$ in~$T$ as given by $h_{\overline{L}}$
($\overline{L_i}$ is a descendant of some $\overline{L}\in S$),
and $\overline{p_j}'$ be the non-strict ancestor of $\overline{v}'$ corresponding to $\overline{p_j}$.
Further, let $N_{\overline{L_i}}$ be the node in $\overline{T}$ in which $\overline{p_j}$ 
got created and $M_{\overline{L_i}}$ be the last node in 
$N_{\overline{L_i}},\dotsc,\overline{L}$ in which the subtree of $\overline{p_j}$ added 
on a fact.
Then let $S_N$ consist of all such nodes $N_{\overline{L_i}}$,
and $S_M$ consist of all such nodes $M_{\overline{L_i}}$.

For each~$\overline{L_i}$, we would now like to create all missing
facts in the subtree of $\overline{p_j}'$ by re-firing steps 
from $\cut\bigl(\overline{T}^{*N_{\overline{L_i}}},M_{\overline{L_i}}\bigr)$;
however, this might create new independent threads with leaves that
are again associated to $L_i$ but in which the subtree of $\overline{p_j}'$ is still missing facts.
To solve this issue, for $U\in\{S_N,S_M\}$, we first remove any node from $U$ that has an ancestor in~$U$.
Then, for each $N\in S_N$,
let $M_N$ consist of all $M\in S_M$
such that $N$ is an ancestor of~$M$.
Fix some $N\in S_N$.
\begin{claim}
$\cut\bigl(\overline{T}^{*N},M_N\bigr)$ contains no child of any $\overline{L}\in S$.
\end{claim} 
\begin{subproof}
Let $\overline{L}\in S$ and assume $\overline{L}$ is a non-strict descendant of $N$.
There is at least one descendant 
$\overline{L_i}$ associated to $L_i$ with corresponding $N_{\overline{L_i}},M_{\overline{L_i}}$.
As $N$ is in~$S_N$, $N_{\overline{L_i}}$ must be a non-strict descendant of $N$,
and hence, also $M_{\overline{L_i}}$ must be a non-strict descendant of $N$.
Thus, either $M_{\overline{L_i}}$ or one of its ancestors is in $S_M$.
All children of nodes in $S_M$ are removed, and consequently, 
no child of $\overline{L}$ occurs in $T_N$.
\end{subproof} 
Now, for any $\overline{L_i}$ associated to $L_i$ that is a descendant of $N$, 
we re-fire the steps that created $\cut\bigl(\overline{T}^{*N},M_N\bigr)$ starting from $\overline{p_j}'$
to create all missing facts in the subtree of $\overline{p_j}'$.
Since $\cut\bigl(\overline{T}^{*N},M_N\bigr)$ contains no children of any $\overline{L}\in\overline{S}$,
for any newly created leaf $\overline{L}'$ in this process, there is either 
some $L'\in\lvs(T_n)$ that is homomorphic to $\overline{L}'$,
or $\overline{L}'$ is associated to $L_i$ and the subtree of $\overline{p_j}'$
got repaired by the just fired chase steps. 
Once we processed any $N\in S_N$, we let $p_{j+1}$ be the parent of $p_j$ and proceed with $p_{j+1}$.

Finally, for any leaf $\overline{L_i}$ associated to some $L_i$
in the final $\overline{T}^*$, and any node $u$ in $L_i$,
we let $h_{\overline{L_i}}$ map to the newest copy of $u$ in $\overline{T}^*$.
Like in \cref{prop:gtgd_one_pass},
each constructed $h_{\overline{L_i}}$ will be a homomorphism that 
preserves the tree structure of $L_i$.
\end{proof}

\begin{thm}\label{thm:dgtgd_iff_one_pass}
A \ubcq $Q$ has a chase proof from $\db$ and \dgtgds $\Sigma$
\ifftext it has a one-pass chase proof from $\db$ and $\Sigma$.
\end{thm}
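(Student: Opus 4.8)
The plan is to follow the exact structure of the proof of \cref{thm:gtgd_iff_one_pass}, lifting it from chase sequences to chase trees. The forward direction is immediate: by \cref{defn:one_pass_chase_tree}, a one-pass chase tree is in particular a chase tree, so every one-pass chase proof of $Q$ is already a chase proof of $Q$, and nothing needs to be shown there.

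For the converse, I would start from an arbitrary tree-like chase proof $T$ of $Q$ from $\db$ and $\Sigma$ and feed it into \cref{prop:dgtgd_one_pass}. This yields a one-pass chase tree $\overline{T}$ with root $\db$ together with, for each leaf $\overline{L}\in\lvs(\overline{T})$, a leaf $L\in\lvs(T)$ and a homomorphism $h_{\overline{L}}:L\rightarrow\overline{L}$ fixing every $c\in\consts(\db)$. The only remaining task is to transfer the fact that $T$ is a chase proof of $Q$ to the conclusion that $\overline{T}$ is as well.

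The key observation is that $T\models Q$ means precisely that every thread of $T$ is a chase proof of $Q$; that is, each leaf $L$ satisfies $L\models Q_i$ for some disjunct $Q_i\in Q$. I would then fix a leaf $\overline{L}$ of $\overline{T}$ together with its associated $L$ and $h_{\overline{L}}$. Since threads are ordinary chase sequences, I can invoke \cref{lem:hom_chase_seq_impl_bcq} (whose underlying claim needs only a homomorphism fixing $\consts(\db)$) applied to the thread of $T$ ending in $L$ and the thread of $\overline{T}$ ending in $\overline{L}$: from $L\models Q_i$ we obtain $\overline{L}\models Q_i$, and hence $\overline{L}\models Q$. As this holds for every leaf of $\overline{T}$, every thread of $\overline{T}$ is a chase proof of $Q$, so $\overline{T}\models Q$ and $\overline{T}$ is the desired one-pass chase proof.

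I expect the theorem itself to be routine once \cref{prop:dgtgd_one_pass} is in hand: all the genuine difficulty has already been absorbed into that proposition, whose elaborate subtree-recreation construction is exactly what guarantees the per-leaf homomorphisms. The one point that requires a little care is the bookkeeping of the disjunction. A \ubcq is satisfied by a chase tree precisely when \emph{each} leaf satisfies \emph{some} disjunct, and different leaves may well satisfy different disjuncts, so the argument must be run leaf-by-leaf rather than as a single global homomorphism claim. This is the only place where the disjunctive case genuinely departs from the reasoning behind \cref{thm:gtgd_iff_one_pass}.
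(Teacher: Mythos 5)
Your proposal is correct and follows essentially the same route as the paper: the forward direction is trivial, and the converse applies \cref{prop:dgtgd_one_pass} and then argues leaf-by-leaf via \cref{lem:hom_chase_seq_impl_bcq}, exactly as the paper does. Your remark that different leaves may satisfy different disjuncts of $Q$ is a correct and worthwhile observation that the paper's proof handles implicitly by fixing $Q_i$ per leaf.
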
 
\begin{proof}
If we have one-pass chase proof of $Q$, we also have a chase proof of $Q$
since every one-pass chase proof is a chase proof.

Assume we have a chase proof $T$ of $Q$, and
let $\overline{T}$ be the corresponding one-pass chase tree
as created in \cref{prop:dgtgd_one_pass}.
Take a leaf $\overline{L}$ in $\overline{T}$ and let $L$
be the leaf in $T$ that is homomorphic to $\overline{L}$.
Since $L\models Q$, we have $L\models Q_i$ for some $Q_i\in Q$.
Hence, by \cref{lem:hom_chase_seq_impl_bcq}, $\overline{L}\models Q_i$,
and thus $\overline{L}\models Q$.
\end{proof}

\chapter{\qfqagtgds}\label{chap:gtgd}
In this chapter, we present two procedures for \gtgds that terminate and output an atomic rewriting.
To accomplish this, we first introduce some normal forms.
Then, we describe an abstract property 
that implies completeness of a rewriting algorithm,
which we call \emph{evolve-completeness}.
We then present two sound saturation processes that are evolve-complete and hence complete.
For ease of exposition, we first introduce a non-optimal, 
but arguably more intuitive algorithm, which we call the \emph{simple saturation}.
We then discuss problems arising with this approach and introduce an improved version,
which we call the \emph{guarded saturation}.

\section{Normal Forms}

We introduce two normal forms for \tgds
that facilitate subsequent proofs and ensure termination of our rewriting algorithms.
Firstly, we want to normalise the variables of our set of \tgds
to avoid duplicate isomorphic rules in our rewriting.

\begin{defn}[\tgd variable normal form]
Given a \tgd $\tau$ with body $\body$ and head $\head$, 
we say that $\tau$ is in \index{variable normal form!TGD}\emph{variable normal form}
if $\vars(\body)=\vec x$, $\vars(\head)\setminus\vec x=\vec y$,
$x_i$ is the $\nth{i}$ lexicographic variable occurring in $\body$, 
and $y_i$ the $\nth{i}$ lexicographic variable occurring in $\head$.

We write $\vnf(\tau)$ for the unique \tgd obtained by rewriting $\tau$ into variable normal form,
and given a set of \tgds $\Sigma$, we write \index[not]{$\vnf(\Sigma)$}$\vnf(\Sigma)\coloneqq\{\vnf(\tau)\mid\tau\in\Sigma\}$
for the variable normal form of $\Sigma$.
\end{defn} 

\begin{exmpl}
The \tgd 
\begin{equation*}
	\forall x_3,y_2,x_1\bigl[B(y_2,x_1,x_3)\rightarrow\exists y_1,z_1,y_2\bigl(H_1(x_1,z_1,y_1,y_2)\land H_2(y_1,y_2)\bigr)\bigr]
\end{equation*} 
is not in variable normal form. Its variable normal form is
\begin{equation*}
	\forall x_1,x_2,x_3\bigl[B(x_1,x_2,x_3)\rightarrow\exists y_1,y_2,y_3\bigl(H_1(x_2,y_1,y_2,y_3)\land H_2(y_2,y_3)\bigr)\bigr].
\end{equation*} 
\end{exmpl} 

\begin{lem}\label{lem:tgd_vnf_complexity}
Any \tgd $\tau$ can be rewritten into $\vnf(\tau)$ in linear time.
\end{lem}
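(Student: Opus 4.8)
The plan is to exhibit an explicit algorithm that produces $\vnf(\tau)$ from $\tau$ using a constant number of left-to-right passes over the syntax of $\tau$, performing only $O(1)$ work per symbol. The normal form is governed by a single renaming $\rho$: on the variables of $\body$, $\rho$ sends the $\nth{i}$ distinct variable to occur in $\body$ (reading left to right) to $x_i$, and on the remaining head variables $\vars(\head)\setminus\vars(\body)$, it sends the $\nth{j}$ distinct such variable (reading left to right through $\head$) to $y_j$. This matches the renaming witnessed by the worked example. Applying $\rho$ to $\tau$ and reattaching the quantifier prefixes yields a \tgd that satisfies every clause of the variable normal form definition by construction, so the only real content is the linear time bound.

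First I would scan $\body$ once, maintaining a counter $i$ initialised to $0$ and a partial map $\rho$. For each variable token encountered, I test whether it already lies in $\dom(\rho)$; if not, I increment $i$ and set its image to $x_i$. After this pass, $\rho$ realises the body renaming. Next I scan $\head$ once with a fresh counter $j$: a token already in $\dom(\rho)$ is a body variable and leaves $\rho$ unchanged, while a genuinely new token causes me to increment $j$ and set its image to $y_j$. A final pass rewrites every term of $\tau$, replacing each variable $v$ by $\rho(v)$; the quantifier prefixes then become exactly $\forall x_1\dotsc x_{|\vec x|}$ and $\exists y_1\dotsc y_{|\vec y|}$. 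Each of the three passes touches every symbol of $\tau$ exactly once.

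The only genuine obstacle is ensuring that each membership test and each update of $\rho$ costs $O(1)$, since a naive comparison-based dictionary would give merely an $O(\size(\tau)\log\size(\tau))$ bound. I would resolve this by exploiting that the variables are drawn from the indexed set $x_1,x_2,\dotsc$: every variable occurring in $\tau$ has index at most $\size(\tau)$, so a direct-address array of length $O(\size(\tau))$, indexed by each variable's own index and storing its $\rho$-image together with a null marker for "unseen", supports constant-time lookup and assignment. With this data structure each pass runs in time linear in the number of symbols of $\tau$, and since that number is $O(\size(\tau))$, the entire rewriting is computed in $\calO(\size(\tau))$ time, which is precisely the claim.
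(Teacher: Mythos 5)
Your proposal is correct and follows essentially the same approach as the paper: a left-to-right scan to record the order of first occurrence of universal and existential variables, followed by applying the resulting renaming (the paper routes through an intermediate set of fresh variables to realise the simultaneous substitution, while you build the full map first and apply it in one pass, which is equivalent). Your extra care about $O(1)$ dictionary operations goes beyond what the paper bothers to justify; the only fragile micro-claim is that a variable's index is bounded by $\size(\tau)$ (false under binary-encoded indices), but this is repaired by keying a trie on the variables' written names instead of direct addressing.
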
 
\begin{proof}
First rename the universal and existential variables of $\tau$ into disjoint sets $V_\forall,V_\exists$
and obtain the corresponding $\tau'$.
Then scan $\tau'$ left to right to get an order on $V_\forall$ and $V_\exists$, respectively,
and choose fresh variables ${v_1,\dotsc,v_{|V_\forall|},w_1,\dotsc,w_{|V_\exists|}}$.
Then rename $\tau'$ using~$\rho$ defined by
\begin{equation*}
	x\rho\coloneqq\begin{cases}
		v_i,&\text{if $x$ is the $\nth{i}$ variable in $V_\forall$}\\
		w_i,&\text{if $x$ is the $\nth{i}$ variable in $V_\exists$}\\
	\end{cases}.
\end{equation*} 
Finally, rename $\tau'\rho$ using $\rho'$ defined 
by $v_i\rho'\coloneqq x_i$ and $w_i\rho'\coloneqq y_i$.
All steps can clearly be done in linear time.
\end{proof} 

Secondly, we want to avoid ``mixed'' rules; that is,
non-full \tgds which become full rules when selecting a subset of head atoms.
\begin{defn}[\tgd head normal form]\label{def:tgd_hnf}
We say that a \tgd $\tau$ is in \index{head normal form}\emph{head normal form} 
if $\tau$ is full or every head atom contains at least one existential variable.

Given a set of \tgds $\Sigma$, the \emph{head normal form \iindex[not]{$\hnf(\Sigma)$} of $\Sigma$} 
is obtained by replacing every rule
\begin{equation*}
	\body(\vec x)\rightarrow\exists\vec y\,\bigwedge\limits_{i=1}^nH_i(\vec x,\vec y),
\end{equation*} 
that is not in head normal form by two rules
\begin{align*}
	\body(\vec x)&\rightarrow\exists\vec y\,\bigwedge\limits_{i\in I}H_i(\vec x,\vec y),\\
	\body(\vec x)&\rightarrow\bigwedge\limits_{i\in(\{1,\dotsc,n\}\setminus I)}H_i(\vec x),
\end{align*} 
where $I\coloneqq\{i\mid \vars(H_i)\cap\vec y\neq\emptyset\}$.
This replacement can be done in time linear in \index[not]{$\size(\Sigma)$}$\size(\Sigma)\coloneqq\sum_{\tau\in\Sigma}\size(\tau)$.
\end{defn}

\begin{lem}\label{lem:tgd_hnf_closure_coincide}
Given a database $\db$, a set of \tgds $\Sigma$, and a \ubcq $Q$,
we have $\db,\Sigma\models Q$ \ifftext $\db,\hnf(\Sigma)\models Q$.
\end{lem}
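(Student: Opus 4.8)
The plan is to reduce the lemma to a single logical-equivalence statement: I will show that the theories $\Sigma$ and $\hnf(\Sigma)$ are satisfied by exactly the same structures. Once this is established, the claim is immediate, since logically equivalent theories have the same models and hence entail the same queries: for any structure $\calA$ we will have $\calA\models\db\cup\Sigma$ \ifftext $\calA\models\db\cup\hnf(\Sigma)$, so the two sides $\db,\Sigma\models Q$ and $\db,\hnf(\Sigma)\models Q$ quantify over the same class of models.

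To get the logical equivalence, I first observe that $\hnf(\Sigma)$ arises from $\Sigma$ by repeatedly replacing one rule $\tau$ not in head normal form by two rules, leaving every other rule untouched. It therefore suffices to prove that each such replacement preserves models; closing under the finitely many replacements then yields $\Sigma\equiv\hnf(\Sigma)$. So I would fix a rule
\begin{equation*}
\tau=\body(\vec x)\rightarrow\exists\vec y\,\bigwedge_{i=1}^nH_i(\vec x,\vec y)
\end{equation*}
that is not in head normal form, set $I\coloneqq\{i\mid\vars(H_i)\cap\vec y\neq\emptyset\}$, and note that by definition each $H_j$ with $j\notin I$ contains none of the existential variables $\vec y$. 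The key step is the standard first-order equivalence that lets an existential quantifier be pushed past a conjunct not mentioning the bound variables (valid over non-empty domains, as always assumed):
\begin{equation*}
\exists\vec y\,\Bigl(\bigwedge_{i\in I}H_i(\vec x,\vec y)\land\bigwedge_{j\notin I}H_j(\vec x)\Bigr)\equiv\Bigl(\exists\vec y\,\bigwedge_{i\in I}H_i(\vec x,\vec y)\Bigr)\land\bigwedge_{j\notin I}H_j(\vec x).
\end{equation*}
Prefixing $\body(\vec x)\rightarrow(\cdots)$, universally quantifying over $\vec x$, and using that $A\rightarrow(B\land C)$ is equivalent to $(A\rightarrow B)\land(A\rightarrow C)$, I obtain that $\tau$ is logically equivalent to the conjunction of the two replacement rules. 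I would also remark that, since $\tau$ is neither full nor already head-normal, both $I$ and its complement are non-empty, so both replacement rules are genuine \tgds and the transformation is well-defined.

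I do not expect any serious obstacle in this argument; the whole proof is essentially the single quantifier-distribution equivalence together with bookkeeping over the finitely many replaced rules. The only point demanding a little care is stating that equivalence precisely and flagging its reliance on the standard convention that domains are non-empty. Everything else — the fact that logically equivalent theories have identical model classes and therefore identical consequences — is routine, and it in fact gives the conclusion not just for the fixed \ubcq $Q$ but for every sentence.
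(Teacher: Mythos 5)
Your proof is correct, but it takes a genuinely different route from the paper. The paper reduces the statement to the chase via \cref{thm:chase_univ}: it argues that any chase proof of $Q$ using $\Sigma$ can be simulated by one using $\hnf(\Sigma)$ (fire both replacement rules in place of the original) and vice versa (fire the originating rule in place of either replacement). You instead prove the stronger fact that $\Sigma$ and $\hnf(\Sigma)$ are logically equivalent theories, via the quantifier-distribution equivalence $\exists\vec y\,(\phi\land\psi)\equiv(\exists\vec y\,\phi)\land\psi$ when $\psi$ does not mention $\vec y$, together with $A\rightarrow(B\land C)\equiv(A\rightarrow B)\land(A\rightarrow C)$. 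Your argument buys more: it yields the conclusion for arbitrary sentences $Q$ rather than just \ubcqs, and it does not lean on the completeness of the chase as a proof system. The paper's argument is weaker but fits the chase-centric machinery used throughout, and the simulation idea it introduces is the same one reused later (e.g.\ in \cref{lem:dgtgd_shnf_equiv}, where a purely logical equivalence is \emph{not} available because fresh relation symbols are introduced). Your side remarks are also the right ones to make: non-emptiness of both $I$ and its complement follows from $\tau$ being non-full yet not head-normal under the paper's minimal-quantification convention, and the quantifier equivalence indeed needs non-empty domains.
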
 
\begin{proof}
By \cref{thm:chase_univ}, it suffices to show that $Q$ has a chase proof
from $\db$ using $\Sigma$ \ifftext it has a chase proof from $\db$ using $\hnf(\Sigma)$.
Any $\tau\in\Sigma\cap\hnf(\Sigma)$ can fire equivalently in both chases.
If we fire a $\tau\in\Sigma\setminus\hnf(\Sigma)$ in a chase proof using $\Sigma$, we can
simply fire both rules from $\hnf(\tau)$ in a chase proof using $\hnf(\Sigma)$.
Conversely, if we fire a $\tau\in\hnf(\Sigma)\setminus\Sigma$ in a chase proof using $\hnf(\Sigma)$, 
we can simply fire the rule from which $\tau$ was created to get a chase proof using $\Sigma$.
\end{proof} 

The reason for the head normal form transformation is to 
simplify our correctness proofs
and to minimise the number of variables occurring in each head of a rule.
This in return will facilitate required unification steps in our upcoming rewriting algorithms.
\begin{defn}[\tgd width]
Given a \tgd $\tau$ with body $\body$ and head $\head$, 
we define
\begin{thmlist}
\item the \index{body width!\tgd}\emph{body width of $\tau$} as $\bwidth(\tau)\coloneqq|\vars(\body)|$,
\item the \index{head width!\tgd}\emph{head width of $\tau$} as $\hwidth(\tau)\coloneqq|\vars(\head)|$, and
\item the \index{width!\tgd}\emph{width of $\tau$} as $\width(\tau)\coloneqq\max\{\bwidth(\tau),\hwidth(\tau)\}$.
\end{thmlist} 
The definitions are extended to set of \tgds $\Sigma$
by taking the maximum of any $\tau\in\Sigma$, e.g. 
\index[not]{$\bwidth(\Sigma)$}\index[not]{$\hwidth(\Sigma)$}\index[not]{$\width(\Sigma)$}
$\width(\Sigma)\coloneqq\max\{\width(\tau)\mid\tau\in\Sigma\}$.
\end{defn} 
\begin{exmpl}
The \tgd
\begin{equation*}
	\forall x_1,x_2\bigl[B(x_1,x_2)\rightarrow\exists y_1\, H_1(x_1,y_1)\land H_2(x_2)\bigr]
\end{equation*} 
has width 3, whereas its head normal form
\begin{align*}
	\forall x_1,x_2\bigl[B(x_1,x_2)&\rightarrow\exists y_1\, H_1(x_1,y_1)\bigr],\\
	\forall x_1,x_2\bigl[B(x_1,x_2)&\rightarrow H_2(x_2)\bigr]
\end{align*} 
has width 2.
\end{exmpl} 
\begin{lem}
For any set of \tgds $\Sigma$, we have $\hwidth(\hnf(\Sigma))\leq\hwidth(\Sigma)$.
\end{lem}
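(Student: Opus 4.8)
The plan is to argue rule by rule, exploiting the fact that the head width of a set is just the maximum of the head widths of its members. It therefore suffices to show that every \tgd in $\hnf(\Sigma)$ has head width at most $\hwidth(\Sigma)$. Each rule $\tau'\in\hnf(\Sigma)$ falls into one of two cases: either $\tau'$ was already in head normal form, in which case \cref{def:tgd_hnf} leaves it untouched, so $\tau'\in\Sigma$ and hence $\hwidth(\tau')\leq\hwidth(\Sigma)$ by definition; or $\tau'$ is one of the two rules produced by splitting some $\tau\in\Sigma$ that is not in head normal form. So the only substantive thing to check is that both rules obtained from a split have head width no larger than $\hwidth(\tau)$.

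For this, I would fix such a $\tau$ with head $\bigwedge_{i=1}^n H_i(\vec x,\vec y)$, so that $\vars(\head)=\bigcup_{i=1}^n\vars(H_i)$ and $\hwidth(\tau)=|\vars(\head)|$. The two split rules have heads $\bigwedge_{i\in I}H_i(\vec x,\vec y)$ and $\bigwedge_{i\in(\{1,\dotsc,n\}\setminus I)}H_i(\vec x)$, whose variable sets are $\bigcup_{i\in I}\vars(H_i)$ and $\bigcup_{i\notin I}\vars(H_i)$ respectively. Each of these is a subset of $\bigcup_{i=1}^n\vars(H_i)=\vars(\head)$, simply because we are partitioning the head atoms and each part's variables come from a subcollection of the $H_i$. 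Consequently both split heads use at most $|\vars(\head)|=\hwidth(\tau)$ variables, so both split rules have head width $\leq\hwidth(\tau)\leq\hwidth(\Sigma)$.

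Combining the two cases, every rule of $\hnf(\Sigma)$ has head width bounded by $\hwidth(\Sigma)$, and taking the maximum over all rules of $\hnf(\Sigma)$ yields $\hwidth(\hnf(\Sigma))\leq\hwidth(\Sigma)$, as desired. There is no real obstacle here: the argument reduces to the monotonicity of $\vars$ under passing to a subset of the head atoms. The only point demanding a little care is to treat separately the rules already in head normal form, which are copied verbatim into $\hnf(\Sigma)$ and thus contribute exactly their original head width, so that the bound is established for the entire set and not merely for the rules that are actually split.
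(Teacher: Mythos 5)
Your proof is correct and follows essentially the same route as the paper, whose entire argument is the one-line observation that any head of a rule in $\hnf(\Sigma)$ is a subset of a head of a rule in $\Sigma$. Your case split and the explicit monotonicity-of-$\vars$ step merely spell out the details of that same observation.
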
 
\begin{proof}
Any head of a rule in $\hnf(\Sigma)$ is a subset of a head of a rule in $\Sigma$.
\end{proof} 

\section{An Abstract Property for Completeness}\label{sec:gtgd_property_for_completeness}
Recall the structure of a tree-like chase from some database $\db$ and set of \gtgds $\Sigma$.
A tree-like chase models some \ubcq $Q$ \ifftext all the facts of some $Q_i\in Q$ are contained in
the root node~$r$ of the final instance of the chase.
Also recall that an atomic rewriting $\Sigma'$ must be a set of full rules.
Consequently, in order to obtain completeness of $\Sigma'$, 
we must ensure that whenever a fact $F$ relevant to~$r$ is created
by a sequence of chase steps using $\Sigma$, $F$ can also be created by a sequence of full rules using $\Sigma'$.
It is evident that $F$ cannot only be created by firing a full rule in $r$, 
but also by firing a rule in a descendant of $r$, as can be seen in \cref{fig:exmpl_tree_chase_seq}.
To derive sufficient completeness properties for $\Sigma'$, we can consider these two cases separately.

If we fire a full rule $\tau\in\Sigma$ in $r$, we can simply add $\tau$ to $\Sigma'$ 
and then fire $\tau$ in the chase using $\Sigma'$.
Our first property will thus require that every full rule of $\Sigma$ will be contained in $\Sigma'$.
If, however, we fire a rule $\tau$ in a descendant $d$ of $r$,
we cannot create a copy of $d$ in a chase using $\Sigma'$ (as $\Sigma'$ is full).
We hence need a second property which ensures that whenever a fact relevant to~$r$ is created
by firing some $\tau\in\Sigma$ in a descendant of $r$, 
the fact can also be created by a sequence of full rules in $\Sigma'$.
Due to an inductive argument (\cref{lem:gtgd_abstr_completeness_lift}), 
we can even relax this property a bit further and focus on facts created in a 
child of $r$ instead of general descendants.
These thoughts capture the general idea of the following properties:

\begin{restatable}[Evolve-completeness]{defn}{gtgdabstrcompletenessprops}\label{defn:gtgd_abstr_completeness_props}
Let $\Sigma$ be a set of \gtgds and $\Sigma'$ be a set of full \tgds.
We say that $\Sigma'$ is \index{evolve-completeness!\gtgd}\emph{evolve-complete \wrt $\Sigma$} if it satisfies
the following properties:
\begin{propylist}
\item\label{defn:gtgd_abstr_completeness_props_1}
For every full $\tau\in\hnf(\Sigma)$, we have $\tau\in\Sigma'$ (modulo renaming).

\item\label{defn:gtgd_abstr_completeness_props_2}
Let $I_0,\dotsc,I_{n+1}$ be a chase,
let $r$ be the node inside $I_0$,
let $\tau_r\in\hnf(\Sigma)$ be non-full
and ${\tau_1,\dotsc,\tau_n\in\Sigma'}$ be full such that
\begin{assmlist}
\item firing $\tau_r$ in $I_0$ creates $I_1$,
\item firing $\tau_i$ in $v$ in $I_i$ creates $I_{i+1}$, where $v$ is the child of $r$.
\end{assmlist} 
Then there exists a chase $I_0',\dotsc,I_m'$ from $F_r^n$ and $\Sigma'$ of $F_r^{n+1}$.
\end{propylist}
An illustration of \cref{defn:gtgd_abstr_completeness_props_2} can be found in \cref{fig:gtgd_evolve_completeness}.
\end{restatable} 

\begin{figure}[ht]
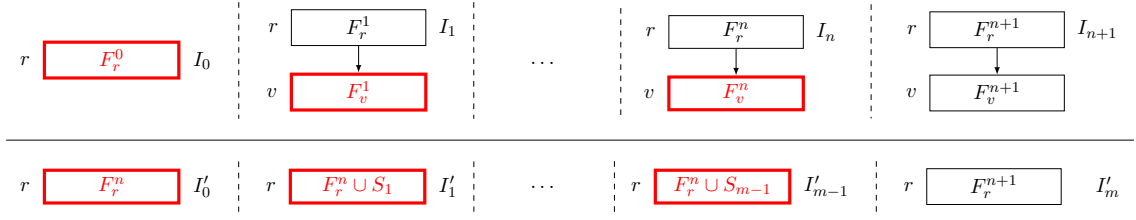

	\centering
	\includestandalone[width=\textwidth]{./figures/gtgd_evolve_completeness}
	\caption{Illustration of \cref{defn:gtgd_abstr_completeness_props_2};
	the upper chase is given by assumption and the bottom chase is obtained from \cref{defn:gtgd_abstr_completeness_props_2},
	where $S_1\subseteq\dotsb\subseteq S_{m-1}\subseteq(F_r^{n+1}\setminus F_r^n)$.
	Triggered nodes are marked bold and red, and the nodes' names are written on the left.}
	\label{fig:gtgd_evolve_completeness}
\end{figure}

Before we prove that our properties indeed imply completeness of $\Sigma'$,
we introduce an apparent strengthening of \cref{defn:gtgd_abstr_completeness_props_2}.
\begin{defn}[Strong evolve-completeness]
Let $\Sigma$ be a set of \gtgds and $\Sigma'$ be a set of full \tgds.
We say that $\Sigma'$ is \emph{strongly evolve-complete \wrt $\Sigma$} if it satisfies
\cref{defn:gtgd_abstr_completeness_props_1} and the following strengthening of
\cref{defn:gtgd_abstr_completeness_props_2}:
\begin{propylist}
\item\label{defn:gtgd_abstr_completeness_props_induct}
Let $I_0,\dotsc,I_{n+1}$ be a chase,
let $r$ be the node inside $I_0$,
let $\tau_r\in\hnf(\Sigma)$ be non-full
and ${\tau_1,\dotsc,\tau_n\in\Sigma'}$ be full such that
\begin{assmlist}
\item firing $\tau_r$ in $I_0$ creates $I_1$,
\item firing $\tau_i$ in $v$ in $I_i$ creates $I_{i+1}$, where $v$ is the child of $r$.
\end{assmlist} 
Then there exists a chase $I_0',\dotsc,I_m'$ from $I_0$ and $\Sigma'$ of $F_r^{n+1}$.
\end{propylist}
\end{defn} 

It turns out, however, that both properties are equivalent:
\begin{lem}\label{lem:gtgd_abstr_completeness_props_induct}
$\Sigma'$ is evolve-complete \wrt $\Sigma$ \ifftext $\Sigma'$ is strongly evolve-complete \wrt $\Sigma$.
\end{lem}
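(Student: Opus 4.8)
The plan is to prove the two implications separately, noting first that \cref{defn:gtgd_abstr_completeness_props_1} appears verbatim in both notions; hence the whole content lies in relating \cref{defn:gtgd_abstr_completeness_props_2} to its strengthening \cref{defn:gtgd_abstr_completeness_props_induct}, and the only difference between the two is the instance from which the promised $\Sigma'$-chase starts: $F_r^n$ in the weak property versus $I_0$ in the strong one. Throughout I would work with the oblivious chase, and I would repeatedly use two facts: that in a tree-like chase no fact is ever deleted, so $I_0 = F_r^0 \subseteq F_r^n$, and that $\consts(F_r) = \consts(I_0)$ by \cref{prop:tree_chase}.

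The \textbf{easy direction} (strongly evolve-complete implies evolve-complete) is a monotonicity argument. Given a chase from $I_0$ and $\Sigma'$ of $F_r^{n+1}$, I would replay it step by step starting from the larger instance $F_r^n \supseteq I_0$. Because $\Sigma'$ consists of full \tgds, its chase steps introduce no fresh constants, so no renaming is needed, every trigger used remains available, and monotonicity guarantees that the replayed final instance contains everything the original one did, in particular all of $F_r^{n+1}$. This is exactly a chase from $F_r^n$ and $\Sigma'$ of $F_r^{n+1}$.

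The \textbf{hard direction} (evolve-complete implies strongly evolve-complete) I would prove by induction on the number $n$ of full-rule firings $\tau_1,\dotsc,\tau_n$ in the child $v$, establishing for each $n$ the statement of \cref{defn:gtgd_abstr_completeness_props_induct}. For the base case $n=0$, the only step fires the non-full $\tau_r$; since $\tau_r \in \hnf(\Sigma)$, every head atom carries an existential variable, so every birth fact of $v$ uses a fresh constant outside $\consts(I_0)$ and therefore cannot be guarded by any fact of $r$. Consequently nothing propagates to $r$, giving $F_r^1 = F_r^0 = I_0$, and the empty chase suffices. For the inductive step I would drop the final step of $I_0,\dotsc,I_{n+1}$ to obtain the prefix $I_0,\dotsc,I_n$, which again has the required shape (fire $\tau_r$, then the full rules $\tau_1,\dotsc,\tau_{n-1}$ in $v$). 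The induction hypothesis then yields a chase from $I_0$ and $\Sigma'$ of $F_r^n$, while \cref{defn:gtgd_abstr_completeness_props_2} applied to the full chase yields a chase from $F_r^n$ and $\Sigma'$ of $F_r^{n+1}$. I would then concatenate the two: run the first chase to a final instance $J \supseteq F_r^n$, then replay the second chase from $J$, which is legitimate exactly as in the easy direction. The resulting sequence is a chase from $I_0$ and $\Sigma'$ whose final instance contains $F_r^{n+1}$.

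I expect the concatenation/replay step to be the main obstacle to get right, and the crucial supporting observation is that every fact accumulated in $F_r^n$ beyond $I_0$ -- propagated up from $v$ -- uses only constants of $I_0$, by $\consts(F_r)=\consts(I_0)$ in \cref{prop:tree_chase}. This is what makes the induction genuinely reconstruct those facts from $I_0$ without smuggling in existential witnesses, and the fullness of $\Sigma'$ is what makes replaying a $\Sigma'$-chase from a larger instance unproblematic.
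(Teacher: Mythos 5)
Your proposal is correct and follows essentially the same route as the paper: the right-to-left direction is the same one-line monotonicity observation (a chase from $I_0$ works a fortiori from $F_r^n\supseteq I_0$), and the left-to-right direction is the same induction on $n$, with the base case discharged by head normal form forcing $I_0=F_r^1$ and the inductive step obtained by concatenating the chase from the induction hypothesis with the one supplied by \cref{defn:gtgd_abstr_completeness_props_2}. Your added justifications for the base case and for why the concatenation is legitimate (fullness of $\Sigma'$, no fresh constants) are details the paper leaves implicit but are entirely consistent with it.
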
 
\begin{proof}
Right to left is a strengthening: if there is a chase from $I_0$ of $F_r^{n+1}$, 
then there is also one from $F_r^{n}\supseteq I_0$.

We prove the other direction by induction on $n$. 
For $n=0$, there is nothing to do since~$\tau_r$ is in head normal form and non-full, and thus $I_0=F_r^1$.

Assume $n>0$.
By the inductive hypothesis, we get a chase $I_0',\dotsc,I_m'$ from $I_0$ and $\Sigma'$ of $F_r^n$.
Now we apply \cref{defn:gtgd_abstr_completeness_props_2} on the chase
$F_r^n,I_1,\dotsc,I_{n+1}$ to get a chase $I_{m+1}'\dotsc,I_{m+m'}'$ 
from $F_r^n$ and $\Sigma'$ of $F_r^{n+1}$.
Hence, by connecting the chases, we get a chase from $I_0$ and $\Sigma'$ of~$F_r^n$.
\end{proof} 

Due to the just shown equivalence, we subsequently use the notions of evolve-completeness and
strong evolve-completeness interchangeably.

Given a set of \gtgds $\Sigma$, we now show that every evolve-complete set $\Sigma'$ 
is in fact complete \wrt $\Sigma$; that is, any quantifier-free \ubcq following from $\Sigma$
will also follow from $\Sigma'$.
We first prove the following key lifting lemma that relies on the one-pass chase property.
\begin{lem}\label{lem:gtgd_abstr_completeness_lift}
Let $\Sigma$ be a set of \gtgds, 
let $\Sigma'$ be evolve-complete \wrt $\Sigma$,
and let $I_0,\dotsc,I_n$ be a one-pass chase using~$\hnf(\Sigma)$.
Assume a node $v$ is created at step $i>0$ by firing a rule $\tau_p$ in parent $p$
and let $j\geq i$ such that $v$ adds on facts.
Then there is a chase $I_0',\dotsc,I_m'$ from $F_p^{i-1}$ and~$\Sigma'$ of $F_p^j$.
\end{lem}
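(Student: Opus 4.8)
The plan is to prove the lemma by induction on the gap $j-i$, using \emph{strong} evolve-completeness (\cref{defn:gtgd_abstr_completeness_props_induct}, available through the equivalence in \cref{lem:gtgd_abstr_completeness_props_induct}) as the engine and the one-pass property to pin down where chase steps may occur. The first thing I would record is a structural observation: since $v$ adds on facts at step $j$ and the chase is one-pass (\cref{defn:one_pass_chase}), no step in the window $(i,j]$ can be triggered outside the subtree of $v$ — otherwise the one-pass condition would forbid the modification of $v$'s subtree happening at step $j$. Hence every step from $i{+}1$ to $j$ fires inside the subtree of $v$, and so $p$'s bag can grow only by facts propagated upward from that subtree. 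In particular, firing the non-full rule $\tau_p$ leaves $p$ untouched (in head normal form every head atom of $\tau_p$ carries an existential, hence fresh, constant that $p$ cannot guard), so $F_p^i=F_p^{i-1}$; this settles the base case $j=i$ with the empty chase.

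For the inductive step ($j>i$) I would first reproduce the evolution of $v$'s own bag \emph{at the $v$-level} using only $\Sigma'$. Walking through the steps of $(i,j]$ that fire in $v$ itself (steps firing deeper get absorbed below), a step applying a full rule $\tau\in\hnf(\Sigma)$ directly in $v$ contributes a rule that already lies in $\Sigma'$ by \cref{defn:gtgd_abstr_completeness_props_1}, so it is reused verbatim. A step applying a non-full rule in $v$ spawns a child $w$ at some step $i_w>i$; letting $j_w\leq j$ be the last step modifying $w$'s subtree, one has $j_w-i_w<j-i$ because $i_w>i$ and $j_w\leq j$, so the induction hypothesis applied to the pair $(v,w)$ supplies a $\Sigma'$-chase from $F_v^{i_w-1}$ of $F_v^{j_w}$ that captures everything $w$'s subtree contributes to $v$. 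Splicing these child-chases together with the reused full rules in temporal order yields a chase using only full rules of $\Sigma'$, all fired in $v$, that reaches $F_v^j$ from the birth facts $F_v^i$.

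It then remains to lift this from the $v$-level up to $p$. Here I would invoke strong evolve-completeness directly: form the abstract chase whose root $r$ carries the bag $F_p^{i-1}$, fire $\tau_p$ to create the child $v$ with birth facts $F_v^i$, and then fire the assembled $\Sigma'$-chase inside $v$. Since the reproduced bag of $v$ equals $F_v^j$ and facts propagate from $v$ to $r$ exactly as in the original chase, the resulting root bag is $F_p^j$, and \cref{defn:gtgd_abstr_completeness_props_induct} delivers a chase from $F_p^{i-1}$ and $\Sigma'$ of $F_p^j$, as required.

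The hard part will be justifying that propagating $F_v^j$ upward recovers precisely $F_p^j$ — that is, that every fact which reached $p$ from somewhere in $v$'s subtree already lies in $F_v^j$, so that nothing can ``skip'' $v$. This is where guardedness does the work: such a fact ranges over constants shared with $p$, and since the bodies of \gtgds are guarded, any fact created over a constant set $C$ somewhere inside $v$'s subtree forces a single guarding fact over $C$ to be present at each ancestor on the way up, in particular at $v$; guarded-completeness of $F_v^j$ then places the fact in $v$. Making this propagation bookkeeping precise, and checking that the spliced child-chases genuinely compose into one valid chase matching the hypotheses of \cref{defn:gtgd_abstr_completeness_props_induct}, is where the real care is needed; everything else is routine once the one-pass window argument and the descent on $j-i$ are in place.
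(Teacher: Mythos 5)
Your proposal follows essentially the same route as the paper's proof: induction on $j-i$, an auxiliary claim that $F_v^j$ can be reproduced from $F_v^i$ by full $\Sigma'$-rules fired at the level of $v$ (using the one-pass property to confine all steps in $(i,j]$ to $v$'s subtree, \cref{defn:gtgd_abstr_completeness_props_1} for full rules fired directly in $v$, and the inductive hypothesis for facts arriving through children of $v$), followed by an application of strong evolve-completeness (\cref{defn:gtgd_abstr_completeness_props_induct}) to lift the $v$-level chase up to $p$. The differences are only organizational -- the paper's inductive step splits on whether the step-$j$ rule fires in $v$ or in a proper descendant and cuts the window at the first child's creation time, whereas you splice all child-chases and direct firings in temporal order -- and you additionally make explicit the guardedness/propagation bookkeeping that the paper leaves implicit.
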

\begin{proof}
We additionally show that there is a chase $I_0^*,\dotsc,I_k^*$ from $F_v^i$
and $\Sigma'$ of $F_v^j$.
We prove the claim by induction on $j-i$. 
In case $i=j$, there is nothing to do since $\tau_p$ is in head normal form.

In case $i>j$,
we have a one-pass chase proof from $F_v^i$ to $F_v^j$ by \cref{lem:one_pass_proof}.
Hence, we can assume that the new facts relevant to $v$ are generated by some 
$\tau\in\hnf(\Sigma)$ acting on some node~$d$ that is a non-strict descendant of $v$. 

First consider the case $v=d$. 
By the inductive hypothesis, we get a chase $I_0^*,\dotsc,I_k^*$ from~$F_v^i$ and~$\Sigma'$ of $F_v^{j-1}$.
Since~$v$ adds on facts in step $j$, $\tau$ must be full, 
and hence $\tau\in\Sigma'$ (modulo renaming) by \cref{defn:gtgd_abstr_completeness_props_1}.
Thus, we can fire $\tau$ in $I_k^*$ to obtain $I_{k+1}^*$
with $I_{k+1}^*\models F_v^j$.
Now we can apply \cref{defn:gtgd_abstr_completeness_props_induct}
on $F_p^{i-1},I_0^*,\dotsc,I_{k+1}^*$ to obtain the desired $I_0',\dotsc,I_m'$.

Next consider the case where $d$ is a descendant of $v$. 
Let $c$ be the first child of $v$ on the path to $d$, created at time $k>i$. 
By the inductive hypothesis, we get a chase $I_0^*,\dotsc,I_k^*$ from~$F_v^i$ and~$\Sigma'$ of~$F_v^{k-1}$.
Also by the inductive hypothesis, we get a chase $I_{k+1}^*,\dotsc,I_{k'}^*$ from~$F_v^{k-1}$ and~$\Sigma'$ of~$F_v^j$.
Now we can connect both chases and then proceed as before. That is, we use \cref{defn:gtgd_abstr_completeness_props_induct} to obtain the desired $I_0',\dotsc,I_m'$.
\end{proof}

The completeness property now follows by a simple induction.
\begin{prop}[Completeness]\label{prop:gtgd_abstr_completeness}
Let $\Sigma$ be a set of \gtgds and
$\Sigma'$ be evolve-complete \wrt $\Sigma$.
Then $\Sigma'$ is complete \wrt $\Sigma$.
\end{prop}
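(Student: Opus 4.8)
The plan is to reduce the statement to the lifting lemma \cref{lem:gtgd_abstr_completeness_lift} together with the query-in-the-root observation recalled at the start of this section. Suppose $\db,\Sigma\models Q$ for a quantifier-free \ubcq $Q$. First I would pass to the head normal form: by \cref{lem:tgd_hnf_closure_coincide} we have $\db,\hnf(\Sigma)\models Q$, so by \cref{thm:chase_univ} there is a chase proof of $Q$ from $\db$ and $\hnf(\Sigma)$, and by \cref{thm:gtgd_iff_one_pass} we may take it to be a one-pass chase $I_0(=\db),\dotsc,I_n$ with $I_n\models Q$. Writing $r$ for the root and $F_r^n$ for its bag, the facts of some disjunct $Q_i$ of $Q$ then lie in $F_r^n$: each such fact is ground over $\consts(I_0)$, its constants must co-occur in a single fact by guardedness of $\Sigma$ (any jointly occurring constants trace back through guards to a fact of $I_0\subseteq F_r$), and so guardedly-completeness of $F_r$ forces the fact into $F_r^n$. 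Thus it suffices to produce a chase from $\db$ and $\Sigma'$ whose final instance contains $F_r^n$.

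Next I would prove, by induction on $n$, that such a $\Sigma'$-chase exists. The base case $n=0$ is immediate since $F_r^0=I_0$. For the inductive step, consider the step from $I_{n-1}$ to $I_n$, triggered in a node $w$. If $F_r^n=F_r^{n-1}$, the chase supplied by the induction hypothesis already works. If $w=r$, then the fact is added by a full rule $\tau\in\hnf(\Sigma)$ fired at the root, and since $\tau\in\Sigma'$ by \cref{defn:gtgd_abstr_completeness_props_1} we simply append this (renamed) rule. The interesting case is $w\neq r$: letting $c$ be the child of $r$ on the path to $w$, created at some step $i\leq n$, I would invoke \cref{lem:gtgd_abstr_completeness_lift} with $v=c$, $p=r$, and $j=n$ to obtain a $\Sigma'$-chase from $F_r^{i-1}$ of $F_r^n$; splicing this onto the induction-hypothesis chase (which reaches $F_r^{n-1}\supseteq F_r^{i-1}$, so the spliced steps remain applicable by monotonicity) yields a $\Sigma'$-chase from $\db$ of $F_r^n$.

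Finally, taking the $\Sigma'$-chase $J_0(=\db),\dotsc,J_m$ just built, its last instance satisfies $F_r^n\subseteq J_m$, hence $J_m\models Q_i$ and so $J_m\models Q$; this is a chase proof of $Q$ from $\db$ and $\Sigma'$, and \cref{thm:chase_univ} then gives $\db,\Sigma'\models Q$, which is exactly completeness of $\Sigma'$ in the sense of \cref{defn:completeness}. The main obstacle I anticipate is justifying that the lifting lemma really delivers the full root bag $F_r^n$ in the case $w\neq r$: this hinges on the one-pass property, which guarantees that once the chase descends into the subtree of $c$ and later modifies it at step $n$, no sibling subtree of $r$ can have contributed to $F_r$ in the meantime, so that $F_r^n$ records only the activity below $c$ together with $F_r^{i-1}$ (matching the conclusion $F_p^j$ of \cref{lem:gtgd_abstr_completeness_lift}). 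A secondary subtlety is the query-in-the-root step, which again relies on the guardedness of $\Sigma$ to ensure the constants of each query fact co-occur in a guard already present in $F_r$.
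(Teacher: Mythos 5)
Your proposal is correct and follows essentially the same route as the paper's proof: reduce to one-pass chase proofs via \cref{thm:chase_univ,thm:gtgd_iff_one_pass,lem:tgd_hnf_closure_coincide}, observe that $I_n\models Q$ reduces to $Q_i\subseteq F_r^n$, and then induct on $n$ with the same three-way case split, invoking \cref{lem:gtgd_abstr_completeness_lift} for steps triggered below the root. The only (harmless) deviations are that you splice the lifting-lemma chase onto the instance reaching $F_r^{n-1}$ by monotonicity where the paper applies the induction hypothesis directly at step $k-1$, and that you spell out the guardedness argument for why the query facts land in the root bag, which the paper leaves implicit.
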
 
\begin{proof}
By \cref{thm:chase_univ,thm:gtgd_iff_one_pass,lem:tgd_hnf_closure_coincide}, 
it suffices to show that whenever a quantifier-free \ubcq $Q$ has a 
one-pass chase proof $I_0,\dotsc,I_n$ from $\db$ and $\hnf(\Sigma)$, 
then~$Q$ also follows from $\db$ and $\Sigma'$.
Let $r$ be the root inside the one-pass chase proof.
Since $I_n\models Q$ \ifftext $I_n\models Q_i$ for some $Q_i\in Q$ \ifftext $Q_i\subseteq F_r^n$,
it suffices to show that we can create a chase $I_0',\dotsc,I_m'$ from
from $\db$ and $\Sigma'$ with $I_m'\models F_r^n$.
We show the claim by induction on~$n$.
The case $n=0$ is vacuously true.

In case $n>0$,
let $\tau\in\Sigma$ be the rule firing at step $n$ in some node $d$,
and assume that~$r$ adds on a fact in $I_n$;
otherwise, we can just use the chase $I_0',\dotsc,I_m'$ obtained by the inductive hypothesis.

If $d=v$, then $\tau$ must be full, and thus, $\tau\in\Sigma'$ by \cref{defn:gtgd_abstr_completeness_props_1}.
Moreover, by the inductive hypothesis, we get $I_0',\dotsc,I_m'$ such that $I_m'\models F_r^{n-1}$.
Hence, we can fire $\tau$ in $I_m'$ to obtain $I_{m+1}'$ with $I_{m+1}'\models F_r^n$.

Otherwise, let $c$ be the first child of the root on the path to $d$, created at step ${k<n}$.
By the inductive hypothesis, we get $I_0',\dotsc,I_m'$ with $I_m'\models F_r^{k-1}$.
Further, by \cref{lem:gtgd_abstr_completeness_lift}, we get $I_{m+1}',\dotsc,I_{m+m'}'$
from $F_r^{k-1}$ and $\Sigma'$ with $I_{m+m'}'\models F_r^n$.
Hence, by connecting the chases, we get a chase from $\db$ and $\Sigma'$ of $F_r^n$.
\end{proof} 

\begin{cor}\label{cor:gtgd_abstr_atomic_rewriting}
Let $\Sigma$ be a set of \gtgds and
$\Sigma'$ be finite, sound, and evolve-complete \wrt~$\Sigma$.
Then $\Sigma'$ is an atomic rewriting of $\Sigma$.
\end{cor}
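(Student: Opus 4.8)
The plan is to verify the four conditions isolated in \cref{cor:atomic_rewriting_def}, which states that $\Sigma'$ is an atomic rewriting of $\Sigma$ precisely when $\Sigma'$ is finite, full, sound, and complete \wrt $\Sigma$. Three of these four are essentially immediate from the hypotheses. Finiteness and soundness are assumed outright. Fullness comes for free: evolve-completeness is, by \cref{defn:gtgd_abstr_completeness_props}, only defined for a set of \emph{full} \tgds $\Sigma'$, so the very hypothesis that $\Sigma'$ is evolve-complete \wrt $\Sigma$ already carries with it that every rule of $\Sigma'$ is full.

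The only remaining condition is completeness, and here I would simply invoke \cref{prop:gtgd_abstr_completeness}: since $\Sigma'$ is evolve-complete \wrt $\Sigma$, that proposition immediately yields that $\Sigma'$ is complete \wrt $\Sigma$. Having collected all four properties, a single appeal to \cref{cor:atomic_rewriting_def} concludes that $\Sigma'$ is an atomic rewriting of $\Sigma$.

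I do not expect any genuine obstacle in this corollary; it is pure bookkeeping, assembling already-established facts. All of the mathematical weight has been discharged earlier, in \cref{prop:gtgd_abstr_completeness}, which in turn rests on the one-pass reduction of \cref{thm:gtgd_iff_one_pass} and the lifting lemma \cref{lem:gtgd_abstr_completeness_lift}. The purpose of this corollary is merely to repackage that completeness result, together with the separate finiteness and soundness obligations that any concrete candidate procedure must establish on its own, into the single usable statement that its output is an atomic rewriting.
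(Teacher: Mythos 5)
Your proposal is correct and follows exactly the paper's own proof: fullness from \cref{defn:gtgd_abstr_completeness_props}, finiteness and soundness by assumption, completeness from \cref{prop:gtgd_abstr_completeness}, and the conclusion via \cref{cor:atomic_rewriting_def}. No differences worth noting.
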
 
\begin{proof}
$\Sigma'$ is full by \cref{defn:gtgd_abstr_completeness_props},
finite and sound by assumption,
complete by \cref{prop:gtgd_abstr_completeness},
and hence an atomic rewriting by \cref{cor:atomic_rewriting_def}.
\end{proof} 

Having a sufficient completeness property at hand, 
we next introduce two sound rewriting algorithms that fulfil this property.

\section{The Simple Saturation}\label{sec:gtgd_simple_saturation}
There are different ways to achieve the properties defined in the previous section.
Our first algorithm will be based on the ideas of the following example.
\begin{exmpl}\label{exmpl:tree_chase_seq_sclo}
Given the database $\db=\{R(c),S(c)\}$ and a set of \gtgds $\Sigma$
\begin{align*}
\tau_1&=R(x_1)\rightarrow\exists y_1,y_2\,T(x_1,y_1,y_2),&
\tau_2&=T(x_1,x_2,x_3)\rightarrow\exists y\,U(x_1,x_2,y),\\
\tau_3&=U(x_1,x_2,x_3)\rightarrow P(x_1)\land V(x_1,x_2),&
\tau_4&=T(x_1,x_2,x_3)\land V(x_1,x_2)\land S(x_1)\rightarrow M(x_1),
\end{align*}
the tree-like chase depicted in \cref{fig:exmpl_tree_chase_seq_sclo} is a chase proof of $Q=P(c)\land M(c)$.
Let us call the root node of this chase $r$, its child $c$, and the leaf node $l$.
\begin{figure}[ht]
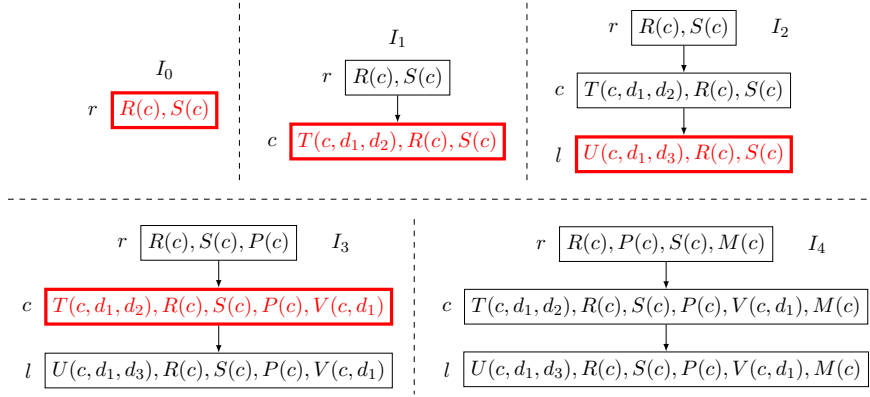

	\centering
	\includestandalone[width=0.77\textwidth]{./figures/exmpl_tree_chase_seq_sclo}
	\caption{Tree-like chase for \cref{exmpl:tree_chase_seq_sclo};
	triggered nodes are marked bold and red, node names are labelled at the left.}
	\label{fig:exmpl_tree_chase_seq_sclo}
\end{figure}

In the third step, $r$ adds on a fact $P(c)$ by firing the full $\tau_3$ in $l$.
In order to get a corresponding full rule that can be fired in~$r$,
we first lift $\tau_3$ so that it can be fired in the parent of~$l$
by composing it with the non-full $\tau_2$ to obtain ${\tau_1'\coloneqq T(x_1,x_2,x_3)\rightarrow P(x_1)\land V(x_1,x_2)}$.
Similarly, we can now compose the non-full $\tau_1$ and full $\tau_1'$ to obtain the full $\tau_2'\coloneqq R(x_1)\rightarrow P(x_1)$,
which can now be fired in $r$.
This idea of composing non-full and full rules to lift triggers of rules to ancestor nodes
will be captured by the \orig-rule in the following algorithm.

Continuing with our chase, $r$ adds on another fact $M(c)$ in step four by firing $\tau_4$ in~$c$.
Directly composing $\tau_1$ with $\tau_4$ as we did before, however, does not result in a full rule
since the variable $x_2$ of the premise $V(x_1,x_2)$ in $\tau_4$ will be mapped to the existential variable $y_1$ of~$\tau_1$.
Instead, we note that we already obtained the rule $\tau_1'$ that can be fired in $c$ and discharge this premise.
Hence, we first compose the full $\tau_1'$ and $\tau_4$ to obtain the full $\tau_3'\coloneqq T(x_1,x_2,x_3)\land S(x_1)\rightarrow M(x_1)$.
This idea of composing two full rules to discharge premises will be captured by the \comp-rule in the following algorithm.
Finally, composing the non-full $\tau_1$ and full $\tau_3'$ gives the full $\tau_4'\coloneqq R(x_1)\land S(x_1)\rightarrow M(x_1)$,
which can now be fired in $r$.

Lastly note that any child of $r$ uses at most $\hwidth(\Sigma)$ many constants,
which allows us to restrict our resolution process to only derive rules with up to $\hwidth(\Sigma)$ many variables.
This restriction will ensure termination.
\end{exmpl}

Using these observations, we now present a simple saturation process
for \gtgds that returns an atomic rewriting.
\begin{defn}[Simple saturation]\label{defn:gtgd_simple_closure}
Given a set of \gtgds~$\Sigma$,
the \emph{\iindex{simple saturation}} \iindex[not]{$\sclo(\Sigma)$} is
defined as the closure of all full rules in $\vnf(\hnf(\Sigma))$ under the
following inference rules:
\begin{itemize}
\item \emph{\comp}. We can resolve two full \tgds:

  Assume we have two full $\tau,\tau'\in\sclo(\Sigma)$ of the form\footnote{Update 25.03.2021: fixed typo: previous versions took $\tau,\tau'$ from $\Sigma$ rather than $\sclo(\Sigma)$; proofs were already done using the right definition.}
\begin{align*}
	\body(\vec x)&\rightarrow \head(\vec x),\\
        \body'(\vec z)&\rightarrow \head'(\vec z)
\end{align*}
that use distinct variables (otherwise, perform a renaming).
Suppose there is a unifier~$\theta$ of some atom in~$\head$ and an atom in~$\body'$
such that $\vec x\theta\cup\vec z\theta\subseteq\{x_1,\dotsc,x_{\hwidth(\hnf(\Sigma))}\}$.
Then we add the full $\vnf\bigl(\body\theta\cup(\body'\theta\setminus \head\theta) \rightarrow \head'\theta\bigr)$.

\item \emph{\orig}.
We can resolve a non-full with a full \tgd if this generates a full \tgd:

Assume we have a non-full $\tau\in\hnf(\Sigma)$ and a full $\tau'\in\sclo(\Sigma)$ of the form
\begin{align*}
        \body(\vec x)&\rightarrow \exists \vec y\,\head(\vec x,\vec y),\\
        \body'(\vec z)&\rightarrow \head'(\vec z)
\end{align*}
that use distinct variables (otherwise, perform a renaming).
Suppose there is a unifier~$\theta$ of some atoms in~$\head$ with some atoms in~$\body'$ such that
\begin{enumerate}
\item $\theta$ is the identity on $\vec y$,
\item $\theta$ only maps to variables,
\item $\vars\bigl(\body'\theta\setminus \head\theta\bigr)\subseteq\vec x\theta$, and
\item $\vec x\theta\cap\vec y=\emptyset$.\footnote{Update 25.03.2021: this fourth constraint was missing in previous versions of \orig; the proofs were not affected by this change.}
\end{enumerate}
Define
\begin{equation*}
	\head''\coloneqq\{H'\in\head'\mid \vars(H'\theta)\cap\vec y=\emptyset\}.
\end{equation*}
Then, if $\head''$ is non-empty, we add the full $\vnf\bigl(\body\theta\cup(\body'\theta\setminus \head\theta) \rightarrow \head''\theta\bigr)$.
\end{itemize}
\end{defn}

As we will derive another, more efficient rewriting for \gtgds,
we only state the key lemmas needed to prove the correctness of the simple saturation here.
The full proofs can be found in \cref{apx:simple_saturation}.
In the following, let $n$ be the number of predicate symbols in~$\Sigma$,~$a$ be the maximum arity of any predicate,
and let $w$ denote the width of $\hnf(\Sigma)$.

\begin{restatable}{lem}{simplesatresprops}\label{lem:gtgd_simple_closure_resolvent_properties}
Any $\sigma\in\sclo(\Sigma)$ is a full \tgd in variable normal form with $\width(\sigma)\leq w$.
\end{restatable}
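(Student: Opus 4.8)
The plan is to prove all three properties simultaneously by induction on the number of \comp and \orig applications used to derive a given $\sigma\in\sclo(\Sigma)$. In the base case, $\sigma=\vnf(\tau)$ for some full $\tau\in\hnf(\Sigma)$, so $\sigma$ is full and in variable normal form by construction; since $\vnf$ merely renames variables, $\width(\sigma)=\width(\tau)\leq\width(\hnf(\Sigma))=w$. In the inductive step, both inference rules conclude by applying $\vnf$, so the output is automatically in variable normal form, and it remains to check that the pre-$\vnf$ resolvent is a genuine function-free full \tgd of width at most $w$, using the inductive hypothesis that any premise drawn from $\sclo(\Sigma)$ is itself full.

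First I would settle function-freeness, needed only so that the resolvent is a valid \tgd. In \comp the side condition $\vec x\theta\cup\vec z\theta\subseteq\{x_1,\dotsc,x_{\hwidth(\hnf(\Sigma))}\}$ forces $\theta$ to send every variable of the two full premises to a variable, and in \orig condition~(2) states this outright; hence applying $\theta$ to function-free atoms leaves them function-free.

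The heart of the argument is fullness, which I would establish by tracking variables through $\theta$. Since every premise is full, $\vars(\head'\theta)\subseteq\vars(\vec z\theta)=\vars(\body'\theta)$, and each atom of $\body'\theta$ either survives in $\body'\theta\setminus\head\theta$ or lies in $\head\theta$, giving $\vars(\body'\theta)\subseteq\vars(\body'\theta\setminus\head\theta)\cup\vars(\head\theta)$. For \comp, fullness of $\tau$ yields $\vars(\head\theta)\subseteq\vars(\body\theta)$, so $\vars(\head'\theta)$ is covered by the new body $\body\theta\cup(\body'\theta\setminus\head\theta)$. For \orig, since $\theta$ is the identity on $\vec y$ (condition~(1)) we instead get $\vars(\head\theta)\subseteq\vars(\body\theta)\cup\vec y$; restricting the head to $\head''=\{H'\mid\vars(H'\theta)\cap\vec y=\emptyset\}$ discards precisely the atoms whose variables could fall into $\vec y$, so every remaining head variable again lands in $\vars(\body\theta)\cup\vars(\body'\theta\setminus\head\theta)$. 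I expect this \orig case, i.e.\ showing that the existential variables $\vec y$ never leak into the head, to be the main obstacle, and it is exactly the reason the rule singles out $\head''$.

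The width bound then follows from the explicit side conditions. For \comp every body and head variable of the resolvent lies in $\vec x\theta\cup\vec z\theta\subseteq\{x_1,\dotsc,x_{\hwidth(\hnf(\Sigma))}\}$, so its width is at most $\hwidth(\hnf(\Sigma))\leq w$. For \orig condition~(3) gives $\vars(\body'\theta\setminus\head\theta)\subseteq\vec x\theta$ while $\vars(\body\theta)=\vars(\vec x\theta)$, so every body variable lies in $\vars(\vec x\theta)$, a set of size at most $|\vec x|=\bwidth(\tau)\leq\bwidth(\hnf(\Sigma))\leq w$ since $\theta$ maps to variables. As the resolvents are full, their head width never exceeds their body width, and because $\vnf$ preserves both widths the bound survives normalisation.
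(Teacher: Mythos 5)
Your proposal is correct and takes essentially the same route as the paper's proof: induction on the derivation, with the explicit side conditions of \comp and \orig supplying function-freeness, fullness, and the width bound $\width(\sigma)\leq w$. You are somewhat more explicit than the paper in checking that the resolvent's head variables are covered by its body (the paper simply asserts fullness from the premises being full, resp.\ from the absence of the $y_i$), but the underlying argument is identical.
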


The next properties will imply evolve-completeness of $\sclo(\Sigma)$.
\begin{restatable}{lem}{simplesatclosureprops}\label{lem:gtgd_simple_closure_props}
$\sclo(\Sigma)$ has the following properties:
\begin{thmlist}
\item\label{lem:gtgd_simple_closure_props_1}
For any $\tau\in\hnf(\Sigma)$, we have $\vnf(\tau)\in\sclo(\Sigma)$.

\item\label{lem:gtgd_simple_closure_props_2}
Let $I_0,I_1,I_2$ be a chase from $\sclo(\Sigma)$
and $\tau,\tau'\in\sclo(\Sigma)$ be full such that
\begin{assmlist}
\item $|\consts(I_0)|\leq\hwidth(\hnf(\Sigma))$,
\item firing $\tau$ in $I_0$ creates $I_1$,
\item firing $\tau'$ in $I_1$ creates $I_2$.
\end{assmlist}
Then there exists $\tau^*\in\sclo(\Sigma)$ such that firing $\tau^*$ in $I_0$ creates $I_0\cup(I_2\setminus I_1)$.

\item\label{lem:gtgd_simple_closure_props_3}
Let $I_0,I_1,I_2$ be a chase,
$\tau\in\hnf(\Sigma)$ be non-full,
and ${\tau'\in\sclo(\Sigma)}$ be full such that
\begin{assmlist}
\item firing $\tau$ in the root $r$ of $I_0$ creates $I_1$ and a new node $v$,
\item firing $\tau'$ in $F_v^1$ creates $I_2$,
\item\label{assm:gtgd_simple_closure_props_3} $F_r^2\setminus I_0\neq\emptyset$.
\end{assmlist}
Then there exists $\tau^*\in\sclo(\Sigma)$ such that firing $\tau^*$ in $I_0$ creates $F_r^2$.
\end{thmlist}
\end{restatable}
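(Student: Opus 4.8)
The plan is to treat the three parts separately, since \cref{lem:gtgd_simple_closure_props_1} is immediate while the other two carry the content. For \cref{lem:gtgd_simple_closure_props_1}, if $\tau\in\hnf(\Sigma)$ is full then $\vnf(\tau)$ is a full rule of $\vnf(\hnf(\Sigma))$, and this is exactly the base set from which $\sclo(\Sigma)$ is generated, so $\vnf(\tau)\in\sclo(\Sigma)$; this is precisely \cref{defn:gtgd_abstr_completeness_props_1}, the non-full rules of $\hnf(\Sigma)$ never entering $\sclo(\Sigma)$ but only feeding \orig as side premises. For both \cref{lem:gtgd_simple_closure_props_2,lem:gtgd_simple_closure_props_3} the key device is \emph{variablisation}: one must reconcile the chase triggers, which send variables to constants, with the resolution unifiers, which send variables to variables. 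The enabling observation is that every fact of $I_0,I_1,I_2$ lives over at most $w=\hwidth(\hnf(\Sigma))$ constants — in \cref{lem:gtgd_simple_closure_props_2} because $\tau,\tau'$ are full (their heads add no new constants) and $|\consts(I_0)|\le w$ by assumption, and in \cref{lem:gtgd_simple_closure_props_3} because the root retains at most $\hwidth(\hnf(\Sigma))$ constants. Fixing an injection $\phi$ from the constants in play into $\{x_1,\dots,x_w\}$, I would read each trigger $h$ through $\phi$ to obtain a substitution $\phi\circ h$ mapping rule variables into $\{x_1,\dots,x_w\}$; this meets the width side-condition of \comp automatically and maps only to variables because the signature is function-free.

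For \cref{lem:gtgd_simple_closure_props_2}, let $h,h'$ be the triggers of $\tau,\tau'$, so $I_1=I_0\cup h(\head)$ and $I_2=I_1\cup h'(\head')$. Partition the body atoms of $\tau'$ into those whose $h'$-image lies in $h(\head)$ and those landing in $I_0$. Each atom of the first kind equals $h(H)$ for a head atom $H$ of $\tau$, so its $\phi$-variablisation unifies it with $H$. Resolving these atoms one at a time against $\tau$ by \comp — legitimate because after each step the resolvent is again a full rule of $\sclo(\Sigma)$, and the $\phi$-constraint keeps the variable count $\le w$ — I would reach a rule $\tau^*$ whose body is $\body\theta$ together with the variablised ``$I_0$-atoms'' of $\body'$ and whose head is $\head'\theta$. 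Firing $\tau^*$ in $I_0$ under the trigger dictated by $\phi^{-1}$ reproduces $h'(\head')$, i.e.\ it creates $I_0\cup(I_2\setminus I_1)$, and $\width(\tau^*)\le w$ follows from \cref{lem:gtgd_simple_closure_resolvent_properties}.

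For \cref{lem:gtgd_simple_closure_props_3}, a single application of \orig suffices, since \orig resolves several atoms simultaneously. Here $\tau$ is non-full, so firing it in $r$ creates the child $v$ with birth facts $F_v^1=h^+(\head)$, where $h^+$ extends the trigger $h$ by sending the existentials $\vec y$ to fresh constants. Because $\tau'$ fires in $F_v^1$, every body atom of $\tau'$ maps into $h^+(\head)$, so after $\phi$-variablisation (taking $\phi$ to be the identity on the images of $\vec y$) the whole body unifies into $\head$, whence $\body'\theta\subseteq\head\theta$, the leftover body $\body'\theta\setminus\head\theta$ is empty, and conditions (1)--(4) of \orig hold. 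The facts propagating from $v$ up to the guarded root $r$ are exactly those $h'(H')$ using no fresh constant, i.e.\ those $H'$ with $\vars(H'\theta)\cap\vec y=\emptyset$ — precisely the set $\head''$ of \orig. Since $F_r^2\setminus I_0\neq\emptyset$, $\head''$ is non-empty, so \orig fires, $\tau^*=\vnf(\body\theta\rightarrow\head''\theta)\in\sclo(\Sigma)$, and firing it in $I_0$ creates $F_r^2$.

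The main obstacle I anticipate is making the variable/constant dictionary completely rigorous: one must verify that the $\phi$-induced substitutions really are unifiers of the atom pairs named by \comp and \orig, that the iterated \comp steps of \cref{lem:gtgd_simple_closure_props_2} keep every intermediate resolvent in $\sclo(\Sigma)$ without ever exceeding $w$ variables, and that propagation to the root coincides exactly with the syntactic condition $\vars(H'\theta)\cap\vec y=\emptyset$ defining $\head''$. Everything else reduces to set-union bookkeeping that the oblivious chase renders routine.
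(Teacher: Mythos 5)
Your overall strategy --- reading the chase triggers through a dictionary into $\{x_1,\dotsc,x_w\}$ so that they become unifiers --- is exactly the mechanism of the paper's proof, and your treatments of \cref{lem:gtgd_simple_closure_props_1} and \cref{lem:gtgd_simple_closure_props_2} are sound. (For \cref{lem:gtgd_simple_closure_props_2} the paper discharges all $\body'$-atoms whose images lie in $I_1\setminus I_0$ in a \emph{single} \comp step, exploiting the set difference $\body'\theta\setminus\head\theta$ in the resolvent, rather than iterating one atom at a time; your iterated version also goes through because the one global dictionary keeps every intermediate unifier consistent and within the width bound, but it obliges you to track the accumulating bodies of the intermediate resolvents, which the single-step version avoids.)

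The gap is in \cref{lem:gtgd_simple_closure_props_3}. You assert that since $\tau'$ fires in $F_v^1$, every body atom of $\tau'$ maps into the birth facts of $v$, hence $\body'\theta\subseteq\head\theta$ and the leftover body is empty. This is false: the bags of the tree decomposition are kept \emph{guardedly-complete}, so when $v$ is created, every fact of $I_0$ whose constants all lie among the exported constants is propagated into $F_v^1$ alongside the birth facts. Concretely, take $I_0=\{R(c,d),P(c)\}$, $\tau=R(x_1,x_2)\rightarrow\exists y\,S(x_1,y)$ and $\tau'=S(x_1,x_2)\land P(x_1)\rightarrow M(x_1)$: the atom $P(x_1)$ maps to the propagated fact $P(c)\in I_0$, and the rule your argument would output, $R(x_1,x_2)\rightarrow M(x_1)$, is not the \orig-resolvent and is in fact unsound; the correct resolvent is $R(x_1,x_2)\land P(x_1)\rightarrow M(x_1)$. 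The repair is to keep $\body'\theta\setminus\head\theta$ in the body of $\tau^*$, as \orig prescribes, and then to verify \orig's third condition $\vars(\body'\theta\setminus\head\theta)\subseteq\vec x\theta$: any $B'\in\body'$ with $\vars(B'\theta)\cap\vec y\neq\emptyset$ has an image containing a fresh constant, hence is a birth fact of $v$ and is cancelled by $\head\theta$, so the surviving body atoms use only variables of $\vec x\theta$ and their images lie in $I_0$ --- which is exactly what makes $\tau^*$ fire in $I_0$. This is the step the paper carries out and your proposal skips.
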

\restatablespacefix

\begin{restatable}[Completeness]{prop}{simplesatclosureevolvecomplete}\label{prop:gtgd_simple_closure_evolve_complete}
$\sclo(\Sigma)$ is evolve-complete \wrt $\Sigma$.
\end{restatable}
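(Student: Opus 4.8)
The plan is to verify the two defining conditions of evolve-completeness in \cref{defn:gtgd_abstr_completeness_props} directly from the three structural facts recorded in \cref{lem:gtgd_simple_closure_props}. Condition \cref{defn:gtgd_abstr_completeness_props_1} is immediate: \cref{lem:gtgd_simple_closure_props_1} already gives $\vnf(\tau)\in\sclo(\Sigma)$ for every $\tau\in\hnf(\Sigma)$, and $\vnf(\tau)$ equals $\tau$ up to renaming, so in particular every full $\tau\in\hnf(\Sigma)$ lies in $\sclo(\Sigma)$ modulo renaming. All the work is in condition \cref{defn:gtgd_abstr_completeness_props_2}.

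So I would fix a chase $I_0,\dotsc,I_{n+1}$ as in \cref{defn:gtgd_abstr_completeness_props_2}: a non-full $\tau_r\in\hnf(\Sigma)$ fires in the root $r$ of $I_0$ and creates the child $v$ (with birth facts $F_v^1$), and then the full rules $\tau_1,\dotsc,\tau_n\in\sclo(\Sigma)$ fire successively in $v$. If $F_r^{n+1}=F_r^n$ (for instance when $n=0$, since $\tau_r$ is in head normal form and non-full, so its instantiated head introduces only fresh constants that cannot propagate to $r$), the length-$0$ chase $I_0'\coloneqq F_r^n$ already witnesses the claim; so assume $F_r^{n+1}\setminus F_r^n\neq\emptyset$. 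My first step is to collapse the entire block of full firings in $v$ into a single rule of $\sclo(\Sigma)$. Writing $\Delta\coloneqq F_v^{n+1}\setminus F_v^n$ for the facts contributed by the last firing, I would show by downward induction on $k=n,\dotsc,1$ that there is a full $\hat\tau_k\in\sclo(\Sigma)$ whose firing in $F_v^k$ produces $F_v^k\cup\Delta$: the base case $\hat\tau_n\coloneqq\tau_n$ is immediate, and the inductive step applies \cref{lem:gtgd_simple_closure_props_2} to the two consecutive firings $F_v^k\xrightarrow{\tau_k}F_v^{k+1}\xrightarrow{\hat\tau_{k+1}}F_v^{k+1}\cup\Delta$; as $\Delta$ is disjoint from $F_v^{k+1}$, the lemma returns exactly such a $\hat\tau_k$. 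Applicability of \cref{lem:gtgd_simple_closure_props_2} at each stage rests on its constant bound $\lvert\consts(F_v^k)\rvert\le\hwidth(\hnf(\Sigma))$, which holds because $F_v^1$ uses only the at most $\hwidth(\hnf(\Sigma))$ constants of the instantiated head of $\tau_r$ together with facts propagated down from $r$, and neither full rules nor propagation ever introduce a constant not already present in $v$. Setting $\hat\tau\coloneqq\hat\tau_1$, firing $\hat\tau$ in $F_v^1$ produces $F_v^1\cup\Delta$.

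The second step applies \cref{lem:gtgd_simple_closure_props_3} to the non-full $\tau_r$ and the full $\hat\tau$: firing $\tau_r$ in the root of $I_0$ creates $v$ with $F_v^1$, firing $\hat\tau$ in $F_v^1$ creates an instance whose $v$-bag is $F_v^1\cup\Delta$, and the assumption $F_r^{n+1}\setminus F_r^n\neq\emptyset$ supplies the nonemptiness hypothesis of \cref{lem:gtgd_simple_closure_props_3}. The lemma then hands us a single full $\tau^*\in\sclo(\Sigma)$ whose firing in the root produces $F_r^2$, i.e.\ the facts of $\Delta$ that propagate up to $r$. Because a fact of $\Delta$ propagates to $r$ exactly when it contains no fresh existential constant---equivalently, when it is guarded by the image of $\tau_r$'s guard, which already lies in $F_r^0\subseteq F_r^n$---this set of facts is independent of any further facts $r$ may have accumulated during steps $1,\dotsc,n$, and it coincides with $F_r^{n+1}\setminus F_r^n$. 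Hence the trigger of $\tau^*$ is present in $F_r^n$, and firing $\tau^*$ once in $F_r^n$ yields $F_r^{n+1}$; the one-step chase $F_r^n,F_r^{n+1}$ is the required chase from $F_r^n$ and $\sclo(\Sigma)$ of $F_r^{n+1}$, establishing \cref{defn:gtgd_abstr_completeness_props_2}.

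The main obstacle I anticipate is not any single lemma invocation but the bookkeeping that connects the three instances involved. In particular, one must argue that re-firing $\tau_r$ and the collapsed rule $\hat\tau$ reconstructs the same propagated facts as the original length-$(n{+}1)$ chase, even though the root bag has grown from $F_r^0$ to $F_r^n$ in the meantime; the crucial observation, which I would isolate separately, is that propagation to $r$ is governed solely by whether a head fact avoids the fresh existentials of $\tau_r$, so the extra facts in $F_r^n$ neither add nor remove propagated facts. Checking the disjointness $\Delta\cap F_v^{k+1}=\emptyset$ and the constant bound $\lvert\consts(F_v^k)\rvert\le\hwidth(\hnf(\Sigma))$ at every stage of the collapse are the remaining routine but essential verifications.
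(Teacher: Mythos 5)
Your proposal is correct and follows essentially the same route as the paper: condition \cref{defn:gtgd_abstr_completeness_props_1} from \cref{lem:gtgd_simple_closure_props_1}, and for condition \cref{defn:gtgd_abstr_completeness_props_2} collapsing the block of full firings in $v$ into a single rule by iterating \cref{lem:gtgd_simple_closure_props_2} (this is precisely \cref{cor:gtgd_simple_closure_props_induct}) and then combining with $\tau_r$ via \cref{lem:gtgd_simple_closure_props_3}. The only (harmless) deviation is in the final step: the paper instantiates \cref{lem:gtgd_simple_closure_props_3} with root bag $F_r^n$, re-firing $\tau_r$ there and using monotonicity ($F_{v'}^1\supseteq F_v^1$, hence $F_{r'}^2\supseteq F_r^{n+1}$), whereas you instantiate it at $I_0$ and transfer the resulting $\tau^*$ to $F_r^n$ by observing that propagation to the root is determined solely by which head facts avoid the fresh constants of $\tau_r$ --- both arguments go through.
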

\restatablespacefix

\begin{restatable}{lem}{simplesatclosuresize}\label{cor:gtgd_simple_closure_size}
$|\sclo(\Sigma)|\leq 2^{2nw^a}$.
\end{restatable}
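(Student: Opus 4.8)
The plan is a direct counting argument that leverages the structural constraints already proved in \cref{lem:gtgd_simple_closure_resolvent_properties}. That lemma tells us every $\sigma\in\sclo(\Sigma)$ is a full \tgd in variable normal form with $\width(\sigma)\leq w$. Since $\sigma$ is full, its head variables are among its body variables, so all variables of $\sigma$ are drawn from the fixed pool $\{x_1,\dotsc,x_w\}$; and since \tgds are constant-free, no constants occur. Hence every atom appearing in $\sigma$ has the shape $R(t_1,\dotsc,t_{a_R})$ with $R$ one of the $n$ predicate symbols and each $t_j\in\{x_1,\dotsc,x_w\}$.

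First I would bound the number of distinct atoms that can occur. A predicate of arity $a_R\leq a$ yields $w^{a_R}\leq w^a$ such atoms, so summing over all $n$ predicates gives at most $n w^a$ atoms in total. Next I would use that a rule is completely determined by its body and its head, each of which is treated as a \emph{set} of these atoms. The body therefore ranges over subsets of a set of size at most $n w^a$, giving at most $2^{n w^a}$ possibilities, and the same bound holds for the head. Multiplying the two independent choices yields at most $2^{n w^a}\cdot 2^{n w^a}=2^{2 n w^a}$ rules, which is exactly the claimed bound.

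The only subtlety worth flagging is that this count is valid precisely because the members of $\sclo(\Sigma)$ are stored in variable normal form. Without that canonicalisation, infinitely many renamings of a single rule would be distinct as syntactic objects and the count would fail; with it, distinct closure members are genuinely distinct (body, head) pairs over $\{x_1,\dotsc,x_w\}$, so the over-count above is a legitimate upper bound. There is no real obstacle here beyond setting up the enumeration cleanly—the essential work, namely the uniform width bound $\width(\sigma)\leq w$ and the normal-form guarantee, has already been discharged in \cref{lem:gtgd_simple_closure_resolvent_properties}, leaving only this elementary combinatorial estimate.
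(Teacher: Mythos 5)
Your proof is correct and follows essentially the same route as the paper: the paper likewise invokes \cref{lem:gtgd_simple_closure_resolvent_properties} for the width bound and variable normal form, and then counts rules as subsets of the at most $nw^a$ possible body atoms and $nw^a$ possible head atoms (via \cref{lem:tgd_number_full_vnf}), yielding $2^{nw^a}\cdot 2^{nw^a}=2^{2nw^a}$. Your remark about canonicalisation via variable normal form being what makes the count finite is exactly the right point to flag.
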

\restatablespacefix
\begin{restatable}{lem}{simplesatcomplexity}\label{lem:gtgd_simple_closure_complexity}
The simple saturation algorithm terminates in
$\ptime$ for bounded $w,n,a$,
$\exptime$ for bounded $a$, and $\texptime$ otherwise.
\end{restatable}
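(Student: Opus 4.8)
The plan is to bound the total running time of the saturation of \cref{defn:gtgd_simple_closure} by the product of the number of rules that can ever be produced and the cost of processing each candidate inference, and then to read off the three complexity regimes from how the bound $2^{2nw^a}$ of \cref{cor:gtgd_simple_closure_size} behaves under the boundedness hypotheses.

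First I would record two structural facts that keep each individual inference cheap. By \cref{lem:gtgd_simple_closure_resolvent_properties} every rule in $\sclo(\Sigma)$ is a full \tgd in variable normal form of width at most $w$; since there are only $n$ predicates, each of arity at most $a$, and all variables come from $\{x_1,\dotsc,x_w\}$, there are at most $nw^a$ distinct atoms, so (treating bodies and heads as sets) every derived rule has at most $2nw^a$ atoms and hence size polynomial in $nw^a$. Consequently, by \cref{lem:mgu_complexity} each unification runs in time polynomial in $nw^a$, and by \cref{lem:tgd_vnf_complexity} the \vnf-normalisation of a resolvent is linear, so forming one resolvent and testing whether it is already present costs at most $\poly(nw^a)\cdot\lvert\sclo(\Sigma)\rvert$.

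Next I would organise the saturation as a standard fixpoint loop: maintain the current rule set, in each pass try every applicable instance of \comp and \orig, add each new rule (in \vnf), and stop on the first pass that produces nothing. Since the rule set only grows and, by \cref{cor:gtgd_simple_closure_size}, never exceeds $2^{2nw^a}$ rules, at most $2^{2nw^a}+1$ passes occur. Within a pass, \comp ranges over ordered pairs of rules (at most $\lvert\sclo(\Sigma)\rvert^2\le 2^{4nw^a}$) and over the choice of one head atom and one body atom (at most $(nw^a)^2$), while \orig ranges over pairs from $\hnf(\Sigma)\times\sclo(\Sigma)$ together with a choice of which atoms to unify; the latter choice is bounded by a function of $nw^a$ alone. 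Multiplying the number of passes, the number of candidate inferences per pass, and the per-inference cost yields a total running time bounded by $2^{q(nw^a)}\cdot\poly(\size(\Sigma))$ for a fixed polynomial $q$.

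Finally I would split into cases. If $w,n,a$ are all bounded, then $nw^a=\calO(1)$, so $2^{q(nw^a)}$ is constant and the bound is $\poly(\size(\Sigma))$, giving \ptime. If only $a$ is bounded, then $w^a=\poly(w)$ and $nw^a=\poly(n,w)$, so the bound is $2^{\poly(n,w)}\cdot\poly(\size(\Sigma))$, a single exponential, giving \exptime. Otherwise $w^a=2^{a\log w}$ is exponential in $a\log w$, so $2^{q(nw^a)}$ is doubly exponential, giving \texptime. The main obstacle is the bookkeeping of the previous paragraph, in particular arguing that the enumeration of candidate unifications for \orig (which may pair several atoms at once) contributes only a factor absorbed into $2^{q(nw^a)}$, so that the three boundedness regimes land exactly in the claimed classes; everything else is routine once the size bound of \cref{cor:gtgd_simple_closure_size} is in hand.
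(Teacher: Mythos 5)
Your proposal is correct and follows essentially the same route as the paper's proof: bound $|\sclo(\Sigma)|$ by $2^{2nw^a}$ via \cref{cor:gtgd_simple_closure_size}, multiply by the number of candidate inferences per pair and the (linear/polynomial) cost of forming and normalising each resolvent, and read the three regimes off the resulting $2^{q(nw^a)}\cdot\poly(\size(\Sigma))$ bound. The one point you gloss over is that \comp and \orig quantify over \emph{arbitrary} unifiers into $\{x_1,\dotsc,x_{\hwidth(\hnf(\Sigma))}\}$ rather than just mgus, so per atom pair one must enumerate up to $w^{2w}$ candidate substitutions instead of making a single call to \cref{lem:mgu_complexity}; this factor is still absorbed into your $2^{q(nw^a)}$, so the stated complexity classes are unaffected.
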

\restatablespacefix

\begin{restatable}{thm}{simplesatdecproc}
The simple saturation algorithm provides a decision procedure for the \qfqaprobgtgds.
The procedure terminates in
$\ptime$ for fixed $\Sigma$,
$\exptime$ for bounded $a$, and
$\texptime$ otherwise.
\end{restatable}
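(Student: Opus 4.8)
The plan is to assemble the already-established structural properties of $\sclo(\Sigma)$ into the single statement that $\sclo(\Sigma)$ is an atomic rewriting of $\Sigma$, and then feed this rewriting into the generic Datalog decision procedure of \cref{prop:gtgd_atomic_rewriting_dec_proc}. The total running time then splits as the cost of \emph{building} $\sclo(\Sigma)$ plus the cost of \emph{running} that decision procedure on it, and the three complexity regimes fall out by tracking the dominating term in each case.

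First I would verify that $\sclo(\Sigma)$ is an atomic rewriting via \cref{cor:gtgd_abstr_atomic_rewriting}, which requires $\sclo(\Sigma)$ to be finite, sound, and evolve-complete. Finiteness is immediate from the bound $|\sclo(\Sigma)|\leq 2^{2nw^a}$ of \cref{cor:gtgd_simple_closure_size}, and evolve-completeness is exactly \cref{prop:gtgd_simple_closure_evolve_complete}. Soundness is the one ingredient not isolated above as a named statement, so I would argue it directly: each base rule of $\sclo(\Sigma)$ is obtained by projecting the head of a rule of $\Sigma$ through head-normalisation and is therefore entailed by $\Sigma$, while both \comp and \orig are sound composition (resolution) steps that produce only logical consequences of their two premises. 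By induction over the closure, every $\sigma\in\sclo(\Sigma)$ is entailed by $\Sigma$; hence any model of $\db\cup\Sigma$ is a model of $\db\cup\sclo(\Sigma)$, so $\db,\sclo(\Sigma)\models Q$ implies $\db,\Sigma\models Q$, which is precisely soundness. With all three conditions in hand, \cref{cor:gtgd_abstr_atomic_rewriting} gives that $\sclo(\Sigma)$ is an atomic rewriting, and \cref{prop:gtgd_atomic_rewriting_dec_proc} turns it into a decision procedure for the \qfqaprobgtgds.

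For the complexity I would split the running time into the saturation phase and the query-answering phase. The saturation phase is governed by \cref{lem:gtgd_simple_closure_complexity}, giving $\ptime$ for bounded $w,n,a$, $\exptime$ for bounded $a$, and $\texptime$ otherwise. For the query-answering phase, \cref{prop:gtgd_atomic_rewriting_dec_proc} costs time polynomial in $c^{w'}$ and $\size(\db,\sclo(\Sigma),Q)$, where $w'$ is the maximal number of variables in a rule of $\sclo(\Sigma)$; since every such rule is full its variables coincide with its body variables, so $w'\leq\width(\sclo(\Sigma))\leq w$ by \cref{lem:gtgd_simple_closure_resolvent_properties}. Thus the phase is polynomial in $c^{w}$ and in $\size(\sclo(\Sigma))\leq 2^{2nw^a}\cdot\poly(n,w,a)$. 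It then remains to check the three regimes. For fixed $\Sigma$ the parameters $w,n,a$ are constants, so $c^{w}$ is polynomial in the number of constants and $\sclo(\Sigma)$ has constant size, giving $\ptime$. For bounded $a$, the exponent $w^a$ is polynomial in $w$, so both $2^{2nw^a}$ and $c^{w}$ are singly exponential in the input, and a polynomial in singly-exponential quantities stays within $\exptime$. When nothing is bounded, $w^a$ may be exponential, so the size bound is doubly exponential, and a polynomial in doubly-exponential quantities stays within $\texptime$.

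The routine but delicate part is this final bookkeeping: one must confirm that running the Datalog procedure on a rewriting that is itself (singly or doubly) exponentially large never pushes the total cost into a class strictly above that of the saturation step. The two observations that make this work are that the relevant exponent $w'$ is capped by $\width(\hnf(\Sigma))$ rather than by $|\sclo(\Sigma)|$, and that ``polynomial in a singly-exponential (resp.\ doubly-exponential) quantity'' remains in $\exptime$ (resp.\ $\texptime$). Consequently the query-answering phase never exceeds the class of the saturation phase in any regime, and the overall procedure meets the claimed bounds; there is no genuinely new hard argument, only a careful matching of the two phases' costs against the three parameter regimes.
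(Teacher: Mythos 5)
Your proposal is correct and follows essentially the same route as the paper's own (very terse) proof: establish that $\sclo(\Sigma)$ is an atomic rewriting (completeness via \cref{prop:gtgd_abstr_completeness}, soundness of the inference rules, finiteness from \cref{cor:gtgd_simple_closure_size}), then combine \cref{prop:gtgd_atomic_rewriting_dec_proc} with \cref{lem:gtgd_simple_closure_complexity}. You merely spell out the soundness argument and the regime-by-regime cost bookkeeping that the paper leaves implicit.
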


We now highlight a serious weak spot of the simple saturation,
which then motivates us to derive an improved rewriting algorithm for \gtgds in \cref{sec:gtgd_guarded_saturation}.
\begin{exmpl}\label{exmpl:unif_issue_sclo}
Given the database $\db=\{R(c,d),S(c)\}$ and a set of \gtgds $\Sigma$
\begin{align*}
\tau_1&=R(x_1,x_2)\rightarrow\exists y_1,y_2\bigl( S(x_1,x_2,y_1,y_2)\land T(x_1,x_2,y_2)\bigr),\\
\tau_2&=S(x_1,x_2,x_3,x_4)\rightarrow U(x_4),\\
\tau_3&=T(z_1,z_2,z_3)\land U(z_3)\rightarrow P(z_1),
\end{align*}
it is clear that $\db,\Sigma\models P(c)$.
Let us consider the steps performed by the simple saturation.

In the first iteration, \orig does not create any new rules since the composition of $\tau_1,\tau_2$
does not contain head atoms without existential variables and composing $\tau_1,\tau_3$ creates an existential variable in $U(z_3)$.
Taking the \comp-rule, we can clearly compose $\tau_2,\tau_3$;
however, it is not so clear which unifier $\theta$
one should consider for this.
One natural way is to restrict~$\theta$ to be an mgu of chosen atoms.
In our case, we obtain the mgu $\theta=[x_4/z_3]$, and
the resulting resolvent corresponds to $S(x_1,x_2,x_3,x_4)\land T(z_1,z_2,x_4)\rightarrow P(z_1)$.
But this rule contains more than $\hwidth(\Sigma)=4$ variables and hence will not be derived by \comp.
Eliminating the upper bound on the number of variables is no solution as this restriction
is needed for termination purposes.
Instead, every possible unifier that keeps this upper bound
must be considered, leading to a series of derived rules including
\begin{align*}
S(x_1,x_2,x_3,x_4)\land T(x_1,x_2,x_4)&\rightarrow P(x_1),&
S(x_1,x_2,x_3,x_4)\land T(x_2,x_1,x_4)&\rightarrow P(x_2),\\
S(x_1,x_2,x_3,x_4)\land T(x_1,x_3,x_4)&\rightarrow P(z_1),&
S(x_1,x_2,x_3,x_4)\land T(x_3,x_1,x_4)&\rightarrow P(x_3),
\end{align*}
and so forth, which is infeasible in practice.
\end{exmpl}
We hence continue our journey to find a more efficient rewriting algorithm for \gtgds.

\section{The Guarded Saturation}\label{sec:gtgd_guarded_saturation}

Again, we describe the basic ideas of the upcoming algorithm by means of an example.
\begin{exmpl}\label{exmpl:tree_chase_seq_gclo}
Let us revisit the tree-like chase from \cref{exmpl:tree_chase_seq_sclo}, depicted in \cref{fig:exmpl_tree_chase_seq_sclo}, that uses the \gtgds
\begin{align*}
\tau_1&=R(x_1)\rightarrow\exists y_1,y_2\,T(x_1,y_1,y_2),&
\tau_2&=T(x_1,x_2,x_3)\rightarrow\exists y\,U(x_1,x_2,y),\\
\tau_3&=U(x_1,x_2,x_3)\rightarrow P(x_1)\land V(x_1,x_2),&
\tau_4&=T(x_1,x_2,x_3)\land V(x_1,x_2)\land S(x_1)\rightarrow M(x_1),
\end{align*}
As before, we first lift the full $\tau_3$ by composing it with the non-full $\tau_2$
to obtain the full ${\tau_1'\coloneqq T(x_1,x_2,x_3)\rightarrow P(x_1)\land V(x_1,x_2)}$.
We then continue by composing the non-full $\tau_1$ with $\tau_1'$;
this time, however, we do not only derive the full $\tau_2'\coloneqq R(x_1)\rightarrow P(x_1)$ at
this step, but also add the non-full $\tau_1^*\coloneqq R(x_1)\rightarrow\exists y_1,y_2\bigl(T(x_1,y_1,y_2)\land V(x_1,y_1)\bigr)$.
The idea is to collect the derivable non-full head atoms so that we can
eliminate the \comp-step of our previous algorithm.
Indeed, instead of first discharging the premise $V(x_1,x_2)$ from $\tau_4$ by composing it with~$\tau_1'$,
we can now directly compose $\tau_1^*$ with $\tau_4$ to obtain the desired
full $\tau_4'\coloneqq R(x_1)\land S(x_1)\rightarrow M(x_1)$.
\end{exmpl}
\begin{rmk}
The unification issue of \cref{exmpl:unif_issue_sclo} now disappears: we first compose $\tau_1,\tau_2$ to obtain $\tau_1'\coloneqq R(x_1,x_2)\rightarrow\exists y_1,y_2\bigl( S(x_1,x_2,y_1,y_2)\land T(x_1,x_2,y_2)\land U(y_2)\bigr)$,
and then compose $\tau_1',\tau_3$ to obtain $\tau_2'\coloneqq R(x_1,x_2)\rightarrow P(x_1)$.
\end{rmk}

Using these observations, we now present our improved rewriting algorithm for \gtgds.
\begin{defn}[Guarded saturation] \label{defn:gtgd_guarded_closure}
We define the following inference rule:
\begin{itemize}
\item \emph{\evolve}. We can compose a non-full with a full \tgd:

Assume we have a non-full $\tau$ and a full $\tau'$ of the form
\begin{align*}
	\body(\vec x)&\rightarrow\exists\vec y\,\head(\vec x,\vec y),\\
	\body'(\vec z)&\rightarrow\head'(\vec z),
\end{align*}
and assume that $\tau,\tau'$ use distinct variables (otherwise, perform a renaming).
Suppose~$\theta$ is an mgu of atoms $S\coloneqq(H_1,\dotsc,H_n)$ and $S'\coloneqq(B_1',\dotsc,B_n')$ that is the identity on $\vec y$,\footnote{Such an mgu can be obtained by treating $\vec y$ as constants.} where $H_i\in\head,B_i'\in\body'$ for $1\leq i\leq n$.
Moreover, assume that the following \emph{\iindex{existential variable check} (\iindexsee{evc}{existential variable check})} is satisfied:

\medskip
Any atom $B'\in\beta'$ with $\vars(B'\theta)\cap\vec y\neq\emptyset$
occurs in $S'$
and $\vec x\theta\cap\vec y=\emptyset$.
\medskip

Now define
\begin{align*}
\beta''&\coloneqq\bigl(\beta\cup\bigl(\beta'\setminus\{B_1',\dotsc,B_n'\}\bigr)\bigr)\theta,\\
\head''&\coloneqq(\head\cup\head')\theta.
\end{align*}
We then add the rule(s)
$\vnf(\hnf(\body''\rightarrow\exists\vec y\,\head''))$.
\end{itemize}
Given a set of \gtgds~$\Sigma$, let \iindex[not]{$\gcloex(\Sigma)$} be the closure
of $\vnf(\hnf(\Sigma))$ under the given inference rule.
The \emph{\iindex{guarded saturation}} \iindex[not]{$\gclo(\Sigma)$} is obtained from
$\gcloex(\Sigma)$ by dropping all non-full rules.
\end{defn}

Before we verify the correctness of the saturation,
we derive two properties that follow from our restrictions imposed on \evolve.
\begin{lem}\label{lem:gtgd_guarded_guard_included}
Assume we perform an \evolve inference on $\tau$ and $\tau'$, $\tau$ is in head
normal form,
and $\tau'$ is guarded by $G'$.
Then $G'$ occurs in $S'$.
\end{lem}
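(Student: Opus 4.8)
The plan is to exploit the head normal form of $\tau$ together with the existential variable check (evc) that is built into the \evolve inference. First I would observe that, since $\tau$ is non-full and in head normal form, every head atom of $\tau$ contains at least one existential variable. Fix any unified pair $H_i\in S$ and $B_i'\in S'$ (the \evolve inference resolves at least one atom, so $S$ and $S'$ are nonempty). Then $H_i$ contains some $y\in\vec y$, and because $\theta$ is the identity on $\vec y$, this $y$ survives in $H_i\theta$. Since $\theta$ unifies $H_i$ with $B_i'$, we have $H_i\theta=B_i'\theta$, and therefore $\vars(B_i'\theta)\cap\vec y\neq\emptyset$.

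Next I would trace this existential variable back to the guard. The two rules are renamed apart, so $\tau'$ uses only the variables $\vec z$ and $y\notin\vec z$; consequently the occurrence of $y$ in $B_i'\theta$ cannot come from $B_i'$ itself but must arise by applying $\theta$ to some variable $z\in\vars(B_i')\subseteq\vec z$, i.e.\ $y\in\vars(z\theta)$. Since $G'$ is the guard of $\tau'$, it contains every variable of $\body'$ and in particular contains $z$. Hence $G'\theta$ also contains $y$, giving $\vars(G'\theta)\cap\vec y\neq\emptyset$.

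Finally I would apply the existential variable check directly: $G'$ is an atom of $\body'$ satisfying $\vars(G'\theta)\cap\vec y\neq\emptyset$, so the evc forces $G'$ to occur in $S'$, which is exactly the claim. The main obstacle -- really the only delicate point -- is the variable-tracing step: one must argue carefully that the existential variable appearing in $B_i'\theta$ genuinely originates from a variable of $B_i'$ lying in $\vec z$ (so that it is then covered by the guard $G'$), rather than being introduced in some other way. This relies precisely on the renaming-apart convention between $\tau$ and $\tau'$ and on $\theta$ being the identity on $\vec y$, both of which are hypotheses of the \evolve rule.
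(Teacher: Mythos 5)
Your proof is correct and follows essentially the same route as the paper: use head normal form plus the fact that $\theta$ is the identity on $\vec y$ to show $\vars(B'\theta)\cap\vec y\neq\emptyset$ for a resolved body atom, push this into the guard via $\vars(B')\subseteq\vars(G')$, and conclude with the evc. Your extra variable-tracing step is a more explicit justification of the inclusion $\vars(B'\theta)\subseteq\vars(G'\theta)$ that the paper states directly, but the argument is the same.
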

\begin{proof}
Let $B'$ be an atom in $S'$ that unifies with some $H$ in $S$.
As $\tau$ is in head normal form and $\theta$ is the identity on $\vec y$,
we derive that $H\theta$ contains some $y_i$.
As $B'$ unifies with $H$,
we get $\vars(B'\theta)\cap\vec y\neq\emptyset$.
Finally, since $G'$ guards $\tau'$, we get
$\vars(B'\theta)\subseteq\vars(G'\theta)$ and hence $\vars(G'\theta)\cap\vec y\neq\emptyset$.
Thus, by the evc, $G'$ occurs in $S'$.
\end{proof}

\begin{prop}\label{prop:gtgd_guarded_closure_resolvent_properties}
Let $w_b\coloneqq\bwidth(\hnf(\Sigma))$, and $w_h\coloneqq\hwidth(\hnf(\Sigma))$.
Then any $\sigma\in\gcloex(\Sigma)$ is a \gtgd in variable normal form
and head normal form, $\bwidth(\sigma)\leq w_b$, and
$\hwidth(\sigma)\leq w_h$.
\end{prop}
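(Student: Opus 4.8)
The plan is to induct on the construction of the closure $\gcloex(\Sigma)$, which is built by starting from $\vnf(\hnf(\Sigma))$ and repeatedly applying the \evolve rule of \cref{defn:gtgd_guarded_closure}. For the base case, a rule $\sigma\in\vnf(\hnf(\Sigma))$ is in variable and head normal form by the very definitions of $\vnf$ and $\hnf$; it is a \gtgd because $\hnf$ leaves bodies (hence guards) untouched and $\vnf$ merely renames; and since $\hnf$ does not enlarge bodies while $w_b=\bwidth(\hnf(\Sigma))$ and $w_h=\hwidth(\hnf(\Sigma))$ by definition, the two width bounds hold immediately (renaming via $\vnf$ preserves variable counts). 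For the inductive step I take $\sigma$ to be an \evolve-resolvent of a non-full $\tau\colon\body(\vec x)\to\exists\vec y\,\head(\vec x,\vec y)$ and a full $\tau'\colon\body'(\vec z)\to\head'(\vec z)$, both already in $\gcloex(\Sigma)$ and so, by the induction hypothesis, enjoying all five properties; concretely $\sigma$ is one of the rules in $\vnf(\hnf(\body''\to\exists\vec y\,\head''))$. Variable and head normal form of $\sigma$ are then automatic from the outer $\vnf\circ\hnf$, so the work reduces to analysing $\body''$ and $\head''$ before normalisation. I first record the ambient fact that, because every atom being unified is function-free (relational setting) and constant-free (\tgds are constant-free), the mgu $\theta$ maps variables to variables — via \cref{lem:mgu_function_free} — and therefore never increases the cardinality of a variable set.

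For guardedness, the plan is to show that $G\theta$ is a guard of $\body''$, where $G$ is the guard of $\tau$. Since $G$ is a body atom of $\tau$ we have $G\theta\in\body\theta\subseteq\body''$, and as $\vars(G)=\vec x=\vars(\body)$ we immediately get $\vars(\body\theta)\subseteq\vars(G\theta)$. It remains to control the atoms of $(\body'\setminus\{B_1',\dots,B_n'\})\theta$. Here I invoke \cref{lem:gtgd_guarded_guard_included}, which (using that $\tau$ is in head normal form, granted by the induction hypothesis) places the guard $G'$ of $\tau'$ inside $S'=(B_1',\dots,B_n')$; say $G'=B_j'$, so $G'\theta=B_j'\theta=H_j\theta$ with $H_j\in\head$. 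For any residual atom $B'\in\body'\setminus S'$ we have $\vars(B')\subseteq\vec z=\vars(G')$, hence $\vars(B'\theta)\subseteq\vars(G'\theta)=\vars(H_j\theta)$; and since $\vars(H_j)\subseteq\vec x\cup\vec y$ with $\theta$ the identity on $\vec y$, this gives $\vars(B'\theta)\subseteq\vars(G\theta)\cup\vec y$. The existential variable check now finishes the argument: its contrapositive says that an atom not occurring in $S'$ satisfies $\vars(B'\theta)\cap\vec y=\emptyset$, so in fact $\vars(B'\theta)\subseteq\vars(G\theta)$. Consequently $\vars(\body'')\subseteq\vars(G\theta)$, and since $\hnf$ does not alter bodies and $\vnf$ only renames, every rule $\sigma$ produced remains guarded.

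The two width bounds then drop out of the same computation. For the body, $\bwidth(\sigma)=|\vars(\body'')|\le|\vars(G\theta)|\le|\vars(G)|=\bwidth(\tau)\le w_b$, using that $\theta$ does not increase variable counts and that $|\vars(G)|=|\vec x|=\bwidth(\tau)$. For the head I claim $\vars(\head'')=\vars(\head\theta)$: indeed $\head''=\head\theta\cup\head'\theta$, and because $\vars(\head')\subseteq\vec z=\vars(G')$ we get $\vars(\head'\theta)\subseteq\vars(G'\theta)=\vars(H_j\theta)\subseteq\vars(\head\theta)$, so the $\head'\theta$ part contributes nothing new. Hence $\hwidth(\sigma)\le|\vars(\head'')|=|\vars(\head\theta)|\le|\vars(\head)|=\hwidth(\tau)\le w_h$; the $\hnf$ split can only shrink heads and $\vnf$ preserves counts, so this survives normalisation.

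I expect the guardedness step to be the main obstacle, since it is where the three design constraints of \evolve must be orchestrated together: \cref{lem:gtgd_guarded_guard_included} forces the guard of $\tau'$ to be consumed in the unification, while the existential variable check is exactly what prevents the existential variables $\vec y$ from leaking into the residual body atoms. Getting the bookkeeping right — tracking precisely how $\theta$ acts on $\vec x$, $\vec y$, and $\vec z$, and confirming that $G\theta$ lands in $\body''$ rather than being unified away — is the delicate part; the width bounds are then essentially corollaries of the containment $\vars(\body'')\subseteq\vars(G\theta)$ and the head-collapse identity established along the way.
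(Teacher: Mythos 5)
Your proof is correct and follows essentially the same route as the paper's: induction over the construction of $\gcloex(\Sigma)$, using \cref{lem:gtgd_guarded_guard_included} to place the guard $G'$ in $S'$, \cref{lem:mgu_function_free} to keep $\theta$ variable-to-variable, the inclusion chain $\vars(\head'\theta)\subseteq\vars(G'\theta)=\vars(H\theta)\subseteq\vars(\head\theta)\subseteq\vars(\body\theta)\cup\vec y$, and the evc to strip $\vec y$ from the residual body atoms. The only difference is presentational — you make $G\theta$ the explicit organising object and derive the width bounds from $\vars(\body'')\subseteq\vars(G\theta)$, whereas the paper states the variable-set (in)equalities first and notes guardedness last — but the content is the same.
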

\begin{proof}
We prove the claim inductively.
Clearly, all rules in $\vnf(\hnf(\Sigma))$ satisfy the conditions
and all derived rules will be in variable and head normal form due
to the final $\vnf(\hnf(\cdot))$ application.

Assume we create $\sigma\in\vnf(\hnf(\body''\rightarrow\exists\vec y\,\head''))$ by \evolve from
some non-full ${\tau=\body\rightarrow\exists\vec y\,\head}$ and
full $\tau'=\body'\rightarrow\head'$ in $\gcloex(\Sigma)$.
By the inductive hypothesis, $\tau$ is in head normal form and $\tau'$ is guarded by some $G'$.
Hence, by \cref{lem:gtgd_guarded_guard_included}, $G'$ occurs in $S'$.
Let $H$ be the atom in $S$ that is unified with $G'$ using $\theta$.
By \cref{lem:mgu_function_free}, $\theta$ does not introduce functions in $\sigma$.
As $G'$ guards $\tau'$, $H\theta=G'\theta$, and $\theta$ is the identity on $\vec y$,
we get
\begin{equation*}
\vars(\head'\theta)\subseteq\vars(\body'\theta)=\vars(G'\theta)=\vars(H\theta)
\subseteq\vars(\head\theta)\subseteq\vars(\body\theta)\cup\vec y.
\end{equation*}
From this, we can derive that $\vars(\body'')\subseteq\vars(\body\theta)\cup\vec y$
and $\vars(\head'')=\vars(\head\theta)$.
Since the evc is satisfied, the former refines to $\vars(\body'')=\vars(\body\theta)$.
By the inductive hypothesis, we have $\bwidth(\tau)\leq w_b$ and $\hwidth(\tau)\leq w_h$, and hence
$\bwidth(\sigma)\leq w_b$ and $\hwidth(\sigma)\leq w_h$.
Finally, again by the inductive hypothesis, $\tau$ is guarded by some $G$,
which will also be a guard for $\sigma$.
\end{proof}

\subsection{Correctness}
To prove the correctness of our algorithm, we have to verify that our rewriting is sound and complete.
We begin with the former.
\begin{lem}\label{lem:gtgd_guarded_closure_ex_soundness}
$\gcloex(\Sigma)$ is sound \wrt $\Sigma$.
\end{lem}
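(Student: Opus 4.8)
The plan is to reduce the query-level condition of \cref{defn:soundness} to the purely model-theoretic statement that $\Sigma\models\gcloex(\Sigma)$, i.e.\ that every model of $\Sigma$ satisfies every rule in $\gcloex(\Sigma)$. This suffices: if some $\calA\models\db\cup\Sigma$ has $\calA\not\models Q$ for a quantifier-free \ubcq $Q$, then $\calA\models\gcloex(\Sigma)$ as well, so $\db,\gcloex(\Sigma)\not\models Q$; contraposing gives exactly the implication required for soundness, and it transfers to the subset $\gclo(\Sigma)$. I would therefore prove $\Sigma\models\sigma$ for every $\sigma\in\gcloex(\Sigma)$ by induction on the construction of the closure.

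For the base case I would note that $\vnf(\hnf(\Sigma))$ is not merely query-equivalent to $\Sigma$ (as in \cref{lem:tgd_hnf_closure_coincide}) but logically equivalent: variable normal form is a renaming, and the head-normal-form split is sound because the full head atoms it factors out contain no existential variable, so $\exists\vec y$ distributes over the conjunction. Hence each base rule is entailed by $\Sigma$.

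The inductive step is the heart of the argument: I must show that \evolve is sound, i.e.\ that the derived rule $\body''\rightarrow\exists\vec y\,\head''$ is entailed by its two premises $\tau,\tau'$, which are entailed by $\Sigma$ by the inductive hypothesis (the final $\vnf(\hnf(\cdot))$ again preserves logical equivalence exactly as in the base case). Fix a model $\calA$ of $\tau,\tau'$ and a valuation $\nu$ of $\vars(\body'')$ with $\calA\models\body''\nu$. Since $\body\theta\subseteq\body''$, the assignment $x\mapsto\nu(x\theta)$ satisfies $\body$, so applying $\tau$ yields witnesses $\mu$ for $\vec y$; writing $\gamma$ for the assignment $\gamma(v)\coloneqq(\nu\cup\mu)(v\theta)$, which restricts on $\vec x\cup\vec y$ to the valuation just used, we get $\calA\models\head\gamma$. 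Because $\theta$ unifies the chosen atoms, $B_i'\theta=H_i\theta$, so each unified body atom $B_i'$ of $\tau'$ is satisfied by $\gamma$ precisely because the matching $H_i\in\head$ is; the remaining atoms of $\body'$ are, by the existential variable check, free of $\vec y$ after applying $\theta$ and hence already satisfied through $\body''\nu$. Thus $\calA\models\body'\gamma$, and since $\tau'$ is full, $\calA\models\head'\gamma$. Combining, $\calA$ satisfies $\head''=(\head\cup\head')\theta$ under $\nu$ extended by $\mu$ on $\vec y$, so $\calA\models\exists\vec y\,\head''$, which closes the step.

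The main obstacle I anticipate is not the high-level composition but the bookkeeping needed to see that $\gamma$ is well defined on every original variable: one must check that $\theta$ sends each relevant variable either into $\vec y$ or into $\vars(\body'')$, rather than into a functional or otherwise uncovered term, and that $\vars(\head'')\setminus\vec y\subseteq\vars(\body'')$ so that $\nu$ indeed covers the free head variables. This is exactly where I would invoke that $\theta$ introduces no function symbols (\cref{lem:mgu_function_free}) together with the variable-containment and $\vec x\theta\cap\vec y=\emptyset$ facts established for \evolve in \cref{prop:gtgd_guarded_closure_resolvent_properties}, both of which rest on the existential variable check. Once well-definedness of $\gamma$ is pinned down, the satisfaction checks for $\body'$ and the two heads are routine, and the induction completes the proof.
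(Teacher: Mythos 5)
Your proposal is correct and takes essentially the same route as the paper: both reduce soundness to the entailment $\Sigma\models\gcloex(\Sigma)$ (the paper routes the normal-form step through \cref{lem:tgd_hnf_closure_coincide} where you argue direct logical equivalence, an immaterial difference) and then establish the inductive step by the same semantic argument that a model of $\tau,\tau'$ satisfies the \evolve{}-resolvent, using $B_i'\theta=H_i\theta$, the existential variable check for the residual body atoms, and \cref{prop:gtgd_guarded_closure_resolvent_properties} together with \cref{lem:mgu_function_free} for the well-definedness of the combined valuation. The paper writes your valuation $\gamma$ out explicitly as substitutions of domain elements $c_i,d_i$, but the content is the same.
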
 
\begin{proof}
By \cref{lem:tgd_hnf_closure_coincide}, it suffices to show that $\gcloex(\Sigma)$ is sound \wrt $\hnf(\Sigma)$.
For this, it suffices to show that $\hnf(\Sigma)\models \gcloex(\Sigma)$.
Assume we derive $\sigma\in\vnf(\hnf(\beta''\rightarrow\exists\vec y\,\head''))$ from some $\tau=\beta(\vec x)\rightarrow\exists \vec y\,\head$ and $\tau'=\beta'(\vec z)\rightarrow\head'$ using \evolve.
Let $H_1,\dotsc,H_n$ and $B_1',\dotsc,B_n'$ be the resolved atoms in~$\head$ and $\body'$, respectively,
let $\theta$ be the used mgu,
and let $\calA$ be a structure with $\calA\models\tau,\tau'$.
Again by \cref{lem:tgd_hnf_closure_coincide}, it suffices to show that $\calA\models\sigma'$,
where $\sigma'\coloneqq\beta''\rightarrow\exists\vec y\,\head''$.
Without loss of generality, assume $\sigma'$ is in variable normal form.
Then by \cref{prop:gtgd_guarded_closure_resolvent_properties} (and its proof), we know that
\begin{propylist}
\item $\sigma'$ is function-free (as it is a \gtgd),
\item $x_i\theta=x_{j_i}$ for any $x_i\in\vec x$, and
\item $z_i\theta\in\vec x$ or $z_i\theta\in\vec y$ for any $z_i\in\vec z$.
\end{propylist} 
Let $z_{l_1},\dotsc,z_{l_m}$ be all variables in $\vec z$ with $z_{l_i}\theta=x_{k_i}$,
and let $z_{l_{m+1}},\dotsc,z_{l_{m+m'}}$ be the remaining variables with $z_{l_{m+i}}\theta=y_{k_{m+i}}$.

Now assume $\calA\models\beta''[c_1/x_1,\dotsc,c_a/x_a]$ for some $c_1,\dotsc,c_a\in\dom(\calA)$.
Then by definition of $\beta''$, we get
\begin{align}
	\calA&\models\beta[c_{j_1}/x_1,\dotsc,c_{j_a}/x_a],\nonumber\\
	\calA&\models(\beta'\setminus\{B_1',\dotsc,B_n'\})[c_{k_1}/z_{l_1},\dotsc,c_{k_m}/z_{l_m}].\label{eq:gtgd_guarded_closure_ex_soundness_1}
\end{align} 
The former implies $\calA\models\exists\vec y\,\head[c_{j_1}/x_1,\dotsc,c_{j_a}/x_a]$, and thus
\begin{equation*}
\calA\models\head[c_{j_1}/x_1,\dotsc,c_{j_a}/x_a,d_1/y_1,\dotsc,d_{a'}/y_{a'}]
\end{equation*} 
for some $d_1,\dotsc,d_{a'}\in\dom(\calA)$.
This in return implies
\begin{equation}
\calA\models\head\theta[c_1/x_1,\dotsc,c_a/x_a,d_1/y_1,\dotsc,d_{a'}/y_{a'}].\label{eq:gtgd_guarded_closure_ex_soundness_2}
\end{equation} 
As $H_i\theta=B_i'\theta$, we get
\begin{equation*}
\calA\models B_i[c_{k_1}/z_{l_1},\dotsc,c_{k_m}/z_{l_m},d_{k_{m+1}}/z_{l_{m+1}},\dotsc,d_{k_{m+m'}}/z_{l_{m+m'}}]
\end{equation*} 
for $1\leq i\leq n$.
Together with \eqref{eq:gtgd_guarded_closure_ex_soundness_1}, we can thus derive that
\begin{equation*}
\calA\models\beta'[c_{k_1}/z_{l_1},\dotsc,c_{k_m}/z_{l_m},d_{k_{m+1}}/z_{l_{m+1}},\dotsc,d_{k_{m+m'}}/z_{l_{m+m'}}],
\end{equation*} 
and consequently
\begin{equation*}
\calA\models\head'[c_{k_1}/z_{l_1},\dotsc,c_{k_m}/z_{l_m},d_{k_{m+1}}/z_{l_{m+1}},\dotsc,d_{k_{m+m'}}/z_{l_{m+m'}}].
\end{equation*} 
This implies that
\begin{equation}\label{eq:gtgd_guarded_closure_ex_soundness_3}
\calA\models\head'\theta[c_1/x_1,\dotsc,c_a/x_a,d_1/y_1,\dotsc,d_{a'}/y_{a'}].
\end{equation} 
Now by \eqref{eq:gtgd_guarded_closure_ex_soundness_2} and \eqref{eq:gtgd_guarded_closure_ex_soundness_3}, we get
\begin{equation*}
\calA\models\head''[c_1/x_1,\dotsc,c_a/x_a,d_1/y_1,\dotsc,d_{a'}/y_{a'}],
\end{equation*} 
and thus $\calA\models\exists\vec y\,\head''[c_1/x_1,\dotsc,c_a/x_a]$.
\end{proof} 

\begin{cor}[Soundness]\label{cor:gtgd_guarded_closure_soundness}
$\gclo(\Sigma)$ is sound \wrt $\Sigma$.
\end{cor}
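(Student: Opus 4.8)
The plan is to reduce this corollary directly to the already established \cref{lem:gtgd_guarded_closure_ex_soundness}, exploiting the simple relationship between the two saturations. The key observation is that, by \cref{defn:gtgd_guarded_closure}, the set $\gclo(\Sigma)$ is obtained from $\gcloex(\Sigma)$ merely by discarding all non-full rules; in particular $\gclo(\Sigma)\subseteq\gcloex(\Sigma)$.

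First I would record the elementary fact that soundness, in the sense of \cref{defn:soundness}, is monotone under the removal of rules: if $\Sigma''\subseteq\Sigma'$ and $\Sigma'$ is sound \wrt $\Sigma$, then $\Sigma''$ is sound \wrt $\Sigma$ as well. To see this, fix a database $\db$ and a quantifier-free \ubcq $Q$ and suppose $\db,\Sigma''\models Q$. Since $\Sigma''\subseteq\Sigma'$, every model of $\db,\Sigma'$ is also a model of $\db,\Sigma''$, so the hypothesis yields $\db,\Sigma'\models Q$; applying the assumed soundness of $\Sigma'$ then gives $\db,\Sigma\models Q$, as required.

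Combining the two observations completes the argument: $\gcloex(\Sigma)$ is sound \wrt $\Sigma$ by \cref{lem:gtgd_guarded_closure_ex_soundness}, and since $\gclo(\Sigma)\subseteq\gcloex(\Sigma)$, the monotonicity fact immediately delivers soundness of $\gclo(\Sigma)$. I do not expect any genuine obstacle here. All of the real content — checking that each \evolve inference preserves the models of $\hnf(\Sigma)$, and hence that $\hnf(\Sigma)\models\gcloex(\Sigma)$ — has already been carried out in the proof of \cref{lem:gtgd_guarded_closure_ex_soundness}. The corollary only has to note the trivial direction that discarding the non-full consequences from a sound set cannot compromise soundness, since restricting to a subset of constraints can only enlarge the class of models and thus preserves every entailment that held before.
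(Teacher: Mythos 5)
Your proposal is correct and follows exactly the paper's own argument: the paper likewise derives the corollary from \cref{lem:gtgd_guarded_closure_ex_soundness} together with the inclusion $\gclo(\Sigma)\subseteq\gcloex(\Sigma)$. You merely spell out the (trivial) monotonicity of soundness under passing to a subset of rules, which the paper leaves implicit.
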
 
\begin{proof}
By \cref{lem:gtgd_guarded_closure_ex_soundness}, $\gcloex(\Sigma)$ is sound \wrt $\Sigma$,
and hence, so is $\gclo(\Sigma)\subseteq\gcloex(\Sigma)$.
\end{proof}

The next lemma will be required multiple times during our completeness proof and complexity analysis.
\begin{lem}\label{lem:gtgd_delta_is_id_y}
Let $\theta,\theta'$ be substitutions that are the identity on $\vec y$,
and let $\delta$ be a substitution such that $\theta'=\theta\delta$.
Then $\delta$ is the identity on $\vec y$.
\end{lem}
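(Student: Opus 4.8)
The plan is to prove the statement by a direct computation, evaluating both sides of the hypothesis $\theta'=\theta\delta$ on an arbitrary variable $y_i\in\vec y$. The only subtlety is keeping the composition convention straight: recall that the paper sets $\theta\delta\coloneqq\delta\circ\theta$, so that for any term $t$ we have $t(\theta\delta)=(t\theta)\delta$, i.e.\ $\theta$ is applied first and $\delta$ second.

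First I would fix an arbitrary $y_i\in\vec y$ and simply chase the equalities. Using that $\theta'$ is the identity on $\vec y$, the left-hand side gives $y_i\theta'=y_i$. Using the composition convention together with the hypothesis $\theta'=\theta\delta$, the right-hand side gives $y_i\theta'=y_i(\theta\delta)=(y_i\theta)\delta$. Since $\theta$ is also the identity on $\vec y$, we have $y_i\theta=y_i$, and hence $(y_i\theta)\delta=y_i\delta$. Concatenating these observations yields the single chain
\begin{equation*}
    y_i = y_i\theta' = y_i(\theta\delta) = (y_i\theta)\delta = y_i\delta,
\end{equation*}
so that $y_i\delta=y_i$.

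Because $y_i\in\vec y$ was arbitrary, this shows $\delta$ fixes every variable in $\vec y$, which is exactly the claim that $\delta$ is the identity on $\vec y$. There is no real obstacle here: the entire content is the bookkeeping of the composition order, and the only thing to be careful about is not to accidentally read $\theta\delta$ as ``apply $\delta$ first''. I would present the proof as essentially the displayed equation above, preceded by a one-sentence reminder of the composition convention, and conclude by universal generalisation over $y_i$.
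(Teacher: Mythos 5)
Your proof is correct and uses exactly the same computation as the paper's: the chain $y_i = y_i\theta' = y_i(\theta\delta) = (y_i\theta)\delta = y_i\delta$. The only cosmetic difference is that the paper wraps this in a proof by contradiction, whereas you argue directly, which is if anything cleaner.
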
 
\begin{proof}
Assume otherwise.
Then there is $y_i\in\vec y$ with $y_i\delta\neq y_i$.
Now as $\theta,\theta'$ are the identity on $\vec y$, we get $y_i=y_i\theta'=y_i\theta\delta=y_i\delta\neq y_i$, a contradiction.
\end{proof} 

To show completeness, we will have to make use of the following properties.
\begin{lem}\label{lem:gtgd_guarded_closure_props}
$\gcloex(\Sigma)$ has the following properties:
\begin{thmlist}
\item\label{lem:gtgd_guarded_closure_props_1}
For every $\tau\in\hnf(\Sigma)$, we have $\vnf(\tau)\in\gcloex(\Sigma)$.

\item\label{lem:gtgd_guarded_closure_props_2} 
Let $I_0,I_1,I_2$ be a chase from $\gcloex(\Sigma)$,
$\tau$ be non-full, and ${\tau'}$ be full such that
\begin{assmlist}
\item firing $\tau$ in the root $r$ of $I_0$ creates $I_1$ and a new node $v$,
\item firing $\tau'$ in $F_v^1$ creates $I_2$,
\end{assmlist} 
Then there exists a non-full $\tau_\exists^*\in\gcloex(\Sigma)$ such that firing 
$\tau_\exists^*\in\gcloex(\Sigma)$ in $I_0$ creates $F_v^2\setminus F_r^2$.
Furthermore, if $F_r^2\neq I_0$, then there exists a full $\tau^*\in\gcloex(\Sigma)$ 
such that firing $\tau^*$ in $I_0$ creates $F_r^2\setminus I_0$.
\end{thmlist}
\end{lem}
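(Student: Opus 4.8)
The first claim is immediate from the definition of the saturation: $\gcloex(\Sigma)$ is by construction the closure of $\vnf(\hnf(\Sigma))$ under \evolve, so for any $\tau\in\hnf(\Sigma)$ we already have $\vnf(\tau)\in\vnf(\hnf(\Sigma))\subseteq\gcloex(\Sigma)$. All the work goes into the second claim. I would write $\tau=\body(\vec x)\rightarrow\exists\vec y\,\head(\vec x,\vec y)$ and $\tau'=\body'(\vec z)\rightarrow\head'(\vec z)$; both lie in $\gcloex(\Sigma)$ and hence are in head and variable normal form by \cref{prop:gtgd_guarded_closure_resolvent_properties}, and $\tau'$ is guarded by some $G'$. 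Let $h$ (with existential extension $h'$) be the trigger firing $\tau$ in $r$, so the birth facts of $v$ are $h'(\head)$, and let $g$ be the trigger firing $\tau'$ inside $F_v^1$. The proof turns on where the guarded atom lands: since $G'$ guards $\body'$, all of $g(\body')$ lie inside the single fact $g(G')$, and because $\tau$ is in head normal form every birth fact carries a fresh constant whereas every propagated fact of $F_v^1$ is an old fact of $I_0$. I would therefore split according to whether $g(G')$ is an old fact or a birth fact.

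If $g(G')$ is an old fact, then $\consts(g(\body'))\subseteq\consts(g(G'))$ contains no fresh constant, so $\tau'$ fires entirely on facts already present in $I_0$; hence $\tau^*\coloneqq\tau'$ creates $F_r^2\setminus I_0$ directly, and as every atom of $g(\head')$ is guarded by the old fact $g(G')\in F_r^1$ it propagates to $r$, leaving only the birth facts in $F_v^2\setminus F_r^2$, which $\tau_\exists^*\coloneqq\tau$ creates. The substantive case is $g(G')=h'(H_G)$ a birth fact. Here I would take $S'$ to be the atoms of $\body'$ whose $g$-image is a birth fact (equivalently, those mentioning a fresh constant) and $S$ the matching atoms of $\head$ with $h'(H_i)=g(B_i')$; by \cref{lem:gtgd_guarded_guard_included} this set contains $G'$. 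To obtain the mgu required by \evolve I would undo the existential renaming: letting $\nu$ map each fresh constant $h'(y_k)$ back to $y_k$ and fix old constants, the substitution $\theta_0$ defined by $x\mapsto h(x)$, $y\mapsto y$, $z\mapsto\nu(g(z))$ is a unifier of $S$ and $S'$ that is the identity on $\vec y$, so an mgu $\theta$ with this property exists (treating $\vec y$ as constants, as in \cref{defn:gtgd_guarded_closure}), and it is function-free by \cref{lem:mgu_function_free}.

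The existential variable check then holds: every $B'\notin S'$ has $g(B')$ an old fact, so $\nu(g(B'))=\theta_0(B')$ is $\vec y$-free, and writing $\theta_0=\theta\delta$, \cref{lem:gtgd_delta_is_id_y} gives that $\delta$ fixes $\vec y$, whence $B'\theta$ is $\vec y$-free as well; the same argument yields $\vec x\theta\cap\vec y=\emptyset$. Thus \evolve fires and produces $\vnf(\hnf(\body''\rightarrow\exists\vec y\,\head''))$, which I split into its non-full part $\tau_\exists^*$ (head atoms still mentioning $\vec y$) and, when any remain, its full part $\tau^*$; note $\head\theta\subseteq\head''$ is always existential, so $\tau_\exists^*$ always exists. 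To fire these in $I_0$ I would use the trigger $f\coloneqq\delta\restriction\vars(\body'')$. Since $\body''\delta=h(\body)\cup g(\body'\setminus S')\subseteq I_0$, $f$ is a genuine trigger, and pushing $\head\theta$ back through $\delta$ gives $\head\theta_0=h'(\head)$, so the non-full firing reproduces the birth facts while the existential atoms of $\head'\theta$ reproduce the fresh-constant facts of $g(\head')$, and the full atoms of $\head'\theta$ reproduce the old-constant facts of $g(\head')$. The crucial step is matching these with the propagation in the original chase: a full head atom $H'$ has $\consts(g(H'))\subseteq\consts(g(G'))\cap\consts(I_0)=h(\vars(H_G)\cap\vec x)\subseteq h(\vec x)$, and since the body guard of $\tau$ has image in $I_0=F_r^1$ with constant set $h(\vec x)$, it guards $g(H')$, so $g(H')$ propagates to $r$ and lies in $F_r^2\setminus I_0$; conversely every birth fact and every existential-head fact carries a fresh constant and can never be guarded by an old fact of $r$, so it is confined to the subtree of $v$. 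This shows $\tau^*$ creates exactly $F_r^2\setminus I_0$ and $\tau_\exists^*$ creates exactly $F_v^2\setminus F_r^2$, and it also explains the \enquote{furthermore}: $\tau^*$ is present precisely when $\head''$ has a full atom, i.e.\ precisely when $F_r^2\neq I_0$.

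The main obstacle I expect is this propagation bookkeeping rather than the algebra of the inference. One must argue precisely that the full-head facts are exactly the ones guarded by the body-guard image in $r$ and hence propagated upward, while everything carrying a fresh constant is trapped in the subtree of $v$, so that the head-normal-form split of the \evolve resolvent lines up perfectly with the decomposition of $F_v^2$ into $F_r^2$ and $F_v^2\setminus F_r^2$. A secondary technical point is the construction of the $\vec y$-fixing mgu from the two chase triggers via the inverse renaming $\nu$, together with the transfer of the existential variable check from $\theta_0$ to $\theta$ through the factorisation $\theta_0=\theta\delta$ and \cref{lem:gtgd_delta_is_id_y}.
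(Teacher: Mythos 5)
Your proposal is correct and follows essentially the same route as the paper's proof: the same case split (whether the guard's image is an old fact of $I_0$ or a birth fact of $v$), the same construction of a $\vec y$-fixing unifier from the two chase triggers, the same factorisation through the mgu and transfer of the evc via \cref{lem:gtgd_delta_is_id_y}, and the same identification of the full and non-full parts of the \evolve resolvent with $F_r^2\setminus I_0$ and $F_v^2\setminus F_r^2$. The only differences are cosmetic — you build the unifier by mapping directly to constants and un-Skolemising fresh constants via $\nu$ where the paper maps to canonical variables indexed by constants, and you spell out the propagation argument for why full-head facts land in $F_r^2$ slightly more explicitly than the paper does.
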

\begin{proof}
The first property follows from the base case of the algorithm.
For the second property, without loss of generality, assume ${\consts(I_0)=\vec c}$
and $\consts(F_v^1)=\vec c\cup\vec d$,
where ${\vec c_\restriction\subseteq\vec c}$, and $\vec d$ are fresh constants.
We have 
\begin{align*}
\tau&=\beta(\vec x)\rightarrow\exists\vec y\,\head(\vec x,\vec y),\\
\tau'&=\beta'(\vec z)\rightarrow\head'(\vec z),
\end{align*} 
and $\tau$ is in head normal form by \cref{prop:gtgd_guarded_closure_resolvent_properties}.
Let $h,h'$ be the triggers used for $\tau,\tau'$.
If $h'(\body')\subseteq I_0$, we simply set $\tau_\exists^*\coloneqq\tau$ and $\tau^*\coloneqq\tau'$;
If, however, the chase step makes use of some facts in $I_1\setminus I_0$, 
let $B_1',\dotsc,B_n'$ be all atoms in $\body'$ such that
for each $1\leq i\leq n$, there is $H_i\in\head$ with $h(H_i)=h'(B_i')$.
Then the substitution $\theta$ defined by
\begin{equation*}
	v\theta\coloneqq\begin{cases}
		x_i,&\text{if }v\in\vec x \text{ and }h(v)=c_i\\
		x_i,&\text{if }v\in\vec z \text{ and }h'(v)=c_i\\
		y_i,&\text{if }v\in\vec z \text{ and }h'(v)=h(y_i)
	\end{cases}
\end{equation*} 
is a unifier of $H_1,\dotsc,H_n$ and $B_1',\dotsc,B_n'$ 
that is the identity on $\vec y$ and $\vec x\theta\cap\vec y=\emptyset$. 
Hence, there is an mgu $\theta^*$ of $H_1,\dotsc,H_n$ and $B_1',\dotsc,B_n'$ 
that is the identity on $\vec y$ (by treating $\vec y$ as constants).
Thus, there is a substitution $\delta$ such that $\theta^*=\theta\delta$.
By \cref{lem:gtgd_delta_is_id_y}, $\delta$ is the identity on $\vec y$.
\begin{claim}\label{claim:gtgd_guarded_closure_props_2_2}
$\vec x\theta^*\cap\vec y=\emptyset$.
\end{claim} 
\begin{subproof}
Assume otherwise.
Then for some $x_i\in\vec x$, we get $x_i\theta^*=y_j$,
and thus ${x_i\theta=x_i\theta^*\delta=y_j\delta\notin\vec y}$ since $\vec x\theta\cap\vec y=\emptyset$,
contradicting the fact that $\delta$ is the identity on $\vec y$.
\end{subproof} 
\begin{claim}\label{claim:gtgd_guarded_closure_props_2_3}
Any $B'\in\body'$ with $\vars(B'\theta^*)\cap\vec y\neq\emptyset$ occurs in $B_1',\dotsc,B_n'$.
%and vice versa.
\end{claim} 
\begin{subproof}
If $\vars(B'\theta^*)\cap\vec y\neq\emptyset$,
then also $\vars(B'\theta)=\vars(B'\theta^*\delta)\cap\vec y\neq\emptyset$ as $\delta$ is the identity on $\vec y$.
Thus, by construction of $\theta$, $\consts(h'(B'))\cap\vec d\neq\emptyset$, and hence,
$h'(B')\in I_1\setminus I_0\subseteq h(\head)$.
%
%Conversely, for any $B_i'$, there is $H_i\in\head$ with $H_i\theta^*=B_i\theta^*$. 
%As $\tau$ is non-full and in hnf, we have $\vars(H_i)\cap\vec y\neq\emptyset$.
%Since $\theta^*$ is the identity on $\vec y$, we get 
%$\vars(H_i\theta^*)\cap\vec y\neq\emptyset$, and thus, $\vars(B_i'\theta^*)\cap\vec y\neq\emptyset$.
\end{subproof} 
Using \cref{claim:gtgd_guarded_closure_props_2_2,claim:gtgd_guarded_closure_props_2_3},
we can derive that $\theta^*$ satisfies the evc.
Now define
\begin{align*}
	\beta''&\coloneqq\bigl(\beta\cup\bigl(\beta'\setminus\{B_1',\dotsc,B_n'\}\bigr)\bigr)\theta^*,\\
	\head_\restriction'&\coloneqq\{H'\in\head'\mid \vars(H'\theta^*)\cap\vec y=\emptyset\}.
\end{align*} 
Then using \evolve, we get the non-full 
\begin{equation*}
\tau_\exists^*=\bigl(\body^*\rightarrow\exists\vec y\,\head_\exists^*\bigr)=\vnf\bigl(\body''\rightarrow\exists\vec y\,\bigl(\head\cup(\head'\setminus\head_\restriction')\bigr)\theta^*\bigr),
\end{equation*} 
and, if $\head_\restriction'\neq\emptyset$, the full
\begin{equation*}
\tau^*=(\body^*\rightarrow\head^*)=\vnf\bigl(\body''\rightarrow\head_\restriction'\theta^*\bigr).
\end{equation*} 
Let $\rho$ be the renaming used by $\vnf(\cdot)$ for $\tau_\exists^*$.
Note that $\rho$ is the identity on $\vec y$ as $\tau$ is in variable normal form and $\theta^*$ is the identity on $\vec y$.
Thus, $\rho$ is also the renaming used by $\vnf(\cdot)$ for $\tau^*$.
Let~$h''$ be the trigger defined by $h''(x_i)\coloneqq c_i$ and $h''(y_i)=h(y_i)$,
and set $h^*\coloneqq h''\circ\delta\circ\rho^{-1}$.
\begin{claim}\label{claim:gtgd_guarded_closure_props_2_4}
For any $v\in\vec x\cup\vec y$, we have $h''(v\theta)=h(v)$, and for any $z_i\in\vec z$, we have $h''(z_i\theta)=h'(z_i)$. 
\end{claim}
\begin{subproof}
Pick some $x_i\in\vec x$ and assume $h(x_i)=c_j$. 
Then $x_i\theta=x_j$ by definition of $\theta$.
Further, by definition of $h''$, $h''(x_j)=c_j$.
Hence, $h''(x_i\theta)=h''(x_j)=c_j=h(x_i)$.

Now pick some $y_i\in\vec y$. 
Then $y_i\theta=y_i$, and by definition of $h''$, $h''(y_i)=h(y_i)$.
Hence, $h''(y_i\theta)=h''(y_i)=h(y_i)$.

Lastly, pick some $z_i\in\vec z$.
The case $h'(z_i)=c_j$ is analogous to the first case.
In case $h'(z_i)=d_j$, there must be $y_k\in\vec y$ with $h(y_k)=d_j$.
Then $z_i\theta=y_k$ by definition of $\theta$.
Hence, $h''(z_i\theta)=h''(y_k)=h(y_k)=d_j=h'(z_i)$.
\end{subproof} 
\begin{claim}\label{claim:gtgd_guarded_closure_props_2_5}
$h^*(\body^*)\subseteq I_0$.
\end{claim} 
\begin{subproof}
We have 
\begin{align*}
h^*(\body^*)&=h''\circ\delta\circ\rho^{-1}(\body^*)=h''\circ\delta(\body'')
=h''\Bigl(\bigl(\beta\cup\bigl(\beta'\setminus\{B_1',\dotsc,B_n'\}\bigr)\bigr)\theta^*\delta\Bigr)\\
&=h''\Bigl(\bigl(\beta\cup\bigl(\beta'\setminus\{B_1',\dotsc,B_n'\}\bigr)\bigr)\theta\Bigr)
=h''\Bigl(\beta\theta\cup\bigl(\beta'\setminus\{B_1',\dotsc,B_n'\}\bigr)\theta\Bigr)\\
&=h''(\beta\theta)\cup h''\bigl(\bigl(\beta'\setminus\{B_1',\dotsc,B_n'\}\bigr)\theta\bigr)
\stackrel{\cref{claim:gtgd_guarded_closure_props_2_4}}{=}h(\beta)\cup h'\bigl(\beta'\setminus\{B_1',\dotsc,B_n'\}\bigr)\\
&\subseteq I_0\cup (I_1\setminus h(\head))=I_0,
\end{align*} 
where the last step follows from the choice of $B_1',\dotsc,B_n'$; that is,
if $h'(B')\in I_1\setminus I_0\subseteq h(\head)$ for some $B'\in\body'$, then $B'$ occurs in $B_1',\dotsc,B_n'$.
\end{subproof} 
\begin{claim}\label{claim:gtgd_guarded_closure_props_2_6}
$h^*(\head_\exists^*)=F_v^2\setminus F_r^2$.
\end{claim} 
\begin{subproof}
We have 
\begin{align*}
h^*(\head_\exists^*)&=h''\circ\delta\circ\rho^{-1}(\head_\exists^*)
=h''\Bigl(\bigl(\head\cup(\head'\setminus\head_\restriction')\bigr)\theta^*\delta\Bigr)
=h''\Bigl(\bigl(\head\cup(\head'\setminus\head_\restriction')\bigr)\theta\Bigr)\\
&=h''(\head\theta)\cup h''((\head'\setminus\head_\restriction')\theta)
\stackrel{\cref{claim:gtgd_guarded_closure_props_2_4}}{=}h(\head)\cup h'(\head'\setminus\head_\restriction')
=(I_1\setminus I_0)\cup h'(\head'\setminus\head_\restriction').
\end{align*} 
Now for any $H'\in\head'$, we have
$H'\in(\head'\setminus\head_\restriction')$ \ifftext $\vars(H'\theta^*)\cap\vec y\neq\emptyset$ 
\ifftext $\consts(h'(H'))\cap\vec d\neq\emptyset$ \ifftext $h'(H')\notin F_r^2$,
and hence
\begin{equation*}
(I_1\setminus I_0)\cup h'(\head'\setminus\head_\restriction')=(I_1\setminus I_0)\cup \bigl((I_2\setminus I_1)\setminus F_r^2\bigr)
=F_v^2\setminus F_r^2.\qedhere
\end{equation*} 
\end{subproof} 
\begin{claim}\label{claim:gtgd_guarded_closure_props_2_7}
$h^*(\head^*)=F_r^2\setminus I_0$.
\end{claim} 
\begin{subproof}
We have 
\begin{equation*}
h^*(\head^*)=h''\circ\delta\circ\rho^{-1}(\head^*)=h''(\head_\restriction'\theta^*\delta)=h''(\head_\restriction'\theta)
\stackrel{\cref{claim:gtgd_guarded_closure_props_2_4}}{=}h'(\head_\restriction').
\end{equation*} 
Now for any $H'\in\head'$, we have
$H'\in\head_\restriction'$ \ifftext $\vars(H'\theta^*)\cap\vec y=\emptyset$ 
\ifftext $\consts(h'(H'))\cap\vec d=\emptyset$ \ifftext $h'(H')\in F_r^2$,
and hence
\begin{equation*}
h'(\head_\restriction')=(I_2\setminus I_1)\cap F_r^2=F_r^2\setminus I_0.\qedhere
\end{equation*} 
\end{subproof} 
Using \cref{claim:gtgd_guarded_closure_props_2_5,claim:gtgd_guarded_closure_props_2_6,claim:gtgd_guarded_closure_props_2_7},
we derive that firing $\tau_\exists^*$ based on $h^*$ in $I_0$ creates $F_v^2\setminus F_r^2$ and
firing~$\tau^*$ based on $h^*$ in $I_0$ creates $F_r^2\setminus I_0$.
\end{proof}

\begin{prop}[Completeness]\label{prop:gtgd_guarded_closure_evolve_complete}
$\gclo(\Sigma)$ is evolve-complete \wrt $\Sigma$.
\end{prop}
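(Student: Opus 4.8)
The plan is to verify the two clauses of \cref{defn:gtgd_abstr_completeness_props} in turn, using the structural facts from \cref{prop:gtgd_guarded_closure_resolvent_properties} and \cref{lem:gtgd_guarded_closure_props} as black boxes.

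Clause \cref{defn:gtgd_abstr_completeness_props_1} is immediate. By \cref{lem:gtgd_guarded_closure_props_1} we have $\vnf(\tau)\in\gcloex(\Sigma)$ for every $\tau\in\hnf(\Sigma)$; if $\tau$ is full then so is $\vnf(\tau)$, and since $\gclo(\Sigma)$ is obtained from $\gcloex(\Sigma)$ only by discarding the non-full rules, this full rule is retained. As $\vnf(\tau)$ is $\tau$ up to renaming, every full $\tau\in\hnf(\Sigma)$ lies in $\gclo(\Sigma)$ modulo renaming.

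For clause \cref{defn:gtgd_abstr_completeness_props_2} I would first establish, by induction on the number $n$ of full steps $\tau_1,\dots,\tau_n$ fired in the child $v$, the auxiliary claim that there is a single non-full rule $\sigma_n\in\gcloex(\Sigma)$ whose firing in $I_0$ (with a trigger into $\consts(I_0)$) reproduces the purely existential part $F_v^{n+1}\setminus F_r^{n+1}$ of $v$. For $n=0$ we may take $\sigma_0\coloneqq\vnf(\tau_r)$: since $\tau_r$ is non-full and in head normal form, nothing propagates to the root, so $F_r^1=I_0$ and $F_v^1=F_v^1\setminus F_r^1$. For the step, applying the claim to the prefix $\tau_r,\tau_1,\dots,\tau_{n-1}$ gives a non-full $\sigma_{n-1}\in\gcloex(\Sigma)$ producing $F_v^n\setminus F_r^n$; I then assemble the two-step chase in which $\sigma_{n-1}$ fires in the single-node instance $F_r^n$ and $\tau_n$ fires in the resulting child, and hand it to \cref{lem:gtgd_guarded_closure_props_2}, whose non-full output $\tau_\exists^*$ serves as $\sigma_n$. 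The clause \cref{defn:gtgd_abstr_completeness_props_2} then drops out of the very same application: \cref{lem:gtgd_guarded_closure_props_2} also returns, whenever the root gains a fact, a full rule $\tau^*\in\gcloex(\Sigma)$ — necessarily in $\gclo(\Sigma)$, being full — whose single firing in $F_r^n$ yields $F_r^{n+1}\setminus F_r^n$ and hence $F_r^{n+1}$ (if the root gains nothing, the empty chase suffices). Completeness of $\gclo(\Sigma)$ \wrt $\Sigma$ is then exactly the conjunction of the two clauses, and \cref{cor:gtgd_abstr_atomic_rewriting} upgrades it to the atomic-rewriting statement.

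The delicate point — and the main obstacle — is the legitimacy of the two-step chase handed to \cref{lem:gtgd_guarded_closure_props_2}. In the original chase $\tau_n$ fires in $v$ off the full bag $F_v^n$, which may contain shared facts of $F_v^n\cap F_r^n$, i.e.\ facts over the exported constants that earlier full steps placed simultaneously into the child and the root. The child reconstructed by $\sigma_{n-1}$ initially carries only the existential facts $F_v^n\setminus F_r^n$, so I must argue that the shared facts, which are present in $F_r^n$, reappear in the new child so that $\tau_n$'s trigger is again available. This is precisely where guardedly-completeness of the child bag and the connectedness condition \cref{defn:tree_decomposition_3} enter: the constants of a shared fact are exported and so already occur in the child's guard, whence guardedly-completeness forces the fact into the child. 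Checking that the reconstruction delivers exactly $F_v^n$ — no less, so that $\tau_n$ fires, and no more, so that the rule extracted by \cref{lem:gtgd_guarded_closure_props_2} matches — together with confirming that the mgu and existential-variable-check hypotheses of that lemma survive, is the technical heart of the argument; these verifications are, however, essentially the computations already carried out inside the proof of \cref{lem:gtgd_guarded_closure_props_2}, so the step reduces to exhibiting the reconstructed instance and invoking that lemma.
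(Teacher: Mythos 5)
Your proof follows the paper's argument essentially verbatim: clause \cref{defn:gtgd_abstr_completeness_props_1} via \cref{lem:gtgd_guarded_closure_props_1}, and clause \cref{defn:gtgd_abstr_completeness_props_2} by induction on $n$, maintaining a non-full rule that regenerates the child's existential facts $F_v^{n}\setminus F_r^{n}$ and feeding the reassembled two-step chase $F_r^n,I_1'',I_2''$ into \cref{lem:gtgd_guarded_closure_props_2} to extract both the next non-full rule and the full rule for the root; your explicit appeal to guardedly-completeness for the shared facts $F_v^n\cap F_r^n$ makes precise a step the paper leaves implicit. The one slip is in the statement of your auxiliary invariant: the non-full $\sigma_n$ returned by \cref{lem:gtgd_guarded_closure_props_2} fires in $F_r^n$ rather than in $I_0$ (its body retains unresolved atoms of $\tau_n$ that may map to facts the earlier full steps pushed into the root), which is how the paper phrases the invariant and how you in fact use it, so the argument is unaffected.
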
 
\begin{proof}
To show \cref{defn:gtgd_abstr_completeness_props_1}, we use \cref{lem:gtgd_guarded_closure_props_1}
to see that $\vnf(\tau)\in\gcloex(\Sigma)$ for every full $\tau\in\hnf(\Sigma)$, 
and hence, $\vnf(\tau)\in\gclo(\Sigma)$.

For \cref{defn:gtgd_abstr_completeness_props_induct}, 
we additionally prove that there is a non-full $\tau\in\gcloex(\Sigma)$
such that firing $\tau_\exists$ in $F_r^n$ 
creates $F_v^{n+1}\setminus F_r^{n+1}$.
We prove the claim by induction on $n$. 
For $n=0$, we can simply set $\tau_\exists\coloneqq\tau_r$ using that $\tau_r$ is in head normal form.

If $n>0$, by the inductive hypothesis, we get a chase $I_0',\dotsc,I_m'$ from $I_0$ and $\gclo(\Sigma)$ of $F_r^n$
and a non-full $\tau_\exists\in\gcloex(\Sigma)$ such that firing $\tau_\exists$ in $F_r^{n-1}$ creates $F_v^n\setminus F_r^n$.
Let $F_r^n,I_1'',I_2''$ be the chase obtained by firing $\tau_\exists$ in $F_r^n$ and $\tau_n$ in $I_1''$.
Then we can apply \cref{lem:gtgd_guarded_closure_props_2} on $F_r^n,I_1'',I_2''$
to obtain a non-full $\tau_\exists'\in\gcloex(\Sigma)$ and,
if $F_r^{n+1}\setminus F_r^n\neq\emptyset$,
a full $\tau'\in\gclo(\Sigma)$ such that firing $\tau_\exists'$ in $F_r^n$ creates
$F_v^{n+1}\setminus F_r^{n+1}$ and firing $\tau'$ in $F_r^n$ creates $F_r^{n+1}\setminus F_r^n$.
Hence, we can fire $\tau'$ in $I_m'$ to create $F_r^{n+1}$.
\end{proof} 

\subsection{Complexity Analysis}
In this section, we discuss the computational aspects and complexity of the algorithm.
We first derive some useful properties that we can exploit in our computations.

\begin{lem}\label{lem:gtgd_guarded_permutation_independent}
\evolve inferences are permutation independent of $S,S'$.
More precisely, for any $S=H_1,\dotsc,H_n$, $S'=B_1',\dotsc,B_n'$, and permutation
$\sigma\in S_n$, 
where $S_n$ is the symmetric group with n elements,
the rules derived using \evolve under $S,S'$ 
and $H_{\sigma(1)},\dotsc,H_{\sigma(n)}$, $B_{\sigma(1)}',\dotsc,B_{\sigma(n)}'$ coincide.
\end{lem}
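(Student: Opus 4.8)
The plan is to observe that the \evolve inference, viewed as a function of its input data, depends on the sequences $S=(H_1,\dots,H_n)$ and $S'=(B_1',\dots,B_n')$ only through permutation-invariant quantities: the \emph{set} of pairs $\{(H_i,B_i')\mid 1\le i\le n\}$ and the \emph{set} $\{B_1',\dots,B_n'\}$. I would verify this ingredient by ingredient, since each of the mgu, the existential variable check, and the constructed body and head depends on $S,S'$ only through this invariant data.

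First I would treat the unifier $\theta$. By definition, a substitution is a unifier of $S$ and $S'$ that is the identity on $\vec y$ exactly when $H_i\theta=B_i'\theta$ for all $1\le i\le n$. Applying the permutation $\sigma$ to both sequences replaces this system by the equations $H_{\sigma(i)}\theta=B_{\sigma(i)}'\theta$ for $1\le i\le n$, which is the \emph{same} set of equations, merely reindexed. Hence the set of unifiers of $S,S'$ coincides with that of the permuted sequences, and in particular one may take literally the same mgu $\theta$ in both inferences. The residual freedom in choosing an mgu is only up to renaming by \cref{lem:mgu_uniqueness}, and is in any case absorbed by the final $\vnf$ normalisation; but since the two unification problems are identical we need not invoke this.

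Next I would check the existential variable check and the emitted rule. The evc asserts that every $B'\in\beta'$ with $\vars(B'\theta)\cap\vec y\neq\emptyset$ occurs in $S'$, together with $\vec x\theta\cap\vec y=\emptyset$. Both conditions refer to $\theta$, $\beta'$, $\vec x$, $\vec y$ (all unchanged) and to $S'$ only through the predicate ``occurs in $S'$'', that is, through the underlying set of atoms of $S'$; since $\{B_{\sigma(1)}',\dots,B_{\sigma(n)}'\}=\{B_1',\dots,B_n'\}$, the evc holds for $S,S'$ iff it holds for the permuted sequences. Likewise $\beta''=\bigl(\beta\cup(\beta'\setminus\{B_1',\dots,B_n'\})\bigr)\theta$ and $\head''=(\head\cup\head')\theta$ reference the atoms $B_i'$ only via the set $\{B_1',\dots,B_n'\}$ and otherwise depend only on the shared data $\beta,\beta',\head,\head',\theta$. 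As these are all permutation-invariant, $\beta''$ and $\head''$ are identical in both inferences, so the emitted rule $\vnf(\hnf(\beta''\rightarrow\exists\vec y\,\head''))$ coincides, proving the claim. The only point requiring care is the first step—recording that reordering the unification constraints leaves the unifier set (and hence the mgu, modulo the renaming absorbed by $\vnf$) unchanged; the remaining steps are direct appeals to the set-theoretic form of the data.
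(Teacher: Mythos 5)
Your proposal is correct and takes essentially the same route as the paper, which disposes of the lemma in one line by observing that the evc is order-insensitive and that the mgus of permuted sequences coincide; you simply spell out in full the permutation-invariance of the unification constraints, the evc, and the constructed $\beta''$ and $\head''$. No gaps.
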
 
\begin{proof}
The evc does not consider the order of the sequences and the mgus of permuted sequences coincide.
\end{proof} 

\begin{lem}\label{lem:gtgd_guarded_evc_impl}
Assume we perform an \evolve inference on $\tau,\tau'\in\gcloex(\Sigma)$.
Then $\vars(B'\theta)\cap\vec y\neq \emptyset$ for any $B'$ in $S'$.
\end{lem} 
\begin{proof}
By \cref{prop:gtgd_guarded_closure_resolvent_properties}, $\tau$ is in $\hnf$. 
Using that $\theta$ is the identity on $\vec y$, 
we get that any atom in $\head\theta$ contains some $y_i$.
As any $B'$ in $S'$ unifies 
with some atom in $\head$ using $\theta$, 
this implies $\vars(B'\theta)\cap\vec y\neq\emptyset$.
\end{proof} 

\begin{cor}\label{cor:gtgd_guarded_evc_iff}
Assume we perform an \evolve inference on $\tau,\tau'\in\gcloex(\Sigma)$.
Then $B'$ occurs in $S'$ \ifftext $\vars(B'\theta)\cap\vec y\neq\emptyset$.
\end{cor}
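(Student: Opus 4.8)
The plan is to recognise that this biconditional is nothing more than the juxtaposition of two facts already established for any \evolve inference: \cref{lem:gtgd_guarded_evc_impl} supplies the forward implication, and the existential variable check built into the definition of \evolve supplies the backward one. Because we are performing an \evolve inference by hypothesis, the evc is in force, so both halves are available and the proof reduces to a two-line assembly.

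First I would treat the direction ``$B'$ occurs in $S'$ $\implies$ $\vars(B'\theta)\cap\vec y\neq\emptyset$''. This is precisely the assertion of \cref{lem:gtgd_guarded_evc_impl}, so I would simply cite it. For orientation I would recall why that lemma holds: by \cref{prop:gtgd_guarded_closure_resolvent_properties} the rule $\tau$ is in head normal form, so every head atom, and in particular each $H_i\theta$ (as $\theta$ fixes $\vec y$), contains some existential variable $y_k$; since each $B'$ in $S'$ is unified with such an $H_i$ via $\theta$, we obtain $\vars(B'\theta)\cap\vec y\neq\emptyset$.

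Next I would treat the converse ``$\vars(B'\theta)\cap\vec y\neq\emptyset$ $\implies$ $B'$ occurs in $S'$''. This is exactly the first clause of the evc, which states that any $B'\in\beta'$ satisfying $\vars(B'\theta)\cap\vec y\neq\emptyset$ must occur in $S'$. Since the \evolve inference is assumed to be carried out, the evc holds, and the implication follows immediately. Combining the two directions yields the biconditional.

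I expect no genuine obstacle here, since both implications are already proved elsewhere; the only points demanding care are that $B'$ is implicitly quantified over the atoms of $\beta'$ (matching the scope of both \cref{lem:gtgd_guarded_evc_impl} and the evc) and that the evc is genuinely available precisely because an \evolve inference is assumed to take place.
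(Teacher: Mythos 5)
Your proof is correct and matches the paper's argument exactly: the forward direction is \cref{lem:gtgd_guarded_evc_impl} and the reverse direction is the first clause of the evc, which holds because an \evolve inference is assumed to be performed. The extra recollection of why \cref{lem:gtgd_guarded_evc_impl} holds is accurate but not needed.
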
 
\begin{proof}
Left to right is \cref{lem:gtgd_guarded_evc_impl}.
Right to left follows from the evc.
\end{proof} 

Although we know that any guard $G'$ of $\tau'$ must occur in $S'$ by \cref{lem:gtgd_guarded_guard_included},
it is not yet clear how one determines the remaining atoms of $S,S'$ needed for an \evolve inference.
The next proposition (and its proof) will be a useful helper in this regard. 

\begin{prop}\label{prop:gtgd_guarded_unique_sequence}
Assume we want to use \evolve on $\tau=\beta(\vec x)\rightarrow\exists\vec y\,\head$ and
${\tau'=\beta'\rightarrow\head'}$ in $\gcloex(\Sigma)$ while unifying 
the guard $G'\in\body'$ with some $H\in\head$.
Then there exists at most one sequence $S'$ (up to permutation)
such that we can satisfy all restrictions imposed on \evolve.
\end{prop}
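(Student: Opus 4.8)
The plan is to exploit the guardedness of $\tau'$ together with the characterisation of $S'$ furnished by \cref{cor:gtgd_guarded_evc_iff}. Since $\tau'\in\gcloex(\Sigma)$ is a \gtgd (\cref{prop:gtgd_guarded_closure_resolvent_properties}) with guard $G'$, every variable of $\body'$ occurs in $G'$; that is, $\vars(\body')=\vars(G')$. Consequently, once the pairing of $G'$ with the chosen head atom $H$ is fixed, every variable $z\in\vec z$ appears at some position of $G'$, and I would argue that whether $z\theta$ lies in $\vec y$ is forced by this single pairing alone, irrespective of which further atoms are later added to $S$ and $S'$.

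First I would invoke \cref{cor:gtgd_guarded_evc_iff}: in any valid \evolve inference an atom $B'\in\body'$ belongs to $S'$ \ifftext $\vars(B'\theta)\cap\vec y\neq\emptyset$. Hence it suffices to show that, for each $z\in\vec z$, the condition $z\theta\in\vec y$ is decided purely by the unification of $G'$ with $H$. To see this, note that $\theta$ unifies $G'$ and $H$, so comparing positions yields $z\theta=s\theta$, where $s\in\vec x\cup\vec y$ is the term occupying the corresponding position of $H$ (here I use that $H$ is function-free and built from $\vec x\cup\vec y$). If $s\in\vec y$, then $z\theta=s\theta=s\in\vec y$ because $\theta$ is the identity on $\vec y$; if $s\in\vec x$, then $z\theta=s\theta\notin\vec y$ by the evc requirement $\vec x\theta\cap\vec y=\emptyset$. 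Thus $z\theta\in\vec y$ holds exactly when the matching position of $H$ carries an existential variable, a condition visibly depending only on the pair $(G',H)$.

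Combining these observations, for every $B'\in\body'$ the property $\vars(B'\theta)\cap\vec y\neq\emptyset$ — equivalently, membership of $B'$ in $S'$ — is determined by the fixed pairing of $G'$ with $H$, since every variable of $B'$ occurs in $G'$ and therefore already has its $\vec y$-membership pinned down. This shows that the set of atoms constituting $S'$ is uniquely determined, and by \cref{lem:gtgd_guarded_permutation_independent} its order is immaterial, so $S'$ is unique up to permutation, as claimed.

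The step I expect to demand the most care is the second paragraph: although $\theta$ is the mgu of the whole paired sequence $S,S'$ rather than of $G',H$ in isolation, I must argue that at least the membership $z\theta\in\vec y$ is forced by the guard pairing. The clean way to treat a variable $z$ occurring at several positions of $G'$ is to observe that consistency of $\theta$ forces the corresponding terms of $H$ to agree under $\theta$, so they cannot be simultaneously an existential and a frontier variable (the mixed case would contradict $\vec x\theta\cap\vec y=\emptyset$); this is precisely where the evc together with the identity-on-$\vec y$ property (cf.\ \cref{lem:gtgd_delta_is_id_y}) does the work, ensuring $z\theta\in\vec y$ is unambiguously read off from $(G',H)$.
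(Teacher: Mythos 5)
Your proof is correct, and it follows the same overall decomposition as the paper's: reduce via \cref{lem:gtgd_guarded_permutation_independent} and \cref{cor:gtgd_guarded_evc_iff} to showing that the set $\{z\in\vars(\body')\mid z\theta\in\vec y\}$ is determined by the pair $(G',H)$ alone, and then invoke guardedness ($\vars(\body')=\vars(G')$) to conclude that membership of each $B'$ in $S'$ is pinned down. Where you differ is in how that key claim is established. The paper fixes an mgu $\theta^*$ of $G'$ and $H$ that is the identity on $\vec y$, factors any valid $\theta$ as $\theta^*\delta$ with $\delta$ the identity on $\vec y$ (\cref{lem:gtgd_delta_is_id_y}), and transfers the property $v\theta^*\in\vec y \Leftrightarrow v\theta\in\vec y$ along $\delta$ using $\vars(G'\theta^*)=\vars(H\theta^*)\subseteq\vec x\theta^*\cup\vec y$. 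You instead compare $G'$ and $H$ position by position under $G'\theta=H\theta$ and read off $z\theta\in\vec y$ directly from whether the matching position of $H$ carries an existential variable, correctly disposing of the case where $z$ occupies several positions by noting that a mixed existential/frontier assignment would violate $\vec x\theta\cap\vec y=\emptyset$. Your route is slightly more elementary and avoids the factorisation lemma entirely; the paper's route has the advantage that the concrete witness $\theta^*$ it constructs is reused verbatim in the complexity analysis (\cref{lem:gtgd_guarded_closure_evolve_complexity}), where one first computes the mgu of $G'$ and $H$ and then selects the atoms $B'$ with $\vars(B'\theta^*)\cap\vec y\neq\emptyset$. Your positional criterion yields the same algorithmic content, so nothing is lost either way.
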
 
\begin{proof}
Assume some $S,S',\theta$ satisfy all restrictions for an \evolve inference 
and $\theta$ unifies $G'$ with $H$.
Due to \cref{lem:gtgd_guarded_permutation_independent}, we can assume that
$S=H,H_2,\dotsc,H_n$ and $S'=G',B_2',\dotsc,B_n'$.
Since $H\theta=G'\theta$ and $\theta$ is the identity on $\vec y$, 
there must be an mgu of~$G'$ and~$H$ that is the identity on $\vec y$ (by treating $\vec y$ as constants).
Fix such an mgu, call it~$\theta^*$.
Then there is $\delta$ such that $\theta=\theta^*\delta$,
and by \cref{lem:gtgd_delta_is_id_y}, $\delta$ must be the identity on~$\vec y$.
\begin{claim}\label{claim:gtgd_guarded_unique_sequence_1}
For any $v\in\vars(\body')$, we have $v\theta\in\vec y$ \ifftext $v\theta^*\in\vec y$.
\end{claim} 
\begin{subproof}
Since $G'$ guards $\tau'$, we have $v\theta^*\in\vars(G'\theta^*)=\vars(H\theta^*)\subseteq\vec x\theta^*\cup \vec y$.
Now if $v\theta^*\in\vec y$, then also $v\theta^*\delta=v\theta\in\vec y$ as $\delta$ is the identity on $\vec y$.
If $v\theta^*\in\vec x\theta'$, then $v\theta^*\delta\in\vec x\theta^*\delta=\vec x\theta$,
and as $\vec x\theta\cap\vec y=\emptyset$, this implies $v\theta^*\delta=v\theta\notin\vec y$.
\end{subproof} 
By \cref{cor:gtgd_guarded_evc_iff}, $B'$ occurs in $S'$
\ifftext $\vars(B'\theta)\cap\vec y\neq\emptyset$.
Thus, by \cref{claim:gtgd_guarded_unique_sequence_1}, we can exactly obtain $S'$ (up to permutation) by selecting all atoms $B'\in\body'$ with
$\vars(B'\theta^*)\cap\vec y\neq\emptyset$.
Now $\theta^*$ was chosen arbitrarily, and hence, the choice of $S'$ is unique once we fix $G'$ and~$H$.
\end{proof} 

Let us continue with the complexity of our saturation algorithm.
In the following, let~$n$ be the number of predicate symbols in $\Sigma$, 
$a$ be the maximum arity of any predicate in $\Sigma$,
$w_b\coloneqq\bwidth(\hnf(\Sigma))$, and $w_h\coloneqq\hwidth(\hnf(\Sigma))$.

\begin{lem}\label{lem:tgd_number_full_vnf}
Any \tgd $\tau$ in variable normal form contains at most $n(\bwidth(\tau)^a+\hwidth(\tau)^a)$ atoms.
\end{lem}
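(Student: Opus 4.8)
The plan is to prove this by a direct counting argument, bounding separately the number of atoms in the body and in the head of $\tau$, using the fact that a \tgd in variable normal form is built entirely from variables drawn from a small, fixed pool.

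First I would recall two structural facts established earlier: by the signature convention in \cref{sec:rel_model}, every atom occurring in a \tgd is function-free and constant-free, so each argument of each atom is a variable; and by the convention that we treat the body $\body$ and the head $\head$ as \emph{sets} of atoms, no atom is repeated. Consequently, to bound the number of atoms it suffices to bound the number of \emph{distinct} atoms that can possibly occur.

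Next I would count the distinct body atoms. Since $\tau$ is in variable normal form, every argument of every body atom is one of the $\bwidth(\tau)=|\vars(\body)|$ variables $x_1,\dotsc,x_{\bwidth(\tau)}$. For a fixed relation symbol $R$ of arity $a_R\leq a$, an atom $R(\vec t)$ is determined by its tuple of $a_R$ arguments, each chosen from these $\bwidth(\tau)$ variables, giving at most $\bwidth(\tau)^{a_R}\leq\bwidth(\tau)^{a}$ possibilities. Ranging over the $n$ predicate symbols yields at most $n\cdot\bwidth(\tau)^{a}$ distinct body atoms. By the identical argument applied to $\head$, whose atoms use only the $\hwidth(\tau)=|\vars(\head)|$ variables available in the head, there are at most $n\cdot\hwidth(\tau)^{a}$ distinct head atoms. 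Summing the two bounds gives a total of at most $n\bigl(\bwidth(\tau)^{a}+\hwidth(\tau)^{a}\bigr)$ atoms, as claimed.

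There is no genuine obstacle here; the lemma is a routine combinatorial estimate. The only points requiring care are bookkeeping rather than mathematical: one must invoke that atoms are function- and constant-free (so arguments are purely variables, never terms or constants), that body and head are sets (so repetitions are not double-counted), and that the existential variables $\vec y$ in the head are already included in the count $\hwidth(\tau)=|\vars(\head)|$, so that the pool of head variables is exactly of size $\hwidth(\tau)$.
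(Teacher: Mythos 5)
Your proof is correct and is essentially the paper's own argument: for each of the $n$ predicate symbols there are at most $\bwidth(\tau)^a$ possible body atoms and $\hwidth(\tau)^a$ possible head atoms over the respective variable pools, and summing gives the bound. The extra bookkeeping you flag (atoms being function- and constant-free, body and head treated as sets, existentials counted in $\hwidth(\tau)$) is exactly what the paper leaves implicit.
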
 
\begin{proof}
For any fixed predicate symbol, we have at most $\bwidth(\tau)^a$ 
and $\hwidth(\tau)^a$ different instantiations in the body and head of $\tau$, respectively.
\end{proof}

\begin{cor}\label{cor:gtgd_guarded_closure_size}
$|\gcloex(\Sigma)|\leq 2^{n(w_b^a+w_h^a)}$.
\end{cor}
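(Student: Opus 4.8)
The plan is to bound $|\gcloex(\Sigma)|$ by a direct counting argument over the syntactically distinct rules it can possibly contain. The crucial input is \cref{prop:gtgd_guarded_closure_resolvent_properties}, which guarantees that every $\sigma\in\gcloex(\Sigma)$ is a \gtgd in variable normal form with $\bwidth(\sigma)\leq w_b$ and $\hwidth(\sigma)\leq w_h$. Since a \tgd in variable normal form is determined by its set of body atoms together with its set of head atoms (the variable names being fixed canonically by the normal form), it suffices to count how many such sets are available and multiply.

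First I would count the possible body atoms. Any $\sigma\in\gcloex(\Sigma)$ has a body built only from the variables $x_1,\dotsc,x_{w_b}$, since $\bwidth(\sigma)\leq w_b$, together with the at most $n$ predicate symbols of arity at most $a$. Arguing as in \cref{lem:tgd_number_full_vnf}, each predicate symbol admits at most $w_b^a$ instantiations over these variables, so there are at most $n\,w_b^a$ distinct possible body atoms. The body of $\sigma$, being a subset of this collection, is therefore one of at most $2^{n w_b^a}$ possibilities. An identical argument using the at most $w_h$ head variables shows that there are at most $n\,w_h^a$ distinct possible head atoms, hence at most $2^{n w_h^a}$ possible heads.

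Finally I would combine the two counts. As each rule in $\gcloex(\Sigma)$ is determined by the pair (body set, head set), the total number of rules is at most $2^{n w_b^a}\cdot 2^{n w_h^a}=2^{n(w_b^a+w_h^a)}$, giving the claimed bound. There is no genuine obstacle beyond confirming that the inevitable overcounting is harmless: we are tallying all pairs of atom-sets, which is an over-approximation of the actual variable-normal-form rules (not every such pair is realised, e.g.\ the guardedness and head normal form constraints restrict the admissible combinations), and this can only inflate the count, so the inequality is preserved.
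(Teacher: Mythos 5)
Your proof is correct and follows essentially the same route as the paper: invoke \cref{prop:gtgd_guarded_closure_resolvent_properties} to restrict to variable-normal-form \gtgds of bounded body and head width, then count rules as subsets of the at most $n(w_b^a+w_h^a)$ possible atoms via \cref{lem:tgd_number_full_vnf}. The paper merely compresses your separate body/head counts into a single subset count, and your remark that overcounting only helps is the same implicit observation.
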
 
\begin{proof}
By \cref{prop:gtgd_guarded_closure_resolvent_properties}, $\gcloex(\Sigma)$ 
is a set of \gtgds in variable normal form with $\bwidth(\tau)\leq w_b$ and
$\hwidth(\tau)\leq w_h$.
A \tgd consists of a subset of all possible body and head atoms.
Now the claim follows by \cref{lem:tgd_number_full_vnf}.
\end{proof} 

\begin{cor}\label{cor:gtgd_guarded_closure_atomic_rewriting}
The guarded saturation $\gclo(\Sigma)$ is an atomic rewriting of~$\Sigma$.
\end{cor}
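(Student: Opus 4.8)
The plan is to observe that this corollary is a pure assembly step: \cref{cor:gtgd_abstr_atomic_rewriting} already reduces the goal to checking three properties of $\gclo(\Sigma)$ — finiteness, soundness, and evolve-completeness — each of which has been verified separately in the preceding material. So the proof should do nothing more than cite those results and invoke the abstract criterion.

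Concretely, first I would establish finiteness. Since $\gclo(\Sigma)$ is obtained from $\gcloex(\Sigma)$ by deleting all non-full rules, we have $\gclo(\Sigma)\subseteq\gcloex(\Sigma)$, and \cref{cor:gtgd_guarded_closure_size} bounds the latter by $2^{n(w_b^a+w_h^a)}$; hence $\gclo(\Sigma)$ is finite. Second, soundness \wrt $\Sigma$ is exactly \cref{cor:gtgd_guarded_closure_soundness}. Third, evolve-completeness \wrt $\Sigma$ is exactly \cref{prop:gtgd_guarded_closure_evolve_complete}. With all three hypotheses of \cref{cor:gtgd_abstr_atomic_rewriting} in hand, that corollary immediately yields that $\gclo(\Sigma)$ is an atomic rewriting of $\Sigma$.

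There is essentially no obstacle here, because the genuine work was front-loaded into the three cited statements: the finiteness bound rests on the structural invariants of \cref{prop:gtgd_guarded_closure_resolvent_properties} (bounded body/head width and variable normal form), soundness rests on the model-theoretic argument of \cref{lem:gtgd_guarded_closure_ex_soundness}, and evolve-completeness rests on the inductive lifting argument powered by \cref{lem:gtgd_guarded_closure_props} and the one-pass property. The only point worth stating explicitly in the write-up is the containment $\gclo(\Sigma)\subseteq\gcloex(\Sigma)$, which is what transfers the finiteness bound from $\gcloex(\Sigma)$ to $\gclo(\Sigma)$; everything else is a direct citation. A clean two-sentence proof of the form ``$\gclo(\Sigma)$ is finite by \cref{cor:gtgd_guarded_closure_size} (since $\gclo(\Sigma)\subseteq\gcloex(\Sigma)$), sound by \cref{cor:gtgd_guarded_closure_soundness}, and evolve-complete by \cref{prop:gtgd_guarded_closure_evolve_complete}; hence it is an atomic rewriting by \cref{cor:gtgd_abstr_atomic_rewriting}'' is all that is required.
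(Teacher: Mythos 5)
Your proposal is correct and matches the paper's own proof, which simply cites \cref{cor:gtgd_abstr_atomic_rewriting,cor:gtgd_guarded_closure_soundness,prop:gtgd_guarded_closure_evolve_complete}. You are slightly more careful than the paper in making finiteness explicit via \cref{cor:gtgd_guarded_closure_size} and the containment $\gclo(\Sigma)\subseteq\gcloex(\Sigma)$, but this is the same argument.
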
 
\begin{proof}
By \cref{cor:gtgd_abstr_atomic_rewriting,cor:gtgd_guarded_closure_soundness,prop:gtgd_guarded_closure_evolve_complete}.
\end{proof}

\begin{lem}\label{lem:gtgd_guarded_closure_evolve_complexity}
Let $\tau,\tau'\in\gcloex(\Sigma)$.
Then all \evolve-resolvents of ${\tau=\body\rightarrow\head}$, ${\tau'=\body'\rightarrow\head'}$
can be obtained in $\ptime$ for bounded $w_b,n,a$,
$\exptime$ for bounded $a$, and $\texptime$ otherwise.
\end{lem}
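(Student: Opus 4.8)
The plan is to bound the number of candidate \evolve-inferences between the fixed pair $\tau=\body\rightarrow\exists\vec y\,\head$ and $\tau'=\body'\rightarrow\head'$, and then to bound the cost of processing each candidate. Since $\tau,\tau'\in\gcloex(\Sigma)$, \cref{prop:gtgd_guarded_closure_resolvent_properties} tells us that $\tau$ is in head normal form and that $\tau'$ is a \gtgd with a designated guard $G'$; hence \cref{lem:gtgd_guarded_guard_included} applies and $G'$ must occur in the unified sequence $S'$ of every valid inference. This is the structural fact that keeps the count small.

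First I would enumerate candidates. By \cref{lem:gtgd_guarded_permutation_independent} the order of $S,S'$ is irrelevant, so an inference is determined by the correspondence between the chosen head atoms $S\subseteq\head$ and body atoms $S'\subseteq\body'$. I fix the head atom $H\in\head$ that the guard $G'$ is paired with; there are at most $|\head|$ such choices. By \cref{prop:gtgd_guarded_unique_sequence}, once $G'$ and $H$ are fixed the sequence $S'$ is determined up to permutation. It then remains to choose, for each of the at most $|S'|-1$ remaining atoms of $S'$, which head atom it is paired with, giving at most $|\head|^{|S'|-1}$ further choices. Hence the number of candidate inferences is at most $|\head|^{|S'|}\leq|\head|^{|\body'|}$.

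Next I would bound the atom counts and the per-candidate cost. By \cref{prop:gtgd_guarded_closure_resolvent_properties} and \cref{lem:tgd_number_full_vnf} we have $|\head|\leq n w_h^a$ and $|\body'|\leq n w_b^a$, so the candidate count is at most $(n w_h^a)^{n w_b^a}$. For each candidate, computing the mgu $\theta$ that is the identity on $\vec y$ is linear by \cref{lem:mgu_complexity} (treating $\vec y$ as constants); checking the existential variable check amounts to computing $\vars(B'\theta)$ for each $B'\in\body'$ and is polynomial; and forming $\body'',\head''$ and applying $\vnf(\hnf(\cdot))$ is linear by \cref{lem:tgd_vnf_complexity} and \cref{def:tgd_hnf} (noting $\hnf$ may split the output into two rules). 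Thus each candidate is processed in time $\poly(s)$ where $s\coloneqq\size(\tau)+\size(\tau')$, and the total running time is at most $(n w_h^a)^{n w_b^a}\cdot\poly(s)$.

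Finally I would read off the three regimes. If $w_b,n,a$ are all bounded then $|\body'|\leq n w_b^a=O(1)$, so $|\head|^{|\body'|}$ is polynomial in $|\head|\leq n w_h^a$, which is itself polynomial, giving $\ptime$. If only $a$ is bounded then $w_b^a,w_h^a$ are polynomial in $w_b,w_h$, so $|\head|$ and $|\body'|$ are polynomial and $|\head|^{|\body'|}=2^{\poly}$, giving $\exptime$. In general $n w_b^a$ is exponential in $a$, so $(n w_h^a)^{n w_b^a}$ is doubly exponential, giving $\texptime$. The one point requiring care is the enumeration step: one must argue it is exhaustive, i.e.\ that every valid pair $S,S'$ is generated, while keeping the exponent equal to $|\body'|$ rather than $|\head|$, since it is precisely the boundedness of $|\body'|$ in the first regime that yields a polynomial rather than merely exponential bound. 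Using \cref{prop:gtgd_guarded_unique_sequence} to pin down $S'$ from $(G',H)$ makes this accounting transparent.
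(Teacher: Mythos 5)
Your proposal is correct and follows essentially the same route as the paper's proof: anchor the enumeration at the guard $G'$ (which must occur in $S'$ by \cref{lem:gtgd_guarded_guard_included}), use \cref{prop:gtgd_guarded_unique_sequence} to pin down $S'$ once the partner $H$ of $G'$ is fixed, enumerate the at most $|\head|^{|S'|}\leq (nw_h^a)^{nw_b^a}$ counterpart assignments, and process each candidate in linear time via \cref{lem:mgu_complexity}, \cref{lem:tgd_vnf_complexity}, and \cref{def:tgd_hnf}. Your explicit justification that the exponent is $|\body'|$ rather than $|\head|$ (which is what makes the bounded-$w_b,n,a$ case polynomial) is the same accounting the paper performs with its bound ${s'}^s$.
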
 
\begin{proof}
Let $G'\in\body'$ be a guard.
Using our observations in \cref{prop:gtgd_guarded_unique_sequence}, we first check for any $H\in\head$,
whether there is an mgu $\theta^*$ of $G'$ and $H$
that is the identity on~$\vec y$ (by treating $\vec y$ as constants).
We then select all atoms $B_2',\dotsc,B_m'\in\body'\setminus\{G'\}$ with ${\vars(B_i'\theta^*)\cap\vec y\neq\emptyset}$.
These first steps can be done in linear time using \cref{lem:mgu_complexity}.
Also note that the exact choice of $G'$ and the order of ${S'\coloneqq G',B_2',\dotsc,B_m'}$ are irrelevant by \cref{lem:gtgd_guarded_guard_included,lem:gtgd_guarded_permutation_independent}.

Now for any $B_i'\neq G'$, we have at most $|\head|$ possible counterparts in $\head$.
We check for any possible combination if we can extend $\theta^*$ to an mgu $\theta$ of the sequences that is 
the identity on $\vec y$ (again, by treating $\vec y$ as constants) and $\vec x\theta\cap\vec y=\emptyset$.
Finally, we resolve the rules and normalise the result to obtain a new rule.
Using \cref{lem:mgu_complexity,lem:tgd_vnf_complexity,def:tgd_hnf}, these steps can again be done in linear time.
Since $\tau$ contains at most $s\coloneqq nw_b^a$ body atoms and $\tau'$ at most $s'\coloneqq nw_h^a$ head atoms, we have to repeat above steps at most ${s'}^s$ times.
Hence, we obtain the claimed complexity bounds.
\end{proof} 

\begin{lem}\label{lem:gtgd_guarded_closure_complexity}
The guarded saturation algorithm terminates in 
$\ptime$ for bounded $w,n,a$,
$\exptime$ for bounded $a$, and $\texptime$ otherwise.
\end{lem}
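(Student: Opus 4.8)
The plan is to combine the size bound on the closure from \cref{cor:gtgd_guarded_closure_size} with the per-inference cost from \cref{lem:gtgd_guarded_closure_evolve_complexity}, mirroring the analysis of the simple saturation in \cref{lem:gtgd_simple_closure_complexity}. Write $N\coloneqq|\gcloex(\Sigma)|$ and recall from \cref{cor:gtgd_guarded_closure_size} that $N\leq 2^{n(w_b^a+w_h^a)}\leq 2^{2nw^a}$, where $w_b\coloneqq\bwidth(\hnf(\Sigma))$, $w_h\coloneqq\hwidth(\hnf(\Sigma))$, and $w\coloneqq\width(\hnf(\Sigma))$. By \cref{prop:gtgd_guarded_closure_resolvent_properties} every rule in $\gcloex(\Sigma)$ is in variable and head normal form with body and head width bounded by $w_b,w_h$, so by \cref{lem:tgd_number_full_vnf} each such rule has at most $n(w_b^a+w_h^a)$ atoms; in particular the canonicity of the final $\vnf(\hnf(\cdot))$ application lets us decide ``already derived'' (modulo renaming) by a single lookup, at cost polynomial in $N$ and the rule size.

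First I would describe the saturation loop. The algorithm starts from $\vnf(\hnf(\Sigma))$ and repeatedly applies \evolve to ordered pairs of already-derived rules, adding any genuinely new rule to the working set, until a fixpoint is reached. Since every derived rule lies in $\gcloex(\Sigma)$, the working set is monotonically increasing and never exceeds $N$ rules; hence it stabilises after at most $N$ productive steps, which establishes termination. Between additions one examines the at most $N^2$ ordered pairs of current rules, so only polynomially-many-in-$N$ applications of \evolve occur overall. By \cref{lem:gtgd_guarded_closure_evolve_complexity}, computing all \evolve-resolvents of a pair, normalising them, and testing them for membership costs at most $T$, where $T$ is $\ptime$ for bounded $w_b,n,a$, $\exptime$ for bounded $a$, and $\texptime$ otherwise. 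Thus the total running time is bounded by $\poly(N)\cdot T$ plus polynomial overhead.

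Finally I would plug the parameters into $\poly(N)\cdot T$ for each regime. If $w,n,a$ are bounded, then $N$ is bounded and $T$ is polynomial in $\size(\Sigma)$, so the procedure runs in $\ptime$. If only $a$ is bounded, then $2nw^a$ is polynomial in $\size(\Sigma)$, so $N$ is singly exponential and $T$ is $\exptime$; a polynomial power of a singly-exponential bound times a singly-exponential bound is again singly exponential, giving $\exptime$. In the unrestricted case $w^a$ may be exponential in $a$, so $N$ is doubly exponential and $T$ is $\texptime$, and their product remains doubly exponential, giving $\texptime$.

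The routine part is the exponent arithmetic, which goes through because $\ptime$, $\exptime$, and $\texptime$ are each closed under the polynomial combination of bounds used above. The only point requiring care is the saturation bookkeeping: one must argue that each pair of rules need only be processed boundedly often and that the fixpoint test does not dominate the cost. This is handled by relying on the canonical form produced by $\vnf(\hnf(\cdot))$ so that duplicate detection is a single lookup, exactly as in the treatment of the simple saturation.
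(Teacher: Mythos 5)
Your proposal is correct and follows essentially the same route as the paper's proof: bound the closure size via \cref{cor:gtgd_guarded_closure_size}, observe that the saturation examines at most quadratically many pairs of rules from $\gcloex(\Sigma)$, and multiply by the per-pair cost from \cref{lem:gtgd_guarded_closure_evolve_complexity}, with linear-time initial normalisation. Your added detail on duplicate detection via the canonical $\vnf(\hnf(\cdot))$ form and the explicit exponent arithmetic is implicit in the paper's terser argument but does not change the approach.
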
 
\begin{proof}
The initial variable and head normal form transformations can be done in linear time by \cref{lem:tgd_vnf_complexity,def:tgd_hnf}.
The algorithm then takes all pairs in $\gcloex(\Sigma)$
and performs all \evolve-inferences.
By \cref{cor:gtgd_guarded_closure_size}, this process takes at most~$c\coloneqq2^{n(w_b^a+w_h^a)}$ iterations,
and at any point, there are at most $\calO(c^2)$ pairs.
By \cref{lem:gtgd_guarded_closure_evolve_complexity}, 
the complexity for each $\tau,\tau'$ is at most as high as the stated bounds.
Thus, we obtain the claimed complexity bounds.
\end{proof} 

Finally, we can combine our results to obtain a decision procedure.
\begin{thm}\label{thm:gtgd_guarded_dec_proc}
The guarded saturation algorithm provides a decision procedure for the \qfqaprobgtgds. 
The procedure terminates in 
$\ptime$ for fixed $\Sigma$,
$\exptime$ for bounded~$a$, and
$\texptime$ otherwise.
\end{thm}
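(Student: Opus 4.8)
The plan is to combine the already-established correctness of the rewriting with the generic decision procedure for atomic rewritings, and then to add up the two complexity contributions in each of the three regimes.

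First I would invoke \cref{cor:gtgd_guarded_closure_atomic_rewriting} to conclude that $\gclo(\Sigma)$ is an atomic rewriting of $\Sigma$. Feeding this into \cref{prop:gtgd_atomic_rewriting_dec_proc} immediately yields that $\gclo(\Sigma)$ provides a decision procedure for the \qfqaprob under $\Sigma$: one first builds the saturation, then runs the Datalog query-answering procedure of \cref{prop:gtgd_atomic_rewriting_dec_proc} on $\db$, $\gclo(\Sigma)$, and $Q$. The total running time is therefore the sum of the saturation-building cost, bounded by \cref{lem:gtgd_guarded_closure_complexity}, and the query-answering cost, which is polynomial in $c^{v}$ and $\size(\db,\gclo(\Sigma),Q)$, where $c$ is the number of constants in $\db$ and $v$ is the maximal number of variables in any rule of $\gclo(\Sigma)$.

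Next I would bound the two parameters that govern the query phase. Since the rules of $\gclo(\Sigma)$ are full, they contain no existential variables, so $v$ equals the maximal body width; by \cref{prop:gtgd_guarded_closure_resolvent_properties} this is at most $w_b\coloneqq\bwidth(\hnf(\Sigma))\le w$, where $w\coloneqq\width(\hnf(\Sigma))$. For the size, \cref{cor:gtgd_guarded_closure_size} gives $|\gclo(\Sigma)|\le|\gcloex(\Sigma)|\le 2^{n(w_b^a+w_h^a)}$ with $w_h\coloneqq\hwidth(\hnf(\Sigma))$, and each rule has only polynomially many atoms by \cref{lem:tgd_number_full_vnf}, so $\size(\gclo(\Sigma))$ is dominated by this count. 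I would then check the three regimes. For fixed $\Sigma$, all of $n,a,w_b,w_h$ are constants, the saturation is a fixed finite set built in constant time, $v$ is constant, and the query phase runs in time polynomial in $c^{v}=\poly(|\db|)$; hence \ptime. For bounded $a$, \cref{lem:gtgd_guarded_closure_complexity} builds the saturation in \exptime, while $c^{v}$ and $\size(\gclo(\Sigma))\le 2^{n\cdot\poly(w)}$ are both at most singly exponential (since fixing $a$ makes $w_b^a,w_h^a$ polynomial in the width), so the query phase is singly exponential and the sum is \exptime. With nothing bounded, the saturation takes \texptime by \cref{lem:gtgd_guarded_closure_complexity}, and $\size(\gclo(\Sigma))\le 2^{n(w_b^a+w_h^a)}$ is doubly exponential (as $w_b^a,w_h^a$ are now singly exponential), which dominates $c^{v}$; the query phase is therefore doubly exponential and the sum is \texptime.

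The main obstacle is the complexity bookkeeping rather than any new idea: one must confirm in each regime that the query-answering phase never exceeds the bound already established for building the saturation, so that their sum stays inside the claimed class. The delicate point is the bounded-$a$ case, where one has to argue carefully that $w_b^a$ and $w_h^a$ remain polynomial in the width, keeping both the saturation size and the factor $c^{v}$ singly exponential so that the overall procedure stays in \exptime.
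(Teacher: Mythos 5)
Your proposal is correct and follows essentially the same route as the paper: the paper's proof likewise bounds the number of variables per rule via \cref{prop:gtgd_guarded_closure_resolvent_properties} and then cites \cref{prop:gtgd_atomic_rewriting_dec_proc}, \cref{cor:gtgd_guarded_closure_atomic_rewriting}, and \cref{lem:gtgd_guarded_closure_complexity}. The only difference is that you spell out the regime-by-regime bookkeeping (in particular the bounded-$a$ case) that the paper leaves implicit, and your accounting is accurate.
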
 
\begin{proof}
By \cref{prop:gtgd_guarded_closure_resolvent_properties},
the number of variables for each $\tau\in\gclo(\Sigma)$ is bounded by~$w_b$.
Now the claim follows from \cref{prop:gtgd_atomic_rewriting_dec_proc,cor:gtgd_guarded_closure_atomic_rewriting,lem:gtgd_guarded_closure_complexity}.
\end{proof} 

Moreover, our results match the known lower bounds from the literature, and hence,
our algorithm is theoretically optimal.
In fact, the lower bounds already hold for a simpler problem setting 
in which only atomic queries are considered:
\begin{thm}[\citep{cali2013taming}]\label{thm:lower_bounds_gtgd}
Given a set of \gtgds $\Sigma$,
the Atomic Query Answering Problem under $\Sigma$ is 
$\ptime$-complete for fixed $\Sigma$,
$\exptime$-complete for bounded~$a$,
and $\texptime$-complete in general.
It remains $\texptime$-complete even if $n$ is bounded.
\end{thm}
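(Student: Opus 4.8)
The plan is to treat the three regimes separately and to match an upper bound with a lower bound in each case. The upper bounds require no new work: an atomic query is in particular a quantifier-free \ubcq, so \cref{thm:gtgd_guarded_dec_proc} already provides a decision procedure running in \ptime for fixed $\Sigma$, in \exptime for bounded $a$, and in \texptime in general. Thus membership in all three classes is immediate, and everything reduces to establishing the matching hardness results, which is the content of the cited theorem of \citet{cali2013taming}.

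For \ptime-hardness in data complexity (fixed $\Sigma$) I would fix a small set of full \gtgds and reduce from a \ptime-complete problem such as the Monotone Circuit Value Problem. The idea is to store the circuit in the database $\db$ --- facts naming each gate together with its type and its input wires --- while a fixed guarded Datalog program propagates a ``true'' predicate bottom-up through the gates; the atomic query then asks whether the distinguished output gate is true. Since the rule set is fixed and only $\db$ varies with the input, \ptime-hardness of circuit evaluation transfers directly.

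The substantive regimes are the \exptime and \texptime bounds, which I would obtain by simulating alternating space-bounded Turing machines, exploiting the identities $\mathsf{APSPACE}=\exptime$ and $\mathsf{AEXPSPACE}=\texptime$. The key structural fact is that the tree-like chase of \cref{prop:tree_chase} is expressive enough to mimic an alternating computation tree: existential \gtgds generate the successor configurations as child bags, and full guarded rules propagate an ``accepting'' atom toward the root according to the $\exists$/$\forall$ acceptance condition. For bounded arity one encodes a polynomially space-bounded configuration across a fixed number of bounded-arity predicates indexed by tape position, giving \exptime-hardness; in the general case one instead packs an entire exponentially sized configuration into the arguments of a few high-arity predicates, giving \texptime-hardness, and because only a constant number of (high-arity) predicate symbols is needed, the \texptime bound survives even when $n$ is bounded. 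The main obstacle throughout is making the encoding respect guardedness: every rule must carry a single body atom containing all of its variables, so the whole configuration has to be threaded through guard atoms while still allowing the transition relation to advance the computation one step at a time and the acceptance condition to be reported faithfully at the root.
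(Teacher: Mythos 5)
The paper does not actually prove this statement: it is imported verbatim from \citet{cali2013taming} and used only to certify that the guarded saturation is worst-case optimal, so there is no in-paper argument to compare yours against. That said, your outline is a faithful reconstruction of how the bounds are established in the cited work and in the surrounding literature: membership follows from a saturation/chase-based decision procedure (here \cref{thm:gtgd_guarded_dec_proc} indeed suffices, since an atomic query is a quantifier-free \ubcq); \ptime-hardness for fixed $\Sigma$ is the standard data-complexity reduction from a \ptime-complete monotone propagation problem expressible in guarded Datalog; and the \exptime{} and \texptime{} lower bounds come from simulating $\mathsf{APSPACE}$ and $\mathsf{AEXPSPACE}$ alternating machines in the tree-like chase, with the bounded-$n$ claim surviving because the general construction needs only a constant number of high-arity predicates. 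The one thing to be clear-eyed about is that your text is a proof plan rather than a proof: essentially all of the technical content of the hardness direction lives in the step you defer, namely designing the configuration encoding and transition rules so that every rule has a single body atom covering all its variables while still advancing the computation and reporting acceptance at the root. Since the thesis itself treats this as a black-box citation, deferring that construction to \citet{cali2013taming} is entirely appropriate, but if you intended this as a self-contained proof it is not yet one.
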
 
\begin{rmk}\label{rmk:compl_issue_gtgd}
\cite{cali2013taming} in fact state that the Atomic Query Answering Problem 
already lowers to $\ptime$ in case of bounded $a$ and $n$;
however, they only sketch this result for the case of single-headed \gtgds.
Later on, they extend their results to the case of multiple head atoms,
but neglect the case of bounded $a$ and $n$.
In particular, we believe that the number of clouds, as described in their paper,
is still exponential in case of bounded $a$ and $n$ due to the unrestricted
number of existential variables.
We hence do not include this result, but leave it as an open question
whether the problem indeed lowers to $\ptime$ in this case.
\end{rmk}

\chapter{\qfqadgtgds}\label{chap:qfqadgtgds}
In this chapter, we extend the results that we derived for \gtgds to \dgtgds.
The structure of this chapter follows that of the previous one:
we first introduce a property that will imply completeness and then
present a sound resolution-based algorithm that satisfies this property.

\section{An Abstract Property for Completeness}\label{sec:dgtgd_property_for_completeness}
Recall that a \dtgd is of the form
\begin{align*}
	\body(\vec x)\rightarrow \bigvee\limits_{i=1}^n \exists \vec y_i\,\head_i(\vec x,\vec y_i).
\end{align*} 
We might first want to generalise our head normal form definition to \dtgds
to minimise the number of body variables occurring in an existential head conjunction;
however, due to the introduction of disjunctions, 
we might lose completeness by naively splitting the head into a non-full and full part like we did for \tgds.
\begin{exmpl}
Consider the \dtgds $\Sigma$
\begin{align*}
	&B(x)\rightarrow\exists y\bigl(P(x,y)\land R(x)\bigr)\lor\exists y\bigl(S(x,y)\land T(x)\bigr),\\
	&S(x_1,x_2)\land T(x_1)\rightarrow R(x_1)
\end{align*} 
and the set of rules $\Sigma'$ obtained by naively generalising our head normal form 
\begin{align*}
	&B(x)\rightarrow\exists y\,P(x,y)\lor\exists y\, S(x,y),\\
	&B(x)\rightarrow R(x)\lor T(x),\\
	&S(x_1,x_2)\land T(x_1)\rightarrow R(x_1).
\end{align*} 
Let $\db\coloneqq\{B(c)\}$.
Then $\db,\Sigma\models R(c)$ but $\db,\Sigma'\not\models R(c)$ since, for example,
the structure $\calM\coloneqq\{B(c),P(c,d),T(c)\}$ models $\db,\Sigma'$ but not $R(c)$.
Note, however, that all rules in~$\Sigma'$ are consequences of rules in $\Sigma$.
That is, $\Sigma'$ is a sound rewriting.
\end{exmpl} 

We thus do not impose an analogous head normal form on \dtgds.
Instead, we introduce the notion of \emph{immediate full consequences}.
\begin{defn}[\dtgd immediate full consequences]
Given a \dtgd 
\begin{equation*}
	\tau=\body(\vec x)\rightarrow \bigvee_{i=1}^n \exists \vec y_i\head_i(\vec x,\vec y_i),
\end{equation*} 
let $\head_i'$ consist of all atoms $H\in\head_i$ that contain no $y_i$,
and suppose that each $\head_i'$ is non-empty.
Then the \emph{\iindex{immediate full consequence} of $\tau$} is defined by 
$\ifc(\tau)\coloneqq\body(\vec x)\rightarrow \bigvee_{i=1}^n \head_i'(\vec x)$.
Given a set of \dtgds $\Sigma$, we write \iindex[not]{$\ifc(\Sigma)$} for the set of
immediate full consequences following from all rules in~$\Sigma$.
\end{defn} 

\begin{exmpl}
The immediate full consequence of
\begin{equation*}
	B(x)\rightarrow\exists y\bigl(H_1(x,y)\land H_2(x)\land H_3(x)\bigr)\lor H_4(x)
\end{equation*} 
is
\begin{equation*}
	B(x)\rightarrow \bigl(H_2(x)\land H_3(x)\bigr)\lor H_4(x).
\end{equation*} 
\end{exmpl} 

\begin{lem}\label{lem:dtgd_ifc_sound}
Let $\Sigma$ be a set of \dtgds. Then $\ifc(\Sigma)$ is sound \wrt $\Sigma$.
\end{lem} 
\begin{proof}
Let $\tau=\bigl(\beta(\vec x)\rightarrow\bigvee_{i=1}^n \exists \vec y_i\,\head_i(\vec x,\vec y_i)\bigr)\in\Sigma$ such that $\ifc(\tau)$ exists,
and let $\calA$ be a structure with $\calA\models\tau$. 
We have to show that $\calA\models\ifc(\tau)$.
Assume $\calA\models\beta(\vec c)$ for some $\vec c\subseteq\dom(\calA)$.
Then $\calA\models\exists\vec y_i\,\head_i(\vec c,\vec y_i)$ for some $\head_i$.
Hence, there is $\vec d\subseteq\dom(\calA)$ such that $\calA\models H(\vec c,\vec d\,)$ for any $H\in\head_i$. 
Thus, $\calA\models\head_i'(\vec c)$, where $\head_i'$ consist of all atoms $H\in\head_i$ that contain no $y_i$,
and hence $\calA\models\head_i'(\vec c)$.
Consequently, $\calA\models\ifc(\tau)$.
\end{proof} 

As we did for \gtgds, we now want to derive a set of properties that
will imply completeness of a rewriting \wrt some set of \dgtgds.
To achieve this, we generalise our previous thoughts from tree-like chase sequences to tree-like chase trees.

Recall the structure of a tree-like chase tree from some database $\db$ and set of \mbox{\dgtgds}~$\Sigma$.
A tree-like chase tree $T$ models some \ubcq $Q$ \ifftext all the facts of some $Q_i\in Q$ are contained in
the root node~$r$ of \emph{every} leaf of $T$.
To obtain sufficient completeness properties for a full set of \dtgds $\Sigma'$, 
we can again separate our concerns into two cases:
either some rule fired in $r$ creates new facts relevant to $r$ in every new
node of $T$, or a non-full rule fired in $r$ creates new children $v_1,\dotsc,v_n$ of $r$
and there is a chase tree $T^i$ from any child $v_i$ such that $r$ adds on a fact in every leaf of $T^i$.

The first case is again rather simple: we just let $\Sigma'$ contain all immediate full consequences of $\Sigma$.
For the second case, it again suffices to consider a simpler situation
in which each $T^i$ is created by firing full rules only.
By an inductive argument (\cref{lem:dgtgd_abstr_completeness_lift}),
we can then again lift the property for this situation to general chase trees.

The following definition will be quite useful for the formal definition of our abstract
properties and our upcoming proofs.

\begin{defn}[Query of a chase tree]
Given a chase tree $T$ rooted at $N$ with node~$r$,
the \emph{query of $T$} is defined by \index[not]{$Q^T$}$Q^T\coloneqq\lvs(T)_r$.
\end{defn} 

\begin{restatable}[Evolve-completeness]{defn}{dgtgdabstrcompletenessprops}\label{defn:dgtgd_abstr_completeness_props}
Let $\Sigma$ be a set of \dgtgds,
and $\Sigma'$ be a set of full \dtgds.
We say that $\Sigma'$ is \index{evolve-completeness!\dgtgd}\emph{evolve-complete \wrt $\Sigma$} 
if it satisfies the following properties:
\begin{propylist}
\item\label{defn:dgtgd_abstr_completeness_props_1}
For every $\tau\in\ifc(\Sigma)$, we have $\tau\in\Sigma'$ (modulo renaming).

\item\label{defn:dgtgd_abstr_completeness_props_2}
Let $T$ be a chase tree with root $N$,
let $r$ be the node inside~$N$,
let $N^1,\dotsc,N^{n+m}$ be the children of $N$,
and $\tau_r\in\Sigma$ be non-full such that
\begin{assmlist}
\item $\tau_r$ is the first rule fired in $T$,
\item the first $n$ head conjunctions of $\tau_r$ are non-full,
\item for $1\leq i\leq n$, $T^{N^i}$ was created by firing full rules from~$\Sigma'$ in $v_i$ only, 
where $v_i$ is the new child of $r$ in $N^i$,
\item $N^{n+1},\dotsc,N^{n+m}$ have no subtrees.
\end{assmlist} 
Then there exists a chase tree $T'$ from $N$ and~$\Sigma'$ with $T'\models Q^T$.
\end{propylist}
\end{restatable}

In the following proofs, we will frequently assume that whenever a node $N^i$ in $T$
is created by a non-full head conjunction,
its subtree $T^{N^i}$ is fully expanded before expanding the subtree of any sibling $N^{n+1},\dotsc,N^{n+m}$
that got created by a full head conjunction.
This can be done without loss of generality since the final structure of $T$ 
does not depend on the order used to expand the subtrees.

We again show that any evolve-complete set is in fact complete
by first proving a key lifting lemma that relies on the one-pass chase property.
\begin{lem}\label{lem:dgtgd_abstr_completeness_lift}
Let $\Sigma$ be a set of \dgtgds,
let $\Sigma'$ be evolve-complete \wrt $\Sigma$,
let $T$ be a one-pass chase tree from $\Sigma$,
let $N$ be a node in $T$,
let $p$ be a node inside~$N$,
let $N^1,\dotsc,N^{n+m}$ be the children of $N$,
and $\tau_p\in\Sigma$ be non-full such that
\begin{assmlist}
\item $\tau_p$ was fired in node $p$ of $N$,
\item the first $n$ head conjunctions of $\tau_p$ are non-full,
\item for $1\leq i\leq n$, all leaves of $T^{N^i}$ added on facts in $v_i$, where $v_i$ is the new child of $p$ in~$N^i$,
\item $N^{n+1},\dotsc,N^{n+m}$ have no subtrees.
\end{assmlist} 
Then there exists a chase tree $T'$ from~$N_p$ and $\Sigma'$ such that $T'\models \lvs\bigl(T^N\bigr)_p$.
\end{lem}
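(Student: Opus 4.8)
The plan is to mirror the proof of \cref{lem:gtgd_abstr_completeness_lift}, lifting it from chase sequences to chase trees, and to reduce the general situation to the atomic case already furnished by \cref{defn:dgtgd_abstr_completeness_props_2}. As in the non-disjunctive argument, I would carry an auxiliary statement alongside the conclusion: simultaneously with the chase tree from $N_p$, I would show that for each non-full child $v_i$ (with $1\le i\le n$) there is a $\Sigma'$-chase-tree $T_i$ from the bag $N^i_{v_i}$ of $v_i$ with $T_i\models\lvs(T^{N^i})_{v_i}$. The whole claim would then be proved by induction on the total number of chase steps occurring in the subtrees $T^{N^1},\dotsc,T^{N^n}$, the base case of trivial subtrees being immediate.

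For the auxiliary statement I would trace how the bag of $v_i$ grows across $T^{N^i}$. Because $T$ is one-pass, \cref{lem:one_pass_proof} applied thread-by-thread guarantees that every fact added to $v_i$ is produced by a rule fired inside the subtree of $v_i$, so I may restrict attention to that subtree and examine, along each thread, the last rule touching $v_i$. If it is a full rule, then it is its own immediate full consequence and hence already lies in $\Sigma'$ by \cref{defn:dgtgd_abstr_completeness_props_1}; since full rules never introduce new tree-decomposition nodes, firing it leaves everything in the single node $v_i$ and merely branches the tree along its disjuncts. If instead it is a non-full rule fired in $v_i$ or in a descendant, it opens a strictly smaller configuration of exactly the shape of the lemma (when it fires in a proper descendant I would route through the first child of $v_i$ on the path to it, as in the sequence proof), so the inductive hypothesis supplies $\Sigma'$-subtrees that I re-lift to $v_i$ by the same assembly used for the conclusion.

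Once the trees $T_i$ are in hand---each built from full $\Sigma'$-rules and therefore automatically fired in $v_i$ only---the conclusion follows by viewing $N_p$ as the root instance of a fresh chase tree with $p$ as its root node, firing $\tau_p$ in $p$, and running each $T_i$ in the corresponding child $v_i$. This configuration matches the hypotheses of \cref{defn:dgtgd_abstr_completeness_props_2} verbatim (with $N:=N_p$ and $r:=p$): $\tau_p$ is the first rule fired and is non-full, its first $n$ head conjunctions spawn the $v_i$ whose subtrees are full $\Sigma'$-chases fired in $v_i$ only, and the remaining $m$ full conjunctions carry no subtrees. Evolve-completeness then delivers a $\Sigma'$-chase-tree $T'$ from $N_p$ with $T'\models\lvs(T^N)_p$, which is exactly the desired statement.

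The hard part will be the disjunctive bookkeeping. Unlike in the sequence case, each conversion step re-lifts a subtree through \cref{defn:dgtgd_abstr_completeness_props_2}, which itself spawns new, independent threads not in any obvious correspondence with the leaves of the original $T^{N^i}$; I would have to maintain a precise association between the leaves of the converted tree and those of $T^N$ and verify that every newly created leaf still models some member of the target query $\lvs(T^N)_p$. This is the same delicate leaf-to-leaf association and cutting argument that drives \cref{prop:dgtgd_one_pass}, and arranging the induction on the number of steps so that it genuinely decreases across the branching introduced by the full disjunctive rules is where the real effort lies.
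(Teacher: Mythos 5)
Your proposal follows essentially the same route as the paper's proof: the same auxiliary claim (a $\Sigma'$-chase tree $T^i$ from each $N^i_{v_i}$ modelling $\lvs\bigl(T^{N^i}\bigr)_{v_i}$), the same induction on the number of chase steps in the subtree, the same one-pass-based case split on whether the last relevant rule acts on $v_j$ itself (handled via $\ifc$ and \cref{defn:dgtgd_abstr_completeness_props_1}) or on a proper descendant (routed through the first child on the path, as in the sequence case), and the same final assembly via \cref{defn:dgtgd_abstr_completeness_props_2}. The disjunctive leaf-bookkeeping you flag as the hard part is indeed where the paper's own write-up is also at its most delicate, but no new idea beyond what you describe is needed.
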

\begin{proof}
We can assume that whenever a non-full rule is fired to create $T$,
every subtree of a non-full head conjunction is fully expanded first.
We additionally show for $1\leq i\leq n$, 
that there exists a chase tree $T^i$ from $N^i_{v_i}$ and $\Sigma'$
with $T^i\models \lvs\bigl(T^{N^i}\bigr)_{v_i}$.
We prove the claim by induction on the number of steps used to create $T^{N_p}$, denoted by $k$. 
In case $k=1$, we initialise $T'$ with the single node $N_p$,
and if $N^1,\dotsc,N^m$ added on facts in $p$, we additionally obtain
$\ifc(\tau_p)\in\Sigma'$ by \cref{defn:dgtgd_abstr_completeness_props_1}
that we then fire in $N_p$.

In case $k>1$, let $\tau\in\Sigma$ be the rule that got fired in step $k$ in a leaf $L$ of $T_{k-1}^{N^j}$ for some $j\in\{1,\dotsc,n\}$.
Due to one-pass, we can assume that the new facts relevant to~$v_j$ 
are generated by acting on some node~$d$ that is a non-strict descendant of $v_j$.

First consider the case $v_j=d$. 
By the inductive hypothesis,
we get a chase tree $T^i$ from $N^i_{v_i}$ and $\Sigma'$
with $T^i\models \lvs\bigl(T_{k-1}^{N^i}\bigr)_{v_i}$ for $1\leq i\leq n$.
Since all new leaves added on facts in~$v_j$, 
there is $\ifc(\tau)\in\Sigma$ creating the new facts relevant to $v_j$,
and by \cref{defn:dgtgd_abstr_completeness_props_1}, $\ifc(\tau)\in\Sigma'$.
Hence, we can fire $\ifc(\tau)$ in any leaf $L'$ of $T^j$ 
with $L'\models L_{v_j}$ to obtain a chase tree $(T^j)'$ from $N^j_{v_j}$ and $\Sigma'$
with $(T^j)'\models\lvs\bigl(T_k^{N^j}\bigr)_{v_j}$.
Now we obtain the desired $T'$ by using \cref{defn:dgtgd_abstr_completeness_props_2}.

Next consider the case where $d$ is a descendant of $v_j$. 
Let $c$ be the first child of $v_j$ on the path to $d$, created
by firing a non-full rule in node $N'$.
Set ${T_{N'}\coloneqq\cut\bigl(T_{k-1}^N,N'\bigr)}$.
By the inductive hypothesis,
we get a chase tree $T^i$ from $N^i_{v_i}$ and $\Sigma'$
with $T^i\models \lvs\bigl(T_{N'}^{N^i}\bigr)_{v_i}$ for $1\leq i\leq n$.
Let $(N')^1,\dotsc,(N')^{n'+m'}$ be the children of $N'$,
where the first $n'$ nodes are created by non-full head conjunctions.
Also by the inductive hypothesis,
we get a chase tree $T'$ from $N'_{v_j}$ and $\Sigma'$ with $T'\models\lvs\bigl(T_k^{N'}\bigr)_{v_j}$, 
where we used the assumption that subtrees rooted at nodes created by a non-full head conjunction are fully expanded first.
Hence, by attaching $T'$ to any leaf $L'$ of $T^j$ with $L'\models N'_{v_j}$,
we obtain $(T^j)'$ with $(T^j)'\models \lvs\bigl(T_k^{N^j}\bigr)_{v_j}$.
Finally, we again use \cref{defn:dgtgd_abstr_completeness_props_2}
to obtain the desired chase tree from $N_p$ and~$\Sigma'$.
\end{proof}

The completeness property again follows by a simple induction.
\begin{prop}[Completeness]\label{prop:dgtgd_abstr_completeness}
Let $\Sigma$ be a set of \dgtgds and
$\Sigma'$ be evolve-complete \wrt $\Sigma$.
Then $\Sigma'$ is complete \wrt $\Sigma$.
\end{prop}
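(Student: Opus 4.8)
The plan is to mirror the completeness argument for the non-disjunctive case (\cref{prop:gtgd_abstr_completeness}), replacing each ingredient by its disjunctive analogue. First I would reduce the statement to a purely chase-theoretic claim: by \cref{thm:dis_chase_univ} and \cref{thm:dgtgd_iff_one_pass}, a quantifier-free \ubcq $Q$ satisfies $\db,\Sigma\models Q$ \ifftext $Q$ has a one-pass chase proof $T$ from $\db$ and $\Sigma$. Unlike the \gtgd case there is no head-normal-form reduction available, so I would work with $\Sigma$ directly and lean on \cref{defn:dgtgd_abstr_completeness_props_1}, which supplies $\ifc(\Sigma)$ inside $\Sigma'$. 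Since a chase tree proves $Q$ exactly when each leaf $L$ contains some $Q_i\in Q$ in its root bag $F_r^L$, it suffices to construct a chase tree $T'$ from $\db$ and $\Sigma'$ with $T'\models Q^T$, where $Q^T=\lvs(T)_r$: every leaf of such a $T'$ then contains some root bag $F_r^L$ and hence models $Q$, yielding $\db,\Sigma'\models Q$.

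The core is an induction on the number $n$ of disjunctive chase steps used to build the one-pass tree $T$; the base case $n=0$ takes $T'$ to be the single node $\db$. For the inductive step I would examine the last step, which fires some $\tau\in\Sigma$ in a node $d$ of a leaf $L$ of $T_{n-1}$. As in the \gtgd proof there are three cases. If the root node $r$ gains no new facts in any of the leaves created at this step, then $\lvs(T)_r=\lvs(T_{n-1})_r$ and the induction hypothesis for $T_{n-1}$ already delivers $T'$; the degenerate situation where a fired rule has a fully existential head conjunction folds into this case, since the corresponding fresh leaf then already carries a full match in $F_r^L$. If $d=r$ and $\ifc(\tau)$ exists, the facts added to $r$ in the new leaves are precisely the existential-free parts of the head conjunctions of $\tau$, i.e.\ the disjuncts of $\ifc(\tau)$; since $\ifc(\tau)\in\Sigma'$ by \cref{defn:dgtgd_abstr_completeness_props_1}, I would apply the induction hypothesis to reach each bag $F_r^L$ and then fire $\ifc(\tau)$ at the leaves of $T'$ modelling $F_r^L$ to reproduce exactly the branching of root facts. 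If $d$ is a proper descendant of $r$, I would let $c$ be the first child of $r$ on the path to $d$, use the induction hypothesis to build the prefix reaching $F_r$ at the relevant stage, and then invoke the lifting lemma \cref{lem:dgtgd_abstr_completeness_lift} on the subtree rooted at $c$ to obtain a $\Sigma'$-chase tree realising the root facts, finally connecting the two trees.

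Throughout I would use the convention---available without loss of generality, as noted after \cref{defn:dgtgd_abstr_completeness_props}---that a subtree hanging off a non-full head conjunction is fully expanded before any sibling coming from a full head conjunction, so that the hypotheses of \cref{lem:dgtgd_abstr_completeness_lift} (first $n$ children non-full with fact-adding subtrees, remaining children childless) can be met when I appeal to it.

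The main obstacle will be the disjunctive bookkeeping in the third case. Because every rule of $\Sigma'$ fired to recreate root facts is itself disjunctive, each firing spawns several branches, and I must guarantee that every leaf of the resulting $T'$---including the branches that carry no information about $r$---still models some root bag $F_r^L$ of a genuine leaf $L$ of $T$, rather than a spurious combination of facts. Maintaining a precise correspondence between the leaves of the constructed $\Sigma'$-tree and the leaves of $T$ is exactly what the relation $T'\models Q^T$ and the conclusion $T'\models\lvs\bigl(T^N\bigr)_p$ of the lifting lemma are engineered to track, so the argument should go through once the inductive invariant is phrased in that leaf-indexed form.
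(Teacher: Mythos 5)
Your proposal matches the paper's proof essentially step for step: the same reduction via the disjunctive chase and one-pass theorems to constructing a $\Sigma'$-chase tree $T'$ with $T'\models Q^T$, the same induction on the number of chase steps with the same three-way case split (root unaffected, rule fired in $r$ handled by $\ifc(\tau)$ via \cref{defn:dgtgd_abstr_completeness_props_1}, rule fired in a proper descendant handled by \cref{lem:dgtgd_abstr_completeness_lift} applied at the node where the first child of $r$ on the path was created, then grafting), and the same without-loss-of-generality convention on expanding non-full subtrees first. The proposal is correct and takes the paper's route.
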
 
\begin{proof}
By \cref{thm:chase_univ,thm:dgtgd_iff_one_pass}, 
it suffices to show that whenever there is a one-pass chase tree 
$T$ from $\db$ and $\Sigma$ for a quantifier-free \ubcq $Q$, 
then $Q$ also follows from $\db$ and~$\Sigma'$.
Let $r$ be the single node of the root of $T$
and $L^1,\dotsc,L^m$ be the leaves of~$T$.
Since $T\models Q$ \ifftext $L_r^i\models Q_{j_i}$ for each $L^i$ and some $Q_{j_i}\in Q$,
it suffices to show that we can create a chase tree $T'$
from $\db$ and $\Sigma'$ with $T'\models Q^T$.
We can again assume that subtrees rooted at nodes created by a non-full head conjunction are fully expanded first.
We show the claim by induction on the number of steps fired to create~$T$,
denoted by $n$.
In case $n=0$, we just set $T'\coloneqq T$.

Assume $n>0$. 
By the inductive hypothesis, we get $T_{n-1}'$ such that $T_{n-1}'\models Q^{T_{n-1}}$.
We have to create $T_n'$ such that $T_n'\models Q^{T_n}$.
Let $\tau\in\Sigma$ be the rule firing at step $n$ 
in some node $d$ of some $L\in\lvs(T_{n-1})$,
and assume that $r$ adds on a fact in every new leaf;
otherwise, we can simply set $T_n'\coloneqq T_{n-1}'$.

If $d=r$, then the new facts relevant to $r$ can be created by $\ifc(\tau)$.
By \cref{defn:dgtgd_abstr_completeness_props_1}, $\ifc(\tau)\in\Sigma'$,
and thus, we can fire $\ifc(\tau)$ in any leaf $L'$ of $T_{n-1}'$ with $L'\models L_r$ to
obtain~$T_n'$ with $T_n'\models Q^{T_n}$.

Otherwise, let $c$ be the first child of $r$ on the path to $d$, 
created by firing a rule in node $N$ with children $N^1,\dotsc,N^{k+l}$,
where the first $k$ nodes are created by non-full head conjunctions.
By \cref{lem:dgtgd_abstr_completeness_lift}, there is a chase tree $T''$ from $N_r$ 
and $\Sigma'$ with $T''\models\lvs\bigl(T_n^N\bigr)_r$, 
where we used the assumption that subtrees rooted at nodes created by a non-full head conjunction are fully expanded first.
Further, by the inductive hypothesis, we get $T'$ with $T'\models Q^{\cut(T_{n-1},N)}$.
Thus, by attaching $T''$ to any leaf $L'$ of $T'$ with $L'\models N_r$,
we obtain~$T_n'$ with $T_n'\models Q^{T_n}$.
\end{proof} 

\begin{cor}\label{cor:dgtgd_abstr_atomic_rewriting}
Let $\Sigma$ be a set of \dgtgds and
$\Sigma'$ be finite, sound, and evolve-complete \wrt~$\Sigma$.
Then $\Sigma'$ is an atomic rewriting of $\Sigma$.
\end{cor}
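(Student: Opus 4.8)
The plan is to assemble the claim directly from the characterisation of atomic rewritings in \cref{cor:atomic_rewriting_def}, exactly as was done for the \gtgd case in \cref{cor:gtgd_abstr_atomic_rewriting}. By \cref{cor:atomic_rewriting_def}, it suffices to verify that $\Sigma'$ is finite, full, sound, and complete. Finiteness and soundness hold by hypothesis. Fullness is built into the notion of evolve-completeness itself: \cref{defn:dgtgd_abstr_completeness_props} stipulates that $\Sigma'$ is a set of full \dtgds, so no separate argument is required. Completeness is precisely the content of \cref{prop:dgtgd_abstr_completeness}, which establishes that any set that is evolve-complete \wrt $\Sigma$ is complete \wrt $\Sigma$.

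The substantive work therefore resides entirely in \cref{prop:dgtgd_abstr_completeness}, whose proof already rests on the disjunctive one-pass machinery (\cref{thm:dgtgd_iff_one_pass}) and the lifting lemma \cref{lem:dgtgd_abstr_completeness_lift}; none of that needs to be reproduced here. Consequently I expect no genuine obstacle, since the corollary is a one-line bookkeeping step. The only point to be careful about is citing the right statements — in particular, recognising that fullness follows from the \emph{definition} of evolve-completeness rather than from the completeness proposition, so that each of the four properties required by \cref{cor:atomic_rewriting_def} is discharged by its correct source.
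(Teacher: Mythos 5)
Your proposal is correct and matches the paper's own proof exactly: fullness from \cref{defn:dgtgd_abstr_completeness_props}, finiteness and soundness by assumption, completeness from \cref{prop:dgtgd_abstr_completeness}, assembled via \cref{cor:atomic_rewriting_def}. Each of the four required properties is discharged by the same source the paper uses.
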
 
\begin{proof}
$\Sigma'$ is full by \cref{defn:dgtgd_abstr_completeness_props},
finite and sound by assumption,
complete by \cref{prop:dgtgd_abstr_completeness},
and hence an atomic rewriting by \cref{cor:atomic_rewriting_def}.
\end{proof} 

\section{The Disjunctive Guarded Saturation}\label{sec:dgtgd_guarded_saturation}
In this section, we present a procedure for \dgtgds that returns an atomic rewriting.
Of course, we want to reuse some ideas established for our previous rewriting algorithms.
Unfortunately, simply generalising our guarded saturation from \gtgds to \dgtgds leads to a combinatorial explosion:

\begin{exmpl}
Given the \dgtgds 
\begin{align*}
\tau_1&=R(x_1)\rightarrow\exists y\bigl(S(x_1,y)\land T(x_1)\bigr)\lor U(x_1),\\
\tau_2&=S(x_1,x_2)\rightarrow V(x_1)\lor W(x_2),
\end{align*} 
the composition of $\tau_1,\tau_2$ amounts to 
\begin{align*}
&R(x_1)\rightarrow\exists y\bigl(S(x_1,y)\land T(x_1)\land(V(x_1)\lor W(y))\bigr)\lor U(x_1)\\
\equiv\quad&R(x_1)\rightarrow\exists y\bigl(S(x_1,y)\land T(x_1)\land V(x_1)\bigr)
\lor \exists y\bigl(S(x_1,y)\land T(x_1)\land W(y)\bigr)\lor U(x_1).
\end{align*} 
It is evident that the head of a derived rule can grow explosively.
In general, the head of a rule can consist of any possible boolean combination of atoms (in disjunctive normal form).
Using \cref{lem:tgd_number_full_vnf}, 
the number of different atoms using $w$ variables is bounded by
$s\coloneqq 2nw^a$, where $n$ is the number of predicate symbols and $a$ the maximum arity of any predicate.
The head of a \dgtgd consists of a set of head conjunctions, which in turn consist of a set of atoms;
consequently, this results in a triple exponential $2^{2^s}$ bound on the number of different heads.
\end{exmpl} 

To avoid this explosion, we have to introduce two new concepts.
The first is given by another normal form transformation.

\begin{defn}[Single head normal form]\label{defn:dtgd_shnf}
We say that a \dtgd $\tau$ is in \emph{\iindex{single head normal form}} or \emph{single-headed} if every head conjunction of $\tau$
consists of exactly one atom.
Given a set of \dtgds $\Sigma$, the \emph{single head normal form \iindex[not]{$\shnf(\Sigma)$} of $\Sigma$} is obtained by replacing every rule
\begin{equation*}
	\body(\vec x)\rightarrow \bigvee\limits_{i=1}^n \exists \vec y_i\,\bigwedge\limits_{j=1}^{n_i}H_{i,j}(\vec x,\vec y_i)
\end{equation*} 
in $\Sigma$ by 
\begin{equation*}
	\body(\vec x)\rightarrow \bigvee\limits_{i=1}^n \exists \vec y_i\,R_i(\vec x_{i\restriction},\vec y_i),
\end{equation*} 
and, for any $1\leq i\leq n$,
\begin{align*}
	R_i(\vec x_{i\restriction},\vec y_i)&\rightarrow H_{i,1}(\vec x_{i\restriction},\vec y_i),\\
	&\vdotswithin{\rightarrow}\\
	R_i(\vec x_{i\restriction},\vec y_i)&\rightarrow H_{i,n_i}(\vec x_{i\restriction},\vec y_i),
\end{align*} 
where $R_i\notin\relset$ is a fresh relation symbol and $\vec y_i$ might be empty. 
%We call the fresh $R$ an \emph{\iindex{original guard}}.
This replacement can be done in time linear in $\size(\Sigma)$.
\end{defn}

\begin{exmpl}
The single head normal form of 
\begin{equation*}
	B(x_1,x_2)\rightarrow\exists y_1,y_2\bigl(H_1(y_1,x_1,y_2)\land H_2(y_2)\bigr)\lor\exists y_1\,H_3(y_1,x_2)
\end{equation*} 
is
\begin{align*}
	B(x_1,x_2)&\rightarrow\exists y_1,y_2\,R_1(x_1,y_1,y_2)\lor\exists y_1\,R_2(x_2,y_1),\\
	R_1(x_1,y_1,y_2)&\rightarrow H_1(y_1,x_1,y_2),\\
	R_1(x_1,y_1,y_2)&\rightarrow H_2(y_2),\\
	R_2(x_2,y_1)&\rightarrow H_3(y_1,x_2).
\end{align*} 
\end{exmpl} 

\begin{lem}\label{lem:dgtgd_shnf_equiv}
For any database $\db$, set of \dtgds $\Sigma$, and \ubcq $Q$,
we have $\db,\Sigma\models Q$ \ifftext $\db,\shnf(\Sigma)\models Q$.
\end{lem}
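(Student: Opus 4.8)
The plan is to argue semantically, moving between the original signature $\sigma=(\constset,\emptyset,\relset)$ and the extended signature $\sigma^+$ obtained by adjoining all the fresh symbols $R_i\notin\relset$ introduced by $\shnf$. This is the one genuine difference from the head-normal-form lemma \cref{lem:tgd_hnf_closure_coincide}: since $\db$ and $Q$ mention only relations in $\relset$, their truth value in a $\sigma^+$-structure depends only on its $\sigma$-reduct, and I will use this freely. (One could instead mimic \cref{lem:tgd_hnf_closure_coincide} through the disjunctive chase of \cref{thm:dis_chase_univ}, replacing each skeleton-rule firing together with its subsequent full-rule firings by a single firing of the original \dtgd; the model-theoretic route avoids bookkeeping about the order in which the full rules $R_i\rightarrow H_{i,j}$ are applied, which is why I prefer it here.)

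For the direction $\db,\Sigma\models Q\implies\db,\shnf(\Sigma)\models Q$, I would take an arbitrary $\sigma^+$-model $\calB$ of $\db$ and $\shnf(\Sigma)$ and show that its $\sigma$-reduct $\calB|_{\sigma}$ models $\Sigma$. Fix an original rule $\tau=\body(\vec x)\rightarrow\bigvee_{i=1}^n\exists\vec y_i\,\head_i(\vec x,\vec y_i)$ and suppose $\calB\models\body(\vec a)$. The skeleton rule of $\shnf(\Sigma)$ forces $\calB\models R_i(\vec a_{i\restriction},\vec b)$ for some $i$ and some witness $\vec b$, and the full rules $R_i(\vec x_{i\restriction},\vec y_i)\rightarrow H_{i,j}(\vec x_{i\restriction},\vec y_i)$ then force $\calB\models H_{i,j}(\vec a_{i\restriction},\vec b)$ for every $j$, i.e.\ $\calB\models\head_i(\vec a,\vec b)$, recovering the original disjunct. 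Hence $\calB|_{\sigma}\models\Sigma$, and trivially $\calB|_{\sigma}\models\db$, so $\calB|_{\sigma}\models Q$; as $Q$ is over $\relset$, this gives $\calB\models Q$.

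For the converse $\db,\shnf(\Sigma)\models Q\implies\db,\Sigma\models Q$, I would take an arbitrary $\sigma$-model $\calA$ of $\db$ and $\Sigma$ and expand it to a $\sigma^+$-structure $\calA^+$ by interpreting each fresh symbol as the extension of the corresponding head conjunction: for $\tau$ as above and each $i$, set $R_i^{\calA^+}\coloneqq\{(\vec c,\vec d)\mid\calA\models\head_i(\vec c,\vec d)\}$, where $\vec c,\vec d$ range over tuples for $\vec x_{i\restriction},\vec y_i$, while every symbol of $\relset$ keeps its $\calA$-interpretation. The full rules then hold in $\calA^+$ by construction, since $(\vec c,\vec d)\in R_i^{\calA^+}$ yields $\calA\models\head_i(\vec c,\vec d)$ and hence each $H_{i,j}$. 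The skeleton rule holds because, whenever $\calA\models\body(\vec a)$, the assumption $\calA\models\tau$ supplies a witness $\vec b$ with $\calA\models\head_i(\vec a,\vec b)$ for some $i$, so $(\vec a_{i\restriction},\vec b)\in R_i^{\calA^+}$ and thus $\calA^+\models\exists\vec y_i\,R_i(\vec a_{i\restriction},\vec y_i)$. Therefore $\calA^+\models\db,\shnf(\Sigma)$, whence $\calA^+\models Q$, and since $Q$ mentions only $\relset$ we conclude $\calA\models Q$.

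The only delicate point is the choice of interpretation of the $R_i$ in the expansion step: it must simultaneously validate the skeleton rule (a disjunctive, existential consequence) and the deterministic full rules, and taking $R_i^{\calA^+}$ to be exactly the extension of $\head_i$ is precisely what reconciles the two. Everything else is routine, and the linear-time bound asserted in \cref{defn:dtgd_shnf} is immediate from inspecting the purely syntactic replacement.
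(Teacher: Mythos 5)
Your proof is correct, but it takes a genuinely different route from the paper. The paper argues via the (disjunctive) chase, using \cref{thm:dis_chase_univ} implicitly: a firing of an original rule $\tau$ is simulated by firing the skeleton rule of $\shnf(\tau)$ followed by all of its full rules, and conversely a firing of a rule in $\shnf(\Sigma)\setminus\Sigma$ is simulated by firing the originating $\tau'\in\Sigma$, with an additional argument that the auxiliary $R_i$-facts produced along the way can only trigger rules whose consequences are already present. Your argument is purely model-theoretic: one direction shows that the $\sigma$-reduct of any model of $\shnf(\Sigma)$ models $\Sigma$ (chaining the skeleton rule through the full rules $R_i\rightarrow H_{i,j}$ to recover the original disjunct), and the other expands a model of $\Sigma$ by interpreting each $R_i$ as the extension of its head conjunction, which simultaneously validates the skeleton rule and the full rules. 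What each buys: the paper's chase simulation stays within the proof machinery used throughout and directly yields the statement at the level of chase proofs, but its converse direction is somewhat informal about why the extra $R_i$-facts are harmless; your reduct/expansion argument is self-contained, handles the signature extension explicitly (which the paper glosses over), and is arguably tighter, at the cost of not saying anything about how chase proofs translate -- which is immaterial here, since the lemma is a pure entailment statement. Both are valid; yours is a clean alternative.
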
 
\begin{proof}
If we fire a rule $\tau$ in a chase using $\Sigma$, 
we can simply fire all rules in $\shnf(\tau)$ to create the corresponding facts
in a chase using $\shnf(\Sigma)$.

Conversely, if we fire a rule $\tau\in\shnf(\Sigma)\setminus\Sigma$ in a chase using $\shnf(\Sigma)$, 
we can fire the rule $\tau'\in\Sigma$ that produced $\tau$
to create all corresponding facts that use relations in $\relset$
in a chase using $\Sigma$.
All other facts with relations not in $\relset$ produced by $\tau$ can only trigger
further rules in $\shnf(\tau')$ that create facts which have then already been
added in the chase using $\Sigma$.

Since, by definition, $Q$ only uses relations in $\relset$, this finishes the proof.
\end{proof} 

In the following, given a single-headed \dtgd $\tau=\body(\vec x)\rightarrow\head$
with $\head=\bigvee_{i=1}^n\vec y_i\,H_i(\vec x,\vec y_i)$,
instead of treating $\head$ as the set $\{\{H_i\}\mid 1\leq i\leq n\}$,
we treat $\head$ as the set $\{H_i\mid 1\leq i\leq n\}$.

The single head normal form transformation alone is not yet enough as our previous \evolve-rule
collects all head atoms of both resolved rules and hence does not preserve single-headedness. 
As a running example, take the single-headed \dgtgds
\begin{align*}
\tau_1&=R(x)\rightarrow \exists y\,P(x,y),\\
\tau_2&=P(x_1,x_2)\rightarrow S(x_1,x_2),\\
\tau_3&=P(x_1,x_2)\land S(x_1,x_2)\rightarrow T(x_1).
\end{align*} 
Then composing $\tau_1,\tau_2$ results in $\tau_1'=R(x)\rightarrow \exists y\bigl(P(x,y)\land S(x,y)\bigr)$, 
which then can be composed with $\tau_3$ to obtain $\tau_2'=R(x)\rightarrow T(x)$.
But unfortunately, $\tau_1'$ is not single-headed.
If we want to preserve single-headedness, we can no longer collect all derivable non-full head atoms in a single rule.
Instead, when composing $\tau_1,\tau_2$, let us try to just derive the single-headed $\tau_1^*=R(x)\rightarrow \exists y\,S(x,y)$.
It is then not possible, however, to derive $\tau_2'$ by composing $\tau_3$ with $\tau_1$ and $\tau_1^*$.
We lost the information that $\tau_1$ and $\tau_1^*$
can make use of the same witnesses for their existential variables.
To prevent this loss of information, we introduce the \emph{Skolem normal form} of \dgtgds (see also \cref{sec:prelims_first_order_logic}).
\begin{defn}[Skolemisation]\label{defn:dtgd_Skolemisation}
Given a \dtgd $\tau=\body(\vec x)\rightarrow \bigvee_{i=1}^n \exists \vec y_i\,\head_i(\vec x,\vec y_i)$,
the \index{Skolemisation}\emph{\iindex{Skolem normal form}} $\skol(\tau)$ of $\tau$ 
is obtained by replacing every existential $y_j\in\vec y_i$ by a term
$f_{i,j}(\vec x)$, where $f_{i,j}$ is a fresh function symbol.
An atom that contains a function is called \emph{functional},
and a rule is called \emph{functional} if it contains a functional atom.
The definition is lifted to set of \dtgds by setting \index[not]{$\skol(\Sigma)$}$\skol(\Sigma)\coloneqq\{\skol(\tau)\mid\tau\in\Sigma\}$.
\end{defn} 
\begin{exmpl}
The Skolemisation of 
\begin{equation*}
	B(x_1,x_2)\rightarrow\exists y_1,y_2\bigl(H_1(y_1,x_1,y_2)\land H_2(y_2)\bigr)\lor\exists y_1\,H_3(y_1,x_2)
\end{equation*} 
is
\begin{equation*}
B(x_1,x_2)\rightarrow\bigl(H_1(f_{1,1}(x_1,x_2),x_1,f_{1,2}(x_1,x_2))\land H_2(f_{1,2}(x_1,x_2))\bigr)\lor H_3(f_{2,1}(x_1,x_2),x_2).
\end{equation*} 
\end{exmpl} 
Note that the Skolemisation of \dtgds only creates atoms of a simple form.
\begin{defn}[Shallow term]
A term $t$ is called \index{shallow term}\emph{shallow} if $t$ contains no nested function.
\end{defn} 
\begin{defn}[Simple atom/fact/set]
An atom $R$ is called \index{simple!atom}\emph{simple} if $R$ only contains shallow terms.
If $R$ is also ground, we call it a \index{simple!fact}\emph{simple fact}.
A set $S$ is called \index{simple!set}\emph{simple} if every $R\in S$ is a simple fact.
Given a simple set $S$, we define \index[not]{$\func{S}$}$\func{S}\coloneqq\{R\in S\mid \text{R is functional}\}$ and
\index[not]{$\nfunc{S}$}$\nfunc{S}\coloneqq\{{R\in S}\mid \text{R is non-functional}\}$.
Given a family of simple sets~$U$, we lift this definition to
$\func{U}\coloneqq\{\func{S}\mid S\in U\}$ and $\nfunc{U}\coloneqq\{\nfunc{S}\mid S\in U\}$.
\end{defn} 
Coming back to our running example, we now start with a Skolemised set of rules
\begin{align*}
\tau_1&=R(x)\rightarrow P(x,f(x)),\\
\tau_2&=P(x_1,x_2)\rightarrow S(x_1,x_2),\\
\tau_3&=P(x_1,x_2)\land S(x_1,x_2)\rightarrow T(x_1).
\end{align*}  
Then composing $\tau_1$ and $\tau_2$ results in $\tau_1'=R(x)\rightarrow S(x,f(x))$,
and composing $\tau_1$ with $\tau_3$ results in $\tau_2'=R(x)\land S(x,f(x))\rightarrow T(x)$. 
The functional premise $S(x,f(x))$ in $\tau_2'$ can then be discharged
by composing $\tau_1'$ and $\tau_2'$ to obtain $\tau_3'=R(x)\rightarrow T(x)$.
This process gives the general idea of the upcoming algorithm, which will be described shortly.
But first, we examine which kind of rules we might obtain during this process.
Since \dtgds are function-free by definition, we are not dealing with pure \dtgds anymore;
instead, we will consider rules of a more general form, namely
\begin{equation*}
	\forall\vec x\left(\body(\vec x)\rightarrow\bigvee\limits_{i=1}^n\head_i(\vec x)\right),
\end{equation*} 
where $\body$ and each $\head_i$ are conjunctions of simple atoms in which
every functional term contains all variables of the rule, and further,
$\body$ contains a function-free guard.
We refer to such rules as \emph{\iindex{guarded simple rules}}.

Note that the Skolemisation of non-full \dgtgds only produces guarded simple rules 
with a non-functional body and functional head.
Conversely, a guarded simple rule with non-functional body and functional head 
corresponds to a non-full \dgtgd via ``de-Skolemisation''.
For this reason, we call such rules \emph{non-full}.
Similarly, we call a function-free guarded simple rule \emph{full}
as they correspond to full \dgtgds.

We now extend some of our definitions to guarded simple rules.
\begin{defn}[Variable normal form]
Given a guarded simple rule $\tau=\body\rightarrow \bigvee_{i=1}^n\head_i$,
we say that $\tau$ is in \index{variable normal form!guarded simple rule}\emph{variable normal form} 
if $x_i$ is the $i^{th}$ lexicographic variable occurring in~$\body$.
We write $\vnf(\tau)$ and \iindex[not]{$\vnf(\Sigma)$} for the variable normal form of some 
guarded simple rule $\tau$ and set of guarded simple rules $\Sigma$, respectively.
\end{defn} 

\begin{exmpl}
The rule
\begin{equation*}
	\forall x_3,y_2\bigl[B(y_2,x_3)\rightarrow\bigl(H_1(f(x_3,y_2),y_2,x_3)\land H_2(x_3)\bigr)\lor H_3(y_2,g(x_3,y_2))\bigr]
\end{equation*} 
is not in variable normal form. Its variable normal form is
\begin{equation*}
	\forall x_1,x_2\bigl[B(x_1,x_2)\rightarrow\bigl(H_1(f(x_2,x_1),x_1,x_2)\land H_2(x_2)\bigr)\lor H_3(x_1,g(x_2,x_1))\bigr].
\end{equation*} 
\end{exmpl} 

\begin{lem}\label{lem:guarded_simple_vnf_complexity}
Any guarded simple rule $\tau$ can be rewritten into $\vnf(\tau)$ in linear time.
\end{lem}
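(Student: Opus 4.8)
The plan is to mimic the proof of \cref{lem:tgd_vnf_complexity} almost verbatim, exploiting the crucial simplification that a guarded simple rule has already been Skolemised and hence carries no existentially quantified variables: there is only a single class of variables to normalise, rather than the two classes $V_\forall, V_\exists$ handled in the \tgd case. First I would record the structural observation that makes the target normal form well defined. Because $\tau$ has no existential variables and $\body$ contains a function-free guard (an atom of $\body$ mentioning every variable of the rule), every variable of $\tau$ already occurs in $\body$; therefore scanning $\body$ and ordering variables by first occurrence gives a total order on \emph{all} variables of $\tau$, so the phrase ``$x_i$ is the $\nth{i}$ lexicographic variable occurring in $\body$'' is meaningful.

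The construction itself is two renaming passes, exactly as for \tgds, with the intermediate pass through fresh variables serving only to rule out accidental name clashes with symbols already present in $\tau$. Concretely, I would first scan $\body$ from left to right, emitting each variable the first time it is met, to obtain an ordering of the $k$ variables of $\tau$. Choosing fresh variables $u_1,\dotsc,u_k$, I apply the renaming $\rho$ that sends the $\nth{i}$ scanned variable to $u_i$, and then the renaming $\rho'$ defined by $u_i\rho'\coloneqq x_i$. The composite $\rho\rho'$ relabels $\tau$ so that $x_i$ is precisely the $\nth{i}$ variable occurring in the body, i.e.\ it produces $\vnf(\tau)$.

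Two points need a remark rather than real work. A renaming acts homomorphically on terms, so the Skolem function symbols inside the functional atoms are left untouched while only their variable arguments are relabelled; hence the rewritten rule is still a guarded simple rule (its guard stays function-free, every functional term still contains all variables, and the induced body order is the intended one). For the complexity claim, each of the scan and the two renaming passes touches every symbol of $\tau$ only a constant number of times, so the whole transformation runs in time linear in $\size(\tau)$.

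The main obstacle, such as it is, is purely bookkeeping: verifying that the fresh-variable detour prevents any collision and that the guarded simple structure is preserved under the relabelling. There is no genuine difficulty beyond what already appears in \cref{lem:tgd_vnf_complexity}; in fact, the absence of existential variables makes the present case strictly simpler than its \tgd counterpart.
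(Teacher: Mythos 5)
Your proposal is correct and follows essentially the same route as the paper, which simply adapts the two-pass renaming of \cref{lem:tgd_vnf_complexity} to guarded simple rules. Your additional observations -- that Skolemisation leaves only one variable class to order, that the guard guarantees every variable occurs in $\body$ so the ordering is total, and that renamings leave function symbols untouched -- are all accurate and merely spell out details the paper leaves implicit.
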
 
\begin{proof}
Simply extend \cref{lem:tgd_vnf_complexity} to multiple head conjunctions.
\end{proof} 

\begin{defn}
\index[not]{$\bwidth(\Sigma)$}\index[not]{$\hwidth(\Sigma)$}\index[not]{$\width(\Sigma)$}
Given a \dtgd (or a guarded simple rule) $\tau$ with body $\body$ and head~$\head=\{\head_1,\dotsc,\head_n\}$, 
we define
\begin{thmlist}
\item the \index{body width!\dtgd}\index{body width!guarded simple rule}\emph{body width of $\tau$} as $\bwidth(\tau)\coloneqq|\vars(\body)|$,
\item the \emph{$\nth{i}$ head width of $\tau$} as $\hwidth_i(\tau)\coloneqq|\vars(\head_i)|$,
\item the \index{head width!\dtgd}\index{head width!guarded simple rule}\emph{head width of $\tau$} as $\hwidth(\tau)\coloneqq\max\{\hwidth_i(\tau)\}$, and
\item the \index{width!\dtgd}\index{width!guarded simple rule}\emph{width of $\tau$} as $\width(\tau)\coloneqq\max\{\bwidth(\tau),\hwidth(\tau)\}$.
\end{thmlist} 
These definitions are naturally extended to sets of \dtgds (or guarded simple rules).
\end{defn} 
\begin{lem}
For any guarded simple rule $\tau$, we have $\width(\tau)=\bwidth(\tau)$.
\end{lem}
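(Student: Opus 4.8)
The plan is to reduce the claimed equality to the single inequality $\hwidth(\tau)\le\bwidth(\tau)$. Since by definition $\width(\tau)=\max\{\bwidth(\tau),\hwidth(\tau)\}$, once we know that the head width never exceeds the body width the maximum is attained by $\bwidth(\tau)$, and $\width(\tau)=\bwidth(\tau)$ follows immediately. So the entire content of the lemma is a variable inclusion: I would show that $\vars(\head_i)\subseteq\vars(\body)$ for every head conjunction $\head_i$.

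To establish this inclusion I would exploit the guardedness condition built into the definition of a guarded simple rule, namely that $\body$ contains a function-free guard $G$. Writing $\vec x=\vars(\tau)$ for the set of all variables occurring in $\tau$, the first step is to argue that $G$ contains every variable of the rule, i.e.\ $\vec x\subseteq\vars(G)$, since $G$ is precisely the function-free atom witnessing all of $\vec x$. Because $G$ is an atom appearing in $\body$, we also have $\vars(G)\subseteq\vars(\body)\subseteq\vec x$, and combining the two inclusions yields $\vars(G)=\vars(\body)=\vec x$. In other words, every variable occurring anywhere in $\tau$ -- in particular in any head conjunction -- already occurs in the body.

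With $\vars(\body)=\vars(\tau)$ in hand the finish is routine: for each $i$ we get $\vars(\head_i)\subseteq\vars(\tau)=\vars(\body)$, hence $\hwidth_i(\tau)=|\vars(\head_i)|\le|\vars(\body)|=\bwidth(\tau)$, and taking the maximum over $i$ gives $\hwidth(\tau)\le\bwidth(\tau)$ as desired.

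The step I expect to require the most care is justifying that the \emph{function-free} guard really does capture all variables of the rule and not merely the body variables; this is exactly what rules out a spurious head-only variable that would make $\hwidth(\tau)$ exceed $\bwidth(\tau)$. This is where the defining shape of guarded simple rules matters: such rules arise from Skolemising (disjunctive) guarded \tgds and from \evolve-style compositions, where existentials are replaced by Skolem terms rather than introduced as fresh universally quantified head variables, so no head-only variable is ever created. I would make this precise by reading the guardedness requirement as the statement that the guard is a function-free atom containing all of $\vars(\tau)$ -- mirroring the condition that every functional term contains all variables of the rule -- after which the shallowness of the atoms plays no further role beyond ensuring the variable sets are well defined, and everything else is bookkeeping with the width definitions.
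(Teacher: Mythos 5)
Your proof is correct and matches the paper's argument, which is the one-line observation that every head variable of a guarded simple rule occurs in the (function-free) guard, hence in the body, so $\hwidth(\tau)\leq\bwidth(\tau)$ and the maximum defining $\width(\tau)$ is attained by $\bwidth(\tau)$. Your elaboration of why the guard captures all variables of the rule (and not just some of the body variables) is a faithful unpacking of the same idea rather than a different route.
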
 
\begin{proof}
Any variable in the head of $\tau$ occurs in the guard of $\tau$.
\end{proof} 

\begin{defn}[Immediate full consequences]
Given a guarded simple rule $\tau$ with non-functional body,
the \emph{\iindex{immediate full consequence} of $\tau$} is obtained
by taking the immediate full consequence of the \dgtgd corresponding to $\tau$ via de-Skolemisation.
\end{defn} 
\begin{exmpl}
The immediate full consequence of
\begin{equation*}
	B(x)\rightarrow\bigl(H_1(x,f(x))\land H_2(x)\land H_3(x)\bigr)\lor H_4(x)
\end{equation*} 
is
\begin{equation*}
	B(x)\rightarrow \bigl(H_2(x)\land H_3(x)\bigr)\lor H_4(x).
\end{equation*} 
\end{exmpl} 
\begin{lem}\label{lem:guarded_simple_ifc_sound}
Let $\Sigma$ be a set of guarded simple rules. 
Then $\ifc(\Sigma)$ is sound \wrt $\Sigma$.
\end{lem}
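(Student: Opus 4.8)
The plan is to establish the semantic entailment $\Sigma\models\ifc(\Sigma)$ and then read off soundness from it, exactly as in the proof of \cref{lem:dtgd_ifc_sound}. First I would record the reduction from soundness to entailment: if every model of $\Sigma$ is a model of $\ifc(\Sigma)$, then for any database $\db$ and quantifier-free \ubcq $Q$, every model of $\db,\Sigma$ is a model of $\db,\ifc(\Sigma)$, so $\db,\ifc(\Sigma)\models Q$ forces $\db,\Sigma\models Q$, which is precisely soundness per \cref{defn:soundness}. This model-containment argument is taken over the Skolem-extended signature throughout, so it is insensitive to the fact that guarded simple rules carry function symbols while $\db$, $Q$, and $\ifc(\Sigma)$ are function-free. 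It thus suffices to prove $\tau\models\ifc(\tau)$ for each guarded simple rule $\tau=\body(\vec x)\rightarrow\bigvee_{i=1}^n\head_i(\vec x)$ with non-functional body for which $\ifc(\tau)$ exists.

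For the core step I would mimic the proof of \cref{lem:dtgd_ifc_sound} directly. Unwinding the definition of $\ifc$ via de-Skolemisation, $\ifc(\tau)$ has the form $\body(\vec x)\rightarrow\bigvee_{i=1}^n\head_i'(\vec x)$, where $\head_i'$ is the conjunction of exactly the non-functional atoms of $\head_i$; indeed, under de-Skolemisation the functional atoms of $\tau$ are precisely the head atoms of the corresponding \dgtgd that contain an existential variable, and these are the atoms $\ifc$ discards. Now fix a structure $\calA$ with $\calA\models\tau$ and suppose $\calA\models\body(\vec c)$ for some $\vec c\subseteq\dom(\calA)$. Then $\calA\models\head_i(\vec c)$ for some $i$, and since $\head_i'$ is a sub-conjunction of $\head_i$, we get $\calA\models\head_i'(\vec c)$, hence $\calA\models\ifc(\tau)$. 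An essentially equivalent route, if one prefers to reuse the earlier lemma verbatim, is to write $\tau^d$ for the \dgtgd with $\tau=\skol(\tau^d)$ and $\ifc(\tau)=\ifc(\tau^d)$: any model $\calA$ of $\tau$ is an expansion of a model of $\tau^d$ to the Skolem functions, so by (the proof of) \cref{lem:dtgd_ifc_sound} its reduct — and hence $\calA$, since $\ifc(\tau^d)$ is function-free — satisfies $\ifc(\tau^d)=\ifc(\tau)$.

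The only genuinely delicate point I anticipate is bookkeeping at the signature level: verifying that de-Skolemisation carries functional atoms to existential-variable atoms faithfully, so that $\ifc$ of the guarded simple rule and $\ifc$ of its de-Skolemised \dgtgd really coincide, and that passing between the base signature and the Skolem-extended signature (reducts versus expansions) interacts correctly with the entailments used above. Since guarded simple rules are by definition built from simple atoms whose functional terms are shallow and contain all rule variables, each distinct Skolem term is $f(\vec x)$ for the full variable tuple, and de-Skolemisation/Skolemisation are mutually inverse up to naming, so this bookkeeping goes through cleanly. Once it is pinned down, $\tau\models\ifc(\tau)$ holds for every contributing $\tau$, giving $\Sigma\models\ifc(\Sigma)$, and the containment argument of the first paragraph yields that $\ifc(\Sigma)$ is sound \wrt $\Sigma$.
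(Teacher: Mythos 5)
Your proposal is correct and follows essentially the same route as the paper: the paper's proof is literally ``Analogous to \cref{lem:dtgd_ifc_sound}'', and you spell out exactly that analogy — reduce soundness to the entailment $\Sigma\models\ifc(\Sigma)$ and verify it rule by rule, with the de-Skolemisation bookkeeping identifying functional head atoms with the existential-variable atoms that $\ifc$ discards. The extra care you take over the Skolem-extended signature is a faithful (and slightly more explicit) rendering of what the paper leaves implicit.
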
 
\begin{proof}
Analagous to \cref{lem:dtgd_ifc_sound}.
\end{proof} 

Finally, we are ready to define our rewriting algorithm.
\begin{defn}[Disjunctive guarded saturation] \label{defn:dgtgd_guarded_closure}
We define the following inference rule on single-headed guarded simple rules:
\begin{itemize}
\item \emph{\devolve}. Assume we have some non-full $\tau=\beta\rightarrow\head$,
a functional atom $H\in\head$,
a rule $\tau'=\beta'\rightarrow\head'$,
and some $B'\in\body'$ such that
\begin{propylist}
\item $\tau'$ is full and $B'$ is a guard, or
\item $B'$ is functional,
\end{propylist} 
and assume that $\tau,\tau'$ use distinct variables (otherwise, perform a renaming).
Suppose~$\theta$ is an mgu of $H$ and $B'$ and define 
\begin{align*}
\beta''&\coloneqq\bigl(\beta\cup\bigl(\beta'\setminus\{B'\}\bigr)\bigr)\theta,\\
\head''&\coloneqq\bigl(\bigl(\head\setminus\{H\}\bigr)\cup\head'\bigr)\theta.
\end{align*} 
We then add the rule $\vnf(\body''\rightarrow\head'')$.
\end{itemize}
Given a set of \dgtgds~$\Sigma$, let \iindex[not]{$\dgcloex(\Sigma)$} be the closure
of $\vnf(\skol(\shnf(\Sigma)))$ under the given inference rule.
The \emph{\iindex{disjunctive guarded saturation}} is then defined as
\index[not]{$\dgclo(\Sigma)$}$\dgclo(\Sigma)\coloneqq\ifc(\dgcloex(\Sigma))$.
\end{defn}

Before we verify the correctness of the saturation, we show that all derived rules are of a common form.
\begin{prop}\label{prop:dgtgd_guarded_closure_resolvent_properties}
Any $\sigma\in\dgcloex(\Sigma)$ is a single-headed guarded simple rule in variable normal form
and $\width(\sigma)\leq \bwidth(\shnf(\Sigma))$.
\end{prop}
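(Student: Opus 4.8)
The plan is to mirror the inductive argument of \cref{prop:gtgd_guarded_closure_resolvent_properties}, proving all four assertions (single-headedness, the guarded simple form, variable normal form, and the width bound) simultaneously by induction on the construction of $\dgcloex(\Sigma)$. Variable normal form is automatic for every rule since the \devolve rule ends with an explicit $\vnf(\cdot)$ application, and single-headedness is immediate throughout: the head $\head''=((\head\setminus\{H\})\cup\head')\theta$ built by \devolve is a union of single atoms, so it again consists of single-atom disjuncts. Recalling the preceding lemma that $\width=\bwidth$ for guarded simple rules, it then suffices to bound $\bwidth(\sigma)$ by $\bwidth(\shnf(\Sigma))$.

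For the base case I would check that each rule of $\vnf(\skol(\shnf(\Sigma)))$ is a single-headed guarded simple rule. By \cref{defn:dtgd_shnf} every rule of $\shnf(\Sigma)$ is single-headed, and its body is function-free with a guard inherited from the guardedness of $\Sigma$. Applying \cref{defn:dtgd_Skolemisation} leaves the (function-free, guarded) body untouched and replaces each existential $y_j$ by a Skolem term $f_{i,j}(\vec x)$ over the full list of body variables; since no existential variables remain, the variables of the rule are exactly $\vec x$, so every functional term contains all variables of the rule. The body width equals $|\vec x|$, which is at most $\bwidth(\shnf(\Sigma))$ by definition, establishing the base case.

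For the inductive step, suppose $\sigma=\vnf(\body''\rightarrow\head'')$ is produced by \devolve from a non-full $\tau=\body\rightarrow\head$ with functional $H\in\head$ and a rule $\tau'=\body'\rightarrow\head'$ with chosen $B'\in\body'$, using $\theta=\mgu(H,B')$. By the inductive hypothesis $\tau$ carries a function-free guard $G$ with $\vars(G)=\vars(\tau)$, and every functional term of $H$ has the form $f(\vec x_\tau)$ listing all variables of $\tau$. The heart of the argument is to show that $\theta$ forces the variables of $\tau'$ into terms over $\vars(\tau)$. In the first case of \cref{defn:dgtgd_guarded_closure}, $\tau'$ is full (hence function-free) and $B'$ is a guard, so $\vars(\tau')=\vars(B')$; unifying the variable-only $B'$ against the functional simple $H$ I can choose the mgu so that each variable of $\tau'$ is eliminated in favour of a term over $\vec x_\tau$, while no variable of $\tau$ is sent to a functional term (functions occur only in $H$, and the matching $B'$-positions are variables). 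In the second case, $B'$ is itself functional, so every functional term of $B'$ lists all of $\vec x_{\tau'}$; since both atoms are simple and unify into a simple atom, their functional subterms must align position-wise with matching Skolem symbols, which componentwise identifies $\vec x_\tau\theta$ with $\vec x_{\tau'}\theta$. Either way $\vars(\sigma)\subseteq\vec x_\tau\theta$, so $\bwidth(\sigma)\le\bwidth(\tau)\le\bwidth(\shnf(\Sigma))$ by the inductive hypothesis. Finally, $G\theta$ stays function-free (as $\theta$ substitutes no variable of $\tau$ by a functional term in the controlled mgu) and still contains every variable of $\sigma$, giving the required guard; and each functional term of $\sigma$ is the $\theta$-image $f(\vec x_\tau\theta)$ of a functional term of $\tau$ (or of $\tau'$, identified with it), which lists exactly $\vec x_\tau\theta=\vars(\sigma)$, so $\sigma$ is a guarded simple rule.

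The step I expect to be the main obstacle is the second case of the inductive step, where both $H$ and $B'$ are functional: I must rule out that unification produces nested terms or merges the disjoint variable sets $\vec x_\tau$ and $\vec x_{\tau'}$ into something larger. The decisive point is the invariant that in a guarded simple rule every functional term carries all variables of the rule, so a successful unification yielding a simple atom is possible only when the Skolem functions of $H$ and $B'$ coincide at a common position, which pins $\vec x_\tau\theta=\vec x_{\tau'}\theta$. I would isolate this as a short unification lemma before running the induction, playing the role that \cref{lem:mgu_function_free} and \cref{lem:gtgd_guarded_guard_included} play in the non-disjunctive proof.
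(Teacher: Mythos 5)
Your plan follows the paper's proof almost exactly: the same induction on the construction of $\dgcloex(\Sigma)$, the same base-case check on $\vnf(\skol(\shnf(\Sigma)))$, the same case split on whether $B'$ is a function-free guard or a functional atom, and the same decisive invariant (every functional term contains all variables of its rule) driving the width bound and the preservation of the guarded simple form. The structural claims (single-headedness, variable normal form, the reduction of $\width$ to $\bwidth$) are handled as in the paper.

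There is, however, one point where your sketch is circular as written. In the functional-vs-functional case you argue that ``since both atoms are simple and unify \emph{into a simple atom}, their functional subterms must align position-wise,'' and your proposed unification lemma is phrased as ``a successful unification \emph{yielding a simple atom} is possible only when the Skolem functions coincide at a common position.'' But the simplicity of $H\theta=B'\theta$ is precisely what has to be proven (it is \cref{claim:dgtgd_guarded_closure_resolvent_properties_res_simple} in the paper); you cannot take it as a hypothesis of the lemma. The paper closes this by exhibiting a concrete unifier $\theta'$ that collapses all variables to a single fresh variable $z$ (so $H\theta'$ is manifestly simple), verifying position by position that $\theta'$ really unifies $H$ and $B'$, and then factoring $\theta'=\theta\delta$ through the mgu to transfer simplicity back to $H\theta$. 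The position-by-position verification is where the mixed cases (a variable of one atom facing a functional term of the other) are excluded, and excluding them is not automatic: it needs an occur-check contradiction that exploits the invariant that the functional term contains \emph{all} variables of its rule, including the one it is being equated with (or a term containing it). The same subtlety arises in your first case: a variable of the guard $B'$ may occur at two positions of $B'$, one facing $x_i$ and one facing $f(\vec x)$, which would force $x_i\theta=f(\vec x)\theta$; this must be ruled out by the occur check rather than by the observation that ``the matching $B'$-positions are variables.'' Your instinct to isolate this as a standalone unification lemma in the spirit of \cref{lem:mgu_function_free} is the right one, but the lemma must be stated so that simplicity of the unified atom is its \emph{conclusion}, proved via an explicitly constructed simple unifier, not a side condition.
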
 
\begin{proof}
We prove the claim inductively.
Clearly, all rules in $\vnf(\skol(\shnf(\Sigma)))$ satisfy the conditions,
and all derived rules will be in~variable normal form due to the final $\vnf(\cdot)$ application.

Assume we create $\sigma=\vnf(\body''\rightarrow\head'')$ by \devolve from
some non-full ${\tau=\body\rightarrow\head}$ and some guarded simple
$\tau'=\body'\rightarrow\head'$ in $\dgcloex(\Sigma)$.
Let $H,B'$ be the resolved atoms in~$\head,\body'$, respectively,
let $\theta$ be the mgu of $H,B'$,
and without loss of generality, assume $\vars(\body)=\vec x$ and $\vars(\body')=\vec y$.
\begin{claim}\label{claim:dgtgd_guarded_closure_resolvent_properties_shnf}
$\sigma$ is single-headed
\end{claim} 
\begin{subproof}
By assumption, $\tau$ and $\tau'$ are single-headed, and as $\head''\subseteq(\head\cup\head')\theta$, also $\sigma$ will be single-headed.
\end{subproof} 
\begin{claim}\label{claim:dgtgd_guarded_closure_resolvent_properties_res_vars}
$H$ and $B'$ contain all variables of $\tau$ and $\tau'$, respectively.
\end{claim} 
\begin{subproof}
$H$ is functional and hence contains all variables of $\tau$.
$B'$ is either a guard of $\tau'$ or functional, and hence $B'$ contains all variables of $\tau'$.
\end{subproof} 
Since $H\theta=B'\theta$, it follows that $\vars(\vec x\theta)=\vars(\vec y\theta)$.
\begin{claim}\label{claim:dgtgd_guarded_closure_resolvent_properties_res_simple}
$H\theta$ and $B'\theta$ are simple.
\end{claim} 
\begin{subproof}
Let $H=R(t_1,\dotsc,t_a)$ and $B'=R(t_1',\dotsc,t_a')$ for some relation symbol $R$,
and pick a fresh variable $z$.
We consider two cases: either $B'$ is a function-free guard, or $B'$ is functional.
Assume the former.
We define a substitution $\theta'$ by
\begin{equation*}
v\theta'\coloneqq
\begin{cases}
	z,&\text{if }v\in\vec x\\
	t_i\theta',&\text{if }v=t_i'\in\vec y
\end{cases}.
\end{equation*} 
We show that $\theta'$ is well-defined; that is, if $y_k=t_i'=t_j'$, then $t_i\theta'=t_j\theta'$.
If $t_i,t_j\in\vec x$, clearly $t_i\theta'=t_j\theta'$.
Assume $t_i=f_1(\vec x)$ and $t_j=f_2(\vec x)$.
As $\theta$ is a unifier, we have $t_i\theta=t_i'\theta=t_j'\theta=t_j\theta$, and hence $f_1=f_2$. 
Consequently, $t_i\theta'=t_j\theta'$.
Lastly, we check the cases $t_i\in\vec x$ and $t_j=f(\vec x)$, or $t_j\in\vec x$ and $t_i=f(\vec x)$.
Without loss of generality, assume the former (the other case is symmetric).
Then $t_i\theta=t_i'\theta=t_j'\theta=t_j\theta=f(\vec x)\theta$.
Moreover, $\vars(t_j)=\vec x$ and $t_i\in\vec x$.
But then $t_i\theta\neq f(\vec x)\theta$, a contradiction.
Consequently, the case $t_i\in\vec x$ and $t_j=f(\vec x)$ cannot emerge.
Hence, $\theta'$ is well-defined.
Moreover, $t_i'\theta'=t_i\theta'$ by definition. 
That is, $\theta'$ unifies $H$ and $B'$, 
and further, $H\theta'$ and $B'\theta'$ are simple.
Now $\theta'=\theta\delta$ for some substitution $\delta$, and hence, $H\theta$ and $B'\theta$ must be simple.

Now assume $B'$ is functional.
We proceed similarly as in the previous case.
We define a substitution $\theta'$ defined by $v\theta'\coloneqq z$ for $v\in\vec x\cup\vec y$
and claim that $H\theta'=B'\theta$; that is, $t_i\theta'=t_i'\theta'$ for $1\leq i\leq a$.
Clearly, the claim holds for $t_i\in\vec x$ and $t_i'\in\vec y$.
If $t_i=f_1(\vec x)$ and $t_i'=f_2(\vec y)$, then again $f_1=f_2$ using that $\theta$ is a unifier,
and consequently, $t_i\theta'=t_i'\theta'$.
Lastly, we check the cases $t_i\in\vec x$ and $t_i'=f(\vec y)$, or $t_i=f(\vec x)$ and $t_i'\in\vec y$.
Without loss of generality, assume the former.
Then $t_i\theta=t_i'\theta=f(\vec y)\theta$. 
As $H$ is functional, there is $j\in\{1,\dotsc,a\}$ with $t_j=f'(\vec x)$,
and thus $t_j\theta=t_j'\theta=f'(\vec x)\theta$.
We have two cases:

Assume $t_j'=f'(\vec y)$.
Since $\vars(t_j)=\vec x$, this implies $t_i\theta=y_k\theta$ for some $y_k\in\vec y$.
Hence, $y_k\theta=t_i\theta=t_i'\theta=f(\vec y)\theta$.
But also $y_k\theta\neq f(\vec y)\theta$, a contradiction.
Now assume $t_j'=y_k$.
As $t_i\in\vec x$ and $y_k\theta=t_j'\theta=f'(\vec x\theta)$, 
we get that $t_i\theta$ is a subterm in $y_k\theta$.
But we also have $t_i\theta=t_i'\theta=f(\vec y\theta)$, leading to a contradiction.

Consequently, the case $t_i\in\vec x$ and $t_i'=f(\vec y)$ cannot emerge.
Hence, $\theta'$ unifies $H$ and $B'$, and further, $H\theta'$ and $B'\theta'$ are simple.
Thus, as before, $H\theta$ and $B'\theta$ must also be simple.
\end{subproof} 
\begin{claim}\label{claim:dgtgd_guarded_closure_resolvent_properties_res_vars_non_func}
If $A\in\{H,B'\}$ is functional, then $\vars(A)\theta$ is non-functional.
\end{claim} 
\begin{subproof}
Let $t$ be a functional term in $A$.
Then $t$ contains all variables of the corresponding rule.
Since $A\theta$ is simple, $t\theta$ is shallow, and thus $\vars(A)\theta$ is non-functional.
\end{subproof} 
As $H$ is functional, \cref{claim:dgtgd_guarded_closure_resolvent_properties_res_vars_non_func} implies that $\vars(\vec x\theta)=\vec x\theta$, and hence 
\begin{equation*}
	\width(\sigma)=|\vars(\vec x\theta)|=|\vec x\theta|\leq|\vec x|=\width(\tau)\leq\bwidth(\shnf(\Sigma)).
\end{equation*} 
\vspace{-1.5\baselineskip}
\begin{claim}\label{claim:dgtgd_guarded_closure_resolvent_properties_simple}
$\sigma$ is simple.
\end{claim} 
\begin{subproof}
By \cref{claim:dgtgd_guarded_closure_resolvent_properties_res_vars_non_func},
$\vec x\theta$ is non-functional, and thus, $\tau\theta$ is simple.
Similarly, if $B'$ is functional, $\vec y\theta$ is non-functional, and thus $\tau'\theta$ is simple.
If $B'$ is non-functional, then~$\tau'$ is full and $B'$ is a guard containing all variables of $\tau'$.
Now $B'\theta$ is simple by \cref{claim:dgtgd_guarded_closure_resolvent_properties_res_simple},
and hence also~$\tau'\theta$ must be simple.
\end{subproof} 
\begin{claim}\label{claim:dgtgd_guarded_closure_resolvent_properties_funcs}
Every functional term in $\sigma$ contains all variables.
\end{claim} 
\begin{subproof}
As $\vec x\theta$ is non-functional, all functional terms $t$ in $\tau\theta$ 
still contain all variables of $\vars(\vec x\theta)=\vars(\vec y\theta)$.
If $B'$ is functional, then $\vec y\theta$ is non-functional and the same reasoning applies.
If $\tau$ is full and $B'$ is a guard, let $t$ be a functional term in $\tau'\theta$.
Then there must be $y_i\in\vec y=\vars(B')$ such that $y_i\theta=t$.
As $B'\theta=H\theta$, $t$ occurs in $\tau\theta$, and hence, by our first case, contains all variables.
\end{subproof} 
\begin{claim}\label{claim:dgtgd_guarded_closure_resolvent_properties_guard}
$\sigma$ contains a function-free guard.
\end{claim} 
\begin{subproof}
Let $G$ be the non-functional guard of $\tau$.
Then $G\theta$ is not removed in $\sigma$.
As~$\vec x\theta$ is non-functional, $G\theta$ is non-functional,
and further, $G\theta$ contains all the variables of $\vec x\theta=\vars(\vec y\theta)$.
\end{subproof} 
Now by \cref{claim:dgtgd_guarded_closure_resolvent_properties_shnf,claim:dgtgd_guarded_closure_resolvent_properties_simple,claim:dgtgd_guarded_closure_resolvent_properties_funcs,claim:dgtgd_guarded_closure_resolvent_properties_guard}, $\sigma$ is a single-headed guarded simple rule.
\end{proof} 
\subsection{Correctness}
To prove the correctness of the disjunctive guarded saturation,
we again first verify that the algorithm is sound.
\begin{lem}\label{lem:dgtgd_guarded_closure_ex_soundness}
$\dgcloex(\Sigma)$ is sound \wrt $\shnf(\Sigma)$.
\end{lem}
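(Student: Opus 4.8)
The plan is to follow the template of \cref{lem:gtgd_guarded_closure_ex_soundness}, adapted to account for the Skolemisation built into $\dgcloex$. By \cref{defn:soundness} I must show that $\db,\dgcloex(\Sigma)\models Q$ implies $\db,\shnf(\Sigma)\models Q$ for every database $\db$ and quantifier-free \ubcq $Q$. The key intermediate claim is the entailment
\[
\skol(\shnf(\Sigma))\models\dgcloex(\Sigma)
\]
in the extended signature carrying the Skolem function symbols; everything else is bookkeeping about Skolemisation. Note that the guard/functional side conditions on the resolved atom will play no role in soundness — they are needed only to keep resolvents guarded simple rules (\cref{prop:dgtgd_guarded_closure_resolvent_properties}).

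First I would establish this entailment by induction on the construction of $\dgcloex(\Sigma)$. The base rules are $\vnf(\skol(\shnf(\Sigma)))$, renamings of the Skolemised rules and hence trivially consequences. For the inductive step it suffices to show that every \devolve resolvent is a logical consequence of its two premises, since logical consequence is transitive and preserved under the final $\vnf$ renaming: if $\tau,\tau'\in\dgcloex(\Sigma)$ are already entailed by $\skol(\shnf(\Sigma))$ and $\tau,\tau'\models\sigma$, then $\skol(\shnf(\Sigma))\models\sigma$. So suppose $\sigma=\vnf(\body''\rightarrow\head'')$ is obtained by \devolve from a non-full $\tau=\body\rightarrow\head$ with functional $H\in\head$ and a rule $\tau'=\body'\rightarrow\head'$ with distinguished $B'\in\body'$, where $\theta$ is an mgu of $H$ and $B'$, $\body''=(\body\cup(\body'\setminus\{B'\}))\theta$, and $\head''=((\head\setminus\{H\})\cup\head')\theta$.

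Let $\calA$ be any structure with $\calA\models\tau,\tau'$; I would show $\calA\models\body''\rightarrow\head''$ (the $\vnf$ renaming is irrelevant). Fix an assignment $\vec a$ with $\calA\models\body''[\vec a]$, and read $\theta$ together with $\vec a$ as a single assignment to the variables of $\tau$ and $\tau'$. The two parts of $\body''$ give $\calA\models\body\theta[\vec a]$ and $\calA\models(\body'\setminus\{B'\})\theta[\vec a]$. Applying $\tau$ yields that some disjunct of $\head\theta$ holds under $\vec a$. If that disjunct is some $H_i\neq H$, then $\calA\models(\head\setminus\{H\})\theta[\vec a]$, already a disjunct of $\head''$. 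Otherwise $\calA\models H\theta[\vec a]$; since $H\theta=B'\theta$, this gives $\calA\models B'\theta[\vec a]$, hence $\calA\models\body'\theta[\vec a]$, and applying $\tau'$ produces a disjunct of $\head'\theta$, again a disjunct of $\head''$. In every case $\calA\models\head''[\vec a]$, so the resolvent is sound; this argument uses only that $\theta$ is a unifier and is unaffected by the disjunctions or by functional terms.

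It remains to transfer the entailment to $\shnf(\Sigma)$ at the level of query answering. Given $\db,\dgcloex(\Sigma)\models Q$, take any model $\calB$ of $\db$ and $\shnf(\Sigma)$. Using the semantic content of \cref{thm:Skolem_equisat} — a model of a formula expands to a model of its Skolemisation by interpreting each Skolem function as a witness choice for the corresponding, possibly disjunctive, existential — I would expand $\calB$ to a structure $\calB^*$ over the extended signature with $\calB^*\models\skol(\shnf(\Sigma))$. Then $\calB^*\models\dgcloex(\Sigma)$ by the entailment above, and $\calB^*\models\db$ since the expansion only interprets new function symbols; hence $\calB^*\models Q$. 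As $Q$ is function-free and uses only relations of the original schema, $\calB\models Q$ \ifftext $\calB^*\models Q$, so $\calB\models Q$, giving $\db,\shnf(\Sigma)\models Q$. The main obstacle is making the resolution-soundness step fully precise in the disjunctive, Skolemised setting: one must check that $\theta$ acts uniformly on both premises, that the head disjunction is handled exactly by the case split above, and that the witness-expansion of $\calB$ is genuinely available — the one place where the choice of Skolem witnesses, justified by \cref{thm:Skolem_equisat}, is actually invoked.
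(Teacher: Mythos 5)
Your proposal is correct and follows the same route as the paper: the paper's proof likewise reduces via \cref{thm:Skolem_equisat} to showing $\skol(\shnf(\Sigma))\models\dgcloex(\Sigma)$ and then dismisses that entailment as a special case of the soundness of first-order resolution, citing \cite{robinson1965machine}. You simply unpack the two cited ingredients — the inductive resolution-soundness argument for \devolve resolvents and the Skolem-witness expansion of a model of $\shnf(\Sigma)$ — both of which you carry out correctly.
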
 
\begin{proof}
By \cref{thm:Skolem_equisat}, it suffices to show that $\dgcloex(\Sigma)$ is sound \wrt $\skol(\shnf(\Sigma))$.
For this, it suffices to show that $\skol(\shnf(\Sigma))\models \dgcloex(\Sigma)$,
which is just a special case of the soundness proof for standard first-order logic resolution \citep{robinson1965machine}.
A detailed, formalised proof can be found in the Masters thesis of \cite{schlichtkrull2015formalization}.
\end{proof} 

\begin{cor}[Soundness]\label{cor:dgtgd_guarded_closure_soundness}
$\dgclo(\Sigma)$ is sound \wrt $\shnf(\Sigma)$.
\end{cor}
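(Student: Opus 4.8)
The plan is to obtain the claim by composing two soundness facts already in hand, using that the soundness relation is transitive. First I would observe that soundness chains: unwinding \cref{defn:soundness}, if $\db,\Sigma_2\models Q$ implies $\db,\Sigma_1\models Q$ and $\db,\Sigma_1\models Q$ implies $\db,\Sigma_0\models Q$ for every database $\db$ and quantifier-free \ubcq $Q$, then $\db,\Sigma_2\models Q$ implies $\db,\Sigma_0\models Q$. Hence ``$\Sigma_2$ sound \wrt $\Sigma_1$'' together with ``$\Sigma_1$ sound \wrt $\Sigma_0$'' yields ``$\Sigma_2$ sound \wrt $\Sigma_0$''.

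Next I would line up the two links of the chain. By \cref{lem:dgtgd_guarded_closure_ex_soundness}, $\dgcloex(\Sigma)$ is sound \wrt $\shnf(\Sigma)$; this is the outer link. For the inner link, recall that $\dgclo(\Sigma)=\ifc(\dgcloex(\Sigma))$. By \cref{prop:dgtgd_guarded_closure_resolvent_properties}, every $\sigma\in\dgcloex(\Sigma)$ is a single-headed guarded simple rule, so $\dgcloex(\Sigma)$ is a set of guarded simple rules and \cref{lem:guarded_simple_ifc_sound} applies, giving that $\ifc(\dgcloex(\Sigma))$ is sound \wrt $\dgcloex(\Sigma)$.

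Finally, I would feed these two links into the transitivity observation with $\Sigma_2=\dgclo(\Sigma)$, $\Sigma_1=\dgcloex(\Sigma)$, and $\Sigma_0=\shnf(\Sigma)$, concluding that $\dgclo(\Sigma)$ is sound \wrt $\shnf(\Sigma)$.

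I do not expect a genuine obstacle here, since the corollary merely packages \cref{lem:dgtgd_guarded_closure_ex_soundness} and \cref{lem:guarded_simple_ifc_sound}. The one point deserving care is that \cref{lem:guarded_simple_ifc_sound} is stated for sets of guarded simple rules, and the immediate full consequence is only defined for rules with non-functional body; invoking \cref{prop:dgtgd_guarded_closure_resolvent_properties} to certify that $\dgcloex(\Sigma)$ consists of guarded simple rules is precisely what licenses the application, with $\ifc$ silently restricting to those rules on which it is defined.
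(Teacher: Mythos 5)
Your proposal is correct and follows essentially the same route as the paper: the paper's proof likewise chains \cref{lem:dgtgd_guarded_closure_ex_soundness} with \cref{lem:guarded_simple_ifc_sound} applied to $\dgclo(\Sigma)=\ifc(\dgcloex(\Sigma))$, relying implicitly on the transitivity of soundness. Your extra remark that \cref{prop:dgtgd_guarded_closure_resolvent_properties} is what certifies $\dgcloex(\Sigma)$ as a set of guarded simple rules is a worthwhile precision the paper leaves tacit.
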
 
\begin{proof}
By \cref{lem:dgtgd_guarded_closure_ex_soundness}, $\dgcloex$ is sound \wrt $\shnf(\Sigma)$,
and by \cref{lem:guarded_simple_ifc_sound}, $\dgclo(\Sigma)=\ifc(\dgcloex(\Sigma))$ is sound \wrt $\dgcloex(\Sigma)$.
\end{proof} 

To prove (evolve)-completeness of the algorithm, we want to talk about
chase steps using full as well as non-full rules from $\dgcloex(\Sigma)$.
The non-full rules of $\dgcloex(\Sigma)$, however,
do not contain existential variables for which a fresh constant
can be introduced in a chase step;
instead, each existential variable got replaced by a fresh Skolem function.
We thus generalise the notion of chase steps in the natural way. That is, we allow a trigger $h$ to map to ground terms with constants from $\consts(I)$.
In fact, during our completeness proof, we will only have to consider chase trees that introduce simple facts, like the one in the following example.

\begin{exmpl}\label{exmpl:chase_tree_guarded}
Given the database $\db=\{R(c,d)\}$ and a set of guarded simple rules $\Sigma$
\begin{align*}
R(x_1,x_2)&\rightarrow S(x_1,f_{1,1}(x_1,x_2))\lor T(x_1,x_2),\\
T(x_1,x_2)&\rightarrow U(x_1,x_2,f_{2,1}(x_1,x_2)),\\
U(x_1,x_2,x_3)&\rightarrow M(x_1)\lor P(x_2),\\
S(x_1,x_2)&\rightarrow M(x_2),
\end{align*} 
the tree depicted in \cref{fig:exmpl_chase_tree_guarded} is a chase tree from~$\db$ and $\Sigma$.
\begin{figure}[ht]
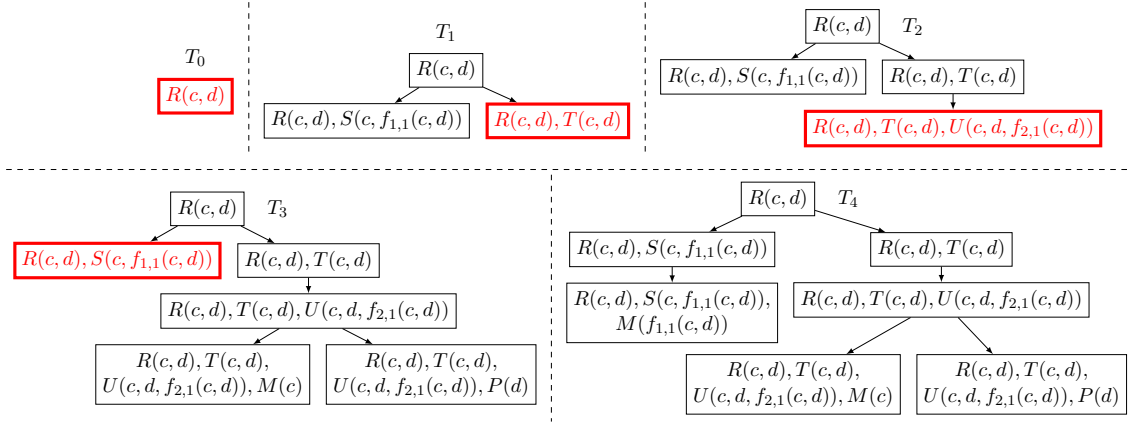

	\centering
	\includestandalone[width=\textwidth]{./figures/exmpl_chase_tree_guarded}
	\caption{Chase tree for \cref{exmpl:chase_tree_guarded};
	nodes that will be extended are marked bold and red.}
	\label{fig:exmpl_chase_tree_guarded}
\end{figure}
\end{exmpl} 

Now recall the definition of evolve-completeness:
\dgtgdabstrcompletenessprops*

The first property again follows immediately:
\begin{lem}\label{lem:dgtgd_guarded_closure_prop_ifc}
For every $\tau\in\ifc(\shnf(\Sigma))$, we have $\vnf(\tau)\in\dgcloex(\Sigma)$.
\end{lem}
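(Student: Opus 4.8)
The plan is to unfold the relevant definitions and observe that, when restricted to single-headed rules, taking immediate full consequences can only return \emph{full} rules, each of which already sits in the base set of the saturation; the whole statement then collapses to a membership check.

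First I would fix $\tau\in\ifc(\shnf(\Sigma))$ and write $\tau=\ifc(\tau')$ for some $\tau'\in\shnf(\Sigma)$ on which $\ifc$ is defined. The key observation is that $\tau'$ is single-headed by \cref{defn:dtgd_shnf}, so each head conjunction of $\tau'$ consists of a single atom, say $H_i$ for the $\nth{i}$ disjunct. In the definition of immediate full consequences, the ``full part'' $\head_i'$ of the $\nth{i}$ head conjunction (the atoms containing no existential $y\in\vec y_i$) is required to be non-empty; since here $\head_i'$ can only be $\{H_i\}$ or $\emptyset$, non-emptiness forces $\head_i'=\{H_i\}$, i.e.\ every $H_i$ is existential-free. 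Hence $\tau'$ is full, and for a full single-headed rule one reads off directly that $\ifc(\tau')=\tau'$. In particular $\tau=\tau'$ is itself a full rule of $\shnf(\Sigma)$.

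Second I would use that a full rule has no existential variables, so Skolemisation is the identity on it: $\skol(\tau)=\tau$. Therefore $\vnf(\tau)=\vnf(\skol(\tau))$ lies in $\vnf(\skol(\shnf(\Sigma)))$, which by \cref{defn:dgtgd_guarded_closure} is precisely the base set from which $\dgcloex(\Sigma)$ is formed by closing under \devolve. Consequently $\vnf(\tau)\in\dgcloex(\Sigma)$, which is the claim.

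The only step demanding real care is the first one: correctly matching the non-emptiness requirement in the definition of $\ifc$ against single-headedness, so as to conclude that $\ifc$ never strictly shrinks a single-headed rule but is merely defined on, and acts as the identity on, the full ones. Once that is pinned down, the remainder is a one-line appeal to the fact that $\skol$ and the base clause of $\dgcloex(\Sigma)$ agree on full rules, so there is no genuine combinatorial obstacle here — this is why the surrounding text announces that the property ``follows immediately''.
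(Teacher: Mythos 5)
Your proof is correct and matches the paper's (one-line) argument: the paper simply asserts that the claim ``follows from the base case and final step of the algorithm,'' and your write-up supplies exactly the details behind that remark --- namely that on single-headed rules $\ifc$ is defined only on full rules and acts as the identity there, and that $\skol$ is the identity on full rules, so $\vnf(\tau)$ already lies in the base set $\vnf(\skol(\shnf(\Sigma)))$ of $\dgcloex(\Sigma)$.
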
 
\begin{proof}
Follows from the base case and final step of the algorithm.
\end{proof} 

The proof of the second property requires two lemmas.
Broadly speaking, the first lemma shows that whenever a rule $\tau$ triggers in a set of simple facts $N$,
we can create a corresponding rule $\tau^*$ that triggers in a set of function-free facts and whose consequences contain those of $\tau$, provided that we have a rule $\tau_F$ that creates $F$ and triggers in a function-free set of facts for each $F\in \func{N}$:

\begin{lem}\label{lem:dgtgd_guarded_closure_prop_devolve}
Let $M,N,S$ be sets, $F_1,\dotsc,F_n$ be simple facts,
$\tau\in\dgcloex(\Sigma)$,
and for any $F\in \func{N}$, let $\tau_F\in\dgcloex(\Sigma)$ be non-full such that
\begin{assmlist}
\item \label{assm:dgtgd_guarded_closure_prop_devolve_1}
$M\supseteq \nfunc{N}$ and $M$ is function-free,
\item \label{assm:dgtgd_guarded_closure_prop_devolve_2}
$N,S$ are simple and $\consts(N),\consts(S)\subseteq\consts(M)$,
\item \label{assm:dgtgd_guarded_closure_prop_devolve_3}
$\tau=\body\rightarrow\head$ is full or has a functional body atom,
\item \label{assm:dgtgd_guarded_closure_prop_devolve_4}
firing~$\tau$ in~$N$ produces $\bigvee_{i=1}^n F_i$,
\item \label{assm:dgtgd_guarded_closure_prop_devolve_5}
firing $\tau_F$ in $M$ produces $F\lor \bigvee_{F'\in S_F} F'$ for some $S_F\subseteq S$
\end{assmlist} 
Then there exists $\tau^*\in\dgcloex(\Sigma)$
such that firing $\tau^*$ in $M$ produces $\bigvee_{i=1}^{n} F_i\lor \bigvee_{F'\in S'}F'$
for some $S'\subseteq S$.
\end{lem}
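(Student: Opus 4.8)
The plan is to prove the statement by induction on the number of functional body atoms of $\tau$, discharging one such atom per step via a single \devolve inference and then invoking the inductive hypothesis on the resolvent. The base case is when $\tau$ has a function-free body: then every body atom of $\tau$ is matched by a non-functional fact, so its trigger $h$ sends $\beta$ into $\nfunc{N}$, which lies in the function-free set $M$ by~(i). Hence $\tau$ already fires in $M$ under $h$ and produces exactly $\bigvee_{i=1}^n F_i$, so I may take $\tau^* := \tau$ and $S' := \emptyset$. (Under~(iii) a function-free body forces $\tau$ to be full, but the argument uses only function-freeness of the body — which is also what lets the recursion terminate even when the last resolvent happens to be non-full.)

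For the inductive step, suppose $\tau$ has a functional body atom $B'$. Since $N$ is simple and, by \cref{prop:dgtgd_guarded_closure_resolvent_properties}, every functional term of a guarded simple rule carries all of its variables and is shallow, the trigger $h$ must send $\tau$'s variables to constants of $\consts(M)$, so $F := h(B')$ is a functional fact, i.e.\ $F \in \func{N}$. The hypothesis supplies a non-full $\tau_F = \beta_F \to \eta_F$ with a functional head atom $H_F$ whose trigger $h_F$ into $M$ satisfies $h_F(H_F) = F$. I would then apply \devolve with $\tau_F$ as the non-full rule resolving on $H_F$ and with $\tau$ as the second rule carrying the functional body atom $B'$ (case~(b) of \devolve), using the mgu $\theta^*$ of $H_F$ and $B'$, which exists because the disjoint union $h \cup h_F$ is a common unifier. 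This yields $\tau^{**} \in \dgcloex(\Sigma)$ with body $\beta'' = (\beta_F \cup (\beta \setminus \{B'\}))\theta^*$ and head $\eta'' = ((\eta_F \setminus \{H_F\}) \cup \eta)\theta^*$, and by \cref{prop:dgtgd_guarded_closure_resolvent_properties} the remaining functional body atoms of $\tau$ survive in $\tau^{**}$ as functional atoms while $\beta_F\theta^*$ becomes function-free, so $\tau^{**}$ has strictly fewer functional body atoms than $\tau$.

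The heart of the argument is then to assemble a combined trigger $h^*$ for $\tau^{**}$ exactly as in the proof of \cref{lem:gtgd_guarded_closure_props} (the $h^* = h'' \circ \delta \circ \rho^{-1}$ construction), so that $h^*$ restricts to $h$ on $\tau$'s variables and to $h_F$ on $\tau_F$'s variables; firing $\tau^{**}$ under $h^*$ in $N' := h_F(\beta_F) \cup h(\beta \setminus \{B'\})$ then produces $\bigvee_{i=1}^n F_i \lor \bigvee_{F' \in S_F} F'$. I would check that $(M, N', S)$ together with the unchanged rules $\tau_{F'}$ for $F' \in \func{N'} \subseteq \func{N}$ again meets (i)--(v): $N'$ is simple with $\consts(N') \subseteq \consts(M)$, its non-functional part lies in $M$, and $\tau^{**}$ satisfies~(iii) whenever it still has a functional body atom (otherwise we are already in the base case). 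Applying the inductive hypothesis to $\tau^{**}$ produces $\tau^* \in \dgcloex(\Sigma)$ firing in $M$ with output $\bigvee_{i=1}^n F_i \lor \bigvee_{F' \in S_F} F' \lor \bigvee_{F' \in S''} F'$ for some $S'' \subseteq S$, and setting $S' := S_F \cup S'' \subseteq S$ settles the claim.

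The main obstacle is precisely this trigger algebra: proving that $\theta^*$ preserves simplicity and variable counts (so that \cref{prop:dgtgd_guarded_closure_resolvent_properties} and the \devolve side conditions genuinely apply), that the factorisation $h^* \circ \theta^* = h \cup h_F$ is well-defined, and that consequently $h^*(\beta'') \subseteq N'$ while $h^*(\eta'')$ is exactly the intended disjunction. A secondary point to get right is the phrasing of the induction so that a resolvent which is non-full with function-free body — a shape (iii) forbids for the initial $\tau$ but which can arise after the final discharge — is absorbed into the function-free-body base case rather than recursively demanding~(iii).
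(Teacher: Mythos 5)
There is a genuine gap in your induction measure and base case. You induct on the number of \emph{functional body atoms} of $\tau$ and claim that when the body is function-free, ``every body atom of $\tau$ is matched by a non-functional fact, so its trigger $h$ sends $\body$ into $\nfunc{N}$.'' This is false: a function-free body atom $B$ can still satisfy $h(B)\in\func{N}$, because the trigger $h$ may map a \emph{variable} of $B$ to a functional (Skolem) term. Concretely, a full rule $G(x)\land B(x)\rightarrow H(x)$ fired with $h(x)=f(c)$ has a function-free body but consumes the functional facts $G(f(c)),B(f(c))\in\func{N}$, none of which lie in $M$. Your base case therefore fails exactly on full rules triggered by Skolem terms --- which is the central situation this lemma exists to handle (it is how full rules interact with the functional facts produced by non-full rules in \cref{lem:dgtgd_guarded_pull_induct}). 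The paper instead inducts on $k=|U|$ with $U=\{B\in\body\mid h(B)\in\func{N}\}$, and its first claim shows that when $k>0$ one can always pick $B\in U$ that is either functional \emph{or} the guard of a full $\tau$; the latter alternative invokes case~(a) of \devolve (resolving the functional head atom $H_F$ of $\tau_F$ against the non-functional \emph{guard} of a full rule), a case your argument never uses.

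The rest of your inductive step --- resolving a genuinely functional body atom $B'$ against $H_F$ via case~(b) of \devolve, assembling $h^*=h''\circ\delta\circ\rho^{-1}$, and recursing on the resolvent --- matches the paper's construction and is sound as far as it goes. To repair the proof you would need to (1) change the measure to $|U|$, and (2) add the subcase where no atom of $U$ is functional: then $\tau$ must be full (a functional body atom would automatically land in $U$), its guard $G$ lies in $U$ since $h$ sends some variable to a functional term, and you resolve $H_F$ against $G$ using case~(a) of \devolve. With that subcase added, your termination argument should also be restated as $|(\body\setminus\{B\})\theta^*|<k$ rather than as a count of functional atoms.
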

\begin{proof}
We show the claim by induction on $k\coloneqq|U|$, where $U\coloneqq\{B\in\body\mid h(B)\in \func{N}\}$.
In case $k=0$, we just set $\tau^*\coloneqq\tau$.

In case $k>0$, let $h$ be the trigger obtained from \cref{assm:dgtgd_guarded_closure_prop_devolve_4} for $\tau$.
\begin{claim}\label{claim:dgtgd_guarded_closure_prop_devolve_1}
There is $B\in U$ such that $\tau$ is full and $B$ guards $\tau$, or $B$ is functional.
\end{claim} 
\begin{subproof}
If there is a functional $B\in U$, we are done.
Otherwise, $\tau$ must be full. Let $G$ be the guard of $\tau$.
As $k>0$, $h$ unifies some variable in $\tau$ with a functional term.
Hence, $h(G)$ is functional, and thus $G\in U$.
\end{subproof} 
Fix some $B\in U$ obtained by the just proven claim and set $F\coloneqq h(B)$. 
Pick the corresponding $\tau_F=\body_F\rightarrow\head_F$ and the trigger $h_F$ 
obtained from \cref{assm:dgtgd_guarded_closure_prop_devolve_5}.
Then there is $H_F\in\head_F$ such that $h_F(H_F)=h(B)$.
Without loss of generality, assume $\vars(\body_F)=\vec x$, $\vars(\body)=\vec y$, and $h_F(\vec x)=\vec c$.
Then the substitution $\theta$ defined by
\begin{equation*}
	v\theta\coloneqq\begin{cases}
		x_i,&\text{if }v\in\vec x \text{ and }h_F(v)=c_i\\
		x_i,&\text{if }v\in\vec y \text{ and }h(v)=c_i\\
		f(x_{i_1},\dotsc,x_{i_a}),&\text{if }v\in\vec y \text{ and }h(v)=f(c_{i_1},\dotsc,c_{i_a})
	\end{cases}
\end{equation*} 
is a unifier of $H_F$ and $B$.
Hence, there is an mgu $\theta^*$ of $H$ and $B$, and thus, there is a substitution $\delta$ such that $\theta^*=\theta\delta$.
\begin{claim}\label{claim:dgtgd_guarded_closure_prop_devolve_2}
$H_F$ is a functional atom.
\end{claim} 
\begin{subproof}
By \cref{assm:dgtgd_guarded_closure_prop_devolve_1}, $M$ is function-free,
and hence, $h_F(\vec x)$ is function-free;
however,~$F=h_F(H_F)$ is functional, and thus $H_F$ must be functional.
\end{subproof} 
Using \cref{claim:dgtgd_guarded_closure_prop_devolve_1,claim:dgtgd_guarded_closure_prop_devolve_2},
we can see that all requirements for \devolve on $\tau_F,\tau$ are satisfied.
Thus, we obtain $\tau^*=(\body^*\rightarrow\head^*)=\vnf(\body'\rightarrow\head')\in\dgcloex(\Sigma)$, where
\begin{align*}
	\body'&=\bigl(\body_F\cup\bigl(\body\setminus\{B\}\bigr)\bigr)\theta^*,\\
	\head'&=\bigl(\bigl(\head_F\setminus\{H_F\}\bigr)\cup\head\bigr)\theta^*.
\end{align*} 
Let $\rho$ be the renaming used by $\vnf(\cdot)$ for $\body'\rightarrow\head'$,
let $h'$ be the trigger defined by $h'(x_i)\coloneqq c_i$,
and set $h^*\coloneqq h'\circ\delta\circ\rho^{-1}$.
\begin{claim}\label{claim:dgtgd_guarded_closure_prop_devolve_3}
For any $x_i\in\vec x$, we have $h'(x_i\theta)=h_F(x_i)$, 
and for any $y_i\in\vec y$, we have $h'(y_i\theta)=h(y_i)$. 
\end{claim}
\begin{subproof}
First note that $\dom(\theta)=\vec x\cup\vec y$ by \cref{assm:dgtgd_guarded_closure_prop_devolve_1,assm:dgtgd_guarded_closure_prop_devolve_2}.
Pick some $x_i\in\vec x$ and assume $h_F(x_i)=c_j$. 
Then $x_i\theta=x_j$ by definition of $\theta$.
Further, by definition of $h'$, $h'(x_j)=c_j$.
Hence, $h'(x_i\theta)=h'(x_j)=c_j=h_F(x_i)$.

Now pick some $y_i\in\vec y$.
The case $h(y_i)=c_j$ is analogous to the previous one.
In case $h(y_i)=f(c_{j_1},\dotsc,c_{j_a})$,
we get $y_i\theta=f(x_{j_1},\dotsc,x_{j_a})$ and $h'(f(x_{j_1},\dotsc,x_{j_a}))=f(c_{j_1},\dotsc,c_{j_a})$.
\end{subproof} 
\begin{claim}
Firing~$\tau^*$ based on $h^*$ in $M\cup N$ produces 
$\bigvee_{i=1}^{n} F_i\lor \bigvee_{F'\in S_F} F'$.
\end{claim} 
\begin{subproof}
We have 
\begin{align*}
h^*(\body^*)&=h'\circ\delta\circ\rho^{-1}(\body^*)=h'\circ\delta(\body')
=h'\Bigl(\bigl(\body_F\cup\bigl(\body\setminus\{B\}\bigr)\bigr)\theta^*\delta\Bigr)
\subseteq h'\bigl((\body_F\cup\body)\theta^*\delta\bigr)\\ 
&=h'\bigl((\body_F\cup\body)\theta\bigr)
=h'(\body_F\theta\cup\body\theta)
=h'(\body_F\theta)\cup h'(\body\theta)
\stackrel{\cref{claim:dgtgd_guarded_closure_prop_devolve_3}}{=}h_F(\body_F)\cup h(\body)\\
&\subseteq M\cup N.
\end{align*} 
and
\begin{align*}
h^*(\head^*)&\stackrel{\phantom{\cref{claim:dgtgd_guarded_closure_prop_devolve_3}}}{=}h'\circ\delta\circ\rho^{-1}(\head^*)=h'\circ\delta(\head')
=h'\Bigl(\bigl(\bigl(\head_F\setminus\{H_F\}\bigr)\cup\head\bigr)\theta\Bigr)\\
&\stackrel{\cref{claim:dgtgd_guarded_closure_prop_devolve_3}}{=}h'\bigl(\bigl(\head_F\setminus\{H_F\}\bigr)\theta\bigr)\cup h(\head)
=h'\bigl(\bigl(\head_F\setminus\{H_F\}\bigr)\theta\bigr)\cup\{F_1,\dotsc,F_n\}\\
&\stackrel{\cref{claim:dgtgd_guarded_closure_prop_devolve_3}}{=}h_F\bigl(\head_F\setminus\{H_F\}\bigr)\cup \{F_1,\dotsc,F_n\}
=S_F\cup\{F_1,\dotsc,F_n\}.\qedhere
\end{align*} 
\end{subproof} 
Now if $h^*(\tau^*)\subseteq M$, we are done.
Hence assume there is $B\in\body^*$ with $h^*(B)\in \func{N}$.
\begin{claim}
For any $B\in\body^*$ with $h^*(B)\in \func{N}$,
we have $B\in(\body\setminus\{B\})\theta^*$.
Moreover, $\tau^*$ is full or has a functional body atom.
\end{claim} 
\begin{subproof}
By \cref{assm:dgtgd_guarded_closure_prop_devolve_1}, $M$ is function-free.
Hence, no atom in $\body_F\theta^*$ unifies with an atom in $\func{N}$ via $h^*$, 
and thus $B\in(\body\setminus\{B\})\theta^*$.

Next, pick some $B\in\body^*$ with $h^*(B)\in \func{N}$.
Now either $B$ is functional, and we are done, 
or~$h^*$ maps some $x\in\vars(B)$ to a functional term.
For the sake of contradiction, assume $\tau^*$ is non-full.
Then there is a functional head atom $H$ in $\tau^*$ that contains $x$.
Hence, $h^*(H)$ is non-simple, contradicting the fact that $h^*(\head)$ is simple.
\end{subproof} 
We have $|(\body\setminus\{B\})\theta^*|<k$. 
Thus, we can apply the inductive hypothesis on $\tau^*$, which finishes the proof.
\end{proof}

The second lemma basically provides an inductive proof of \cref{defn:dgtgd_abstr_completeness_props_2};
however, in order to make the induction work, we have to show a more general claim.
\begin{lem}\label{lem:dgtgd_guarded_pull_induct}
Let $T$ be a chase tree from $\dgcloex(\Sigma)$ with root~$N$,
let $M,S$ be sets, 
for any $F\in \func{N}$, let $\tau_F\in\dgcloex(\Sigma)$ be non-full,
and if the first fired rule is non-full,
let $\tau\in\dgcloex(\Sigma)$ such that
\begin{assmlist}
\item the first rule fired in $T$ is full or non-full, and all subsequent rules are full,
\item $M\supseteq \nfunc{N}$ and $M$ is function-free,
\item $N,S$ are simple and $\consts(N),\consts(S)\subseteq\consts(M)$,
\item\label{assm:dgtgd_guarded_pull_induct_4} 
firing $\tau_F$ in $M$ produces $F\lor \bigvee_{F'\in S_F} F'$ for some $S_F\subseteq S$,
\item if the first fired rule is non-full, it produces $\bigvee_{i=1}^{n}F_i$
for some simple $F_1,\dotsc,F_n$,
\item $\tau$ is non-full,
\item\label{assm:dgtgd_guarded_pull_induct_7}
firing~$\tau$ in~$\nfunc{N}$ produces $\bigvee_{i=1}^n F_i\lor\bigvee_{F'\in S'}F$ for some $S'\subseteq S$.
\end{assmlist} 
Then there exists a chase tree $T^*$ from $M$ and $\dgcloex(\Sigma)$ 
such that for any $L^*\in\lvs(T^*)$,
\begin{thmlist}
\item $L^*\models\nfunc{\lvs(T)}$ and $L^*$ is function-free, or
\item $L^*=\{F\}\cup U_{L^*}$ for some $F\in S$ 
and $U_{L^*}\subseteq M\cup\bigcup_{L\in\lvs(T)}\nfunc{L}$.
\end{thmlist} 
\end{lem}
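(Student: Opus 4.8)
The statement is the key technical lemma (Lemma~\ref{lem:dgtgd_guarded_pull_induct}) powering the lifting argument for the disjunctive case, and its role is to provide the inductive backbone of \cref{defn:dgtgd_abstr_completeness_props_2}: given a chase tree $T$ whose nodes contain simple (Skolemised) facts, it produces a \emph{function-free} chase tree $T^*$ that simulates $T$ over the non-functional facts. The natural proof is by induction on the number of chase steps used to build $T$. My plan is to set up the induction so that in the step case I can peel off the first fired rule, apply \cref{lem:dgtgd_guarded_closure_prop_devolve} to convert the (possibly functional) triggering rule into a function-free rule $\tau^*\in\dgcloex(\Sigma)$ that fires in $M$, and then recurse on the resulting subtrees.

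\textbf{First steps.}
First I would handle the base case, where $T$ is a single node $N$: I set $T^*$ to the single node $M$, and conclusion~(a) holds since $M\supseteq\nfunc{N}\supseteq\nfunc{\lvs(T)}$ and $M$ is function-free. Next I would split the step case according to whether the first fired rule of $T$ is full or non-full. If it is full, it produces a disjunction of simple facts in $N$, and since $M\supseteq\nfunc{N}$, I can hope to fire (the appropriate consequence of) the same rule directly in $M$ after using \cref{lem:dgtgd_guarded_closure_prop_devolve} to discharge any functional body atoms against the supplied $\tau_F$; the functional disjuncts are absorbed into the $S'$-slack, giving leaves of type~(b). The interesting case is when the first rule $\tau$ is non-full (assumption \ref{assm:dgtgd_guarded_pull_induct_7}): here I would invoke \cref{lem:dgtgd_guarded_closure_prop_devolve} on $\tau$ together with the family $\{\tau_F\}_{F\in\func{N}}$ and assumption \ref{assm:dgtgd_guarded_pull_induct_4} to obtain a $\tau^*\in\dgcloex(\Sigma)$ firing in $M$ and producing $\bigvee_{i=1}^n F_i\lor\bigvee_{F'\in S'}F'$; the functional heads among the $F_i$ then serve as the new ``birth facts'' whose generating rules feed the recursive calls on the child subtrees.

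\textbf{The recursion and the main obstacle.}
The delicate part is bookkeeping the invariants through the recursion: when I recurse into a child subtree $T^{N^i}$, I must re-establish the hypotheses of the lemma with an updated function-free set $M'$ (now including the non-functional facts just produced) and an updated $S'$, and crucially I must supply, for each new functional fact $F\in\func{N^i}$, a non-full rule $\tau_F'\in\dgcloex(\Sigma)$ that creates $F$ from a function-free set. These $\tau_F'$ are exactly the rules produced by the \devolve-style composition of the first step, so the closure property of $\dgcloex(\Sigma)$ is what keeps the induction inside the saturation. The main obstacle I anticipate is precisely tracking how the functional facts and their witness-rules propagate down the tree: I must argue that every functional fact appearing at a node was created by a rule already in $\dgcloex(\Sigma)$ that triggers in a function-free context, so that assumption \ref{assm:dgtgd_guarded_pull_induct_4} is maintained, and that the two leaf-types (a) and (b) are preserved under the recursive gluing of the $T^*$-subtrees onto the appropriate leaves. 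Matching the $S$-slack correctly --- ensuring $U_{L^*}\subseteq M\cup\bigcup_{L\in\lvs(T)}\nfunc{L}$ for the type-(b) leaves --- will require care, since each recursion enlarges $M$ and the bound must be stated relative to the \emph{original} $T$ rather than the current subtree.
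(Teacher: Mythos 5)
Your overall strategy coincides with the paper's: induct on the number of chase steps, use \cref{lem:dgtgd_guarded_closure_prop_devolve} to replace the first fired rule by one that triggers in the function-free set $M$, and then recurse on the child subtrees with updated witness rules. One minor slip: in the case where the first fired rule is non-full you propose to run \cref{lem:dgtgd_guarded_closure_prop_devolve} on $\tau$, but that lemma requires its input rule to be full or to have a functional body atom, whereas the supplied $\tau$ fires in $\nfunc{N}$ and hence has a function-free body. You do not need the lemma there at all --- \cref{assm:dgtgd_guarded_pull_induct_7} already hands you a rule that fires in $\nfunc{N}\subseteq M$ and produces $\bigvee_{i}F_i\lor\bigvee_{F'\in S'}F'$; the \devolve-lemma is only needed when the first rule is full, since only then can its body mention functional facts of $N$.

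The substantive gap is that you name, but do not resolve, the combinatorial core of the argument: how to glue the recursively obtained subtrees onto $T^*$ so that the two leaf types are preserved relative to the \emph{original} $T$. After firing the defunctionalised first rule in $M$, the new leaves carry the disjuncts $F_1,\dotsc,F_n$ plus slack from $S$. The paper processes the indices $i=1,\dotsc,n$ one at a time while maintaining an explicit invariant: after stage $i$, every leaf either models $\bigcup_{j\le i}\nfunc{\lvs(T_k^{N^j})}$ and is function-free, or has the form $\{F\}\cup U$ with $F\in\{F_{i+1},\dotsc,F_n\}\cup S$ --- that is, the not-yet-processed disjuncts are temporarily absorbed into the slack set. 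At stage $i$ one locates, for each leaf containing $F_i$, the deepest ancestor $N_{L^*}$ at which a rule producing $F_i\lor\bigvee_{F'\in V_{L^*}}F'$ fired, applies the inductive hypothesis to $T_k^{N^i}$ with $S$ enlarged by $V_{L^*}$ (and with that very rule serving as the witness $\tau_{F_i}$ when $F_i$ is functional), and then re-attaches the already-built subtrees rooted at the siblings of $F_i$ to the new leaves that end in some $F\in V_{L^*}$. Without this invariant and the re-attachment step, the recursive calls spawn leaves ending in sibling disjuncts that are never expanded, and the type-(a) conclusion $L^*\models\nfunc{\lvs(T)}$ cannot be established. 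Your plan correctly identifies where the difficulty lies, but the inductive device that overcomes it is the actual content of the proof and is missing from the proposal.
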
 
\begin{proof}
We prove the claim by induction on the number of steps used to create $T$, 
denoted by $k$.
In case $k=0$, we let $T^*$ be the single node $M$.

In case $k>0$,
let $\sigma$ be the first rule fired in $T_k$, producing $\bigvee_{i=1}^nF_i$.
\begin{claim}
There is $\sigma'\in\dgcloex(\Sigma)$ such that firing $\sigma'$ in $M$ produces
$\bigvee_{i=1}^n F^i\lor\bigvee_{F'\in S'}F'$ for some $S'\subseteq S$.
\end{claim} 
\begin{subproof}
If $\sigma$ is non-full, we can pick $\tau$ by \cref{assm:dgtgd_guarded_pull_induct_7}.
If $\sigma$ is full, by \cref{lem:dgtgd_guarded_closure_prop_devolve}, 
there is $\sigma'\in\dgcloex(\Sigma)$ such that firing~$\sigma'$
in $M$ produces $\bigvee_{i=1}^n F^i\lor\bigvee_{F\in S'}F$ for some $S'\subseteq S$.
\end{subproof} 
Let $T^*$ be the chase tree created by firing $\sigma'$ in $M$.
We now iteratively modify $T^*$ for $0\leq i\leq n$ 
while maintaining the following invariant:

\medskip
After step $i$, for any $L^*\in\lvs(T^*)$, we have
\begin{invarlist}[nolistsep]
\item \label{invar:dgtgd_guarded_pull_induct_1}
$L^*\models U^i$ and $L^*$ is function-free, 
where $U^i\coloneqq \bigcup_{j=1}^i\nfunc{\lvs\bigl(T_k^{N^j}\bigr)}$, or
\item \label{invar:dgtgd_guarded_pull_induct_2}
$L^*=\{F\}\cup U_{L^*}^i$ for some $F\in (\{F_{i+1},\dotsc,F_n\}\cup S)$ 
and $U_{L^*}^i\subseteq M\cup\bigcup_{L\in U^i}L$.
\end{invarlist} 
\medskip

Note that our initial $T^*$ already satisfies the invariant for $i=0$.
In case $i>0$, for any leaf~$L^*$ in~$T^*$ with $L^*=\{F_i\}\cup U_{L^*}^{i-1}$, 
let $N_{L^*}$ be the ancestor of $L^*$ in which some rule~$\tau_{L^*}$
producing $F_i\lor\bigvee_{F\in V_{L^*}} F$ was fired for some set $V_{L^*}$.
We modify the subtree of any $N_{L^*}$ in the following way:

Pick some $N_{L^*}$ with maximum distance to the root.
Due to our invariant, $U_{L^*}^{i-1}$ is function-free, and hence, so is $N_{L^*}$.
As a consequence, $V_{L^*}$ is simple.
\begin{claim}
For any $F\in \func{N^i}$, there is a non-full~$\tau_F$  such that
firing $\tau_F$ in $\func{N^i}$ produces $F\lor \bigvee_{F'\in S_F} F'$ for some $S_F\subseteq S\cup V_{L^*}$.
\end{claim} 
\begin{subproof}
For any $F\in \func{N}\cap \func{N^i}$, there is a rule by \cref{assm:dgtgd_guarded_pull_induct_4},
and if $F_i$ is functional, we can additionally use the 
non-full $\tau_{L^*}$ as $\tau_{F_i}$.
\end{subproof} 
Thus, we can apply the inductive hypothesis on $T_k^{N_i}$ to obtain a chase tree $(T')^i$ from $N_{L^*}$
and $\dgcloex(\Sigma)$ such that for any $L'\in\lvs((T')^i)$,
either $L'\models\nfunc{\lvs\bigl(T_k^{N^i}\bigr)}$ and $L'$ is function-free,
or $L'=\{F\}\cup U_{L'}$ for some $F\in(V_{L^*}\cup S)$ and 
\begin{equation*}
U_{L'}\subseteq \Biggl(N_{L^*}\cup\bigcup_{L\in\lvs\bigl(T_k^{N^i}\bigr)}\nfunc{L}\Biggr)
\subseteq\Biggl(M\cup\bigcup_{L\in U^{i-1}}L\cup\bigcup_{L\in\lvs\bigl(T^{N^i}\bigr)}\nfunc{L}\Biggr)
\subseteq \Biggl(M\cup\bigcup_{L\in U^i}L\Biggr).
\end{equation*} 
For any $L'\in\lvs((T')^i)$ with $L'=\{F\}\cup U_{L'}$ and $F\in V_{L^*}$,
there is a child $N_F$ of $N_{L^*}$ in~$T^*$ containing $F$.
As $T^*$ satisfies the invariant for $i-1$ and due to maximality of $N_{L^*}$, 
the leaves of $(T^*)^{N_F}$ already satisfy the invariant for step $i$.
Hence, we can attach $(T^{*})^{N_F}$ to any such~$L'$
to obtain $(T^*)^i$ that satisfies the invariant for step $i$.
Finally, we attach $(T^*)^i$ to $N_{L^*}$
and proceed with the next choice for $N_{L^*}$.
Once all $N_{L^*}$ are processed, we continue with step $i+1$.

Once step $n$ is finished, we obtain the desired chase tree $T^*$, noting that $U^n=\nfunc{\lvs(T)}$.
\end{proof} 

\begin{prop}[Completeness]\label{prop:dgtgd_guarded_closure_evolve_complete}
$\dgclo(\Sigma)$ is evolve-complete \wrt $\shnf(\Sigma)$.
\end{prop}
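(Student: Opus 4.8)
The plan is to verify the two clauses of \cref{defn:dgtgd_abstr_completeness_props} for $\Sigma'=\dgclo(\Sigma)$ relative to $\shnf(\Sigma)$. Clause \cref{defn:dgtgd_abstr_completeness_props_1} is immediate from \cref{lem:dgtgd_guarded_closure_prop_ifc}: every $\tau\in\ifc(\shnf(\Sigma))$ satisfies $\vnf(\tau)\in\dgcloex(\Sigma)$, and since such a $\tau$ is already full and function-free we have $\ifc(\vnf(\tau))=\vnf(\tau)$, whence $\vnf(\tau)\in\ifc(\dgcloex(\Sigma))=\dgclo(\Sigma)$; modulo renaming this is exactly $\tau\in\dgclo(\Sigma)$.

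For clause \cref{defn:dgtgd_abstr_completeness_props_2} the idea is to transport the given function-free chase tree $T$ into the Skolemised world and then let \cref{lem:dgtgd_guarded_pull_induct} do the heavy lifting. Write $M:=F_r$ for the function-free facts carried by the single node $r$ of the root $N$ (as a set of facts, $N=M$). First I would build a chase tree $\hat{T}$ over $\dgcloex(\Sigma)$ that mirrors $T$: its first step fires $\vnf(\skol(\tau_r))\in\dgcloex(\Sigma)$, replacing the fresh witnesses created by the $n$ non-full head conjunctions of $\tau_r$ with the corresponding Skolem terms $f_{i,j}(\vec c)$, and each branch re-enacts over simple facts the full $\dgclo(\Sigma)$-firings performed in the subtree $T^{N^i}$. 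The purpose of this mirror is a clean dictionary between the two worlds: the node $r$ only ever carries function-free facts, every witness-bearing fact of a child $v_i$ becomes a simple functional fact, and hence $\nfunc{\lvs(\hat{T})}=\lvs(T)_r=Q^T$.

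Next I would instantiate \cref{lem:dgtgd_guarded_pull_induct} on $\hat{T}$ with root $M$ (so that $\func{N}=\emptyset$), with $S:=\emptyset$, and with distinguished non-full rule $\tau:=\vnf(\skol(\tau_r))$. The hypotheses then hold by construction, the shape ``first rule non-full, all later rules full'' being exactly the situation assumed in \cref{defn:dgtgd_abstr_completeness_props_2}. Because $S=\emptyset$, the second alternative in the conclusion is vacuous, so every leaf $L^*$ of the produced tree $T^*$ is function-free and satisfies $L^*\models\nfunc{\lvs(\hat{T})}=Q^T$, i.e.\ $T^*\models Q^T$. Finally, since all leaves of $T^*$ are function-free, no non-full (hence functional) rule of $\dgcloex(\Sigma)$ can have fired along any thread, so $T^*$ uses only full rules of $\dgcloex(\Sigma)$; each such rule equals its own immediate full consequence and therefore lies in $\dgclo(\Sigma)$. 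Thus $T^*$ is a chase tree from $N$ over $\dgclo(\Sigma)$ with $T^*\models Q^T$, and we may take $T':=T^*$.

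The main obstacle is the chase-world translation underlying $\hat{T}$. One must show that re-enacting the function-free $\dgclo(\Sigma)$-firings of the subtrees over simple functional facts is legitimate inside $\dgcloex(\Sigma)$: since $\dgclo(\Sigma)=\ifc(\dgcloex(\Sigma))$, a full rule used in some $T^{N^i}$ may be the immediate full consequence of a \emph{non-full} rule of $\dgcloex(\Sigma)$ rather than a full member of $\dgcloex(\Sigma)$ itself, and matching each such firing to an admissible Skolemised one — while keeping $\hat{T}$ of the required ``first non-full, rest full'' shape, preserving simplicity of all facts, and keeping the function-free projection equal to $Q^T$ — is exactly the delicate bookkeeping. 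This is precisely what the functional-fact machinery of \cref{lem:dgtgd_guarded_pull_induct} and its workhorse \cref{lem:dgtgd_guarded_closure_prop_devolve} (the sets $\func{N}$, the auxiliary rules $\tau_F$, and the side set $S$) are designed to absorb in the inductive argument; once this correspondence between the two chase worlds is nailed down, the hypothesis checks and the extraction of the full-rule tree are routine.
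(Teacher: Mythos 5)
Your proposal is correct and follows the paper's own route: clause \cref{defn:dgtgd_abstr_completeness_props_1} via \cref{lem:dgtgd_guarded_closure_prop_ifc}, and clause \cref{defn:dgtgd_abstr_completeness_props_2} by instantiating \cref{lem:dgtgd_guarded_pull_induct} with $M=N$ and $S=\emptyset$ (so that $\func{N}=\emptyset$ and the second leaf alternative is vacuous) and then observing that a tree with function-free leaves can only have used full rules, which lie in $\dgclo(\Sigma)$ --- your Skolemised mirror tree $\hat T$ merely makes explicit the identification of the two chase worlds that the paper leaves implicit. One correction to your closing ``main obstacle'': since every rule in $\dgcloex(\Sigma)$ is single-headed (\cref{prop:dgtgd_guarded_closure_resolvent_properties}), $\ifc$ is defined only on full rules and fixes them, so $\dgclo(\Sigma)=\ifc(\dgcloex(\Sigma))$ is exactly the set of full rules of $\dgcloex(\Sigma)$; hence the full firings in the subtrees $T^{N^i}$ are already $\dgcloex(\Sigma)$-firings, and there is nothing for the functional-fact machinery of \cref{lem:dgtgd_guarded_pull_induct} to ``absorb'' on that score.
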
 
\begin{proof}
\cref{defn:dgtgd_abstr_completeness_props_1} is \cref{lem:dgtgd_guarded_closure_prop_ifc}.
For \cref{defn:dgtgd_abstr_completeness_props_2}, 
note that~$N$ is a set of facts, which means $\func{N}=\emptyset$.
Hence, we can instantiate \cref{lem:dgtgd_guarded_pull_induct} (with $M=N,S=\emptyset$) to obtain 
a chase tree~$T^*$ from~$N$ and $\dgcloex(\Sigma)$ 
such that $L^*\models\nfunc{\lvs(T)}$ and $L^*$ is function-free for any $L^*\in\lvs(T^*)$.
Thus, $T^*$ must be created from full rules only,
and any such rule is in $\dgclo(\Sigma)$.
Hence, $T^*$ is a chase tree from $N$ and $\dgclo(\Sigma)$ with $T^*\models Q^T$.
\end{proof}

\subsection{Complexity Analysis}

Let us now examine the complexity bounds of our algorithm.
In the following, let~$n$ be the number of predicate symbols in $\shnf(\Sigma)$, 
$m$ be the number of Skolem functions in $\skol(\shnf(\Sigma))$,
$a$ be the maximum arity of any predicate in $\shnf(\Sigma)$, and $w\coloneqq\bwidth(\shnf(\Sigma))$.

\begin{lem}
The maximum arity of any function in $\skol(\shnf(\Sigma))$ is bounded by $w$.
\end{lem}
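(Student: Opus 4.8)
The plan is to trace every function symbol in $\skol(\shnf(\Sigma))$ back to the single Skolemisation step that introduced it and then read off its arity directly from the definition of Skolemisation. First I would observe that no function symbols occur in $\shnf(\Sigma)$ at all, since \dtgds are function-free by definition; hence a function symbol appears in $\skol(\shnf(\Sigma))$ exactly when it was introduced as some Skolem function $f_{i,j}$ while Skolemising a rule $\tau\in\shnf(\Sigma)$. It therefore suffices to bound the arity of each such $f_{i,j}$.

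Next I would recall the shape of the rules produced by \cref{defn:dtgd_shnf}. The only rules in $\shnf(\Sigma)$ carrying existential quantifiers are those of the form $\body(\vec x)\rightarrow\bigvee_{i}\exists\vec y_i\,R_i(\vec x_{i\restriction},\vec y_i)$; the auxiliary rules $R_i(\vec x_{i\restriction},\vec y_i)\rightarrow H_{i,j}(\vec x_{i\restriction},\vec y_i)$ are full (all head variables already appear in the body atom $R_i(\vec x_{i\restriction},\vec y_i)$) and thus contribute no Skolem functions. Consequently every function symbol of $\skol(\shnf(\Sigma))$ stems from a rule of the first kind.

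The decisive step is then an application of \cref{defn:dtgd_Skolemisation}: Skolemising such a $\tau=\body(\vec x)\rightarrow\bigvee_i\exists\vec y_i\,\head_i$ replaces each existential $y_j\in\vec y_i$ by the term $f_{i,j}(\vec x)$, whose argument tuple is the full tuple of body variables $\vec x=\vars(\body)$. Hence the arity of $f_{i,j}$ equals $|\vec x|=\bwidth(\tau)$. Since $\tau\in\shnf(\Sigma)$, we have $\bwidth(\tau)\leq\bwidth(\shnf(\Sigma))=w$, and taking the maximum over all introduced function symbols yields that the maximum arity is bounded by $w$, as claimed.

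I do not anticipate a genuine obstacle here; the statement is essentially unwound from the two definitions. The one point that warrants care is reading the Skolemisation convention correctly: each Skolem function takes \emph{all} universally quantified body variables as arguments, not merely the exported ones, so that its arity is exactly the body width rather than some smaller quantity. It is also worth remarking that the auxiliary rules of $\shnf(\Sigma)$ may have a larger body width than the generating rule (their bodies absorb the variables $\vec y_i$), but this only enlarges $w$ and hence never threatens the bound.
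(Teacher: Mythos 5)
Your proof is correct and matches the paper's, which simply states that the bound follows directly from the definition of Skolemisation (\cref{defn:dtgd_Skolemisation}): each Skolem function $f_{i,j}$ takes exactly the body variables $\vec x$ as arguments, so its arity is $\bwidth(\tau)\leq\bwidth(\shnf(\Sigma))=w$. Your additional observations (that $\shnf(\Sigma)$ is function-free to begin with, and that the auxiliary single-head rules are full and introduce no Skolem functions) are accurate but just unpack what the paper leaves implicit.
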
 
\begin{proof}
Follows straight from \cref{defn:dtgd_Skolemisation}.
\end{proof} 

\begin{lem}\label{lem:gcloex_atom_number}
Any $\tau\in\dgcloex(\Sigma)$ contains at most $2nw^{aw}a^m$ atoms.
\end{lem}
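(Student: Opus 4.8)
The plan is to reduce the statement to a count of the \emph{distinct} simple atoms that can occur in a rule of $\dgcloex(\Sigma)$, and then bound that count combinatorially. By \cref{prop:dgtgd_guarded_closure_resolvent_properties}, every $\tau\in\dgcloex(\Sigma)$ is a single-headed guarded simple rule in variable normal form with $\width(\tau)\le w$. Hence its body is a \emph{set} of simple atoms and, since $\tau$ is single-headed, its head is a disjunction that we treat as a \emph{set} of simple atoms; every atom of $\tau$ uses only the variables $x_1,\dotsc,x_{w'}$ for some $w'\le w$. Writing $A$ for the number of distinct simple atoms over $w$ variables, the body contributes at most $A$ atoms and the head at most $A$ atoms, so $\tau$ has at most $2A$ atoms. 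It therefore suffices to show $A\le nw^{aw}a^m$, and the factor $2$ in the statement is exactly this body-plus-head split.

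First I would count the simple atoms directly. A simple atom has the shape $R(t_1,\dotsc,t_{a'})$ with $a'\le a$ and each $t_i$ a shallow term, i.e.\ a variable or $f(\vec v)$ for one of the $m$ function symbols $f$ applied to a tuple of variables; by the preceding lemma every such $f$ has arity at most $w$, and by the guarded-simple form every functional term contains all of $x_1,\dotsc,x_{w'}$. To choose such an atom I would fix the predicate ($n$ choices), then fix which argument positions carry a function symbol and which symbol that is (which I can bound by the factor $a^m$), and finally fill in the variable occurrences: since there are at most $a$ positions and each shallow term exhibits at most $w$ variable slots, there are at most $aw$ variable occurrences in total, giving at most $w^{aw}$ assignments from $\{x_1,\dotsc,x_w\}$. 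Multiplying these factors yields $A\le n\,w^{aw}a^m$, and hence $\tau$ has at most $2nw^{aw}a^m$ atoms.

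The main obstacle is purely the combinatorial bookkeeping that lands on the stated constants rather than on a weaker bound: one must separate the choice of the function skeleton from the assignment of variables so that the variable factor $w^{aw}$ is paid once for the whole atom instead of per argument position, and one must use the guarded-simple constraints---function arity $\le w$ and every functional term containing all rule variables---to keep the per-atom count of variable occurrences at $aw$. None of these steps is deep; the only care needed is to invoke \cref{prop:dgtgd_guarded_closure_resolvent_properties} and the function-arity lemma so that all quantities are controlled by $w$, $a$, $n$, and $m$ as required.
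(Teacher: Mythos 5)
Your overall strategy is the same as the paper's: bound the number of distinct simple atoms over at most $w$ variables per predicate symbol by separating the choice of the function skeleton from the assignment of variable occurrences (at most $aw$ of them, each from $w$ variables, giving $w^{aw}$), and then double the count because an atom may sit in the body or in the head. That part is fine and matches the intended argument.

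The weak point is the factor you attribute to the function skeleton. You write that fixing ``which argument positions carry a function symbol and which symbol that is'' can be bounded by $a^m$, but the natural count of that choice is $(m+1)^a$ (each of the at most $a$ argument positions is either a plain variable or one of the $m$ Skolem functions), and $(m+1)^a$ is \emph{not} bounded by $a^m$ in general (e.g.\ $m=1$, $a=3$ gives $8>3$). So as written this step does not go through. What you have stumbled into is in fact an inconsistency in the paper itself: the lemma statement carries the factor $a^m$, but the paper's own proof derives $m^a$ from exactly the per-position count you should be making, and the $a^m$ in the statement (and in \cref{cor:dgtgd_guarded_closure_size} and \cref{lem:dgtgd_guarded_closure_devolve_complexity}, which inherit it) appears to be a transposition of $m^a$. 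The honest resolution is to prove the bound $2nw^{aw}m^a$ (or $2n(m+1)^aw^{aw}$ if one is careful about the variable-versus-function choice per position) and note that the exponents in the stated constant should be read that way; trying to force the literal $a^m$ out of the combinatorics, as your proposal does, cannot work.
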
 
\begin{proof}
By \cref{prop:dgtgd_guarded_closure_resolvent_properties},
$\tau$ is a single-headed guarded simple rule in variable normal form with $\width(\tau)\leq w$.
For any fixed predicate symbol, we can thus choose at most $a$ function symbols
and $a\cdot w$ variables.
Further, any atom can occur in the body or the head of~$\tau$.
Hence, $\tau$ contains at most $2nw^{aw}m^a$ atoms.
\end{proof}

\begin{cor}\label{cor:dgtgd_guarded_closure_size}
$|\dgcloex(\Sigma)|\leq 2^{2nw^{aw}a^m}$.
\end{cor}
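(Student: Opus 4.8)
The plan is to mirror the proof of the analogous non-disjunctive bound in \cref{cor:gtgd_guarded_closure_size}. The essential observation is that every rule in $\dgcloex(\Sigma)$ is completely determined by the set of atoms it places in its body together with the set of atoms it places in its (disjunctive) head, so the number of such rules is bounded by $2$ raised to the number of distinct atom-slots available.

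First I would invoke \cref{prop:dgtgd_guarded_closure_resolvent_properties} to recall that every $\sigma\in\dgcloex(\Sigma)$ is a single-headed guarded simple rule in variable normal form with $\width(\sigma)\leq w$, where $w=\bwidth(\shnf(\Sigma))$. This is exactly the structural uniformity that is needed: all rules live in variable normal form over the same pool of at most $w$ variables, and every functional term is shallow and contains all of these variables. In particular, being single-headed, the head of such a rule is just a \emph{set} of atoms, so a rule really is specified by a pair consisting of a set of body atoms and a set of head atoms.

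Next I would apply \cref{lem:gcloex_atom_number}, which already performs the combinatorial heavy lifting: any such rule contains at most $2nw^{aw}a^m$ atoms, where the factor $2$ accounts separately for the body and head positions. Since a single-headed guarded simple rule is specified precisely by choosing which of the available atoms occur in the body and which occur in the head, each rule corresponds to a subset of the at most $2nw^{aw}a^m$ atom-slots (equivalently, to two subsets of the $nw^{aw}a^m$ distinct atoms, one for the body and one for the head). Counting these subsets immediately yields $|\dgcloex(\Sigma)|\leq 2^{2nw^{aw}a^m}$.

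I do not expect a serious obstacle here, since \cref{lem:gcloex_atom_number} has already absorbed the subtle counting that bounds the number of admissible shallow functional terms by a product of function-symbol and variable choices. The only point requiring a little care is the bookkeeping that the factor $2$ in \cref{lem:gcloex_atom_number} genuinely corresponds to the body/head distinction, so that a rule is a subset of a set of size $2nw^{aw}a^m$ rather than a subset of a smaller atom set taken twice; once that identification is made explicit, the bound follows directly from the subset count.
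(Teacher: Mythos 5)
Your proposal is correct and follows essentially the same route as the paper: the paper's proof likewise observes that a rule is determined by a subset of all possible body and head atoms and then invokes \cref{lem:gcloex_atom_number} to bound the number of such atoms by $2nw^{aw}a^m$. Your additional bookkeeping about the factor $2$ encoding the body/head distinction is exactly the implicit content of that lemma, so nothing further is needed.
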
 
\begin{proof}
A rule consists of a subset of all possible body and head atoms.
Now the claim follows by \cref{lem:gcloex_atom_number}.
\end{proof} 

\begin{cor}[Correctness]\label{cor:dgtgd_guarded_closure_atomic_rewriting}
The disjunctive guarded saturation $\dgclo(\Sigma)$ is an atomic rewriting of~$\Sigma$.
\end{cor}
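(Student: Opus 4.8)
The plan is to reuse the abstract packaging result \cref{cor:dgtgd_abstr_atomic_rewriting}, which converts finiteness, soundness, and evolve-completeness into the statement ``atomic rewriting'', exactly as the one-line proof of the non-disjunctive \cref{cor:gtgd_guarded_closure_atomic_rewriting} reuses its abstract counterpart. The only genuine wrinkle is that every correctness result we have about $\dgclo(\Sigma)$ is phrased relative to $\shnf(\Sigma)$ rather than $\Sigma$ itself, so I would first conclude that $\dgclo(\Sigma)$ is an atomic rewriting of $\shnf(\Sigma)$ and then transfer this to $\Sigma$ via \cref{lem:dgtgd_shnf_equiv}.

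Concretely, I would first check the three hypotheses of \cref{cor:dgtgd_abstr_atomic_rewriting} with $\shnf(\Sigma)$ in the role of the source theory. Observe that $\shnf(\Sigma)$ is indeed a set of \dgtgds: the single-head rule retains the original guarded body, while each auxiliary rule $R_i(\vec x_{i\restriction},\vec y_i)\rightarrow H_{i,j}$ is trivially guarded by its single body atom. Finiteness follows from \cref{cor:dgtgd_guarded_closure_size}, since $\dgclo(\Sigma)=\ifc(\dgcloex(\Sigma))$ and $\ifc$ produces at most one rule per input rule, whence $|\dgclo(\Sigma)|\leq|\dgcloex(\Sigma)|$. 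Fullness holds because every member of $\dgclo(\Sigma)$ is an immediate full consequence, hence a function-free guarded rule, i.e.\ a full \dgtgd. Soundness \wrt $\shnf(\Sigma)$ is \cref{cor:dgtgd_guarded_closure_soundness}, and evolve-completeness \wrt $\shnf(\Sigma)$ is \cref{prop:dgtgd_guarded_closure_evolve_complete}. Applying \cref{cor:dgtgd_abstr_atomic_rewriting} then yields that $\dgclo(\Sigma)$ is an atomic rewriting of $\shnf(\Sigma)$.

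It remains to bridge from $\shnf(\Sigma)$ to $\Sigma$, which I expect to be the only step requiring care. By \cref{defn:atomic_rewriting}, an atomic rewriting of $\shnf(\Sigma)$ is a finite set of full \dtgds returning, for every database $\db$ and quantifier-free \ubcq $Q$, the same answer as $\shnf(\Sigma)$. By \cref{lem:dgtgd_shnf_equiv}, $\db,\Sigma\models Q$ \ifftext $\db,\shnf(\Sigma)\models Q$ for every such $Q$, so $\Sigma$ and $\shnf(\Sigma)$ agree on all quantifier-free \ubcqs. Composing these two facts shows that $\dgclo(\Sigma)$ returns the same answers as $\Sigma$ for every $\db$ and quantifier-free \ubcq $Q$, which is exactly the definition of an atomic rewriting of $\Sigma$. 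The subtle point is that $\shnf$ introduces the fresh relation symbols $R_i\notin\relset$, but since queries use only relations from $\relset$, \cref{lem:dgtgd_shnf_equiv} already accounts for these auxiliary predicates, so no separate argument about them is needed.
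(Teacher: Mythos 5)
Your proposal is correct and follows essentially the same route as the paper: the paper's one-line proof likewise combines \cref{cor:dgtgd_abstr_atomic_rewriting}, \cref{cor:dgtgd_guarded_closure_soundness}, and \cref{prop:dgtgd_guarded_closure_evolve_complete} to conclude that $\dgclo(\Sigma)$ is an atomic rewriting of $\shnf(\Sigma)$, and then transfers this to $\Sigma$ via \cref{lem:dgtgd_shnf_equiv}. You merely spell out the finiteness and fullness checks that the paper leaves implicit, which is fine.
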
 
\begin{proof}
By \cref{cor:dgtgd_abstr_atomic_rewriting,cor:dgtgd_guarded_closure_soundness,prop:dgtgd_guarded_closure_evolve_complete}, $\dgclo(\Sigma)$ is an atomic rewriting of $\shnf(\Sigma)$, and hence, $\dgclo(\Sigma)$ is an atomic rewriting of $\Sigma$ by \cref{lem:dgtgd_shnf_equiv}.
\end{proof}

\begin{lem}\label{lem:dgtgd_guarded_closure_devolve_complexity}
Let $\tau,\tau'\in\dgcloex(\Sigma)$.
Then all \devolve-resolvents of ${\tau=\body\rightarrow\head}$, ${\tau'=\body'\rightarrow\head'}$
can be obtained in time doubly exponential in $a,m,w$.
\end{lem}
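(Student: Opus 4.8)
The plan is to read off the structure of a \devolve inference from \cref{defn:dgtgd_guarded_closure} and then bound both the number of resolvents it can produce from the fixed pair $\tau,\tau'$ and the cost of producing each one. A \devolve inference on $\tau=\body\rightarrow\head$ and $\tau'=\body'\rightarrow\head'$ is completely determined by a choice of functional head atom $H\in\head$ together with an eligible body atom $B'\in\body'$ (a guard, if $\tau'$ is full, or else any functional atom), via the mgu $\theta$ of $H$ and $B'$. Since this mgu is unique up to renaming by \cref{lem:mgu_uniqueness} and the output is normalised by $\vnf$, each admissible pair $(H,B')$ contributes at most one resolvent. Hence I would first enumerate all such pairs, and for each that passes the cheap eligibility and unifiability checks, construct $\vnf(\body''\rightarrow\head'')$.

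Next I would bound the number of pairs. By \cref{lem:gcloex_atom_number}, each of $\tau$ and $\tau'$ contains at most $2nw^{aw}a^m$ atoms, so $\head$ and $\body'$ each contain at most that many atoms and the number of candidate pairs $(H,B')$ is at most $(2nw^{aw}a^m)^2$. Deciding whether $H$ is functional, whether $B'$ is a guard or functional, and whether $H$ and $B'$ unify is immediate.

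For the per-pair cost I would use that every atom is simple with arity at most $a$ over the at most $w$ variables of its rule (\cref{prop:dgtgd_guarded_closure_resolvent_properties}), so each atom has size $O(aw)$. Computing $\theta$ is then linear by \cref{lem:mgu_complexity}; forming $\body''=(\body\cup(\body'\setminus\{B'\}))\theta$ and $\head''=((\head\setminus\{H\})\cup\head')\theta$ only applies $\theta$ to at most $2\cdot 2nw^{aw}a^m$ atoms, and the concluding $\vnf$ normalisation is linear by \cref{lem:guarded_simple_vnf_complexity}. Thus each pair is handled in time polynomial in $2nw^{aw}a^m$.

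Multiplying the two bounds, the entire set of \devolve-resolvents of $\tau,\tau'$ is produced in time polynomial in $2nw^{aw}a^m$. Since $w^{aw}$ and $a^m$ are each exponential in $a,m,w$, any polynomial in $2nw^{aw}a^m$ is comfortably bounded by a doubly exponential function of $a,m,w$, which gives the claim. I do not expect a genuine obstacle: the argument is a direct plug-in of the counting lemma \cref{lem:gcloex_atom_number} together with the linear-time subroutines, and the only point requiring care is to keep the atom-count bound consistent with \cref{lem:gcloex_atom_number} and to observe that the resulting per-parameter blow-up stays within the stated doubly exponential bound.
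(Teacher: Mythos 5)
Your proof is correct and follows essentially the same route as the paper's: enumerate the candidate pairs $(H,B')$, check eligibility and unifiability, compute the mgu, resolve and normalise in linear time using \cref{lem:mgu_complexity,lem:guarded_simple_vnf_complexity}, and multiply by the number of pairs bounded via \cref{lem:gcloex_atom_number}. The only difference is that you count at most quadratically many pairs (which is right, since a \devolve inference resolves a single head atom against a single body atom), whereas the paper states the looser bound $s^s$ with $s\coloneqq nw^{aw}a^m$ --- apparently carried over from the sequence-matching count in the \evolve analysis --- but both counts stay within the claimed doubly exponential bound.
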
 
\begin{proof}
Let $H\in\head$ be functional and $B'\in\body'$ be functional or a guard.
We check whether there is an mgu $\theta$ of $H$ and $B'$,
and if this is the case, we resolve the rules and normalise the result to
obtain a new rule.
Using \cref{lem:mgu_complexity,lem:guarded_simple_vnf_complexity}, these steps can be done in $\calO(\size(\tau)+\size(\tau'))$.
Since $\tau$ contains at most $s\coloneqq nw^{aw}a^m$ body atoms and $\tau'$ at most~$s$ head atoms, we have to repeat this step at most $s^s$ times.
Hence, the complexity is doubly exponential in $a,m,w$.
\end{proof} 

\begin{lem}\label{lem:dgtgd_guarded_closure_complexity}
The disjunctive guarded saturation algorithm terminates in time doubly exponential in $a,m,w$.
\end{lem}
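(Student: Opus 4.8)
The plan is to mirror the proof of the non-disjunctive analogue (\cref{lem:gtgd_guarded_closure_complexity}), aggregating the cost of the three phases of the algorithm: initial normalisation, saturation under \devolve, and the final immediate-full-consequence pass. Each phase can be bounded separately, and since the product of doubly exponential quantities is again doubly exponential, the overall bound follows.

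First I would dispatch the preprocessing. The initial transformation $\vnf(\skol(\shnf(\Sigma)))$ is cheap: $\shnf$ and $\skol$ are linear in $\size(\Sigma)$ by \cref{defn:dtgd_shnf,defn:dtgd_Skolemisation}, and $\vnf$ is linear by \cref{lem:guarded_simple_vnf_complexity}, so this phase contributes nothing to the asymptotics. Next I would handle the saturation loop, which repeatedly selects pairs from $\dgcloex(\Sigma)$ and performs all \devolve-inferences until reaching a fixpoint. By \cref{cor:dgtgd_guarded_closure_size} there are at most $c\coloneqq 2^{2nw^{aw}a^m}$ distinct rules; since $w^{aw}$ and $a^m$ are each exponential in $a,m,w$, the exponent is exponential and hence $c$ is doubly exponential in $a,m,w$. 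Consequently the loop runs for at most $c$ iterations, and at any stage there are at most $\calO(c^2)$ pairs to consider, still doubly exponential. By \cref{lem:dgtgd_guarded_closure_devolve_complexity}, processing each individual pair $\tau,\tau'$ takes time doubly exponential in $a,m,w$, so multiplying the (doubly exponential) number of pairwise computations by the (doubly exponential) per-pair cost keeps us doubly exponential.

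Finally I would account for the step that distinguishes the disjunctive case: obtaining $\dgclo(\Sigma)=\ifc(\dgcloex(\Sigma))$ requires computing the immediate full consequence of every derived rule, rather than merely discarding non-full rules as in the \gtgd setting. Computing $\ifc(\tau)$ is linear in $\size(\tau)$, and it is applied to at most $c$ rules, so this pass is again doubly exponential and does not change the overall bound.

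I expect the only genuine subtlety, as opposed to routine bookkeeping, to be this last point together with fixpoint detection: one must confirm that checking rule membership to avoid re-adding duplicates, and the terminating comparison against the previous stage, stay within the stated bound. Since there are at most $c$ rules each of size bounded via \cref{lem:gcloex_atom_number}, these membership and equality tests are dominated by the doubly exponential terms already counted, so they pose no obstacle. The heart of the argument is therefore the size bound \cref{cor:dgtgd_guarded_closure_size} and the per-inference bound \cref{lem:dgtgd_guarded_closure_devolve_complexity}, both of which are assumed available.
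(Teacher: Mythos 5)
Your proposal is correct and follows essentially the same route as the paper's proof: linear preprocessing, the rule-count bound $c=2^{2nw^{aw}a^m}$ from \cref{cor:dgtgd_guarded_closure_size} giving at most $c$ iterations over $\calO(c^2)$ pairs, and the per-pair bound from \cref{lem:dgtgd_guarded_closure_devolve_complexity}. Your explicit accounting for the final $\ifc$ pass and fixpoint detection is a minor addition the paper leaves implicit, but it does not change the argument.
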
 
\begin{proof}
The initial Skolemisation and variable and single head normal from transformations can be done in linear time.
The algorithm then takes all pairs $\tau,\tau'$ in $\gcloex(\Sigma)$ and performs all \devolve-inferences.
By \cref{cor:dgtgd_guarded_closure_size}, this process takes at most~$c\coloneqq 2^{2nw^{aw}a^m}$ iterations,
and at any point, there are at most $\calO(c^2)$ pairs.
By \cref{lem:dgtgd_guarded_closure_devolve_complexity}, 
the complexity for each $\tau,\tau'$ is doubly exponential in $a,m,w$, which finishes the proof.
\end{proof} 

Finally, we can combine our results to obtain a decision procedure.
\begin{thm}\label{thm:dgtgd_guarded_dec_proc}
The disjunctive guarded saturation algorithm provides a decision procedure for the \qfqaprobdgtgds. 
The procedure terminates in 
$\exptime$ for fixed $\Sigma$
and $\texptime$ otherwise.
\end{thm}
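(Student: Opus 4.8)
The plan is to follow exactly the template of the \gtgd decision procedure in \cref{thm:gtgd_guarded_dec_proc}: the algorithm first computes the saturation $\dgclo(\Sigma)$ and then runs the atomic-rewriting decision procedure of \cref{prop:dgtgd_atomic_rewriting_dec_proc} on $\db$, $\dgclo(\Sigma)$, and $Q$. Correctness is immediate from \cref{cor:dgtgd_guarded_closure_atomic_rewriting}, which tells us that $\dgclo(\Sigma)$ is an atomic rewriting of $\Sigma$; hence $\db,\Sigma\models Q$ \ifftext $\db,\dgclo(\Sigma)\models Q$, and since $\dgclo(\Sigma)$ is a finite set of full \dtgds, the latter is decided by \cref{prop:dgtgd_atomic_rewriting_dec_proc}. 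So the only real work is the complexity accounting.

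The running time splits into two contributions: the one-off cost of building $\dgclo(\Sigma)$, and the cost of the ensuing decision procedure. For the first, I would invoke \cref{lem:dgtgd_guarded_closure_complexity}, which bounds the saturation by time doubly exponential in $a,m,w$. For the second, \cref{prop:dgtgd_atomic_rewriting_dec_proc} gives a bound polynomial in $2^{n'c^{a'}}$ and $\size(\db,\dgclo(\Sigma),Q)$, where $n'$, $a'$, and $c$ denote the number of relation symbols, the maximal arity, and the number of database constants, respectively. Two facts from the resolvent analysis control these. By \cref{prop:dgtgd_guarded_closure_resolvent_properties} the rules of $\dgcloex(\Sigma)$ introduce no new predicate symbols and have width at most $\bwidth(\shnf(\Sigma))$, so $\dgclo(\Sigma)$ uses only predicates of $\shnf(\Sigma)$; thus $a'\leq a$ and $n'$ is bounded by $n$ together with the symbols occurring in $\db$ and $Q$. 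By \cref{cor:dgtgd_guarded_closure_size}, $|\dgclo(\Sigma)|\leq|\dgcloex(\Sigma)|\leq 2^{2nw^{aw}a^m}$, which is doubly exponential, so $\size(\db,\dgclo(\Sigma),Q)$ is at most doubly exponential in the input.

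The case distinction then finishes the argument. For fixed $\Sigma$ (hence fixed schema), the parameters $a,m,w,n$ are constants, so $\dgclo(\Sigma)$ is precomputed at constant cost and has constant size, and $a'$, $n'$ are bounded by constants; the decision-procedure factor $2^{n'c^{a'}}$ then collapses to single-exponential in the number $c$ of database constants, giving \exptime. In the general case, building $\dgclo(\Sigma)$ is already \texptime by \cref{lem:dgtgd_guarded_closure_complexity}, and the subsequent decision procedure runs in time polynomial in the doubly exponential quantities $2^{n'c^{a'}}$ (here $c^{a'}$ is single-exponential in the input, so $2^{n'c^{a'}}$ is doubly exponential) and $\size(\db,\dgclo(\Sigma),Q)$, hence stays within \texptime. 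The main obstacle is precisely this bookkeeping: I must check that in the fixed case the factor $2^{n'c^{a'}}$ genuinely collapses to single-exponential in $c$ (which hinges on $a'$ and $n'$ being bounded, not merely $w$), while in the general case the doubly exponential saturation size and the $2^{n'c^{a'}}$ term compose to remain at \texptime rather than overshooting.
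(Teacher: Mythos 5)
Your proposal is correct and follows essentially the same route as the paper, whose proof is simply the observation that $a,m,w$ are linear in $\size(\Sigma)$ combined with citations of \cref{prop:dgtgd_atomic_rewriting_dec_proc,cor:dgtgd_guarded_closure_atomic_rewriting,lem:dgtgd_guarded_closure_complexity} -- exactly the three ingredients you use. Your additional bookkeeping (bounding $n'$ and $a'$ via \cref{prop:dgtgd_guarded_closure_resolvent_properties} and the saturation size via \cref{cor:dgtgd_guarded_closure_size}) just makes explicit what the paper leaves implicit.
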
 
\begin{proof}
The size of $a,m,w$ is linear in $\size(\Sigma)$.
Now the claim follows from \cref{prop:dgtgd_atomic_rewriting_dec_proc,cor:dgtgd_guarded_closure_atomic_rewriting,lem:dgtgd_guarded_closure_complexity}.
\end{proof} 

Once more, our bounds already match for atomic queries:
\begin{thm}[\citep{gottlob2012complexity}]
Given a set of \dgtgds $\Sigma$,
the Atomic Query Answering Problem under $\Sigma$ is 
$\exptime$-complete for fixed~$\Sigma$,
and $\texptime$ in general.
\end{thm}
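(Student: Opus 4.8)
The plan is to assemble the theorem from three results already in hand: \cref{cor:dgtgd_guarded_closure_atomic_rewriting}, \cref{prop:dgtgd_atomic_rewriting_dec_proc}, and \cref{lem:dgtgd_guarded_closure_complexity}. First I would invoke \cref{cor:dgtgd_guarded_closure_atomic_rewriting} to record that $\dgclo(\Sigma)$ is an atomic rewriting of $\Sigma$, so that \cref{prop:dgtgd_atomic_rewriting_dec_proc} applies verbatim with $\Sigma'\coloneqq\dgclo(\Sigma)$ and immediately yields that $\dgclo(\Sigma)$ provides a decision procedure for the \qfqaprob under $\Sigma$. This settles correctness outright; the remaining work is purely the complexity accounting.

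For the running time I would split the procedure into two phases: a preprocessing phase that computes $\dgclo(\Sigma)$, and an evaluation phase that runs the decision procedure of \cref{prop:dgtgd_atomic_rewriting_dec_proc} on the fixed rewriting once a database $\db$ and query $Q$ arrive. The cost of the first phase is given directly by \cref{lem:dgtgd_guarded_closure_complexity}: doubly exponential in $a,m,w$. The cost of the second phase is, by \cref{prop:dgtgd_atomic_rewriting_dec_proc}, polynomial in $2^{n c^a}$ and in $\size(\db,\dgclo(\Sigma),Q)$, where the parameters $n,a,c$ now refer to $\dgclo(\Sigma)$ and $\db$. The glue connecting the two phases is the observation that $a,m,w$ (and likewise the number of predicate symbols) are all linear in $\size(\Sigma)$, since Skolemisation together with the single-head and variable normal forms only inflate the signature linearly.

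With that observation the two cases follow by inspection. For fixed $\Sigma$ the parameters $a,m,w,n$ are constants, so preprocessing costs $\calO(1)$ relative to the input, while the evaluation phase costs polynomial in $2^{\calO(c^a)}$ with $a$ constant and $c\leq\size(\db)$; this is exponential in the input, giving $\exptime$. In general $a,m,w$ are linear in $\size(\Sigma)$, so preprocessing is doubly exponential in $\size(\Sigma)$, and although $\dgclo(\Sigma)$ may itself be doubly exponentially large, its predicate count and maximum arity stay linear in $\size(\Sigma)$, whence $2^{n c^a}$ is bounded doubly exponentially in the total input; both phases are therefore $\texptime$, and the overall bound is $\texptime$.

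The step I expect to require the most care is not conceptual but bookkeeping: ensuring that the complexity parameters of \cref{prop:dgtgd_atomic_rewriting_dec_proc}, which are phrased in terms of the \emph{output} rewriting $\dgclo(\Sigma)$, are correctly bounded back in terms of $\size(\Sigma)$ and $\size(\db)$. In particular one must check that passing to the rewriting does not secretly inflate the arity or the number of distinct predicate symbols, so that the doubly-exponential blow-up is confined to the \emph{number} of generated rules rather than to the exponents $n$ and $a$ that drive the evaluation phase.
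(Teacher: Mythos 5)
There is a genuine gap here, and it stems from what the statement is actually asserting. This theorem is not proved in the paper at all: it is quoted from \citep{gottlob2012complexity} precisely to supply the \emph{lower} bounds that certify the optimality of the disjunctive guarded saturation (the surrounding text reads ``our bounds already match for atomic queries''). Your argument assembles \cref{cor:dgtgd_guarded_closure_atomic_rewriting}, \cref{prop:dgtgd_atomic_rewriting_dec_proc}, and \cref{lem:dgtgd_guarded_closure_complexity} into an upper bound, which is essentially a restatement of the proof of \cref{thm:dgtgd_guarded_dec_proc} given just above. That establishes membership in $\exptime$ for fixed $\Sigma$ and in $\texptime$ in general, but it cannot establish the ``\,$\exptime$-\emph{complete}'' part of the claim: completeness requires $\exptime$-hardness, i.e.\ a reduction from an $\exptime$-hard problem to atomic query answering under a fixed set of \dgtgds. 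No amount of analysis of the paper's own rewriting algorithm can produce that; hardness is an intrinsically different kind of statement about the problem, not about any particular procedure, and it is exactly the content for which the citation is needed.

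Your bookkeeping for the upper bounds is otherwise sound --- in particular the observation that the single-head normal form and Skolemisation only inflate the number of predicates, the arities, and the width linearly in $\size(\Sigma)$, so the doubly exponential blow-up is confined to the number of rules in $\dgclo(\Sigma)$ rather than to the exponents driving the evaluation phase of \cref{prop:dgtgd_atomic_rewriting_dec_proc}. But as a proof of the stated theorem the proposal proves only half of it, and the half it proves is the half the paper already has; the missing half (the hardness reduction) is the reason the theorem is attributed to the literature rather than derived.
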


\chapter{Comparison with \gf Resolution}\label{chap:comparison}

As we now have different decision procedures for the \qfqaprob under (disjunctive) \gtgds at our hand,
we naturally want to compare them with existing procedures from the literature.
Recall that our work's main motivation was to explain completeness of the resolution
procedures for the guarded fragment, as described by \cite{de1998resolution} and \cite{ganzinger1999superposition},
by using the tree-like model property.
We hence first examine which fragments of \gf we can cover with our chosen framework
and then elaborate how our algorithms compare with the \gf resolution procedures.

\section{GF Satisfiability vs.\ the Quantifier-Free Query Answering Problem}\label{sec:gf_sat_vs_qfqap}

We have already shown in \cref{lem:gf_reduce} that the \qfqaprob under \dgtgds reduces
to the \gf satisfiability problem.
To identify which fragment of \gf we can capture with our framework, we start comparing our algorithms
with the \gf resolution procedures given in \citep{de1998resolution,ganzinger1999superposition}.
As mentioned before, both procedures there use a constant-free definition of \gf, the latter with equality.
Since we do not support equality in our procedures, we focus on \gf sentences without equality and constants.
Both \gf procedures apply an initial clausal normal form transformation to a given \gf sentence $\phi$
that creates clauses of the following form:
\begin{defn}[Guarded simple clause]
A clause $C$ is a \emph{guarded simple clause} if
\begin{propylist}
\item each literal $(\lnot)L$ is simple,
\item every functional subterm in $C$ contains all the variables of $C$ and no constants, and
\item if $C$ is non-ground, $C$ contains a non-functional negative literal that contains all the variables of $C$.
\end{propylist} 
\end{defn} 
\begin{rmk}
The definitions in referred articles are slightly more general 
as they also account for clauses that are derivable during the resolution process.
\end{rmk} 
Which fragment of guarded simple clauses can we capture with our framework consisting of a database $\db$,
\dgtgds $\Sigma$, and a quantifier-free \ubcq $Q$?
Fix a set~$S$ of guarded simple clauses, let $S_Q$ consist of all purely negative, ground $C=\{\lnot L_1,\dotsc, \lnot L_n\}\in S$,
and set $S^-\coloneqq (S\setminus S_Q)$.
Then $S$ is unsatisfiable \ifftext $S^-\models \lnot S_Q$,
where $\lnot S_Q$ is obtained by negating the formula corresponding to the clause set $S_Q$.
Note that $\lnot S_Q$ is a positive, ground formula in disjunctive normal form;
hence, $\lnot S_Q$ can be modelled by~$Q$.
Furthermore, any ground $C=\{\lnot L_1,\dotsc,\lnot L_n,L_1',\dotsc,L_m'\}\in S^-$ containing at least one positive literal and using constants $\vec c$ 
can be modelled by adding a full \dgtgd $R(\vec x)\land\bigwedge_{i=1}^nL_i(\vec x)\rightarrow\bigvee_{i=1}^mL_i'(\vec x)$
to $\Sigma$ and a fact $R(\vec c)$ to $\db$, 
where $R$ is a fresh relation symbol of arity $|\vec c\,|$.

What about non-ground clauses?
Recall that the Skolemisation of a set of single-headed \dgtgd results in a set 
of guarded simple rules with non-functional bodies.
Moreover, any introduced Skolem function cannot occur in more than one rule.
Consequently, we are able to model sets $U\subseteq S^-$ of non-ground clauses 
in which any clause $C$ does not contain a functional negative literal and any function occurs in at most one clause.
To see this, take a clause $C=\{\lnot L_1,\dotsc,\lnot L_n,L_1',\dotsc,L_m'\}\in U$ using functions $f_1,\dotsc,f_k$
and variables~$\vec x$.
Then $C$ corresponds to the rule $\bigwedge_{i=1}^nL_i(\vec x)\rightarrow\bigvee_{i=1}^mL_i'(\vec x)$,
which by \cref{thm:Skolem_equisat} is satisfiable \ifftext the rule
\begin{align}
&\bigwedge_{i=1}^nL_i(\vec x)\rightarrow\exists y_1,\dotsc,y_k\left(\bigvee_{i=1}^mL_i'(\vec x)[y_1/f_1(\vec x),\dotsc,y_k/f_k(\vec x)]\right)\nonumber\\
\equiv\quad&\bigwedge_{i=1}^nL_i(\vec x)\rightarrow\bigvee_{i=1}^m\exists y_1,\dotsc,y_k\,L_i'(\vec x)[y_1/f_1(\vec x),\dotsc,y_k/f_k(\vec x)]\label{eq:comp_deskol_dgtgd}
\end{align} 
is satisfiable, where $L_i'(\vec x)[y_1/f_1(\vec x),\dotsc,y_k/f_k(\vec x)]$ is the formula 
obtained by replacing any term $f_j(\vec x)$ in $L_i'$ by the variable $y_j$.
Now Sentence~\eqref{eq:comp_deskol_dgtgd} is a \dgtgd, and hence $U$ can be modelled by $\Sigma$.
Note that the disjunctive guarded saturation in \cref{sec:dgtgd_guarded_saturation} is defined for a larger set 
of clauses (for example, guarded simple clauses with functional bodies);
however, our completeness proof assumes that the initial set of rules corresponds to a set of \dgtgds.
Thus, any $S^-$ containing a functional negative literal
or a function that occurs in more than one clause is not reducible to our algorithm using our given results.

\section{GF Resolution vs.\ the (Disjunctive) Guarded Saturation}\label{sec:gf_vs_guarded_sat}

In this section, we want to elaborate and demonstrate the differences between 
our derived algorithms and the known \gf resolution procedures when tackling the \qfqaprob under (disjunctive) \gtgds.

In the case of \dgtgds, the \gf resolution procedures proceed the same in both referred articles.
In fact, they apply the same selection and ordering strategy as we did for the disjunctive guarded saturation.
That is, a literal $L$ is selected in a clause $C$ \ifftext $L$ 
is functional and negative, $L$ is functional and $C$ contains no functional negative literal,
or $L$ is a guard and $C$ is non-functional.
This fact is, in some way, surprising as we started from a different, more specialised problem setting,
progressively developed a suitable selection and ordering strategy, but ended up with an identical result.
At the same time, this means that our work gives a natural account of \emph{why} chosen 
selection and ordering strategies work by considering the tree-like model property.

Since our disjunctive guarded saturation algorithm coincides with considered \gf resolution approaches,
we subsequently focus on the guarded saturation algorithm.
Regarding the runtime complexities, both the guarded saturation and the known \gf resolution procedures
are theoretically optimal in the general case and the case of bounded arities;
however, in case of fixed $\Sigma$, our algorithm 
stays theoretically optimal ($\ptime$) while the \gf resolution procedures' complexities
are exponential in the number of constants in~$\db$ in this case
(see \citep[Theorem~4.3]{ganzinger1999superposition} and \citep[Theorem~3.22]{de1998resolution}).

A second difference is that the guarded saturation does not require any Skolemisation step.
We demonstrate the differences between the guarded saturation and the \gf resolution procedures by means of a simple example.
This example is not meant to show any differences in performance.
An in-depth performance comparison of both approaches is open for future studies.
\begin{exmpl}
Consider the set of \gtgds
\begin{align*}
\tau_1&=R(x_1,x_2)\rightarrow\exists y_1,y_2\bigl( S(x_1,x_2,y_1,y_2)\land T(x_1,x_2,y_2)\bigr),\\
\tau_2&=S(x_1,x_2,x_3,x_4)\rightarrow U(x_4),\\
\tau_3&=T(z_1,z_2,z_3)\land U(z_3)\rightarrow P(z_1).
\end{align*} 
Then the guarded saturation first composes
$\tau_1,\tau_2$ to obtain 
\begin{equation*}
\tau_1^*\coloneqq R(x_1,x_2)\rightarrow\exists y_1,y_2\bigl( S(x_1,x_2,y_1,y_2)\land T(x_1,x_2,y_2)\land U(y_2)\bigr),
\end{equation*} 
and then composes $\tau_1^*,\tau_3$ to obtain $\tau_2^*\coloneqq R(x_1,x_2)\rightarrow P(x_1)$.

The \gf resolution procedures, on the other hand, first Skolemise $\tau_1$ to obtain 
\begin{equation*}
\tau_1'\coloneqq R(x_1,x_2)\rightarrow S\bigl(x_1,x_2,f_1(x_1,x_2),f_2(x_1,x_2)\bigr)\land T\bigl(x_1,x_2,f_2(x_1,x_2)\bigr),
\end{equation*} 
then resolve $\tau_1'$ and $\tau_2$ to obtain $\tau_2'\coloneqq R(x_1,x_2)\rightarrow U\bigl(f_2(x_1,x_2)\bigr)$,
then resolve $\tau_1'$ and $\tau_3$ to obtain 
\begin{equation*}
\tau_3'\coloneqq R(x_1,x_2)\land U(f_2(x_1,x_2))\rightarrow P(x_1),
\end{equation*} 
and finally resolve $\tau_2'$ and $\tau_3'$ to obtain $\tau_4'\coloneqq R(x_1,x_2)\rightarrow P(x_1)$.
\end{exmpl}

\chapter{Conclusion}\label{chap:conclusion}

\section{Related Work}

Our work was motivated by the work of \cite{de1998resolution} and \cite{ganzinger1999superposition} 
that gave a resolution-based decision procedure for the guarded fragment.
To the best of our knowledge, there has been no other attempt to relate
resolution-based procedures to the tree-like model property for any guarded logic.

There has been considerable work on rewritings for classes of \tgds.
\cite{barany2013rewriting} give a rewriting algorithm for
frontier-guarded \tgds, an extension of \gtgds. 
The algorithm is based on the tree-like model property, but works by generating
all guarded consequences of a certain shape, rather than using resolution.
This approach is not amenable to implementation.
\cite{gottlob2014expressiveness} present an algorithm for another
extension of \gtgds, namely weakly-guarded \tgds. 
The algorithm is similar to the one presented in this work, 
although again the approach will produce many unnecessary rules. 
No proofs are given in the conference paper
\cite{gottlob2014expressiveness}, and neither \cite{gottlob2014expressiveness}
nor \cite{barany2013rewriting} deal with disjunctive \tgds. 

A very recent rewriting approach for \dgtgds is given by \cite{ahmetaj2018rewriting}.
They use a game-theoretic correctness proof of their translation,
and most notably, they achieve a polynomial runtime in case of \dgtgds of bounded width.
Their translation, however, adds a vast number of rules and fresh predicate symbols, 
in particular in case of rules with unbounded width.

Moreover, many reasoning problems for description logics \citep{baader2003description} 
are special cases of the \gf satisfiability and \dgtgd Query Answering Problem \citep{cali2009datalog,hustadt2004reducing}.
These logics, however, are restricted to unary and binary relations,
whereas \gf and \dgtgds have no such limitation.

\section{Summary and Future Work}

Our main goal was to connect the resolution-based approaches for \gf satisfiability \citep{de1998resolution,ganzinger1999superposition}
with the tree-like model property.
We decided to work with the language of \dgtgds and its \qfqaprob as this
setting makes the connection with the tree-like model property clearer than
the \gf satisfiability problem.
After setting our problem statement, 
we introduced the chase and its disjunctive generalisation in \cref{sec:proof_machinery} and 
explained how the chase can be used to construct a universal tree-like model.

In \cref{chap:one_pass}, we gave a new approach to normalise the chase in the case 
of \gtgds to admit a one-pass behaviour.
We then generalised this result to the disjunctive chase.
This one-pass behaviour makes the connection between inferences 
and the tree structure more closely related,
and it was a key ingredient for our completeness proofs
that use the tree-like model property.
Moreover, it motivated the definition of algorithm-independent completeness properties 
for \gtgd rewritings in \cref{sec:gtgd_property_for_completeness} and \dgtgd rewritings in \cref{sec:dgtgd_property_for_completeness}, which can thus be used beyond this thesis.
As noted in \cref{sec:contribution}, the chase is a strong tool that made 
its way into numerous studies.
We believe that the one-pass property is not just a tool useful for our purposes,
but that it can give valuable insights for future investigations and new decision procedures, and that it 
can greatly facilitate proofs relying on the chase procedure.

We then derived two theoretically optimal\footnote{As stated in \cref{rmk:compl_issue_gtgd},
there is an open question whether our procedures are optimal in case of bounded arities and relation symbols, which requires further investigations.}
decision procedures for the case of \gtgds.
The first approach -- the simple saturation -- experiences an unpleasant unification issue, making it useless for practical purposes.
We then fixed this issue in our second procedure -- the guarded saturation.

We discovered that the guarded saturation can experience a better runtime 
in the case of fixed dependencies than the existing \gf resolution procedures.
Moreover the guarded saturation does not require any Skolemisation step. 
We highlighted the differences between our approach and the \gf resolution procedures
in \cref{sec:gf_vs_guarded_sat} by means of a simple example.
A practical implementation of our procedure and an extensive comparison 
with the resolution approaches for \gf is an open topic worth investigating.

We then extended our results and derived an efficient  procedure for the case of \mbox{\dgtgds\ -- }the disjunctive guarded saturation.
We then observed that the disjunctive guarded saturation coincides 
with the resolution approaches for \gf.
Our work thus gives a natural account of why the 
selection and ordering strategies of referred \gf resolution procedures work 
by considering the tree-like model property.

Last but not least, we investigated which fragment of \gf we can capture with our chosen framework.
We discovered that our approach's main restriction is the
limitation of functions to positive literals and single clauses.
We hence did not demystify the resolution approach for all of \gf, but we made a first step capturing a substantial fragment
 -- not to mention that our chosen framework has many practical applications on its own,
 as already said in \cref{sec:contribution}.

This clearly leaves the door open for further investigations that extend our results to
a more extensive fragment of \gf.
As the disjunctive guarded saturation coincides with the resolution approaches for \gf,
it is imaginable that a generalisation of our completeness proof already unveils 
such an extension.
Another step forward would be the integration of equality in our framework.
This could potentially be done with the use of so-called equality-generating 
dependencies and a generalised version of the chase, as described in \citep{deutsch2008chase}.

Lastly, just like the tree-like model property,
our approach must not be restricted to~\gf.
Many guarded logics would benefit from practical, effective decision procedures,
and we believe that ideas similar to ours can prove itself valuable for that matter.
In particular, we believe that extensions to frontier-guarded and weakly-guarded \tgds
\citep{baget2011walking} are a plausible next step,
and perhaps even the guarded negation fragment \citep{barany2011guarded} lies within the realms of possibility.

%%%%%%%%%%%%%%%%%%%%%%%%%%%%%%%%%%%%%%%%%
\appendix
\microtypesetup{protrusion=false}
%\listoffigures{}
%\listoftables{}
%\listofalgorithms
%\addtocontents{loa}{\def\string\figurename{Algorithm}}
\microtypesetup{protrusion=true}

\counterwithin{thm}{chapter}
\counterwithin{lem}{chapter}
\counterwithin{cor}{chapter}
\chapter{Proofs for \cref{sec:gtgd_simple_saturation}}\label{apx:simple_saturation}

\simplesatresprops*
\begin{proof}
Clearly, all rules in $\vnf(\hnf(\Sigma))$ satisfy the conditions
and all derived rules will be in~variable normal form due to the final $\vnf(\cdot)$ application.

If we perform a \comp step to derive $\sigma$,
then $\sigma$ will only contain variables among $\{x_1,\dotsc,x_{\hwidth(\hnf(\Sigma))}\}$,
and hence $\width(\sigma)\leq w$.
Since the rules used to derive~$\sigma$ are full, $\sigma$ will also be full.

If we perform an \orig step to derive $\sigma=\body''\rightarrow\head''$,
the used rules $\tau,\tau'$ are of the form
\begin{align*}
	\body(\vec x)&\rightarrow \exists \vec y\,\head(\vec x,\vec y),\\
	\body'(\vec z)&\rightarrow \head'(\vec z).
\end{align*}
By definition, $\head''$ contains no $y_i$.
Moreover, $\vars\bigl(\body'\theta\setminus \head\theta\bigr)\subseteq\vec x\theta$ and $\vec x\theta\cap\vec y=\emptyset$.
So neither $\head''$ nor $\body''$ contains existentials, and thus $\sigma$ will be full.
As $\tau\in\hnf(\Sigma)$, we have ${|\vec x\theta|\leq w}$, and hence $\width(\sigma)\leq w$.
\end{proof}

\simplesatclosureprops*
\begin{proof}
The first property follows from the base case and final step of the algorithm.

For the second property, we know that $\tau,\tau'$ are of the form
\begin{align*}
	\body(\vec x)\rightarrow \head(\vec x),\\
	\body'(\vec z)\rightarrow \head'(\vec z).
\end{align*}
Without loss of generality, assume that $\consts(I_0)\subseteq\{c_1,\dotsc,c_{\hwidth(\hnf(\Sigma))}\}$,
and let $h,h'$ be the triggers used for $\tau,\tau'$.
If $h'(\body')\subseteq I_0$, setting $\tau^*\coloneqq\tau'$ does the job.
If, however, the chase step makes use of some facts in $I_1\setminus I_0$,
then define a unifier $\theta$ by
\begin{equation*}
	v\theta\coloneqq\begin{cases}
		x_i,&\text{if }v\in\vec x\text{ and }h(v)=c_i\\
		x_i,&\text{if }v\in\vec z\text{ and }h'(v)=c_i
	\end{cases}.
\end{equation*}
Note that $\vec x\theta\cup\vec z\theta\subseteq\{x_1,\dotsc,x_{\hwidth(\hnf(\Sigma))}\}$.
Further, for any atom $B'\in\body'$ with $h'(B')\in I_1\setminus I_0$,
there is an atom $H\in\head$ with $h(H)=h'(B')$, and hence,
by construction of~$\theta$, $H\theta=B'\theta$.
Now by \comp, we get the full ${\tau^*=\vnf\bigl(\body\theta\cup(\body'\theta\setminus \head\theta)\rightarrow \head'\theta\bigr)}$
in $\sclo(\Sigma)$.
Let $\rho$ be the renaming used by $\vnf(\cdot)$,
let $h''$ be the trigger defined by $h''(x_i)\coloneqq c_i$,
and set $h^*\coloneqq h''\circ\rho^{-1}$.
Then firing $\tau^*$ based on $h^*$ in $I_0$ creates $I_0\cup(I_2\setminus I_1)$.

Similarly, for the last property, we know that $\tau,\tau'$ are of the form
\begin{align*}
	\body(\vec x)&\rightarrow\exists \vec y\,\head(\vec x,\vec y)\\
	\body'(\vec z)&\rightarrow \head'(\vec z).
\end{align*}
Let $h,h'$ be the triggers used for $\tau,\tau'$, and without loss of generality, assume that $h(\body)$
only uses constants among $c_1,\dotsc,c_{\bwidth(\tau)}$.
Again, if $h(\body')\subseteq I_0$, setting $\tau^*\coloneqq\tau'$ does the job.
Otherwise, we define a unifier $\theta$ by
\begin{equation*}
	v\theta\coloneqq\begin{cases}
		x_i,&\text{if }v\in\vec x\text{ and }h(v)=c_i\\
		x_i,&\text{if }v\in\vec z\text{ and }h'(v)=c_i\\
		y_i,&\text{if }v\in\vec z\text{ and }h'(v)=h(y_i)
	\end{cases}.
\end{equation*}
Note that $\consts(h'(\body'))\subseteq\consts(h(\head))\subseteq\consts(h(\beta))\cup h(\vec y)$,
and hence \makebox{$\dom(\theta)=\vec x\cup\vec z$}.
Further, for any $B'\in\body'$ with $\vars(B'\theta)\cap\vec y\neq\emptyset$,
there must be $H\in\head$ with ${h(H)=h'(B')}$, and hence,
by construction of $\theta$, $H\theta=B'\theta$.
Consequently, $\vars(\beta'\theta\setminus\head\theta)\subseteq\vec x\theta$.
Moreover, $\vec{x}\theta \subseteq \vec{x}$, and hence $\vec{x}\theta\cap\vec{y}=\emptyset$.
Now set $\head''\coloneqq\{H'\in\head'\mid \vars(H'\theta)\cap\vec y=\emptyset\}$
and note that $\head''\neq\emptyset$ by \cref{assm:gtgd_simple_closure_props_3}.
Then by \orig, we get the full ${\tau^*=\vnf\bigl(\body\theta\cup(\body'\theta\setminus \head\theta) \rightarrow \head''\theta\bigr)}$
in $\sclo(\Sigma)$.
Let $\rho$ be the renaming used by $\vnf(\cdot)$,
let~$h''$ be the trigger defined by $h''(x_i)\coloneqq c_i$,
and set $h^*\coloneqq h''\circ\rho^{-1}$.
Then firing $\tau^*$ based on~$h^*$ in $I_0$ creates $F_r^2$.
\end{proof}

\begin{cor}\label{cor:gtgd_simple_closure_props_induct}
Let $I_0,\dotsc,I_n$ be a chase from $\sclo(\Sigma)$
and ${\tau_1,\dotsc,\tau_n\in\sclo(\Sigma)}$ be full such that
$|\consts(I_0)|\leq\hwidth(\hnf(\Sigma))$ and
firing $\tau_i$ in $I_{i-1}$ creates $I_i$.
Then there exists $\tau^*\in\sclo(\Sigma)$ such that firing $\tau^*$ in $I_0$ creates $I_0\cup(I_n\setminus I_{n-1})$.
\end{cor}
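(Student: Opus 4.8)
The plan is to prove the statement by induction on $n$, reducing everything to the two-step composition already established in \cref{lem:gtgd_simple_closure_props_2}. The base case $n=1$ is immediate: taking $\tau^*\coloneqq\tau_1$, firing $\tau_1$ in $I_0$ creates $I_1=I_0\cup(I_1\setminus I_0)$, which is exactly $I_0\cup(I_n\setminus I_{n-1})$ when $n=1$.

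For the inductive step with $n\geq 2$, the key decision is to collapse the chase \emph{from the front} rather than from the back. First I would observe that since $\tau_1$ is full it introduces no fresh constants, so $\consts(I_1)=\consts(I_0)$ and hence the hypothesis $|\consts(I_1)|\leq\hwidth(\hnf(\Sigma))$ carries over unchanged. Applying the inductive hypothesis to the shorter chase $I_1,\dotsc,I_n$ (with rules $\tau_2,\dotsc,\tau_n$) then yields a rule $\hat\tau\in\sclo(\Sigma)$, which is full by \cref{lem:gtgd_simple_closure_resolvent_properties}, whose firing in $I_1$ creates $I_1\cup(I_n\setminus I_{n-1})$.

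Next I would assemble the two-step chase $I_0,\ I_1,\ I_1\cup(I_n\setminus I_{n-1})$ obtained by firing $\tau_1$ in $I_0$ and then $\hat\tau$ in $I_1$, and feed it into \cref{lem:gtgd_simple_closure_props_2}. This produces a single $\tau^*\in\sclo(\Sigma)$ whose firing in $I_0$ creates $I_0\cup\bigl((I_1\cup(I_n\setminus I_{n-1}))\setminus I_1\bigr)$. The argument then concludes with a short set-theoretic simplification: because the chase is monotone we have $I_1\subseteq I_{n-1}$, and the freshly generated facts $I_n\setminus I_{n-1}$ are by definition disjoint from $I_{n-1}$, hence from $I_1$; therefore $(I_1\cup(I_n\setminus I_{n-1}))\setminus I_1=I_n\setminus I_{n-1}$, giving precisely the desired increment $I_0\cup(I_n\setminus I_{n-1})$.

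The main obstacle is recognising that the induction must peel off the \emph{first} chase step, not the last. A backward induction would leave the trigger of $\tau_n$ anchored in the fully accumulated instance $I_{n-1}$, whose earlier increments are not retained by a single composed rule, so there would be no valid two-step chase to hand to \cref{lem:gtgd_simple_closure_props_2}. Peeling from the front sidesteps this, since the composition lemma is designed precisely to absorb a leading full step into the trigger of the subsequent rule; the remaining work is then only the routine verification of the constant bound and the disjointness bookkeeping sketched above.
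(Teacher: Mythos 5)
Your proof is correct: the induction on $n$, the use of \cref{lem:gtgd_simple_closure_props_2} as the composition engine, the observation that full rules introduce no fresh constants (so the bound $|\consts(\cdot)|\leq\hwidth(\hnf(\Sigma))$ propagates to $I_1$), the appeal to \cref{lem:gtgd_simple_closure_resolvent_properties} for fullness of the composed rule, and the final disjointness bookkeeping are all sound. The only difference from the paper is the direction in which the chase is collapsed. You apply the inductive hypothesis to the suffix $I_1,\dotsc,I_n$ and then absorb the leading step $\tau_1$ via \cref{lem:gtgd_simple_closure_props_2}; the paper instead first applies \cref{lem:gtgd_simple_closure_props_2} to the last two steps $I_{n-2},I_{n-1},I_n$, obtaining a rule $\tau_{n-1}'$ that already fires in $I_{n-2}$ and produces the final increment $I_n\setminus I_{n-1}$, and then applies the inductive hypothesis to the shortened chase $I_0,\dotsc,I_{n-2},I_{n-2}\cup(I_n\setminus I_{n-1})$. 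The two arguments are mirror images and buy the same thing. One correction to your closing remark: it is not true that the induction \emph{must} peel off the first step. What genuinely fails is the naive backward variant you describe (apply the inductive hypothesis to the prefix $I_0,\dotsc,I_{n-1}$ and then try to attach $\tau_n$, whose trigger lives in the unreproduced instance $I_{n-1}$); the paper's back-to-front collapse avoids this by composing $\tau_{n-1}$ with $\tau_n$ \emph{before} recursing, so the recursion always receives a genuine chase.
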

\begin{proof}
We prove the claim by induction on $n$.
For $n=1$, we just set $\tau^*\coloneqq\tau_1$.

For $n>1$, by applying \cref{lem:gtgd_simple_closure_props_2} on
$I_{n-2},I_{n-1},I_{n}$ and $\tau_{n-1},\tau_n$,
we get $\tau_{n-1}'\in\sclo(\Sigma)$ such that firing $\tau_{n-1}'$
in $I_{n-2}$ creates $I_{n-2}\cup(I_n\setminus I_{n-1})$.
Now we can apply the inductive hypothesis on $I_0,\dotsc,I_{n-2},I_{n-2}\cup(I_n\setminus I_{n-1})$ and $\tau_1,\dotsc,\tau_{n-2},\tau_{n-1}'$
to get $\tau^*\in\sclo(\Sigma)$ such that firing $\tau^*$ in $I_0$ creates
$I_0\cup\bigl((I_{n-2}\cup(I_n\setminus I_{n-1}))\setminus I_{n-2}\bigr)=I_0\cup(I_n\setminus I_{n-1})$.
\end{proof}

For the next proposition, recall the definition of evolve-completeness.
\gtgdabstrcompletenessprops*

\simplesatclosureevolvecomplete*
\begin{proof}
\cref{defn:gtgd_abstr_completeness_props_1} follows from \cref{lem:gtgd_simple_closure_props_1}.
For \cref{defn:gtgd_abstr_completeness_props_2},
we first use \cref{cor:gtgd_simple_closure_props_induct} to obtain $\tau'\in\sclo(\Sigma)$ firing $\tau'$ in $F_v^1$ creates $F_v^1\cup(F_v^{n+1}\setminus F_v^n)$.
Note that this is possible since $|\consts(F_v^1)|=\hwidth(\tau_r)\leq \hwidth(\hnf(\Sigma))$.
Let $h,h'$ be the triggers used by $\tau,\tau'$,
and let $F_r^n,I_1',I_2'$ be the chase obtained by firing~$\tau$ in $F_r^n$
based on $h$, creating a new node $v'$, and firing $\tau'$ in $F_{v'}^1\supseteq F_v^1$ based on $h'$.
Let $r'$ be the root node in this chase.
We have $I_2'\supseteq F_v^1\cup(F_v^{n+1}\setminus F_v^n)$,
and hence $F_{r'}^2\supseteq F_r^{n+1}$.
Now, by applying \cref{lem:gtgd_simple_closure_props_3} on $F_r^n,I_1',I_2'$ and $\tau,\tau'$,
we get~$\tau^*\in\sclo(\Sigma)$ such that firing $\tau^*$ in $F_r^n$ creates $F_{r'}^2$.
Thus, we have a chase from $F_r^n$ and $\sclo(\Sigma)$ of $F_r^{n+1}$.
\end{proof}

\simplesatclosuresize*
\begin{proof}
By \cref{lem:gtgd_simple_closure_resolvent_properties}, $\sclo(\Sigma)$ is a set of full \tgds in variable normal form
of width at most $w$.
Now the claim follows by \cref{lem:tgd_number_full_vnf}.
\end{proof}

\simplesatcomplexity*
\begin{proof}
The initial variable and head normal form transformations can be done in linear time by \cref{lem:tgd_vnf_complexity,def:tgd_hnf}.
The algorithm then takes all pairs in ${\hnf(\Sigma)\times\sclo(\Sigma)}$ and
$\sclo(\Sigma)^2$ to derive all possible resolvents of any pair.
By \cref{cor:gtgd_simple_closure_size}, this process takes at most~$c\coloneqq2^{2nw^a}$ iterations,
and at any point, there are at most $\calO(c^2+|\Sigma|c)$ pairs.

At any step, we have the possibility to unify any head and body atom pair of the two considered rules.
By \cref{lem:tgd_number_full_vnf,lem:gtgd_simple_closure_resolvent_properties}, there are at most $(nw^a)^{nw^a}$ such pairs.
Unfortunately, we do not only consider mgus, but all unifiers.
There are $w^{2w}$ potential unifiers.
If two atoms unify, we resolve the rules and normalise the result to obtain a new rule.
Using \cref{lem:mgu_complexity,lem:tgd_vnf_complexity}, these steps can be done in linear time.
Thus, we obtain the claimed complexity bounds.
\end{proof}

\simplesatdecproc*
\begin{proof}
Since $\sclo(\Sigma)$ is evolve-complete, it is complete by \cref{prop:gtgd_abstr_completeness}.
It is easy to show that the algorithm is sound.
Now the claim follows by \cref{prop:gtgd_atomic_rewriting_dec_proc,lem:gtgd_simple_closure_complexity}.
\end{proof}

Note that the matching lower bounds can be found in \cref{thm:lower_bounds_gtgd}.

\backmatter
\bibliographystyle{plainnat}
\bibliography{bibliography}
%\printbibliography{}
\printindex[not]
\printindex[ind]

\end{document}